\def\@tvsp{\mathchoice{{}\mkern-4.5mu}{{}\mkern-4.5mu}{{}\mkern-2.5mu}{}}
\def\ltrivert{\left|\@tvsp\left|\@tvsp\left|}
\def\rtrivert{\right|\@tvsp\right|\@tvsp\right|}
\DeclarePairedDelimiter{\oldnormaux}{\bracevert}{\bracevert}
\NewDocumentCommand{\oldnorm}{som}{%
  \IfBooleanTF{#1}
    {\oldnormaux*{#3}}
    {\IfNoValueTF{#2}
       {\oldnormaux*{\vphantom{dq}#3}}
       {\oldnormaux[#2]{#3}}%
    }%
}
\newcommand{\bn}[1]{\oldnorm[\big]{#1}}
\newcommand*\xbar[1]{%
  \hbox{%
    \vbox{%
      \hrule height 0.5pt 
      \kern0.5ex
      \hbox{%
        \kern+0,00000000000001cm
        \ensuremath{#1}%
        \kern+0,00000000000001cm
      }%
    }%
  }%
} 
\numberwithin{equation}{section}
\theoremstyle{plain}
\newtheorem{theorem}{Theorem}[section]
\newtheorem{lemma}[theorem]{Lemma}
\newtheorem{corollary}[theorem]{Corollary}
\newtheorem{remark}[theorem]{Remark}
\theoremstyle{definition}
\newtheorem{definition}[theorem]{Definition}
\newtheoremstyle{hypstyle}{}{}{}{}{\bfseries}{.}{ }%
{\thmname{#1}\thmnumber{ (H#2)}\thmnote{}}
\theoremstyle{hypstyle}
\def\notina[#1]#2{\begingroup\def\thefootnote{\fnsymbol{footnote}}\footnote[#1]{#2}\endgroup}
\newcommand{\iu}{\textnormal{i}}
\newcommand{\R}{{\mathbb R}}
\newcommand{\C}{{\mathbb C}}
\newcommand{\Z}{{\mathbb Z}}
\newcommand{\T}{{\mathbb T}}
\newcommand{\te}{{\tt e}}
\newcommand{\mG}{\mathcal{G}}
\newcommand{\mL}{\mathcal{L}}
\newcommand{\mM}{\mathcal{M}}
\newcommand{\mR}{\mathcal{R}}
\newcommand{\mZ}{\mathcal{Z}}
\newcommand{\mq}{\rho}
\def\cZ{\mathcal{Z}}
\def\cN{\mathcal{N}}
\def\cM{\mathcal{M}}
\def\cJ{\mathcal{J}}
\def\cP{\mathcal{P}}
\def\cC{\mathcal{C}}
\def\cR{\mathcal{R}}
\def\cIR{\mathcal{IR}}
\def\cS{\mathcal{J}}
\def\ccS{\mathcal{S}}
\def\cL{\mathcal{L}}
\def\ng{\mathcal{NG}}
\def\cG{\mathcal{G}}
\def\caW{{\tt \Omega}}
\def\tj{{\tt j}}
\def\tz{{\tt z}}
\def\vj{\vec{{\it \j}}}
\def\vsigma{\vec{\sigma}}
\def\vlambda{\vec{\lambda}}
\def\vjs{\vj,\vsigma}
\def\supp{{\rm supp}}
\newcommand{\e}{\varepsilon}
\newcommand{\m}{\mu}
\newcommand{\diff}{\textnormal{d}}
\newcommand{\bs}{\boldsymbol{\sigma}}
\renewcommand{\Im}{\mathrm{Im}\,}
\newcommand{\oN}[1]{\big\| {#1} \big\|}
\newcommand{\jbs}[1]{\left\langle{#1}\right\rangle}
\newcommand{\bcb}{\begin{color}{blue}}
\newcommand{\bcr}{\begin{color}{red}}
\newcommand{\ec}{\end{color}}
\def\blu#1{{\color{blue}#1}}
\def\nablac{\langle\nabla\rangle_c}
\def\gna{G^{\tt app}}
\def\uno{{\textnormal{id}}}
\def\cT{{\mathcal T}}
\def\cU{{\mathcal U}}
\def\cV{{\mathcal V}}
\def\cO{{\mathcal O}}
\def\sgn{{\rm sgn}}
\def\im{{\rm i}}
\def\tw{{\tt w}}
\def\sleq{\lesssim}
\def\tR{{\tt R}}
\def\matA{A}
\def\matB{B}
\def\ttw{{\cal W}}
\def\vs{\varsigma}
\title{Non relativistic limit of the nonlinear Klein-Gordon
  equation: \\ Uniform {in time} approximation of KAM
  solutions.}
\author{Dario Bambusi\footnote{Dipartimento di Matematica, Universit\`a degli Studi di Milano, Via Saldini 50, I-20133 Milano, Italy email\,\,:\,\,\texttt{dario.bambusi@unimi.it, andrea.belloni@unimi.it}}\,,  Andrea Belloni$^*$, Filippo Giuliani\footnote{Dipartimento di Matematica, Politecnico di Milano, Piazza Leonardo Da Vinci 32, 20133, Milano, Italy email\,\,:\,\,\texttt{filippo.giuliani@polimi.it}} }
\begin{document}

\maketitle
\begin{abstract}
We study the non relativistic limit of the solutions of the cubic
nonlinear Klein--Gordon (KG) equation with periodic boundary
conditions on an interval and we construct a family of time quasi
periodic solutions which, after a Gauge transformation, converge
\emph{globally uniformly in time} to quasi periodic solutions of the
cubic NLS. The proof is based on KAM theory. We emphasize that,
regardless of the spatial domain, all the previous results concern
approximations valid over compact time intervals.
\end{abstract}
  
\tableofcontents

\section{Main result}
In this paper, we study the non relativistic  limit of the
solutions of the nonlinear
Klein-Gordon (KG) equation 
\begin{equation}\label{KG}
  \frac{1}{c^2} u_{tt}-u_{xx} +c^2 u+{a_1} u^3=0 ,\qquad a_1=\pm 1,
\end{equation}
where $u=u(t, x)\colon \R\times \mathbb{T} \to \R$ and
$\mathbb{T} \vcentcolon= \mathbb{R}/2\pi\mathbb{Z}$. Precisely, we prove the
existence of a family of quasi periodic in time solutions which, up to
a Gauge transformation, converge as $c\to\infty$, uniformly {for
  $t\in\R$}, to quasi periodic solutions of the cubic NLS equation
\begin{equation}\label{NLS}
    \iu\varphi_{t}=-\frac{1}{2}\varphi_{xx}+\frac{3}{4}a_1 |\varphi|^2
    \varphi \qquad \varphi=\varphi(t, x),\quad t\in \R, \quad
    x\in \mathbb{T}\,.
\end{equation}
To state our results we introduce the complex variable $\psi$ defined by 
\begin{equation}
  \label{psi}
\psi\vcentcolon= \frac{1}{\sqrt2}\left(\left(\frac{\nablac}{c}\right)^{1/2}u+\im\left(\frac{\nablac}{c}\right)^{-1/2}\frac{
u_t}{c^2}\right),
  \end{equation}
where 
\begin{align}
  \label{giap}
\langle \nabla \rangle_c\vcentcolon= (c^2-\Delta)^{1/2}.
\end{align}
In these variables the  Klein-Gordon equation \eqref{KG} reads 
\begin{equation}
  \label{KG.psi}
\im \psi_t=c\nablac
\psi+\frac{a_1}{4}\left(\frac{\nablac}{c}\right)^{-1/2}
\left[\left(\frac{\nablac}{c}\right)^{-1/2}(\psi+\bar\psi)  \right]^3\,.
  \end{equation}

We are interested in small amplitude solutions of \eqref{KG.psi}, which arise as perturbations of linear oscillations.
By considering the Fourier expansion
\begin{equation}
  \label{fourier}
\psi(t, x)=\sum_{j \in \mathbb{Z}}z_j(t)\,\frac{1}{\sqrt{2\pi}}e^{\im jx},
\end{equation}
the KG equation \eqref{KG.psi} 
 {becomes} an infinite dimensional system of ODEs 
\begin{equation}
  \label{ilambda}
\dot z_j=\blu{-}\mathrm{i}\lambda_j
z_j+\cO(|z|^3),\quad \lambda_j\vcentcolon= c\sqrt{j^2+c^2}, \qquad {j \in \mathbb{Z}}.
  \end{equation}

If we neglect the nonlinear term, the solutions of \eqref{ilambda} are given by a
superposition of infinitely many rotations
\[
z_j(t)=z_j(0)\,e^{\blu{-}\iu\lambda_j t} \qquad \mathrm{and} \qquad |z_j(t)|^2= |z_j(0)|^2 \qquad \forall t\in \R.
\]
This shows that the
phase space is foliated by tori invariant {for the linear flow}.

 We fix $N\geq 3$ and we look for solutions which bifurcate from the
 linear solutions obtained by the excitation of  $N$ linear modes.
 Our main result, Theorem \ref{VicinanzaTh}, ensures that, in a
 sufficiently small neighborhood of the elliptic equilibrium $\psi=0$
 and for sufficiently large values of $c$, {there exist quasi periodic
 in time solutions both to  KG
 and to NLS equations \emph{having a set of
 common frequencies}, which are close to each other uniformly globally in time.

\smallskip

{Before stating our main result we need some preliminary definitions.}

For any $a\geq 0,\, p\geq0$, we define the phase spaces
\begin{equation}
  \label{lap}
\ell^{a, p}\vcentcolon= \left\{ z=\{z_j\}_{j \in \mathbb{Z}}\in\ell^2=\ell^2(\mathbb{Z}; \mathbb{C}) \mid \| z \|_{a, p}^2\vcentcolon= \sum_{j \in \mathbb{Z}
  } |z_j|^2 \, \jbs{j}^{2p}\,e^{2 |j| a}<+\infty  \right\}.
\end{equation}
In the following we will identify a function $\psi$ with the
  corresponding sequence of Fourier coefficients $z$ and say that
  $\psi\in\ell^{a,p}$ if $z\in\ell^{a,p}$.

{We fix now a finite set $\cS \subset\Z$, and we denote 
  \begin{equation}
    \label{insieme}
N\vcentcolon = \# \cJ \geq 3, \quad \cS= \left\{\tj_1,...,\tj_N\right\}, \quad \tj_1<...<\tj_N.
    \end{equation}}
Correspondingly we define an immersion of $\C^N$ in $\ell^{a,p}$ by
  \begin{align}
\label{immersione}
    \iota_\cS :\C^N&\to\ell^{a,p}
    \\
\nonumber
    (\tz_{\tj_1},...,\tz_{\tj_N})&\mapsto \iota_\cS(\tz) \equiv \{z_j\}_{j \in \mathbb{Z}},
  \end{align}
with $z_j=0$ if $j\not\in\cJ$ and $z_{j}=\tz_{j}$ otherwise.

For any $\tR>0$ we define 
\begin{equation}\label{xi0:def}
\Xi_0(\tR)\vcentcolon= \bigg[\frac{1}{2}\tR^2,\frac{3}{2} \tR^2\bigg]^N.
\end{equation}
Any element $\xi \in \Xi_0(\tR)$ is an $N$-dimensional vector indexed by the set $\cS$, i.e. $\xi = (\xi_j)_{j \in \cS}$.

\begin{theorem}\label{VicinanzaTh}
 Fix $p>9/2$, $a \geq 0$ and {$\cS$ as in \eqref{insieme}}, then there exist $C>0$ and $\tR_*>0$ such that for any $\tR \in (0,\tR_*)$ and $c\geq \tR^{{-73/72}}$ the following holds. There exist a set $\caW=\caW (c) \subset \R^N$ with positive measure, two Lipeomorphisms 
\begin{align}
  \label{lipeo}
\caW\ni\omega\mapsto\xi_c^{KG}(\omega)\in \Xi_0(\tR), \quad
\caW\ni\omega\mapsto\xi^{\tt NLS}(\omega)\in \Xi_0(\tR),
\end{align}
and for any $\omega \in \caW$, two embeddings
\begin{align}
{\Psi_\omega^{KG},\Psi_\omega^{\tt NLS}}:\T^N \to \ell^{a,p},
\end{align}
such that the functions
\begin{align}\label{main.1}
  &\psi_{c, \omega}^{KG}(t)\vcentcolon= \Psi_\omega^{KG}((\omega+c^2)\, t),
  \\
  \label{main.2}
  &\varphi^{\tt NLS}_{\omega}(t)\vcentcolon= \Psi_\omega^{ \tt NLS}(\omega\, t),
\end{align}
solve respectively \eqref{KG.psi} and \eqref{NLS}.
Furthermore, we have
\begin{equation}\label{dist.main}
\begin{aligned}
& \frac{1}{\tR}\sup_{\omega \in \caW,\ \theta \in \T^N}
\left\|
\Psi_\omega^{KG}(\theta)
- \iota_\cS\bigl(
  \sqrt{\xi_{\tj_1}^{KG}(\omega)}\,e^{\mathrm{i}\theta_{\tj_1}},
  \dots,
  \sqrt{\xi_{\tj_N}^{KG}(\omega)}\,e^{\mathrm{i}\theta_{\tj_N}}
\bigr)
\right\|_{a,p}
\le C\,\tR^{\frac{1}{12}},\\
& \frac{1}{\tR}\sup_{\omega \in \caW,\ \theta \in \T^N}
\left\|
\Psi_\omega^{\tt NLS}(\theta)
- \iota_\cS\bigl(
  \sqrt{\xi_{\tj_1}^{\tt NLS}(\omega)}\,e^{\mathrm{i}\theta_{\tj_1}},
  \dots,
  \sqrt{\xi_{\tj_N}^{\tt NLS}(\omega)}\,e^{\mathrm{i}\theta_{\tj_N}}
\bigr)
\right\|_{a,p}
\le C\,\tR^{\frac{1}{12}}.
\end{aligned}
\end{equation}
There exists $K>0$,
independent of $\tR$ and $c$, such that for any $\sigma \in \big[0, 1\big]$ it holds 
\begin{equation}
    \label{dist.sole}
\frac{1}{\tR}\sup_{t\in\R}\sup_{\omega \in \caW}{\left\|e^{\iu c^2t }\psi_{c,
    \omega}^{KG}(t)-\varphi_\omega^{\tt NLS}(t)
  \right\|_{a,p-4\sigma}} \leq K
\frac{ \tR^{\frac{1}{36}-\frac{215}{72} \sigma}}{c^{2\sigma}}.
  \end{equation}

\medskip

Moreover, if we define
\begin{align*}
\Xi(c)\vcentcolon= \xi^{\tt KG}_{c}(\caW), \qquad \Xi^{\tt NLS}\vcentcolon= \xi^{\tt NLS}(\caW),
\end{align*}
the following estimates hold
  \begin{equation}
  \label{misura.0}
  \frac{|\Xi_0(\tR)\setminus \Xi(c)|}{ |\Xi_0(\tR)|} \leq C\tR^{{\frac{1}{36}}},\qquad  \frac{|\Xi_0(\tR)
    \setminus \Xi^{\tt NLS}|}{ |\Xi_0(\tR)|}\leq C \tR^{{\frac{1}{36}}} .
  \end{equation}
\end{theorem}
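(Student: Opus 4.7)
\medskip

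\noindent\textbf{Plan of proof.} The strategy is to perform a common KAM construction for both the Klein--Gordon equation (after a suitable gauge transformation) and for the NLS equation, parametrized by the same external frequency vector $\omega$. First, I would rewrite \eqref{KG.psi} in the gauged variable $\varphi^{KG}\vcentcolon= e^{\iu c^2 t}\psi$. The linear frequencies become $\lambda_j - c^2 = c\sqrt{j^2+c^2}-c^2 = \tfrac{j^2}{2} - \tfrac{j^4}{8c^2}+\cO(j^6/c^4)$, which coincide with the NLS frequencies $j^2/2$ up to $\cO(1/c^2)$ corrections. Next, the operators $(\langle\nabla\rangle_c/c)^{\pm 1/2}$ in the cubic nonlinearity of \eqref{KG.psi} converge to the identity as $c\to\infty$; expanding in powers of $1/c^2$ isolates a leading cubic term identical (up to the $3/4$ factor of \eqref{NLS}) to the NLS nonlinearity, plus an $\cO(c^{-2})$ remainder that loses four derivatives. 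This explains the weight $p-4\sigma$ in \eqref{dist.sole}.

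\medskip

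After this preparation, I would perform a Birkhoff normal form step to eliminate the non-resonant cubic monomials in $\varphi^{KG}$. Because $\lambda_j - c^2 \to j^2/2$, the small divisors for non-resonant cubic combinations $\lambda_{j_1}\pm\lambda_{j_2}\pm\lambda_{j_3}\pm\lambda_{j_4}$ are, for $c$ large, essentially those of NLS, plus a uniformly invertible correction of size $\cO(1/c^2)$; hence the transformation is well-defined and $c$-uniformly bounded. The outcome is a resonant Hamiltonian $H_{KG}^{\rm res}$ that agrees with the NLS Hamiltonian $H^{\tt NLS}$ up to terms of order $\tR^2/c^2$ (in operator norm on balls of radius $\tR$ in $\ell^{a,p}$, with a derivative loss). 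Both $H_{KG}^{\rm res}$ and $H^{\tt NLS}$ are integrable at the resonant cubic level: in action-angle variables $(I_j,\theta_j)_{j\in\cJ}$ on the tangential modes and $(z_k)_{k\notin\cJ}$ on the normal modes, they are linear in the angles and the normal-mode quadratic part has explicit action-dependent frequencies. At this stage, I would apply a KAM theorem with external parameters $\xi\in\Xi_0(\tR)$ (the tangential actions) to both Hamiltonians, obtaining two Cantor sets $\Xi(c), \Xi^{\tt NLS}\subset\Xi_0(\tR)$ on which quasi-periodic tori exist, together with frequency maps $\xi\mapsto\omega_c^{KG}(\xi)$ and $\xi\mapsto\omega^{\tt NLS}(\xi)$ which are bi-Lipschitz perturbations of $\xi\mapsto (j_k^2/2)_k + A\xi$ for the Birkhoff twist matrix $A$. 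Inverting these maps and taking $\caW$ to be the common image (intersected with a set of $\omega$ satisfying a strengthened Diophantine condition handling both systems' Melnikov divisors) yields the Lipeomorphisms \eqref{lipeo}. The measure estimate \eqref{misura.0} follows from the standard KAM measure analysis, the twist being non-degenerate independently of $c$ for $c$ large.

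\medskip

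To obtain the time-uniform bound \eqref{dist.sole}, I would crucially exploit that by construction both embeddings $\Psi^{KG}_\omega,\Psi^{\tt NLS}_\omega$ carry the same frequency $\omega$, hence the difference
\[
e^{\iu c^2 t}\psi^{KG}_{c,\omega}(t)-\varphi^{\tt NLS}_\omega(t)
= \Psi_\omega^{KG,\rm gauged}(\omega t)-\Psi_\omega^{\tt NLS}(\omega t)
\]
is itself quasi-periodic with frequency $\omega$ and is bounded in time by $\sup_{\theta\in\T^N}\|\Psi_\omega^{KG,\rm gauged}(\theta)-\Psi_\omega^{\tt NLS}(\theta)\|_{a,p-4\sigma}$. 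This is the key \emph{uniform in time} mechanism, only available because both tori share the same frequency vector. I would then estimate this sup via Lipschitz dependence of the KAM solution on the Hamiltonian: since $H_{KG}^{\rm res}-H^{\tt NLS}=\cO(\tR^2 c^{-2})$ in a space with four derivative loss, and the action-to-embedding map is Lipschitz with constants controlled by $\tR$, one recovers the exponent structure $\tR^{1/36-215\sigma/72}/c^{2\sigma}$ after interpolation between $\sigma=0$ (where \eqref{dist.main} delivers $\tR^{1/12}$-closeness of each embedding to the finite-dimensional torus) and $\sigma=1$ (where the full $c^{-2}$ gain is available at the cost of four derivatives).

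\medskip

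I expect the main obstacle to be the careful quantitative bookkeeping needed to trade powers of $\tR$, $c$ and derivative losses so as to meet the imbalance $c\geq \tR^{-73/72}$ required in the hypothesis. In particular, two issues must be controlled simultaneously: (i) the Birkhoff transformation for KG must not introduce small divisors resonating with the large frequency $c^2$ (which is handled by the gauge transformation, but leftover shifts of size $\cO(1/c^2)$ must be compared with the Diophantine modulus used in the NLS KAM step, forcing a lower bound on $c$ as a power of $\tR$), and (ii) the normal forms must be simultaneously diagonalizable at the quadratic-in-$z$ level on the normal directions, so that the linear Melnikov conditions extracted from the two systems are compatible on a common $\caW$ of asymptotically full measure in $\Xi_0(\tR)$. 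Once this bookkeeping is in place, the existence of both embeddings and the bounds \eqref{dist.main}, \eqref{dist.sole}, \eqref{misura.0} follow.
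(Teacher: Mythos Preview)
Your overall strategy---run parallel KAM constructions for KG and NLS, then match frequencies so that the difference of embeddings is quasi-periodic with the \emph{same} frequency vector---is the right one, and indeed is what the paper does. However, two of your intermediate steps contain genuine gaps.

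\textbf{The gauge transformation does not yield an autonomous problem.} You propose to pass to $\varphi^{KG}=e^{\iu c^2 t}\psi$ and then treat the resulting equation as a small perturbation of NLS. But the KG nonlinearity \eqref{PKG} contains non-Gauge-invariant monomials such as $\psi^4,\psi^3\bar\psi,\bar\psi^3\psi,\bar\psi^4$ (see \eqref{P:NG0}); under $\psi=e^{-\iu c^2 t}\varphi$ these become $e^{\mp 2\iu c^2 t}$-- and $e^{\mp 4\iu c^2 t}$--oscillating terms, so the gauged equation is \emph{non-autonomous}. You never say how to run Birkhoff/KAM on this non-autonomous system, nor how the fast-oscillating terms are eliminated. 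The paper avoids this entirely: it keeps the full autonomous Hamiltonian with frequencies $\lambda_j=c^2+\nu_j$, and shows (Lemma~\ref{divisori}, estimate~\eqref{divisori.2}, and the treatment of $G^{\ng}$ in Lemmas~\ref{sol.homo} and~\ref{solhomo}) that the divisors attached to non-Gauge-invariant monomials are of size $\gtrsim c^2$, so these monomials can be removed by a transformation whose vector field is $\cO(c^{-2})$. This is the mechanism that converts the non-Gauge part into an $h$-small remainder; your proposal lacks any analogue of it.

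\textbf{The measure estimates are not ``standard KAM''.} You write that \eqref{misura.0} ``follows from the standard KAM measure analysis'', but this is precisely where the $c$-dependence bites. For KG the normal frequencies $\Omega_j$ diverge like $c^2$, and the second Melnikov conditions $|\langle\omega,k\rangle+\Omega_i+\Omega_j|$ (the case $\ell=\pm(\delta_i+\delta_j)$) acquire \emph{new $c$-dependent resonances}: one has $\langle\omega,k\rangle+\Omega_i+\Omega_j\approx \cL c^2-(\nu_i+\nu_j)+\cO(|k|)$, and there are infinitely many $(i,j)$ with $\nu_i+\nu_j\approx\cL c^2$ for each $\cL\neq 0$. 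P\"oschel's argument does not cover this. The paper devotes Section~\ref{nontrivial:1} (in particular Lemma~\ref{NonGauge:2piu}) to this case, using the explicit inversion $\nu_j/c^2=f(j^2/c^2)$ with $f(x)=\sqrt{1+x}-1$ together with the momentum selection rule to show that, for fixed $k$, the resonant indices lie in an interval of length $\lesssim|k|^3$ around $c\,x_\cL$. Without this analysis you cannot bound $|\Xi_0\setminus\Xi(c)|$ uniformly in $c$, and in fact the lower bound $c\ge\tR^{-73/72}$ in the hypothesis arises from balancing exactly these estimates against the KAM smallness conditions (see \eqref{scelta.4}--\eqref{scelta.6} and the choice $\vs=1/36$).
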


\begin{corollary}\label{Maintheorem}
 Fix $p>9/2$, $a \ge 0$ and $\cS$ as in \eqref{insieme}. 
There exists $\tR_*>0$ such that for any $\tR \in (0,\tR_*)$ and 
$c \in (c_*,\infty)$, with $c_* := \tR^{-73/72}$, the following holds.
For any map
\begin{align*}
(c_*,\infty)\ni c \mapsto \omega(c)\in\caW(c)\subset\R^N, 
\end{align*}
we have
\begin{equation}\label{main.44}
\lim_{c\to+\infty}\sup_{t\in\R}
\left\| e^{\iu c^2t}\psi^{KG}_{c,\omega(c)}(t)
      -\varphi^{\tt NLS}_{\omega(c)}(t)
\right\|_{a,p-4\sigma} = 0,
\end{equation}
for any $\sigma\in(0,1]$.
\end{corollary}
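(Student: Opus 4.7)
The corollary is a direct consequence of the uniform-in-time estimate \eqref{dist.sole} of Theorem \ref{VicinanzaTh}, evaluated pointwise at $\omega=\omega(c)$. The strategy is straightforward: first fix $\tR\in(0,\tR_*)$ so that $c_*=\tR^{-73/72}$ is fixed, then let $c$ grow along any selector $c\mapsto\omega(c)\in\caW(c)$, and exploit the $c^{-2\sigma}$ decay factor on the right-hand side of \eqref{dist.sole}. The measure estimate \eqref{misura.0} guarantees that $\caW(c)$ is non-empty for all $\tR<\tR_*$ (up to possibly shrinking $\tR_*$), so speaking of a selector $\omega(c)$ makes sense.

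First I would note that for every $c\in(c_*,\infty)$ the hypothesis $c\geq \tR^{-73/72}$ of Theorem \ref{VicinanzaTh} is satisfied, so the whole conclusion is available on $\caW(c)$. Given any map $c\mapsto\omega(c)\in\caW(c)$, evaluating the inner supremum in \eqref{dist.sole} at $\omega=\omega(c)$ yields
\begin{equation*}
\sup_{t\in\R}\left\|e^{\iu c^2 t}\psi^{KG}_{c,\omega(c)}(t)-\varphi^{\tt NLS}_{\omega(c)}(t)\right\|_{a,p-4\sigma}\;\leq\; K\,\tR^{\,1+\frac{1}{36}-\frac{215}{72}\sigma}\,c^{-2\sigma},
\end{equation*}
for every $\sigma\in[0,1]$ (the extra factor of $\tR$ comes from clearing the $1/\tR$ in \eqref{dist.sole}).

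Finally, for any fixed $\sigma\in(0,1]$ the prefactor $K\,\tR^{1+\frac{1}{36}-\frac{215}{72}\sigma}$ is a finite constant depending only on $\tR$ and $\sigma$, while $c^{-2\sigma}\to 0$ as $c\to+\infty$; this immediately gives \eqref{main.44}. There is no genuine obstacle in the argument: the hard KAM analysis is entirely packaged inside Theorem \ref{VicinanzaTh}, whose uniform-in-$\omega$ estimate already dominates the supremum we need, independently of how $\omega$ is chosen inside $\caW(c)$. The only point worth emphasising is the order of the limits: $\tR$ must be fixed \emph{before} $c\to\infty$, because for $\sigma$ close to $1$ the exponent $1+\frac{1}{36}-\frac{215}{72}\sigma$ becomes negative, so one cannot simultaneously send $\tR\to 0$; and $\sigma>0$ is essential because the bound degenerates to $K\tR^{1+\frac{1}{36}}$ when $\sigma=0$, which does not vanish.
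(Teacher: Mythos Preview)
Your proof is correct and follows essentially the same route as the paper: apply the uniform estimate \eqref{dist.sole} of Theorem \ref{VicinanzaTh} at $\omega=\omega(c)$, observe that the $\tR$-dependent prefactor is a fixed constant once $\tR$ is chosen, and let the factor $c^{-2\sigma}=h^\sigma$ drive the right-hand side to zero for $\sigma>0$. Your remarks on the non-emptiness of $\caW(c)$ and on the order of the limits are sound and slightly more explicit than the paper's terse treatment.
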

\bigskip

We point out that our results apply also in the case of arbitrary analytic
nonlinearities, namely equations of the form 
\[
\frac{1}{c^2} u_{tt}-u_{xx} +c^2 u+f(u)=0, \qquad f(u)=\sum_{k\geq 3}
a_k\, u^{k}.
\]

We observe that our results hold both for the focusing and defocusing KG equation. For concreteness, from now on we consider $a_1=1$ in \eqref{KG}.


\subsection{Global in time results:
  a purely nonlinear phenomenon}

{We emphasize that the global (in time) results obtained in
Theorems \ref{VicinanzaTh} and Corollary \ref{Maintheorem} are purely nonlinear
phenomena and the uniform in time approximation /convergence result between solutions of  \eqref{dist.sole}
and \eqref{main.44} does not hold at linear level.
Indeed, the solutions $z^{LKG}(t)$ of the linearized KG equation satisfy
\[
e^{\mathrm{i} c^2 t} z_j^{LKG}(t)=z_j(0) e^{-\mathrm{i} (\lambda_j-c^2)  t} \qquad j \in \mathbb{Z},
\]
while the solutions $z^{LS}(t)$ of the linear Schr\"odinger equation are given by
\[
z_j^{LS}(t)=z_j(0) e^{-\mathrm{i} \frac{j^2}{2}  t} \qquad j \in \mathbb{Z}.
\]
Since 
\[
\lambda_j=c^2+\frac{j^2}{2}+ {\lambda_j^R, \qquad \lambda_j^R=\cO\left( \frac{j^4}{ c^2} \right),}
\]
we have
\[
\left|e^{\mathrm {i} c^2t}z^{LKG}(t)-z^{LS}(t)\right|= |z_j(0)
(e^{-\mathrm{i} (\lambda_j-c^2 ) t}-e^{-\mathrm{i} \frac{j^2}{2}
  t})|=|z_j(0)|\,  |1-e^{\mathrm{i} \lambda_j^R t}|.
\]
Since $\limsup_{t\to \infty} |1-e^{\mathrm{i} \lambda_j^R t}|=2$,
independently of $c$, the above difference cannot be small in the non
relativistic limit {uniformly for $t\in\R$}.
    According to Theorem \ref{VicinanzaTh} and Corollary
\ref{Maintheorem} the presence of the nonlinearity {provides a frequency shift that, given a torus of the NLS equation, allows to select
  a torus of the KG equation with {\it exactly} the same frequency. }

This explains why our results are valid for sufficiently large
values of $c$, depending on the size of the solutions we are
considering, and this is not merely a technical condition.


\subsection{{Discussions}}

It is known since 40 years that solutions of the KG equation
converge (after a Gauge transformation) to solutions of the NLS {\it
  over compact intervals of time} (see
  e.g. \cite{tsutsumi1984nonrelativistic},
  \cite{najman1990nonrelativistic},
  \cite{machihara2001nonrelativistic},\cite{faouSchratz},\cite{pasquali2019dynamics}).
 We recall in particular the
works \cite{machihara2002nonrelativistic},
  \cite{masmoudi2002nonlinear} in which the authors study the problem
in $\R^d$, for any {$ d \geq 2$}, and prove
convergence in the energy space.  We also mention the paper
  \cite{nakanishi2008transfer}, where it is proved that the
scattering operator of the KG equation converges to the scattering
operator of the NLS, at least for a particular class of initial
data. In that result a central role is played by dispersion. 

\medskip

The novelty of the present paper is that we construct a class of
solutions of the KG equation which converge \emph{uniformly
for $t\in\R$} to solutions of the NLS equation. 
  We point out that, since
we work on a compact spatial domain, we cannot rely on dispersive
effects. 
Instead, we use KAM techniques to construct quasi periodic in time
solutions of KG and study their limit as $c\to\infty$. 

{It is well known that such limit is singular,}
as one can see in a very clear way
expanding the linear operator in \eqref{KG.psi}, namely writing 
\begin{equation}\label{espansione}
c \langle \nabla \rangle_c=c^2-\frac{1}{2} \Delta+\mathcal{O}\left(
\frac{\Delta^2}{c^2} \right),
\end{equation}
which shows that the expansion in $c^{-2}$, contains more and more
derivatives of the unknown $\psi$ (see \eqref{KG.psi}).  Furthermore, we emphasize that the nonlinear terms of the equation also depend on the parameter $c$, and controlling the nonlinear effects  both in the limit $c\to +\infty$ and over infinite times is not a trivial task.

Here we decided to start the investigation of the nonlinear problem by
analyzing the simplest possible solutions, namely those of small
amplitude in which the nonlinearity is small compared to the linear
part. The typical tools allowing to study such a regime are
perturbative techniques, such as normal forms, or equivalently
averaging methods. Here we use
the tools of Hamiltonian perturbation theory, namely Normal Form and KAM theory, to study the dynamics of
\eqref{KG.psi} uniformly with respect to $c$ and we compare the constructions for the KG equation to that for the NLS
equation.

Here, we use the approach by Kuksin and P\"oschel
\cite{kuksin1996invariant}, \cite{poschel1996quasi} and \cite{berti2013kam}
who used the KAM theory of \cite{Kuk87},\cite{way90}, \cite{poschel1996kam}, to construct families of small
amplitude solutions of the KG and of the NLS equation, developing a KAM theory which
 remains valid in the singular non relativistic limit. In particular we rely on such papers for the parts which are not directly affected by the singular limit nature of the problem, while we write the details for the parts which need new techniques to study the limit $c \to \infty$. 

A substantial difficulty is related to the fact that the frequencies
of the linear part of the KG equation diverge as $c \to \infty$. This
makes not possible to rely on the strategy of \cite{poschel1996kam}
to control the small divisor problem, because there are new
resonances depending on $c$, between tangential modes and arbitrarily
large normal modes. To solve this problem we base our analysis on the
selection rule imposed by space translation invariance of the
equations. In particular, we analyze in detail some resonances
that are usually harmless in the standard theory, namely the ones
involving the sum of two normal frequencies. This is done in Section
\ref{nontrivial:1}.

Furthermore, we need to show that the construction for the KG equation
 converges to the construction for the NLS equation as $c\to\infty$. This brings us to another important point: one of the main differences between the
 KG and NLS systems is that the latter is Gauge invariant, while the former
 is not. In order to show that the non Gauge invariant part of the
 Klein Gordon equation does not affect the convergence of solution, we need to
 analyze in detail the structure of the Hamiltonian to show that the non
 Gauge invariant terms give contributions of order $c^{-2}$. These are the main technical novelties needed to prove our result.

\medskip

We think that our method could be used also to study the
  non relativistic limit of the Maxwell-Klein-Gordon equation (see
  \cite{bechouche2004nonrelativistic},
  \cite{masmoudi2003nonrelativistic}), however one has to solve
  several nontrivial technical difficulties.  Furthermore it would
be very interesting to extend the result of the present paper to
almost periodic solutions following \cite{bernier2024infinite} (see also \cite{biasco2023small}, \cite{corsi2024almost}).

We recall now the result of \cite{pasquali2018almost}, 
  where the author uses normal form theory to prove almost global
  existence in a variant of the Klein-Gordon equation uniformly in the
  non relativistic limit.

We conclude this discussion by mentioning the work by Franzoi-Montalto
\cite{franzoi2024kam}, where the authors {exploit KAM techniques} to prove the uniform
in time convergence of quasi periodic solutions of the forced
Navier-Stokes equation to forced quasi periodic solutions of the Euler
equation in the inviscid limit. 
{This is a singular limit too, however }the difficulties in such a procedure are of completely different
  nature compared to those in the present paper: indeed, the Euler and
  Navier-Stokes (NS) equations are quasilinear, {while the model
    considered here is semilinear.} On the other
  hand, as discussed above, one of the main issues of our results
  is dealing with the fact that the small divisors depend on the
  parameter $c$, while in
  \cite{franzoi2024kam} the viscosity parameter does not affect the small divisors
  of the KAM procedure.  In fact, thanks to the presence of the
  dissipative term $-\nu \Delta$ in the NS equation, the small
  divisors problems are restricted to the inversion of the linearized
  operator of the {limiting equations, namely} the Euler equation at a Euler solution, which is
  independent on the parameter $\nu$.

\medskip
{
\textbf{Notations:}
We collect here some notational conventions used throughout the paper.
\begin{itemize}
  \item 
  For $x, y \in \mathbb{R}$, we write
  \[
    x \lesssim y
  \]
  if there exists a constant $C>0$, independent of all relevant parameters (typically $c$), such that $x \leq C y$.  In the same way, we use $x \gtrsim y$. 
  If both $x \lesssim y$ and $x \gtrsim y$ hold, we write $x \simeq y.$ Moreover, anytime we introduce a constant $C$, it will be independent of $c$ (if it is not differently specified).
  \item 
We denote $h \vcentcolon = c^{-2}$. Both $h$ and $c$ will be used throughout the paper, depending on which is more convenient in a given context, without further reference to their relation.
  \item 
We denote
\begin{equation}\label{def:h}
\nu_j(h)\vcentcolon= \frac{j^2}{1+\sqrt{1+hj^2}}, \qquad \forall \, j \in \mathbb{Z}.
\end{equation}
Then the linear frequencies \eqref{ilambda} can be written as 
\begin{equation}\label{def:lambda2}
\lambda_j=c \sqrt{c^2+j^2}=c^2+\nu_j(h) = h^{-1}+\nu_j(h), \qquad \forall \, j \in \mathbb{Z}.
\end{equation}
Most of the time we will simply write $\nu_j$ omitting the
  dependence on $h$.
  \item For any set $\cZ \subseteq \mathbb{Z}$, fixing $a \in \cZ$ we define 
\begin{equation}\label{Kronecker}
\delta_a\vcentcolon=\{ \delta_{a,j} \}_{j \in \cZ}, \quad \delta_{a,j}\vcentcolon=
\begin{cases} 
1  & j=a, \\
0  & j \neq a.
\end{cases}
\end{equation}
\item For any $x \in \mathbb{R}^N$ we denote 
\[
|x| \vcentcolon = \max_{j=1, \dots, N} |x_j|, \quad |x|_1\vcentcolon=\sum_{j=1}^N |x_j|, \quad |x|_2 \vcentcolon = \bigg( \sum_{j=1}^N |x_j|^2 \bigg)^{1/2}.
\]
  \item  
 For $\ell \in \mathbb{Z}^{\cS^c}$, we define
  \begin{equation}\label{jb}
    \jbs{\ell}_1 \vcentcolon= \max\left(1, \left| \sum_{j \in \cJ } j\ell_j \right| \right).
  \end{equation}
  Moreover, for $k \in \mathbb{Z}^{\cS}$, we set
  \begin{equation}\label{jb0}
    \jbs{k} \vcentcolon= \max \left(1, |k|_1 \right).
  \end{equation}
We use the same notation for $j \in \mathbb{Z}$.
  \item 
  Given a sequence $z \in \ell^2$ and a subset of indices $\cZ \subseteq \Z$, we introduce the projector
  \begin{equation}\label{cut-off}
    \big(\Pi_{\cZ} z\big)_j \vcentcolon=
    \begin{cases}
      z_j, & j \in \cZ, \\
      0, & j \in \Z \setminus \cZ,
    \end{cases}
    \qquad
    \Pi_\cZ^{\perp} z \vcentcolon= z - \Pi_\cZ z,
  \end{equation}
which is the projector on the subspace of $\ell^2$, containing only the sequences indexed on $\cZ$.
 \item 
 Let $F$,$E$ two complex Banach spaces with norms $\| \cdot \|_E$ and $\| \cdot \|_F$. Let $B : F \to E$ a bounded operator, then we denote the operator norm as
\begin{equation}\label{norma:op}
\| B \|_{F,E}\vcentcolon= \sup_{\substack{x \in F, \\ x \neq 0}} \frac{ \|F(x)\|_E}{\|x\|_F}.
\end{equation}
\end{itemize}}

\bigskip

	\noindent
	{\bf Acknowledgements.} D. Bambusi, and A. Belloni have been supported by the research projects PRIN 2020XBFL 
	``Hamiltonian and dispersive PDEs'' of the Italian Ministry of
        Education and Research (MIUR) and by GNFM. 

A. Belloni is partially supported by the ERC STARTING GRANT 2021 "Hamiltonian Dynamics, Normal Forms and Water Waves" (HamDyWWa), Project Number: 101039762. The Views and opinions expressed are however those of the authors only and do not necessarily reflect those of the European Union or the European Research Council. Neither the European Union nor the granting authority can be held responsible for them. 

F. Giuliani has received funding from INdAM/GNAMPA
Project Stable and unstable phenomena in propagation of Waves in dispersive media,
CUP E5324001950001 and from PRIN-20227HX33Z Pattern formation in nonlinear phenomena-
Funded by the European Union-Next Generation EU, Miss. 4-Comp. 1-CUP D53D23005690006.

\vspace{0.2cm}

\noindent\textbf{Statements and Declarations.} Data sharing is not applicable to this article as no datasets were generated or analyzed during the current study.

\noindent\textbf{Competing Interests}: The authors have no conflicts of interest to declare.

\section{Structure of the Klein-Gordon Hamiltonian}

\subsection{The phase space}\label{phase.space.sec}

We introduce here the phase space and some notations concerning the
Hamiltonian structure of the equations \eqref{KG.psi} and \eqref{NLS}.

First, we need to define a further $c$-dependent space. For any $\beta\geq 0$
  we put
\begin{equation}\label{ellappiu}
\ell^{a,p,\beta}\vcentcolon= \left\{ z \in \ell^2 \, | \, \, \| z\|_{a,p,\beta}^2\vcentcolon= 
\sum_{j \in \mathbb{Z}} |z_j|^2 e^{2|j|a} \jbs{j}^{2p} \tw_j^{2\beta} \, < \, \infty
\right\}, \quad \tw_j\vcentcolon=\sqrt{1+\frac{j^2}{c^2}}=\frac{\lambda_j}{c^2},
\end{equation}
with the frequencies $\lambda_j$ defined as in \eqref{ilambda}. {\bf  We fix now once for all a value $p>9/2$ and $a\geq0$, while we will consider
 values of $\beta\in[0,1]$. 
}
{When $\beta=0$ we will sometimes omit to
write the index $\beta$.}
We consider the Fourier expansions 
\begin{equation}
  \label{fourier.2}
\psi(t, x)=\sum_{j \in \mathbb{Z}}z_j(t)\,\frac{1}{\sqrt{2\pi}}e^{\im jx}, \quad \bar{\psi}(t, x)=\sum_{j \in \mathbb{Z}}\bar{z}_j(t)\,\frac{1}{\sqrt{2\pi}}e^{-\im jx}.
\end{equation}
Correspondingly, we define the phase space 
\begin{equation}
  \label{phase.space}
\cP^{a,p,\beta}\vcentcolon= \ell^{a,p,\beta}\times \ell^{a,p,\beta}\ni(z,\bar
z),
  \end{equation}
where $\bar{z}$ is considered as independent of $z$.

For $(z,\bar z)\in\cP^{a,p,\beta}$ we define
\begin{equation}
  \label{norma.doppia}
\left\|(z,\bar
z)\right\|^2_{a,p,\beta}\vcentcolon= \left\|z\right\|^2_{a,p,\beta}+\left\|\bar z\right\|^2_{a,p,\beta}\,.
  \end{equation}
In the following we will denote by $ B_{a,p,\beta}((z,\bar
z), \tR)$ 
the open ball of
radius $\tR$ in $\cP^{a,p,\beta}$ centered at $(z,\bar z)$. For
open balls centered at the origin we will simply write $ B_{a,p,\beta}(\tR)$.

{Finally, recalling \eqref{cut-off}, for any subset of indices $\cZ \subseteq \mathbb{Z}$ we define 
\begin{equation}\label{cut-off-doppio}
\Pi_\cZ(z,\bar
z)\vcentcolon= (\Pi_\cZ z,\Pi_\cZ\bar z), \quad 
\Pi^\perp_\cZ(z,\bar{z}) \vcentcolon= (\Pi_\cZ^\perp z, \Pi_\cZ^\perp
\bar{z}).
\end{equation}}

\subsection{Hamiltonian formalism }
Consider an open set $\cU \subseteq \cP^{a,p,\beta}$. Given a Hamiltonian function $F\in C^{\omega}(\cU;\C)$ \footnote{{It denotes the space of analytic functions from $\cU$ to $\C$.}}, the corresponding Hamiltonian vector
field is defined formally by 
\begin{equation}
  \label{campo.definisco}
X_F\vcentcolon= ([X_F]_z,[X_F]_{\bar{z}}), \quad  [X_F]_{z_j} \equiv ([X_F]_z)_j \vcentcolon= -\im\frac{\partial F}{\partial \bar
  z_{j}}, \, \, \,  [X_F]_{\bar{z}_{j}} \equiv ([X_F]_{\bar z})_j \vcentcolon=\im\frac{\partial F}{\partial
  z_j}.
  \end{equation}

\begin{definition}
Let $X \in C^{\omega}(\cU; \cP^{a,p,\beta})$ with $\cU$ an open subset of $\cP^{a,p,\beta}$ and 
\[
X(z,w)= (X_1(z,w),X_2(z,w))\in\cP^{a,p,\beta}, \qquad (z, w)\in\mathcal{U}.
\]
We will say that $X$ is \emph{real analytic} if, whenever $w$ is the complex conjugate of
$z$, one has that $X_2$ is the complex conjugate of $X_1$, namely $[X_2(z,w)]_j=\overline{[X_1(z,w)]_{j}}$ whenever $w_j=\overline{z_{j}}$ for all $j \in \mathbb{Z}$.
\end{definition}

Let $F_1 \in
  C^{\omega}(\cU;\C)$ and let $F_2\in
  C^{\omega}(\cU;\C)$ be such that  $X_{F_2}\in
  C^{\omega}(\cU;\cP^{a,p,\beta})$, then the Poisson Bracket of $F_1$
  and $F_2$ is defined by
  \begin{equation}
    \label{poisson}
{\left\{F_1,F_2\right\}\vcentcolon=dF_1X_{F_2}=\im \sum_{j \in \mathbb{Z}}\bigg(\frac{\partial F_1}{\partial \bar
  z_{j}}\frac{\partial F_2}{\partial
  z_j}-\frac{\partial F_1}{\partial
  z_j}\frac{\partial F_2}{\partial \bar
  z_{j}}\bigg)}
  \end{equation}
  and is of class $C^{\omega}(\cU; \C)$.
  
  We recall that, if also $X_{F_1}\in
  C^{\omega}(\cU;\cP^{a,p,\beta})$, then 
  \begin{equation}
    \label{reg.poisson}
X_{\left\{F_1,F_2\right\}}=\left[X_{F_1},X_{F_2}\right]
\vcentcolon= dX_{F_1}X_{F_2}-dX_{F_2}X_{F_1} \in
C^{\omega}(\cU;\cP^{a,p,\beta}). 
    \end{equation}

We state some properties of the flows and Poisson Brackets. Let  $G\in C^{\omega}(\cU;\C)$ be such that $X_G\in
  C^{\omega}(\cU;\cP^{a,p,\beta})$ and furthermore
  \begin{equation}
    \label{esisto.1}
\sup_{(z,\bar z)\in\,\cU}\left\|X_G(z,\bar
z)\right\|_{a,p,\beta}<\delta
  \end{equation}
  for some $\delta>0$ .
  Let $\cV \subset\mathcal{U}$ be an open subset such that 
  $$
\bigcup_{(z,\bar z)\in \cV}B_{a,p,\beta}(( z,\bar z),\delta)\subset \cU\ 
$$
(and assume it is not empty),
then for all times $|t|\leq 1$ the flow $\phi_G^t$ of $X_G$ is well defined on $\cV$ and fulfills
\begin{equation}
  \label{esisto.3}
\forall |t|\leq 1,\quad  \phi^t_G\in C^{\omega}(\cV;\cU),\quad 
\sup_{(z,\bar z)\in\cV}\left\|\phi_G^t(z,\bar z)-(z,\bar
z)\right\|_{a,p,\beta}\leq\delta.
  \end{equation}

Furthermore, if $F\in C^{\omega}(\cU;\C)$, then,
in $\cV$ and for all $|t|\leq 1$ one has
\begin{equation}
  \label{esisto.2}
\frac{d}{dt}F\circ \phi_G^t=\left\{ F,G\right\}\circ \phi_G^t.
\end{equation}
If $X_F\in C^\omega(\cU;\C)$ then $X_{F\circ\phi_G^t}\in
C^\omega(\cV;\C)$, and one has
\begin{equation}
  \label{esisto.4}
X_{F\circ\phi_G^t}(z,\bar
z)=\left(d(\phi_G^{t})^{-1}X_F\right)(\phi^t_G(z,\bar z)),\quad \forall
(z,\bar z)\in\cV,\quad \left|t\right|\leq 1.
  \end{equation}
Finally, we recall that for any $s,t \in [-1,1]$ such that $|s+t| \leq 1$ the flow of $G$ satisfies the following equalities on $\mathcal{V}$
\begin{equation}\label{prop:group}
\phi_G^s \circ \phi_G^t= \phi_G^{s+t}, \qquad (\phi_G^s)^{-1}=\phi_G^{-s}\,.
\end{equation}

\subsubsection{Symmetries}
A particular role is played by the Gauge and the
  translation groups, which are generated respectively by the Hamiltonians
  \begin{align}
    \label{gauge}
\cN(z,\bar z) \vcentcolon= \sum_{j\in\Z}z_j\bar z_j,\quad \cM(z,\bar
z) \vcentcolon= \sum_{j\in\Z}j z_j\bar z_j.
  \end{align}
The flows are periodic and given explicitly by
  \begin{equation}
    \label{gruppi}
\phi^\alpha_{\cN}(z,\bar z)=\left\{(e^{-\im\alpha}z_j,e^{\im\alpha}\bar
z_j)\right\}_{j \in \mathbb{Z}},\quad \phi^\alpha_{\cM}(z,\bar z)=\left\{\left(e^{-\im j
  \alpha}z_j,e^{\im j \alpha}\bar z_j\right)\right\}_{j \in \mathbb{Z}}, \qquad \forall \, \alpha\in \T.
  \end{equation}
  
  \begin{definition}\label{def:HamGT}
A domain $\cU\subseteq \mathcal{P}^{a, p, \beta}$ is said to be Gauge invariant if $\phi^\alpha_{\cN}(\cU) \subseteq \cU$, and it is said to be translation invariant if $\phi^\alpha_{\cM}(\cU) \subseteq \cU$. Correspondingly, a Hamiltonian $F$ is called Gauge invariant if $F \circ \phi^\alpha_{\cN} = F$, and it is called translation invariant if $F \circ \phi^\alpha_{\cM} = F$.
\end{definition}
{
\begin{remark}
  \label{inva}
We introduce the notation $(z_j^+,z_j^-) \vcentcolon= (z_j, \bar{z}_j)$. Then for any $n \in \mathbb{N}$, for any $\vsigma \in \{ \pm \}^n$ and $\vj \in \mathbb{Z}^n$, we denote a monomial of degree $n \in \mathbb{N}$ as
\begin{equation}\label{not:compatta}
 z_{\vj}^{\vsigma}\vcentcolon=\prod_{i=1}^n z_{j_i}^{\sigma_i},
\end{equation}
which is 
\begin{align}
  \label{gauge.1}
  \mathrm{Gauge\ invariant}\ &\mathrm{if}\ \sum_{i=1}^n\sigma_i=0,
  \\
  \label{mom.zero}
  \mathrm{translation\ invariant}\ &\mathrm{if}\ \sum_{i=1}^n \sigma_i j_i=\vsigma \cdot \vj =0.
\end{align}
\end{remark}}

Now we need to define in a precise way what we mean by \emph{Gauge
invariant} and by \emph{non Gauge invariant} part of the Hamiltonian. 

\begin{definition}
  \label{gauge}
Let $F\in C^{\omega}(\cU; \mathbb{C})$ for some \textit{Gauge invariant} open set 
$\cU\subseteq \cP^{a,p,\beta}$. We define the \emph{Gauge Invariant Part} of $F$ as
\begin{equation}\label{def:PiG}
\Pi_{\mathcal{G}} F(z, \bar{z})\vcentcolon= \frac{1}{2\pi}\int_0^{2\pi}F\left(\phi_{\mathcal{N}}^\alpha(z, \bar{z})\right)\,d\alpha.
\end{equation}
The \emph{non Gauge invariant part} of $F$ is $\Pi_{\mathcal{NG}}F=F-\Pi_{\mathcal{G}}F$.

A Hamiltonian $F$ will be said to be \emph{Gauge invariant} if $F=\Pi_{\mathcal{G}}F$, while it
will be said to be \emph{non Gauge invariant} if $\Pi_{\mathcal{G}}F=0$. 
\end{definition}

\begin{lemma}\label{lem:gauge2}
Let $\cU\subset \cP^{a,p,\beta}$ be a Gauge invariant open set.
Let $G\in C^{\omega}(\cU; \C)$ be a Gauge invariant function having
an analytic vector field $X_G\in C^{\omega}(\cU; \cP^{a,p,\beta})$. Denote by
$\phi^t_G$ the flow generated by $X_G$. Let $\cV\subset\cU$ be open
and Gauge invariant,
and 
assume  $\phi_G^t\in C^{\omega}(\cV; \cU)$, $\forall t\in I$, with
$I\subseteq \R$ a suitable interval containing $0$.
Then
\begin{itemize}
\item[i)] $\left(\phi_G^t\circ\phi_{\cN}^\alpha\right)(z,\bar z)=
  \left(\phi_{\cN}^\alpha\circ\phi_G^t\right)(z,\bar z)$ for all
  $(z,\bar{z})\in \cV, t\in I$ and $\alpha \in \T$.
\item[ii)] If $F\in C^{\omega}(\cU,\C)$ is Gauge invariant, then, for any fixed $t\in I$, $F\circ\phi_G^t$ is Gauge invariant. Moreover, $\{ F, G\}$ is Gauge invariant.
\item[iii)] If $F\in C^{\omega}(\cU,\C)$ is non Gauge invariant, then, for any fixed $t\in I$, $F\circ\phi_G^t$ is non Gauge invariant.
Moreover, the function $\{ F, G\}$ is non Gauge invariant. 
\end{itemize}
\end{lemma}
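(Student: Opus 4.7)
The plan rests on the observation that $G$ Gauge invariant, i.e.\ $G\circ\phi_{\cN}^\alpha=G$ for all $\alpha\in\T$, combined with the functorial identity \eqref{esisto.4} applied to the pair $(G,\phi_{\cN}^\alpha)$, yields the equivariance
\begin{equation*}
d\phi_{\cN}^\alpha\cdot X_G=X_G\circ\phi_{\cN}^\alpha \quad \text{on } \cU.
\end{equation*}
Given this, for item i) I would argue by uniqueness of the Cauchy problem $\dot u=X_G(u)$: both curves $t\mapsto\phi_G^t(\phi_{\cN}^\alpha(z,\bar z))$ and $t\mapsto\phi_{\cN}^\alpha(\phi_G^t(z,\bar z))$ start at $\phi_{\cN}^\alpha(z,\bar z)$ and, thanks to the equivariance of $X_G$, both solve that ODE. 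Gauge invariance of $\cU$ guarantees that the second curve remains in $\cU$, so uniqueness forces the two to coincide on $I$.

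For item ii), using i) and $F\circ\phi_{\cN}^\alpha=F$ I would write
\begin{equation*}
(F\circ\phi_G^t)\circ\phi_{\cN}^\alpha=F\circ\phi_{\cN}^\alpha\circ\phi_G^t=F\circ\phi_G^t,
\end{equation*}
so $F\circ\phi_G^t$ is Gauge invariant. Differentiating this identity at $t=0$ through \eqref{esisto.2} gives $\{F,G\}\circ\phi_{\cN}^\alpha=\{F,G\}$, hence $\{F,G\}$ is Gauge invariant. As an alternative, one can directly invoke the Jacobi identity $\{\{F,G\},\cN\}=\{\{F,\cN\},G\}+\{F,\{G,\cN\}\}=0$, since both $F$ and $G$ Poisson-commute with $\cN$.

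For item iii), assume $\Pi_{\cG}F=0$. Then, using i) to commute $\phi_G^t$ with $\phi_{\cN}^\alpha$ and interchanging the integration in $\alpha$ with composition,
\begin{equation*}
\Pi_{\cG}(F\circ\phi_G^t)(z,\bar z)=\frac{1}{2\pi}\int_0^{2\pi}F\bigl(\phi_{\cN}^\alpha(\phi_G^t(z,\bar z))\bigr)\,d\alpha=(\Pi_{\cG}F)(\phi_G^t(z,\bar z))=0,
\end{equation*}
and differentiating at $t=0$ yields $\Pi_{\cG}\{F,G\}=0$. The only point requiring care is to verify that every composition remains inside a Gauge invariant subdomain on which the flow and the Gauge average are defined and analytic; this is automatic from the assumptions that $\cU,\cV$ are Gauge invariant and $\phi_G^t\in C^\omega(\cV;\cU)$. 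I do not expect any substantive obstacle here: the entire lemma is a package of standard consequences of the Gauge symmetry, and the only nontrivial step is establishing the commutativity of flows in item i).
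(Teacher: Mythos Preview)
Your proposal is correct and follows essentially the same route as the paper: item i) via uniqueness of the Cauchy problem (the paper phrases this as ``the l.h.s.\ and the r.h.s.\ fulfill the same Cauchy problem''), and items ii)--iii) by commuting $\phi_G^t$ with $\phi_{\cN}^\alpha$ to identify $\Pi_{\cG}(F\circ\phi_G^t)=(\Pi_{\cG}F)\circ\phi_G^t$, then differentiating at $t=0$ for the Poisson bracket. The only cosmetic difference is that in ii) you verify $(F\circ\phi_G^t)\circ\phi_{\cN}^\alpha=F\circ\phi_G^t$ directly rather than passing through the averaging integral, and you add the Jacobi-identity alternative; both are equivalent to what the paper does.
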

\proof Item $i)$ is easily
proved by remarking that the l.h.s. and the r.h.s. fulfill the same Cauchy problem.
To prove item $ii)$, we compute the Gauge invariant part of
$F\circ\phi_G^t$. We have
{
\begin{align*}
\Pi_{\mathcal{G}}\left(F\circ\phi_G^t\right)(z, \bar{z})&=\frac{1}{2\pi}\int_0^{2\pi}
F\left(\phi_G^t(\phi_{\cN}^\alpha(z,\bar z))\right)d\alpha&
\\
&=\frac{1}{2\pi}\int_0^{2\pi}
F\circ\phi^\alpha_{\cN}\circ\phi_G^t(z,\bar z)  d\alpha&
\\
&=(\Pi_{\mathcal{G}}F\circ\phi_G^t)(z, \bar{z})=(F\circ\phi_G^t)(z, \bar{z})   \qquad \forall (z, \bar{z}) \in \mathcal{V}, &
\end{align*}
where the last equality holds because $F$ is Gauge invariant.

Concerning the Poisson bracket, from \eqref{esisto.2} we have that
\begin{align*}
\Pi_{\mathcal{G}} \{ F, G\}(z, \bar{z})&=\frac{1}{2\pi}\int_0^{2\pi} \{ F, G\}\circ\phi_{\cN}^\alpha(z,\bar{z})\,d\alpha\\
&=\frac{1}{2\pi} \int_0^{2\pi} \frac{d}{dt}\bigg|_{t=0} F \circ \phi^t_G\circ\phi_{\cN}^\alpha (z,\bar{z})\,d\alpha\\
&=\frac{d}{dt}\bigg|_{t=0} \Pi_{\mathcal{G}} (F \circ \phi^t_G)(z, \bar{z})=\{ F, G\}(z, \bar{z}) \qquad \forall (z, \bar{z}) \in \mathcal{U}.
\end{align*}

The first part of item $iii)$ is proved in the same way of $ii)$. From this we have that $\Pi_{\mathcal{G}} (F \circ \phi^t_G)\equiv 0$ for all $|t|\le 1$, and we can reason exactly as in item $ii)$ to deal with the Poisson bracket.
\qed

\begin{remark}
  \label{traslo}
By the same argument one has that if $G$ and $F$ are as above, but
translational instead than Gauge invariant, then $F\circ\phi^t_G$ and
$\left\{F, G\right\}$ are also translational invariant.

\end{remark}

\subsection{The Hamiltonians}
 Equation \eqref{KG.psi} is Hamiltonian
  with Hamiltonian function
\begin{equation} \label{HKG.0}
  H(\psi,\bar\psi)=\int_{-\pi}^\pi\bar\psi\, c\,\nablac \psi \,dx+\frac{1}{16}\int_{-\pi}^\pi
  \left[
    \left(\frac{\nablac}{c}\right)^{-1/2}(\psi+\bar\psi)\right]^4dx,
\end{equation}
while the Hamiltonian of equation \eqref{NLS} is given by 
\begin{equation}
  \label{HNLS.0}
H^{\tt NLS}(\psi,\bar\psi)=\frac{1}{2}\int_{-\pi}^\pi \psi_x  \bar{\psi}_x\,dx +\frac{3}{8}\int_{-\pi}^\pi \psi^2 \bar{\psi}^2 dx.
\end{equation}
{
\begin{remark}
  \label{invario}
The Hamiltonian $H$ is translation invariant, and $H^{\tt NLS}$ is both translation and Gauge invariant. Hence all Hamiltonians under consideration contain only translation invariant monomials, which satisfy \eqref{mom.zero}.
\end{remark}}

We decompose the Hamiltonian \eqref{HKG.0} of the Klein-Gordon equation as
\begin{equation}\label{HKG}
  H=\Lambda+ P.
\end{equation}
Recalling the notation introduced in \eqref{not:compatta}, we define
\begin{align}
\label{LKG}
  &\Lambda(\psi, \bar{\psi})\vcentcolon= \int_{-\pi}^\pi\bar\psi\, c\,\nablac \psi \,dx= { \sum_{j\in \mathbb{Z}}
  \lambda_j z_j \bar{z}_{j} ,\qquad \lambda_j\vcentcolon= c\sqrt{j^2+c^2}\  ,\qquad j\in \mathbb{Z}},\\
\label{PKG}
&P(\psi, \bar{\psi})\vcentcolon= \frac{1}{16}\int_{-\pi}^\pi
  \left[
    \left(\frac{\nablac}{c}\right)^{-1/2}(\psi+\bar\psi)\right]^4dx=\sum_{\vsigma \in \{\pm \}^4, \, \,  \vj \in \mathbb{Z}^4}P_{\vj, \vsigma}\, z_{\vj}^{\vsigma}&
\end{align}
with the coefficients of $P$ defined as
\begin{equation}\label{G}
\begin{aligned}
&P_{\vj, \vsigma}\vcentcolon= \begin{cases}
\dfrac{1}{32 \pi}\dfrac{1}{\sqrt{\tw_{j_1}\,\tw_{j_2}\,\tw_{j_3}\,\tw_{j_4}}}\displaystyle\binom{4}{\hat{\sigma}}, & \mathrm{if}\,\, \vsigma \cdot \vj
  =0,\\
  0 & \mathrm{otherwise},
    \end{cases}
\end{aligned}
\end{equation}
where the weights $\tw_j$ are introduced in \eqref{ellappiu}, {$\binom{4}{\hat{\sigma}}$ is the binomial coefficient} and
\begin{equation}\label{hatsig}
\hat{\sigma} \vcentcolon=\frac{4+\sum_{m=1}^4 \sigma_m}{2}.
\end{equation}

\smallskip

In the same way, we write the NLS Hamiltonian as 
\[
H^{\tt NLS}=\Lambda^{\tt NLS}+P^{\tt NLS}
\]
with 
\begin{align}
\label{N:NLS}
  &\Lambda^{\tt NLS}(\psi, \bar{\psi})\vcentcolon= \frac{1}{2}\int_{-\pi}^\pi  \psi_x  \bar{\psi}_x \,dx = \sum_{j \in \mathbb{Z}}
  \lambda_j^{\tt NLS} z_j \bar{z}_{j}, \quad \lambda_j^{\tt NLS}\vcentcolon= j^2/2\, \quad j \in \mathbb{Z},\\
\label{P0:NLS}
&P^{\tt NLS}(\psi, \bar{\psi})\vcentcolon=\frac{3}{8} \int_{-\pi}^\pi\psi^2 \bar{\psi}^2 dx=
\sum_{\substack{\vsigma \in \{\pm \}^4, \, \,  \vj \in \mathbb{Z}^4 \\ \sum_i \sigma_i=0}} P^{{\tt NLS}}_{\vj,\vsigma}\,z_{\vj}^{\vsigma},&
\end{align}
and the coefficients are given by
\begin{equation}\label{P:NLS}
P^{{\tt NLS}}_{\vj,\vsigma}
\vcentcolon= \begin{cases}
\dfrac{3}{16\pi} , & \mathrm{if}\,\, \vsigma\cdot\vj
  =0,\\
  0 & \mathrm{otherwise}\,.
    \end{cases}
\end{equation}

\subsection{Decomposition of the KG Hamiltonian}

We are going to decompose the Hamiltonian \eqref{HKG} as follows
$$
H=H^{\tt NLS}+H^{\ng}+H^{R},
$$
where $H^{\ng}$ is non Gauge invariant, while the vector field of $H^R$
locally uniformly vanishes as $h \to 0^+$ {as a map from $\cP^{a,p}$ to $\cP^{a,p-4}$.}

\medskip

Concerning the quadratic part $\Lambda$ of the Hamiltonian \eqref{HKG} we have the following remark.

\begin{remark}\label{expaHlin}
From definitions \eqref{LKG} and \eqref{N:NLS}, we have 
\[
\Lambda=\Lambda^h+\Lambda^{\tt NLS}+\Lambda^R,
\]
  where
  $$
\Lambda^h(\psi, \bar{\psi})\vcentcolon= c^2\int_{-\pi}^\pi \psi \, \bar \psi dx= h^{-1}\int_{-\pi}^\pi \psi \, \bar \psi dx,
  $$
and $\Lambda^R\vcentcolon= \Lambda-\Lambda^h-\Lambda^{\tt NLS}$, which fulfills 
\begin{equation*}
\left\|X_{\Lambda^R}(\psi)\right\|_{a,p-4}\lesssim
h\, \left\|\psi\right\|_{a,p}, \qquad \forall \psi\in \ell^{a, p}.
\end{equation*}
The above estimate follows by 
\begin{equation}\label{difffreq}
\bigg| \nu_j(h)- \lambda_j^{\tt NLS} \bigg|= \frac{j^2}{2}\bigg| \frac{2}{1+\sqrt{1+hj^2}}-1\bigg| \leq h\, \frac{j^4}{2},
\end{equation}
with $\nu_j$ defined as in \eqref{def:h}.
\end{remark}

Concerning the quartic part $P$ of the Hamiltonian \eqref{HKG}, we have the following.
\begin{lemma}
  \label{decompP}
 We have
  \begin{equation*}
P=P^{\tt NLS}+P^{\mathcal{NG}}+P^{R},
  \end{equation*}
  where
  \begin{align}
\label{P:KG0}
    &P^{\tt NLS}(\psi, \bar{\psi})\vcentcolon= \int_{-\pi}^\pi\frac{3}{8}\psi^2 \bar \psi^2 dx\,,&
    \\
\label{P:NG0}
    &P^{\mathcal{NG}}(\psi, \bar{\psi})\vcentcolon= \frac{1}{16}\int_{-\pi}^\pi\left( \psi^4+4\psi^3\bar \psi+4\bar\psi^3\psi+\bar\psi^4 \right) dx&
  \end{align}
  and $P^R \vcentcolon= P-P^{\tt NLS}-P^{\mathcal{NG}}$. It holds that
  $\Pi_\cG P^{\ng}=0$, and for all $\tR >0$ we have
  \begin{align} 
\label{stima.0.0}
&\sup_{{(z,\bar z)} \in B_{a,q,0}(\tR)}\left\|X_{P}{(z,\bar z)}
  \right\|_{a,q,1}\lesssim \tR^3,&\\
\label{stima.1}
 &\sup_{{(z,\bar z)} \in B_{a,p}(\tR)}\left\|X_{P^{R}}{(z,\bar z)}
  \right\|_{a,p-2}\lesssim h \tR^3,& \\
\label{stima.2}
 &\sup_{{(z,\bar z)} \in B_{a,q}(\tR)}\left\|X_{P^{\mathcal{NG}}}{(z,\bar z)}
  \right\|_{a,q}\lesssim \tR^3,& \\
\label{stima.3}
 &\sup_{{(z,\bar z)} \in B_{a, q}(\tR)}\left\|X_{P^{\tt NLS}}{(z,\bar z)}
  \right\|_{a,q}\lesssim \tR^3&
  \end{align}
for both $q=p$ and $q=p-4$.
\end{lemma}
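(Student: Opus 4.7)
First, I would establish the algebraic decomposition by setting $\varphi := \psi+\bar\psi$, $B := (\langle\nabla\rangle_c/c)^{-1/2}$ and expanding
\[
(\psi+\bar\psi)^4 \;=\; \psi^4 + 4\psi^3\bar\psi + 6\psi^2\bar\psi^2 + 4\psi\bar\psi^3 + \bar\psi^4.
\]
Using $\frac{6}{16}=\frac{3}{8}$ one recognizes $\frac{1}{16}\int\varphi^4\,dx = P^{\tt NLS}+P^{\mathcal{NG}}$, hence
\[
P^R \;=\; \frac{1}{16}\int_{-\pi}^{\pi}\bigl[(B\varphi)^4-\varphi^4\bigr]\,dx,
\]
which is the convenient form for the quantitative estimate of $X_{P^R}$. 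For $\Pi_\cG P^{\mathcal{NG}}=0$, the Gauge flow $\phi^\alpha_\cN$ rotates each monomial $\psi^k\bar\psi^{4-k}$ into $e^{-\mathrm{i}\alpha(2k-4)}\psi^k\bar\psi^{4-k}$, whose average over $\alpha\in[0,2\pi]$ vanishes unless $k=2$; all four monomials in $P^{\mathcal{NG}}$ have $k\in\{0,1,3,4\}$.

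The cubic vector field bounds \eqref{stima.2} and \eqref{stima.3} are standard: since $P^{\tt NLS}$ and $P^{\mathcal{NG}}$ have bounded, momentum-conserving, constant coefficients, the estimate $\|X\|_{a,q}\lesssim\tR^3$ follows from the Banach algebra property of $\ell^{a,q}$, valid for $q\geq p-4>1/2$. For the weighted bound \eqref{stima.0.0} on $X_P$, the key remark is that by \eqref{G} the coefficient of $(X_P)_{z_j}$ carries a factor $\tw_j^{-1/2}\prod_{k=1}^{3}\tw_{j_k}^{-1/2}$, while the momentum constraint combined with $\tw_j^2 = 1+j^2/c^2 \lesssim \sum_{k=1}^{3}\tw_{j_k}^2$ yields $\tw_j \lesssim \sum_k\tw_{j_k}$. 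Distributing this factor among the internal legs cancels one of the $\tw_{j_k}^{-1/2}$ weights while the remaining ones are $\leq 1$, so that $\tw_j\,|(X_P)_{z_j}|$ is dominated by a pure trilinear convolution of $\{|z_j|\}$; the algebra of $\ell^{a,q}$ closes the bound.

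The delicate estimate is \eqref{stima.1}, which requires \emph{gaining} a factor $h$ at the price of two derivatives in the output. With $A := B-I$ the Fourier symbol satisfies $|A_j| = |(1+hj^2)^{-1/4}-1| \leq C\min(1,hj^2) \leq C h j^2$, so expanding
\[
\prod_{k=1}^{4}B_{j_k}-1 \;=\; \sum_{\emptyset\neq S\subseteq\{1,\dots,4\}}\prod_{k\in S}A_{j_k}
\]
and using $|A_{j_l}|\le 1$ on all but one factor and $|A_{j_{k_0}}|\lesssim h j_{k_0}^2$ on the remaining one gives the symbol bound
\[
\bigl|B_{j_1}B_{j_2}B_{j_3}B_{j_4}-1\bigr| \;\lesssim\; h\sum_{k=1}^{4}j_k^2.
\]
In the vector field $(X_{P^R})_{z_j}$, the contribution proportional to $h j^2$ on the external index is absorbed by the two-derivative loss from $\ell^{a,p}$ to $\ell^{a,p-2}$, while the contribution $h j_k^2$ on an internal index is absorbed by placing $j_k^2 z_{j_k}\in\ell^{a,p-2}$ and leaving the other two factors in $\ell^{a,p}$; this is algebra-compatible since $p-2>5/2$. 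The main obstacle is precisely this bookkeeping: one must check that the two derivatives can always be allocated so as to be compatible simultaneously with the target space $\ell^{a,p-2}$ and with the trilinear algebra used to close the estimate, and the hypothesis $p>9/2$ is exactly what provides the necessary margin.
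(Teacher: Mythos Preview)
Your proposal is correct and follows essentially the same route as the paper. The paper packages all four estimates through its abstract Lemma~\ref{sti.abs.2} (Appendix~A), which bounds vector fields of Hamiltonians of the form \eqref{sti.abs.1} via convolutions of weighted sequences; in particular, for \eqref{stima.0.0} it exploits directly that $\ell^{a,q,1/2}$ is a Banach algebra uniformly in $c$ (so the outer weight $\tw_j^{-1/2}$ upgrades $\ell^{a,q,1/2}\to\ell^{a,q,1}$ and each inner $\tw_{j_k}^{-1/2}$ maps $\ell^{a,q,0}\to\ell^{a,q,1/2}$), rather than your momentum-conservation trick $\tw_j\lesssim\sum_k\tw_{j_k}$. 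For \eqref{stima.1} the paper writes the same product expansion as a telescoping sum
\[
\frac{1}{\sqrt{\tw_{j_1}\cdots\tw_{j_4}}}-1
=\sum_{k=1}^{4}\frac{1-\sqrt{\tw_{j_k}}}{\sqrt{\tw_{j_k}}}\prod_{l<k}\frac{1}{\sqrt{\tw_{j_l}}},
\]
with $\bigl|\tfrac{1-\sqrt{\tw_j}}{\sqrt{\tw_j}}\bigr|\lesssim hj^2$, which is exactly your $\prod B_{j_k}-1=\sum_{\emptyset\neq S}\prod_{k\in S}A_{j_k}$ reorganized. One small imprecision in your last sentence: for \eqref{stima.1} itself only $p-2>1/2$ is used; the hypothesis $p>9/2$ is what makes $\ell^{a,p-4}$ an algebra, which is needed for the $q=p-4$ cases of \eqref{stima.2}--\eqref{stima.3} (and later in the paper).
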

\begin{proof}
The proof is based on the application of Lemma \ref{sti.abs.2}, since
the Hamiltonians involved in the estimates have the structure
\eqref{sti.abs.1} with $n=4$ and suitable weights $b^{(t)}$. We now present the argument explicitly.

We start
by the estimate \eqref{stima.0.0}. In this case, by \eqref{G}, the Hamiltonian $P$
has the structure \eqref{sti.abs.1} with
$b_j^{(t)}={b_j}\vcentcolon= \tw_j^{-1/2}$ for any $t=1, \dots 4$ and coefficients
\[
F_{\vj, \vsigma}\vcentcolon=\frac{1}{32 \pi}\binom{4}{\hat{\sigma}},
\]
with $\hat{\sigma}$ defined as in \eqref{hatsig}.
So, defining $w^{(k)}$ as in \eqref{wk}, we have
$$
\left\|w^{(k)}\right\|_{a,q,1/2}\leq \left\|(z,\bar
z)\right\|_{a,q,0},
$$
and therefore
\begin{align*}
\left\|X_P(z,\bar z)\right\|_{a,q,1}\leq \left\|b
\left(w^{(1)}\star w^{(2)}\star w^{(3)}\right)\right\|_{a,q,1}\leq \left\|
w^{(1)} \star w^{(2)} \star w^{(3)}\right\|_{a,p,1/2}\sleq  \left\|(z,\bar
z)\right\|_{a,q,0}^3,
\end{align*}
for both $q=p$ and $q=p-4$. 

{We now prove \eqref{stima.1}. By the definition of $P^R$, \eqref{P:KG0} and \eqref{P:NG0}, it holds
\[
P^R=\Pi_{\mG}P-P^{\tt NLS}.
\]
Since $\Pi_{\mG}P$ is the Gauge invariant part of $P$, by \eqref{gauge.1} its coefficients are defined by \eqref{G} restricted to the case $\hat{\sigma}=2$. Then, $P^R$ can be written as}
\begin{align}
\nonumber
P^R(z,\bar{z})=\frac{3}{16\pi}\sum_{\vj,\vsigma}\left[\frac{1}{
 \sqrt{\tw_{j_1}\tw_{j_2}\tw_{j_3}\tw_{j_4}}  }-1\right]z^{\vsigma}_{\vj}
\end{align}
where the sum is always restricted to the indices
such that $\vsigma\cdot\vj=0$.
First, we can rewrite the square bracket as follows
\begin{align}
  \label{sti.PR.1}
\left[\frac{1}{
 \sqrt{\tw_{j_1}\tw_{j_2}\tw_{j_3}\tw_{j_4}}
  }-1\right]=\frac{1-\sqrt{\tw_{j_1}}}{\sqrt{\tw_{j_1}}}+\frac{1-\sqrt{\tw_{j_2}}}{\sqrt{\tw_{j_2}}}\frac{1}{\sqrt{\tw_{j_1}}}
+\frac{1-\sqrt{\tw_{j_3}}}{\sqrt{\tw_{j_3}}}\frac{1}{\sqrt{\tw_{j_1}\tw_{j_2}}}
\\
+\frac{1-\sqrt{\tw_{j_4}}}{\sqrt{\tw_{j_4}}}\frac{1}{\sqrt{\tw_{j_1}\tw_{j_2}\tw_{j_3}}},
\end{align}
which gives rise to four terms. We analyze the first one that we call
$P^{R_1}$, the others being similar. The Hamiltonian $P^{R_1}$ has the form \eqref{sti.abs.1}, with weights given by
\[
b_j^{(1)}=\frac{\sqrt{\tw_j}-1}{\sqrt{\tw_j}}\sleq h\, j^2, \quad
b_j^{(t)}=1,\ t=2,3,4,
\]
so that we have the estimates (recall \eqref{wk})
  \[
  \left\|w^{(1)}\right\|_{a,p-2}\sleq h\left\|(z,\bar
  z)\right\|_{a,p},\quad    \left\|w^{(t)}\right\|_{a,p}\sleq \left\|(z,\bar
  z)\right\|_{a,p} ,\ t=2,3,4.
  \]
Thus, by \eqref{sti.abs.6} 
\begin{align*}
\|X_{P^{R_1}}(z,\bar{z})\|_{a,p-2} &\lesssim \| b^{(1)} ( w^{(2)} \star w^{(3)} \star w^{(4)})\|_{a,p-2} 
+\|  w^{(1)} \star w^{(2)} \star w^{(3)}\|_{a,p-2}&
\\
&\sleq
h\left\|(z,\bar z)\right\|^3_{a,p},&
\end{align*}
where we also used the estimate \eqref{conv.n}.
The other terms are estimated similarly and give
\eqref{stima.1}. The estimates \eqref{stima.2} and \eqref{stima.3} follow again from Lemma \ref{sti.abs.2}, by similar (actually easier) computations.
\end{proof}

By collecting the above results we have
\begin{equation} \label{H.de}
H=H^{\tt NLS}+\Lambda^h+\Lambda^R+P^{\mathcal{NG}}+P^R,
\end{equation}
where the first term is the NLS Hamiltonian \eqref{HNLS.0} and the
other terms have been defined and estimated in Remark \ref{expaHlin}
and Lemma
\ref{decompP}.

\section{Birkhoff Normal Form}

Following \cite{poschel1996quasi}, \cite{kuksin1996invariant} and \cite{berti2013kam} we
perform the first step of normal form to extract from the nonlinearity
the parameters needed to develop KAM
theory.

\subsection{Normal Form step}

The main result of this subsection is the following theorem, which is a
version of Proposition 7.1 of \cite{berti2013kam},
which takes into account the dependence on $c$ of all the objects and their behavior as $c\to +\infty$.
\begin{theorem}\label{Teorema0}
There exist $h_* \in (0,1)$ and $\tR_*>0$  such that the following holds. For all $\tR\in(0, \tR_*)$ and $h \in (0,h_*)$, there exist two real
  analytic canonical transformations
\begin{align*}
\cT_0 (h; \cdot ): {B}_{a,q,{\beta}}(\tR) \to {B}_{a,
  q,{\beta}}(2\tR), \quad
\cT_0^{\tt NLS}: {B}_{a,q}(\tR) \to {B}_{a,
  q}(2\tR),
\end{align*}
for any $\beta \in [0,1]$ and for $q=p$, $q=p-4$, and \emph{translation invariant} Hamiltonians $\Lambda_+, \hat{P}, P_0, \Lambda_+^{\tt NLS}, \hat{P}^{\tt NLS}$ and $P_0^{\tt NLS},$ such that the following decompositions hold
  \begin{align}\label{NFS}
H \circ \cT_0&=\Lambda+\Lambda_+ + \hat{P}+ P_0, &\\
\label{NFS:NLS}
H^{\tt NLS}\circ\cT_0^{\tt NLS}&=\Lambda^{\tt NLS}+\Lambda_+^{\tt NLS}+\hat{P}^{\tt NLS}+P_0^{\tt NLS},&
\end{align}
with the Hamiltonians $H$ and $H^{\tt NLS}$ given in \eqref{HKG.0} and \eqref{HNLS.0}. The Hamiltonians in \eqref{NFS:NLS} are also Gauge invariant.

Furthermore, the following holds
\begin{enumerate}[label=(\alph*)] 
\item \label{item:flussi} We have
\begin{align}
  \label{sti.tra.1}
&\sup_{(z,\bar z) \in B_{a,q,0}(\tR)}\left\|\cT_0(h; z,\bar z)-(z,\bar z) \right\|_{a,q, 1}\lesssim
\tR^3,
\\
\label{res.tr.0}
& \sup_{(z,\bar z) \in B_{a,q}(\tR)}\left\|\cT_0^{\tt NLS}(h;
z,\bar z)-(z,\bar z)\right\|_{a,q}\lesssim
\tR^3,  \quad \sup_{(z,\bar z)\in B_{a,p}(\tR)}\left\|\cT_0^{R}(h;z,\bar z)\right\|_{a,p-4}\lesssim
h \tR^3, 
\end{align}
for $q=p$ and $q=p-4$, where 
\begin{equation}\label{dec:gamma}
\cT_0^R \vcentcolon=  \cT_0-\cT_0^{\tt NLS}.
\end{equation}
 \item \label{intem:normale}  $\Lambda_+$  and $\Lambda_+^{\tt NLS}$ are given by
   \begin{align}
     \label{res.nij}
&\Lambda_+\vcentcolon= \frac{1}{2} \sum_{i\,or\, j\in\cJ} \frac{N_{ij}}{(1+h\nu_j)(1+h\nu_i)}z_i \bar{z}_{i} z_j \bar{z}_{j}, \qquad N_{ij}\vcentcolon= \frac{3}{8 \pi}\,(2-\delta_{i,j}),&\\
\label{L+NLS}
&\Lambda_+^{\tt NLS}\vcentcolon= \frac{1}{2} \sum_{i\,or\, j\in\cJ} N_{ij} z_i \bar{z}_{i} z_j \bar{z}_{j}&
     \end{align}
and satisfy the estimates
\begin{equation}\label{res.0.2}
\sup_{(z,\bar z) \in B_{a,q}(\tR)}\left\|X_{\Lambda_+}(z,\bar z)\right\|_{a,q,1}\lesssim \tR^3, \qquad \sup_{(z,\bar z) \in B_{a,q}(\tR)}\left\|X_{\Lambda_+^{\tt NLS} }(z,\bar z)\right\|_{a,q}\lesssim \tR^3.
\end{equation}
Moreover, if we define $\Lambda^R_+ \vcentcolon= \Lambda_+-\Lambda_+^{\tt NLS}$, the following estimate holds true
\begin{equation}\label{stimaLR}
\sup_{(z,\bar z) \in B_{a,p}(\tR)}\left\|X_{\Lambda^R_+}(z,\bar z)\right\|_{a,p-2}\lesssim
h \tR^3,
\end{equation}
for $q=p$ and $q=p-4$.
\item \label{item:pertu} We have
\begin{equation*}
\hat{P}=\mathcal{O}(\| \Pi^{\perp}_\cJ (z,\bar z) \|_{a,q}^4), \qquad
P_0=\mathcal{O}(\| (z,\bar z) \|_{a,q}^6),
\end{equation*} 
with $\Pi^\perp_\cJ$ defined in \eqref{cut-off-doppio}. There exist Hamiltonians $\hat{P}^R$, $P_0^R$, $\hat{P}^{\mathcal{NG}}$ and $P^{\mathcal{NG}}_0$ such that 
\begin{align*}
&\hat{P}=\hat{P}^{\tt NLS}+\hat{P}^{\mathcal{NG}}+\hat{P}^{R},& \quad  &P_0=P_0^{\tt NLS}+P_0^{\mathcal{NG}}+P_0^{R},&\\
&\Pi_{\mathcal{G}} \hat{P}^{\mathcal{NG}}=0,& \qquad &\Pi_{\mathcal{G}} P_0^{\mathcal{NG}}=0,&
\end{align*}
fulfilling the following estimates
\begin{align}
\label{res.0.0}
&\sup_{(z,\bar z)\in B_{a,q,0}(\tR)}\left\|X_{P_0}(z,\bar z)\right\|_{a,q,1}\lesssim \tR^5,
&\qquad &
\sup_{(z,\bar z)\in B_{a,q}(\tR)}\left\|X_{P_0^{\mathcal{NG}}}(z,\bar z)\right\|_{a,q}\lesssim \tR^5,\\
\label{res.0.1}
&\sup_{(z,\bar z)\in B_{a,q,0}(\tR)}\left\|X_{\hat{P}}(z,\bar z)\right\|_{a,q,1}\lesssim \tR^3,
&\qquad &
\sup_{(z,\bar z)\in B_{a,q}(\tR)}\left\|X_{\hat{P}^{\mathcal{NG}}}(z,\bar z)\right\|_{a,q}\lesssim \tR^3,\\
\label{res.NLS}
&\sup_{(z,\bar z)\in B_{a,q}(\tR)}\left\|X_{P_0^{\tt NLS}}(z,\bar z)\right\|_{a,q}\lesssim \tR^5,
&\qquad &
\sup_{(z,\bar z)\in B_{a,q}(\tR)}\left\|X_{\hat{P}^{\tt NLS }}(z,\bar z)\right\|_{a,q}\lesssim \tR^3,\\
\label{res.0.3}
&\sup_{(z,\bar z)\in B_{a,p}(\tR)}\left\|X_{P_0^{R}}(z,\bar z)\right\|_{a,p-4}\lesssim h\tR^5,
&\qquad &
\sup_{(z,\bar z)\in B_{a,p}(\tR)}\left\|X_{\hat{P}^{R}}(z,\bar z)\right\|_{a,p-4}\lesssim h\tR^3,
\end{align}
for both $q=p$ and $q=p-4$.

\end{enumerate}

\end{theorem}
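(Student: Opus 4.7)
My plan is to perform one step of partial Birkhoff normal form in parallel on $H$ and $H^{\tt NLS}$, tracking explicitly the dependence on $h=c^{-2}$ of every object. Following Proposition~7.1 of \cite{berti2013kam}, the transformations $\cT_0$ and $\cT_0^{\tt NLS}$ are the time--$1$ flows of generating Hamiltonians $\chi$ and $\chi^{\tt NLS}$, both homogeneous of degree~$4$, chosen to kill every translation-invariant quartic monomial in $P$ (resp.\ $P^{\tt NLS}$) that involves at least one index in $\cS$ and is \emph{not} of the integrable form $|z_i|^2|z_j|^2$. Then $\Lambda_+$ and $\Lambda_+^{\tt NLS}$ are exactly the resonant quartic action-type remainders with at least one tangential index, $\hat P$ collects the quartic monomials of $P$ with all indices in $\cS^c$ (left untouched by the step), and $P_0$ is the sextic-and-higher remainder arising from the Lie series.

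The homological equation $\{\Lambda,\chi\}+P^{\mathrm{nr}}=0$ gives $\chi_{\vj,\vsigma}=-\iu\, P_{\vj,\vsigma}/(\vsigma\cdot\lambda)$ on the support of $P^{\mathrm{nr}}$. I split the divisors $\vsigma\cdot\lambda$ according to the sign pattern: when $\sum_i\sigma_i\neq 0$ (non Gauge invariant monomial) one has $|\vsigma\cdot\lambda|\gtrsim c^2=h^{-1}$, so the corresponding part of $\chi$ is automatically of size $h\tR^3$; when $\sum_i\sigma_i=0$ the divisor reduces to $\vsigma\cdot\nu$, and momentum conservation $\vsigma\cdot\vj=0$ together with the non-action assumption $\{j_1,j_2\}\neq\{j_3,j_4\}$ yields in the NLS case the exact lower bound $|(j_1-j_3)(j_1-j_4)|\geq 1$. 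The KG case is then handled by combining this bound with \eqref{difffreq} in the low-frequency regime and by using the asymptotics $\nu_j\sim|j|/\sqrt h$ together with the finiteness of $\cS$ in the complementary high-frequency regime. This uniform-in-$h$ small divisor bound is the main technical obstacle of the step and is the place where the proof must depart from \cite{berti2013kam}; the outcome is $\|X_\chi\|_{a,q,1}\lesssim\tR^3$ uniformly in $h\in(0,h_*)$.

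Once $\chi$ is in hand, the general flow estimates \eqref{esisto.3}--\eqref{esisto.4} imply that $\cT_0=\phi_\chi^1$ is a real analytic canonical map $B_{a,q,\beta}(\tR)\to B_{a,q,\beta}(2\tR)$ for $\tR$ small, and \eqref{sti.tra.1} follows from the identity $\cT_0-\uno=\int_0^1 X_\chi\circ\phi_\chi^t\,dt$. Expanding the composition by the Lie series,
\[
H\circ\cT_0=\Lambda+P+\{\Lambda,\chi\}+\{P,\chi\}+\int_0^1(1-s)\{\{H,\chi\},\chi\}\circ\phi_\chi^s\,ds,
\]
the homological equation cancels $P^{\mathrm{nr}}$ against $\{\Lambda,\chi\}$, so the surviving quartic is the resonant part $P^{\mathrm{res}}$, which I split as $\Lambda_+$ plus the contribution with all indices in $\cS^c$ that is placed in $\hat P$. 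The coefficient of $|z_i|^2|z_j|^2$ in $\Lambda_+$ is then read off from \eqref{G} via the identity $\tw_j=1+h\nu_j$, giving \eqref{res.nij}. Everything following $P^{\mathrm{res}}$ is at least sextic and defines $P_0$; its vector-field estimate $\lesssim\tR^5$ follows from \eqref{stima.0.0}, the bound on $X_\chi$, and \eqref{esisto.4}. The analogous construction for $H^{\tt NLS}$ produces \eqref{NFS:NLS}, and Lemma \ref{lem:gauge2} guarantees that $\chi^{\tt NLS}$, $\Lambda_+^{\tt NLS}$, $\hat P^{\tt NLS}$, $P_0^{\tt NLS}$ are all Gauge invariant.

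Finally, to produce the finer decompositions $\hat P=\hat P^{\tt NLS}+\hat P^{\mathcal{NG}}+\hat P^R$ and $P_0=P_0^{\tt NLS}+P_0^{\mathcal{NG}}+P_0^R$ and the bounds \eqref{res.0.3} and \eqref{stimaLR}, I split $\chi=\chi^{\tt NLS}+\chi^R+\chi^{\mathcal{NG}}$, where $\chi^{\mathcal{NG}}$ is the non Gauge invariant part with divisor $|\vsigma\cdot\lambda|\gtrsim h^{-1}$, and $\chi^R$ gathers the Gauge invariant differences coming from $\tw_{j}^{-1/2}-1=O(hj^2)$ in the numerators of \eqref{G} and from $\nu_j-j^2/2=O(hj^4)$ in the denominators. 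Both $\chi^{\mathcal{NG}}$ and $\chi^R$ satisfy vector-field bounds of size $h\tR^3$, via Lemma \ref{decompP} and \eqref{difffreq}. Gauge and translation preservation of the flows (Lemma \ref{lem:gauge2} and Remark \ref{traslo}) give $\Pi_\cG\hat P^{\mathcal{NG}}=\Pi_\cG P_0^{\mathcal{NG}}=0$ automatically, while a Gronwall-type comparison between $\phi_\chi^t$ and $\phi_{\chi^{\tt NLS}}^t$ propagates the $O(h)$ bound on $\chi-\chi^{\tt NLS}$ to \eqref{res.tr.0}; the analogous propagation through the Lie-series expansion above yields the $O(h)$ estimates on $\hat P^R$ and $P_0^R$.
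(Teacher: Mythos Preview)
Your proposal is correct and follows essentially the same approach as the paper: the transformations are the time-$1$ flows of degree-$4$ generators solving the homological equations, decomposed as $G=G^{\tt NLS}+G^{\mathcal{NG}}+G^{R}$, with a Gronwall-type flow comparison (the paper's Lemma~\ref{flowmap.perdo}) propagating the $O(h)$ bounds to $\cT_0^R$, $\hat P^R$, $P_0^R$. The one correction: your claim that the uniform-in-$h$ small divisor bound ``must depart from \cite{berti2013kam}'' is mistaken---the paper obtains it directly from Lemma~7.2 of \cite{berti2013kam} applied with mass parameter $m=c^2$ (restated here as Lemma~\ref{divisori}), which gives $|\vsigma\cdot\lambda|\geq K(1+J^2)^{-3/2}$ on $\cL_\cS\setminus\cIR$ in one stroke and renders your low/high-frequency case split unnecessary.
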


\begin{remark}
We point out that by \eqref{dec:gamma} and estimates \eqref{sti.tra.1} the map $\cT_0^{R}$ is bounded on ${\cP^{a,
  p}}$. Moreover, the bound \eqref{res.tr.0} shows that $\cT_0^{R}$ converges uniformly to zero as $h\to 0$ as a map from a ball of $\cP^{a,p}$ to
$\cP^{a,p-4}$. 
\end{remark}

{The following sections are devoted to the proof of the above theorem. In Section \ref{can:trasf} we introduce the canonical transformations 
$\cT_0$ and $\cT_0^{\tt NLS}$. 
They are the flows of auxiliary Hamiltonians $G$ and $G^{\tt NLS}$, 
which arise as solutions of two cohomological equations.
In Section \ref{fine}, we then prove Theorem \ref{Teorema0}. }

\subsubsection{Generator of the canonical transformations}\label{can:trasf}

As usual, the transformation $\cT_0$ will be constructed as the time
one flow map of an auxiliary Hamiltonian $G$. If $\phi^t_G$ is the
flow generated by $X_G$, then for any function $F$ defined on an open
set of $\cP^{a, p, \beta}$, expanding in Taylor series in $t$ and exploiting
\eqref{esisto.2},
we obtain
\begin{align*}
F \circ \phi^t_G \big|_{t=1}&= F+  \int_0^1  \{F,G\} \circ \phi_G^t\,  \diff t&\\
&= F+\{F,G\} + \int_0^1 (1-t) \{\{F,G\},G\} \circ \phi_G^t \, \diff t.&
\end{align*}
Therefore, if $G$ is a homogeneous polynomial of degree four, the KG
Hamiltonian \eqref{HKG} transforms into 
\begin{equation}
  \label{h_0}
H\circ\phi_G^1=\Lambda+\left\{\Lambda, G\right\}+P+P_0,
\end{equation}
with 
\begin{equation}\label{P0}
P_0=\int_0^1\{P,G\}\circ\phi_G^s\,ds+\int_0^1(1-s)\{\{\Lambda,G\},G\}\circ\phi_G^s\,ds,
\end{equation}
which has a zero of order six at the origin. 

\medskip

The auxiliary Hamiltonian $G$ is chosen in such a way to simplify the
term $\left\{\Lambda, G\right\}+P$ of \eqref{h_0}. So we look
for $G$ as a solution of the following
cohomological equation
\begin{equation}
  \label{homologicala}
\left\{\Lambda, G\right\}+P=\Lambda_+ + \hat{P}, \qquad
\hat{P}=\mathcal{O}(\|\Pi_\cJ^{\perp} (z,\bar z)\|_{a,p}^4),
\end{equation}
with $\Lambda_+$ defined in \eqref{res.nij}. {Similarly, $\cT_0^{\tt NLS}$ is constructed as a time-$1$ flow of $G^{\tt NLS}$, which is a solution of the cohomological equation
\begin{equation}
  \label{homoNLS}
\left\{\Lambda^{\tt NLS}, G^{\tt NLS}\right\}+P^{\tt NLS}=\Lambda_+^{\tt NLS}+\hat{P}^{\tt NLS}, \qquad \hat{P}^{\tt NLS}=\mathcal{O}(\|\Pi_\cJ^{\perp} (z,\bar z)\|_{a,p}^4),
\end{equation}
and $\Lambda_+^{\tt NLS} $ defined in \eqref{L+NLS}.}

To construct $G$ the following definition is useful.
{
\begin{definition}
  \label{resonant}
We define $\cIR\subset \Z^4\times\left\{\pm\right\}^4$ as the set of the
indices $(\vj,\vsigma)=((j_1, \dots, j_4), \allowbreak (\sigma_1, \dots, \sigma_4))$ such that there exists a permutation of order four $\pi \in \mathcal{S}_4$ for which
\begin{equation}
  \label{7.21}
j_{\pi(1)}=j_{\pi(2)},\quad j_{\pi(3)}=j_{\pi(4)},\quad \sigma_{\pi(1)}=-\sigma_{\pi(2)},\quad
\sigma_{\pi(3)}=-\sigma_{\pi(4)}.
  \end{equation}
\end{definition}}

\begin{lemma}
  \label{sol.homo}
  There exists $h_* \in (0,1)$ such that the following holds. For any $h \in (0,h_*)$ and $\tR >0$, the cohomological equation \eqref{homologicala} admits a solution  $G$ which fulfills 
\begin{equation}
\label{sti.g0}
\sup_{(z,\bar z) \in B_{a,q,0}(\tR)}\left\|X_G(z,\bar z)\right\|_{a,q,1}\lesssim
\tR^3.
\end{equation}
Moreover, it holds the decomposition
  $G=G^{\tt NLS}+G^R$, where
  \begin{itemize}
 \item[\textnormal{(i)}] $G^{\tt NLS}$ is a solution of the cohomological equation \eqref{homoNLS}, Gauge invariant (according to Definition \ref{def:HamGT}), independent of $h$ and
fulfills 
\begin{equation}\label{G.NLS}
\sup_{(z,\bar z)\in B_{a,q}(\tR)}\left\|X_{G^{\tt NLS}}(z,\bar z)\right\|_{a,q}\lesssim
\tR^3.
\end{equation}
\item[\textnormal{(ii)}] By defining $G^R:=G-G^{\tt NLS}$, the following estimate holds
\begin{align} 
\label{sti.g1}
&\sup_{(z,\bar z)\in B_{a, p}(\tR)}\left\|X_{G^{R}}(z,\bar z)\right\|_{a,p-4}\lesssim h
\tR^3, \quad  \forall \,  \tR >0.&
\end{align}
\end{itemize}
\end{lemma}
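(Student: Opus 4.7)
The plan is to solve \eqref{homologicala} term by term in Fourier. Writing $G=\sum_{\vj,\vsigma}G_{\vj,\vsigma}\,z_{\vj}^{\vsigma}$ and recalling \eqref{poisson}, one has
\[
  \{\Lambda,G\}\;=\;\im\!\sum_{\vj,\vsigma}G_{\vj,\vsigma}\,\Omega(\vj,\vsigma)\,z_{\vj}^{\vsigma},
  \qquad \Omega(\vj,\vsigma)\vcentcolon=\sum_{i=1}^4\sigma_i\lambda_{j_i},
\]
so that by Remark \ref{invario} only indices with $\vsigma\cdot\vj=0$ enter. I would set
\[
  G_{\vj,\vsigma}\vcentcolon=\begin{cases}
    \im\,P_{\vj,\vsigma}/\Omega(\vj,\vsigma), & (\vj,\vsigma)\notin\cIR \text{ and } \{j_1,\dots,j_4\}\cap\cS\neq\emptyset,\\
    0, & \mathrm{otherwise.}
  \end{cases}
\]
Then $\{\Lambda,G\}+P$ retains only (i) the indices in $\cIR$ with at least one tangential entry (which give $\Lambda_+$) and (ii) the indices with all $j_i\in\cS^c$ (which give $\hat P$, automatically quartic in $\Pi_\cJ^\perp(z,\bar z)$). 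Every element of $\cIR$ is automatically Gauge invariant with $\sigma$ equal (up to permutation) to $(+,+,-,-)$ pairing a $+$ with a $-$; collecting the orderings producing $z_i\bar z_i z_j\bar z_j$ and using $\tw_j=\lambda_j/c^2=1+h\nu_j$ to reduce the four weights in \eqref{G} to $\tw_i^{-1}\tw_j^{-1}=[(1+h\nu_i)(1+h\nu_j)]^{-1}$, one recovers the combinatorial factor $N_{ij}=\tfrac{3}{8\pi}(2-\delta_{ij})$ of \eqref{res.nij}; the same calculation with $\tw\equiv 1$ yields \eqref{L+NLS}.

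The estimate \eqref{sti.g0} then reduces, via the abstract polynomial bound Lemma \ref{sti.abs.2}, to a quantitative control on $|\Omega|$ over the support of $G$. Two regimes arise: on non Gauge invariant indices $|\sum_i\sigma_i|\geq 2$ and the leading contribution to $\Omega$ is of size $c^2$, while corrections coming from the curvature of $\lambda_j=c\sqrt{j^2+c^2}$ are controlled by the smoothing $\tw_{j_k}^{-1/2}$ in $P_{\vj,\vsigma}$; on Gauge invariant non-resonant indices the $c^2$ contributions cancel and
\[
  \Omega(\vj,\vsigma)\;=\;\Omega^{\tt NLS}(\vj,\vsigma)+\mathcal O(h\,j_{\max}^4),\qquad \Omega^{\tt NLS}(\vj,\vsigma)\vcentcolon=\tfrac{1}{2}\sum_{i=1}^4\sigma_i j_i^2,
\]
by \eqref{difffreq}, while $|\Omega^{\tt NLS}|\geq\tfrac{1}{2}$ since $(\vj,\vsigma)\notin\cIR$ forces $\Omega^{\tt NLS}$ to be a non-zero half-integer. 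The unfavourable factor $h\,j_{\max}^4$ is absorbed by the smoothing weights $\tw_{j_k}^{-1/2}$ already present in $P_{\vj,\vsigma}$, at the price of the extra weight $\tw$ permitted by the target norm $\|\cdot\|_{a,q,1}$.

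For the splitting $G=G^{\tt NLS}+G^R$ I define $G^{\tt NLS}$ by the same recipe with $(\Omega,P)$ replaced by $(\Omega^{\tt NLS},P^{\tt NLS})$; since $P^{\tt NLS}$ is Gauge invariant and $h$-independent, and $\Omega^{\tt NLS}$ is $h$-independent, so is $G^{\tt NLS}$, and \eqref{G.NLS} is again Lemma \ref{sti.abs.2}. On non Gauge invariant indices $G^{\tt NLS}=0$, so $G^R=G$ is already of size $h$ through the large denominator controlled in the previous paragraph. On Gauge invariant non-resonant indices I would decompose
\[
  G^R_{\vj,\vsigma}\;=\;\im\,\frac{P_{\vj,\vsigma}-P^{\tt NLS}_{\vj,\vsigma}}{\Omega(\vj,\vsigma)}\;+\;\im\,P^{\tt NLS}_{\vj,\vsigma}\,\frac{\Omega^{\tt NLS}(\vj,\vsigma)-\Omega(\vj,\vsigma)}{\Omega(\vj,\vsigma)\,\Omega^{\tt NLS}(\vj,\vsigma)},
\]
the first summand $O(h)$ through $P-P^{\tt NLS}=P^R$ and \eqref{stima.1}, the second $O(h)$ through $\Omega-\Omega^{\tt NLS}=O(h\,j_{\max}^4)$, both at the price of up to four derivatives. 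This delivers \eqref{sti.g1}.

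The main difficulty I anticipate is the careful tracking of the trade-off between the smallness of $|\Omega|$ (in the Gauge invariant regime when $h\,j_{\max}^4\gtrsim 1$, and in the non Gauge invariant regime whenever translation-invariance forces the integer indices into near-resonance with $c^2$) and the smoothing weights $\tw_{j_k}^{-1/2}$ inside $P_{\vj,\vsigma}$. All such balancing is packaged into the abstract polynomial estimate Lemma \ref{sti.abs.2}, which absorbs the losses exactly into the derivative drops declared in \eqref{sti.g0}--\eqref{sti.g1}; verifying that the quoted derivative budget suffices in all the sub-cases will be the main computational task of the proof.
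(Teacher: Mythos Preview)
Your overall architecture is the same as the paper's, and your treatment of $G^{\tt NLS}$ and of the decomposition of $G^R$ into a $P^R$-piece, a non-Gauge piece, and a $(\Omega^{-1}-(\Omega^{\tt NLS})^{-1})P^{\tt NLS}$-piece matches exactly. However, there is a genuine gap in the control of the Klein--Gordon small divisors $\Omega(\vj,\vsigma)=\sum_i\sigma_i\lambda_{j_i}$ on the Gauge invariant index set.

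Your argument for a lower bound on $|\Omega|$ is perturbative: you write $\Omega=\Omega^{\tt NLS}+\cO(h\,j_{\max}^4)$ and use $|\Omega^{\tt NLS}|\ge\tfrac12$. This only yields $|\Omega|\gtrsim 1$ when $h\,j_{\max}^4\lesssim 1$, i.e.\ $|j_{\max}|\lesssim c^{1/2}$. For larger indices the perturbation can swamp the main term and you have no lower bound at all. Your proposed remedy, that the weights $\tw_{j_k}^{-1/2}$ in $P_{\vj,\vsigma}$ compensate, does not work: $\tw_j=\sqrt{1+j^2/c^2}$ stays of order one throughout $|j|\lesssim c$, so in the window $c^{1/2}\lesssim |j_{\max}|\lesssim c$ you have neither a divisor bound nor any smoothing. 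Lemma~\ref{sti.abs.2} does not help here: it only converts a pointwise bound $|G_{\vj,\vsigma}|\le |F|_\infty\prod_k b^{(k)}_{j_k}$ into a vector-field estimate, and you must supply the divisor bound externally to produce such a pointwise inequality. The same objection applies to your non-Gauge argument: writing $\Omega=(\sum_i\sigma_i)c^2+\sum_i\sigma_i\nu_{j_i}$, the ``correction'' $\sum_i\sigma_i\nu_{j_i}$ can be of order $c|j_{\max}|\gg c^2$ when some index exceeds $c$, so the leading $c^2$ is not automatically dominant.

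The paper closes this gap by invoking Lemma~\ref{divisori} (which rests on Lemma~7.2 of \cite{berti2013kam}): for every $(\vj,\vsigma)\in\cL_\cJ\setminus\cIR$ one has the \emph{uniform} bound $|\Omega(\vj,\vsigma)|\ge K(1+J^2)^{-3/2}$, with $J=\max_{j\in\cS}|j|$, and $|\Omega|\gtrsim c^2$ in the non-Gauge case. The crucial point is that the constant depends only on the fixed tangential set $\cS$, not on the (possibly large) normal indices; this exploits the precise form $\lambda_j=c\sqrt{j^2+c^2}$ and the constraint that at least one $j_i$ lies in $\cS$. With this in hand, $|G_{\vj,\vsigma}|\lesssim|P_{\vj,\vsigma}|$ uniformly, and the rest of your argument goes through verbatim.
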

\begin{proof}
By recalling the notation \eqref{not:compatta}, we consider
\begin{align}
  \label{decoG}
G\,\,  =\sum_{ (\vj,\vsigma) \in \Z^4\times\left\{\pm\right\}^4 }G_{\vj, \vsigma} z_{\vj}^{\vsigma},
\end{align}
and we have that
\begin{equation}\label{perentesi:como}
\{\Lambda ,G \}=\iu\sum_{(\vj,\vsigma) \in \Z^4\times\left\{\pm\right\}^4 }\vsigma\cdot\vlambda\,  G_{\vj, \vsigma}\,z_{\vj}^{\vsigma},
\end{equation}
where we denoted $\vsigma\cdot\vlambda \vcentcolon= \sum_i\sigma_i\lambda_{j_i}$.
By the l.h.s. of \eqref{homologicala}, we would like to define the coefficients as
\begin{equation}
  \label{vorrei}
G_{\vj, \vsigma} \vcentcolon= \frac{P_{\vj, \vsigma}}{\im\vsigma\cdot\vlambda},
  \end{equation}
provided that the denominators do not vanish. This is the case for the indices which do not belong to the ``resonant set'' $\cIR$ given in Definition \ref{resonant}.  According to Lemma \ref{divisori} the denominators in \eqref{vorrei}
are uniformly bounded from below for all indices in $\cIR^c$ 
s.t. $\vsigma\cdot\vj=0$ and at least one index is in $\cJ$.

We thus define
\begin{equation}
  \label{LN}
\mathcal{L}_\cJ \vcentcolon=  \{(\vj,\vsigma)\in
\Z^4\times\left\{\pm\right\}^4\ :\ \vsigma\cdot\vj=0,\ \text{and}\ \exists
i\ :\ j_i\in\cJ\, \}.
  \end{equation}

Then, we can define $G$ as follows
\[
G \vcentcolon= \sum_{(\vj,\vsigma)\in\cL_{\cJ}\setminus\cIR}G_{\vj,\vsigma}\,z_{\vj}^{\vsigma},
\]
with the coefficients as in \eqref{vorrei}. From \eqref{perentesi:como} we have
\[
\{\Lambda,G\}+P=\Lambda_++\hat{P},
\]
with
{
\begin{equation}\label{P.capp}
\hat{P}(z,\bar
z)=\sum_{(\vj,\vsigma)\in(\cJ^c)^4 \times \{\pm\}^4}P_{\vj, \vsigma}\,z_{\vj}^{\vsigma}, \quad \Lambda_+=\sum_{(\vj,\vsigma)\in \cL_{\cJ} \cap \cIR }P_{\vj, \vsigma}\,z_{\vj}^{\vsigma}.
\end{equation}
From Definition \ref{resonant}, the definition of $\cL_{\cJ}$ in \eqref{LN} and the expression of the coefficients $P_{\vj, \vsigma}$ in \eqref{G}, we obtain 
\[
\Lambda_+=\sum_{(\vj,\vsigma)\in \cL_{\cJ} \cap \cIR }P_{\vj, \vsigma}\,z_{\vj}^{\vsigma}=\frac{1}{2} \sum_{i\,or\, j\in\cJ} \frac{3}{8 \pi}\frac{(2-\delta_{i,j})}{\tw_j \tw_i}\,z_i \bar{z}_{i} z_j \bar{z}_{j}, 
\]
which corresponds exactly to \eqref{res.nij}. }

We now prove 
\eqref{sti.g0}. Since, by Lemma \ref{divisori}, the denominators are uniformly
bounded from below, we have 
\[
G_{\vj,\vsigma}\sleq P_{\vj,\vsigma} \lesssim \frac{1}{(\tw_{j_1}\tw_{j_2}\tw_{j_3}\tw_{j_4})^{1/2}}.
\]
Using Lemma \ref{sti.abs.2} as in the proof of \eqref{stima.0.0} we
get 
\begin{equation}\label{StimaF0}
\sup_{(z,\bar{z}) \in B_{a,q}(\tR)}\| X_G (z,\bar{z})\|_{a,q,1} \lesssim \tR^3,
\end{equation}
for both $q=p$ and $q=p-4$.

{Following the same argument, we can define 
\[
G^{\tt NLS} \vcentcolon= \sum_{\substack{(\vj,\vsigma)\in\cL_\cJ\setminus\cIR\\ \sum_{i}\sigma_i=0}}
\frac{P^{\tt NLS}_{\vj, \vsigma}}{\im\vsigma\cdot\vlambda\null^{\tt NLS}}z_{\vj}^{\vsigma}.
\]
It is a solution of \eqref{homoNLS} with $\Lambda_+^{\tt NLS}$ as in \eqref{L+NLS} and 
\begin{equation}\label{P.capp.NLS}
\hat{P}^{\tt NLS} (z,\bar
z)=\sum_{(\vj,\vsigma)\in(\cJ^c)^4 \times \{\pm\}^4}P_{\vj, \vsigma}^{\tt NLS} z_{\vj}^{\vsigma},
\end{equation}
with the coefficients given \eqref{P:NLS}. From the same argument of \eqref{sti.g0} we obtain \eqref{G.NLS}.}

\medskip

Finally, we need to prove the estimate \eqref{sti.g1} for $G^R$. We expand the Hamiltonians $P^{\mathcal{NG}}$ and $P^R$, defined in Lemma \ref{decompP}, as follows
\begin{align}
\label{decoP.1}
&P^{\mathcal{NG}}=\sum_{\substack{(\vj,\vsigma) \in \Z^4\times\left\{\pm\right\}^4 \\ \sum_{i}\sigma_i\not=0}}P_{\vj,\vsigma}^{\mathcal{NG}}z_{\vj}^{\vsigma}, \qquad P^{R}\,\, =\sum_{\substack{(\vj,\vsigma) \in \Z^4\times\left\{\pm\right\}^4 \\ \sum_{i}\sigma_i=0}}
P_{\vj,\vsigma}^{R}z_{\vj}^{\vsigma}.&
\end{align}
Then we denote
\begin{align}
  \label{def.G.1}
  G^{\ng} \vcentcolon= \sum_{\substack{(\vj,\vsigma)\in\cL_\cJ\\ \sum_{i}\sigma_i\not=0}}\frac{P^{\ng}_{\vj,\vsigma}}
  {\im\vsigma\cdot\vlambda}z_{\vj}^{\vsigma}, \quad \gna \vcentcolon=  \sum_{\substack{(\vj,\vsigma)\in\cL_\cJ\setminus\cIR\\ \sum_{i}\sigma_i=0}}\frac{P^{\tt NLS}_{\vj,\vsigma}}{\im\vsigma\cdot\vlambda}z_{\vj}^{\vsigma}, \quad  G^{R1} \vcentcolon= \sum_{\substack{(\vj,\vsigma)\in\cL_\cJ\setminus\cIR\\ \sum_{i}\sigma_i=0}}\frac{P^{
    R}_{\vj,\vsigma}}{\im\vsigma\cdot\vlambda}z_{\vj}^{\vsigma}.
\end{align}

{Since $G^R \vcentcolon= G-G^{\tt NLS}$, we have the following decomposition 
\begin{equation}\label{decomp:G}
G^R= G^{R1}+G^{\mathcal{NG}}+(\gna- G^{\tt NLS}).
\end{equation}}

Concerning the non Gauge invariant terms, by the bound \eqref{divisori.2} in Lemma \ref{divisori}, the
denominators in \eqref{def.G.1} are of size $h^{-1}$, so that 
\begin{equation}\label{sti:sup}
\sup_{\vj,\vsigma }\left|G_{\vj,\vsigma}^{\mathcal{NG}}\right| \lesssim h.
\end{equation}
We can apply Lemma \ref{sti.abs.2}, with $b^{(t)}_j=1$ for any $t=1, \dots, 4$ and $j \in \mathbb{Z}$, to obtain
\begin{equation*}
\sup_{(z,\bar{z}) \in B_{a,q}(\tR)}\left\|X_{G^{\mathcal{NG}}}(z)\right\|_{a,q}\lesssim
h\tR^3,
  \end{equation*}
for both $q=p$ and $q=p-4$.

To estimate the vector field of $G^{R1}$ we use again Lemma \ref{sti.abs.2}, with coefficients
$$
F_{\vj,\vsigma}\vcentcolon= \frac{P^R_{\vj, \vsigma}}{\vsigma\cdot\vlambda}.
$$
Then we proceed exactly in the same way, with $b_j^{(t)}=1$ for any $j \in \mathbb{Z}$ and $t=1, \dots ,4$.

We are almost ready to prove the estimate of $G^R$. We already proved
the estimates for the first two terms of the decomposition \eqref{decomp:G}, so we have to estimate the
vector field of $G^{\tt app}-G^{\tt NLS}$. To this end, we remark that both the Hamiltonians are Gauge invariant, so that their coefficients are different from zero only when \eqref{gauge.1} is satisfied. Then, recalling \eqref{def:lambda2}, we have
\begin{align*}
\gna_{\vj,\vsigma}-G^{\tt NLS}_{\vj,\vsigma}&=
  \left(\frac{1}{\im \vsigma\cdot\vlambda}-
\frac{1}{\im \vsigma\cdot\vlambda\null^{\tt NLS}}
  \right)P_{\vj,\vsigma}^{\tt NLS}
\\
& = \left(  \frac{\sum_{m=1}^4 \sigma_m (\lambda^{\tt
    NLS}_{j_m}-\nu_{j_m})}{\im (\vsigma\cdot\vlambda)(\vsigma\cdot\vlambda\null^{\tt
  NLS})}\right)P_{\vj,\vsigma}^{\tt NLS}.
\end{align*}
Then $G^{\tt app}-G^{\tt NLS}$ is the sum of four terms with the structure \eqref{sti.abs.1},
each  with 
$$
F_{\vjs}= -\frac{P_{\vjs}^{\tt
    NLS}}{\im (\vsigma\cdot\vlambda)(\vsigma\cdot\vlambda\null^{\tt NLS}) }
$$
and the first term has
$$
b^{(1)}_{j_1}=\nu_{j_1}-\lambda_{j_1}^{\tt NLS} \sleq
h|j_1|^4,\qquad   b^{(t)}_{j_t}=1,\quad t=2,3,4.
$$
The other terms are obtained in the same way. The application of
Lemmas \ref{algebra} and \ref{sti.abs.2} gives the thesis and concludes the proof.
\end{proof}

We can now state the following result about the flows of $G$ and $G^{\tt NLS}$.

\begin{lemma}
  \label{flusso.0}
There exist $h_*$,
$\tR_{1*}>0$ such that the following holds. For any $\tR\in (0, \tR_{1*})$ and $h \in (0,h_*)$ the flows $\phi_G^t$ of
$G$ and $\phi_{G^{\tt NLS}}^t$ of
$G^{\tt NLS}$ are real analytic and fulfills
\[
\phi_G^t\colon \mathcal{B}_{a, q,\beta}(\tR) \to  \mathcal{B}_{a,
  q,\beta}(2\tR), \quad \phi_{G^{\tt NLS}}^t\colon \mathcal{B}_{a, q}(\tR) \to  \mathcal{B}_{a,
  q}(2\tR),
\]
for $q=p, p-4$, $\beta\in[0,1]$ and for all $t\in [-1, 1]$. Furthermore, the following holds.
\begin{itemize}
\item[\textnormal{(i)}]  For both $q=p,p-4$ the flows satisfy
 \begin{align}\label{stim:fluXKG}
  &\sup_{(z,\bar z)
    \in B_{a,q,0}(\tR)}\left\|\phi^t_G(z,\bar z)-(z,\bar z)\right\|_{a,q,1}\lesssim
  \tR^3& \qquad &\forall t\in [-1, 1],& \\
\label{fluNLS}
&\sup_{(z,\bar z)
    \in B_{a,q}(\tR)}\left\|\phi^t_{G^{\tt
      NLS}}(z,\bar z)-(z,\bar z)\right\|_{a,q}\lesssim \tR^3&  \qquad &\forall t\in [-1, 1].&
\end{align}
\item[\textnormal{(ii)}] If we define 
\begin{equation}\label{scomp.flow}
\Psi(h; t, \cdot) \vcentcolon= \phi_G^t-\phi_{G^{\tt NLS}} \qquad \forall t\in [-1, 1],
\end{equation} 
the following estimate holds
\begin{equation}
  \label{sti.flu.3}
\sup_{(z,\bar z)\in B_{a,p}(\tR)}\left\|\Psi(h;
 t, z,\bar z)\right\|_{a,p-4}\lesssim
  h{\tR^3}, \qquad \forall t \in [-1,1].
\end{equation}
\end{itemize}
\end{lemma}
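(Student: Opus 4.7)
The plan is to apply the abstract existence result \eqref{esisto.3} to $G$ and $G^{\tt NLS}$ in order to produce the flows and the displacement estimates \eqref{stim:fluXKG}--\eqref{fluNLS}, and then to control the difference $\Psi = \phi_G^t - \phi_{G^{\tt NLS}}^t$ via a Gronwall argument driven by the smallness of $X_{G^R} = X_G - X_{G^{\tt NLS}}$ guaranteed by \eqref{sti.g1}.

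By Lemma \ref{sol.homo} one has $\|X_G\|_{a,q,1}\lesssim \tR^3$ on $B_{a,q,0}(2\tR)$ and $\|X_{G^{\tt NLS}}\|_{a,q}\lesssim \tR^3$ on $B_{a,q}(2\tR)$, for $q=p,p-4$. Since $\tw_j\ge 1$, we have $\|\cdot\|_{a,q,\beta}\le\|\cdot\|_{a,q,1}$ for every $\beta\in[0,1]$, hence $\|X_G\|_{a,q,\beta}\lesssim \tR^3$ on $B_{a,q,\beta}(2\tR)$. Choosing $\delta = C\tR^3$ (so that $\delta\le\tR$ for $\tR\le \tR_{1*}$ small enough), \eqref{esisto.3} applied with $\cU=B_{a,q,\beta}(2\tR)$ and $\cV=B_{a,q,\beta}(\tR)$ produces the flow $\phi_G^t\colon B_{a,q,\beta}(\tR)\to B_{a,q,\beta}(2\tR)$ for $|t|\le 1$, and similarly for $\phi_{G^{\tt NLS}}^t$ on the $\beta=0$ spaces. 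To upgrade to the $(a,q,1)$ bound appearing in \eqref{stim:fluXKG}, I would use the integral representation
\[
\phi_G^t(z,\bar z) - (z,\bar z) = \int_0^t X_G(\phi_G^s(z,\bar z))\,ds,
\]
estimating the integrand in $\|\cdot\|_{a,q,1}$ directly by \eqref{sti.g0}, noting that $\phi_G^s(z,\bar z)\in B_{a,q,0}(2\tR)$; the corresponding bound \eqref{fluNLS} follows directly from \eqref{esisto.3} via \eqref{G.NLS}.

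For the difference estimate \eqref{sti.flu.3}, fix $(z,\bar z)\in B_{a,p}(\tR)$ and set $\eta(t):=\|\Psi(h;t,z,\bar z)\|_{a,p-4}$. Using the decomposition $G=G^{\tt NLS}+G^R$ from \eqref{decomp:G}, one computes
\[
\partial_t\Psi=\bigl[X_{G^{\tt NLS}}(\phi_G^t(z,\bar z))-X_{G^{\tt NLS}}(\phi_{G^{\tt NLS}}^t(z,\bar z))\bigr]+X_{G^R}(\phi_G^t(z,\bar z)).
\]
The second summand is bounded by $C_2 h\tR^3$ using \eqref{sti.g1}, because $\phi_G^t(z,\bar z)\in B_{a,p}(2\tR)\subset B_{a,p-4}(2\tR)$ by the first part of the proof. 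For the first summand, I would appeal to the fact that $X_{G^{\tt NLS}}$ is a cubic polynomial vector field whose differential satisfies $\|dX_{G^{\tt NLS}}(u)\|_{(a,p-4)\to(a,p-4)}\lesssim \tR^2$ uniformly for $u\in B_{a,p-4}(2\tR)$; by the mean value theorem this bounds the bracket by $C_1\tR^2\eta(t)$. Integrating from $0$ and applying Gronwall yields
\[
\eta(t)\le C_2 h\tR^3|t|\,e^{C_1\tR^2|t|}\lesssim h\tR^3,\qquad |t|\le 1,
\]
provided $\tR\le\tR_{1*}$ is taken small enough, which proves \eqref{sti.flu.3}.

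The main technical point is the Lipschitz estimate for $X_{G^{\tt NLS}}$ in the weaker $(a,p-4)$ norm: this is possible because $G^{\tt NLS}$ is cubic and $h$-independent, so its vector field is controlled in the $(a,q)$ scale for both $q=p$ and $q=p-4$ (by \eqref{G.NLS}), and $dX_{G^{\tt NLS}}$ can be estimated by the same application of the multilinear estimate Lemma \ref{sti.abs.2}, with weights $b^{(t)}_j\equiv 1$, that was used to prove \eqref{G.NLS}.
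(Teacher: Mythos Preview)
Your proof is correct and follows essentially the same approach as the paper. The paper's own proof is terse: for item (i) it cites \eqref{esisto.3} together with \eqref{sti.g0} and \eqref{G.NLS}, and for item (ii) it invokes the abstract Lemma~\ref{flowmap.perdo} (which encapsulates exactly the Gronwall argument you wrote out) together with \eqref{sti.g1}. Your direct computation of $\partial_t\Psi$ and the ensuing Gronwall estimate is precisely the content of that lemma, specialized to $E=\cP^{a,p}$, $\tilde E=\cP^{a,p-4}$, $X_1=X_G$, $X_2=X_{G^{\tt NLS}}$; the Lipschitz bound you need on $X_{G^{\tt NLS}}$ in the $(a,p-4)$ norm is exactly the constant $K_2$ appearing there, and your justification via \eqref{G.NLS} (for $q=p-4$) plus Cauchy/multilinearity is the natural one.
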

\begin{proof}
The proof of \eqref{stim:fluXKG} and \eqref{fluNLS} follows from \eqref{esisto.3}, combined with  \eqref{sti.g0} and \eqref{G.NLS}. The proof of \eqref{sti.flu.3} follows from Lemma \ref{flowmap.perdo}, combined with \eqref{sti.g1}.
\end{proof}


\subsubsection{Proof of Theorem \ref{Teorema0} }\label{fine}

The proof is divided into three parts, corresponding to the three items of the statement.

\textbf{Proof of Item \ref{item:flussi}:} By Lemma \ref{flusso.0}. we define $\cT_0 \vcentcolon = \phi_G^t \big|_{t=1}$. Then from \eqref{h_0}, combined with \eqref{homologicala}, we have 
\[
H \circ \cT_0=\Lambda+\Lambda_+ + \hat{P}+ P_0,
\]
which is exactly \eqref{NFS}, with $\Lambda_+$ and $\hat{P}$ homogeneous of degree four. Similarly we obtain \eqref{NFS:NLS} by defining $\cT_0^{\tt NLS} \vcentcolon = \phi_{G^{\tt NLS}}^t \big|_{t=1}$ introduced in Lemma \ref{flusso.0}. Then, the estimates \eqref{sti.tra.1}, \eqref{res.tr.0} are exactly \eqref{stim:fluXKG}, \eqref{fluNLS} and \eqref{sti.flu.3} for $t=1$, which hold true for any $\tR \in (0,\tR_{1*})$.

\medskip

\textbf{Proof of Item \ref{intem:normale}:}  the expressions of $\Lambda_+$ and $\Lambda_+^{\tt NLS}$ are obtained in Lemma \ref{sol.homo}. Then, the estimates \eqref{res.0.2} 
are obtained from Lemmata \ref{algebra},
  \ref{sti.abs.2}, in a way similar to the
one presented in the proof of Lemma \ref{sol.homo}.
To conclude Item \ref{intem:normale} we need to prove the second of \eqref{stimaLR} for the vector field of
\[
\Lambda_+^R\vcentcolon= \Lambda_+-\Lambda_+^{\tt NLS}.
\]
By the expressions \eqref{L+NLS} and \eqref{res.nij}, we have 
\begin{align}
\label{prima}
\Lambda_+^R&=\frac{1}{2} \sum_{i,j} N_{ij}\bigg(\frac{1}{(1+h\nu_j)(1+h\nu_i)}-1 \bigg) {z_i \bar{z}_i z_j \bar{z}_j}
\\
\label{seconda}
&=-\frac{h}{2}\sum_{\min(i,j) \leq N
}N_{ij}\frac{\nu_i+\nu_j+h\nu_i\nu_j}{(1+h\nu_i)(1+h\nu_j)} {z_i \bar{z}_i z_j \bar{z}_j}
\\
\label{sti.L.1}
&
=-\frac{h}{2}\left[\sum_{j_1...j_4}\frac{N_{j_1j_2}\delta_{j_1j_3}\delta_{j_2j_4}\nu_{j_1}}{(1+h\nu_{j_1})(1+h\nu_{j_2})}z_{j_1}\bar
  z_{j_3}z_{j_2}\bar z_{j_4}\right.
  \\
  \label{sti.L.2}
&  +\sum_{j_1...j_4}\frac{N_{j_1j_2}\delta_{j_1j_3}\delta_{j_2j_4}\nu_{j_2}}{(1+h\nu_{j_1})(1+h\nu_{j_2})}z_{j_1}\bar
  z_{j_3}z_{j_2}\bar z_{j_4}
  \\
  \label{sti.L.3}
& \left. +\sum_{j_1...j_4}\frac{N_{j_1j_2}\delta_{j_1j_3}\delta_{j_2j_4}h\nu_{j_1}\nu_{j_2}}{(1+h\nu_{j_1})(1+h\nu_{j_2})}z_{j_1}\bar
  z_{j_3}z_{j_2}\bar z_{j_4}\right]\,.
\end{align}
Actually in the sums of the lines \eqref{prima}-\eqref{sti.L.3} there are some
further limitations on the range of the indices, but they are
irrelevant for the subsequent estimates.
The term \eqref{sti.L.1} has the form \eqref{sti.abs.1} with
\begin{align}
\nonumber
  &F_{j_1...j_4}=  {h}N_{j_1j_2}\delta_{j_1j_3}\delta_{j_2j_4}\,,&
  \\
\label{stim:intermedia}
&b^{(1)}_{j_1}=  \frac{\nu_{j_1}}{1+h\nu_{j_1}}\leq
\frac{j_1^2}{2},\quad b^{(2)}_{j_2}=\frac{1}{1+h\nu_{j_2}}\leq
1,\quad b^{(3)}_{j_3}=b^{(4)}_{j_4}=1\,,&
\end{align}
thus from Lemmata \ref{algebra} and \ref{sti.abs.2} we have 
\begin{equation}  \label{sti.L.4}
\left\|X_{(\ref{sti.L.1})}(z,\bar z)\right\|_{a,p-2}\sleq
h\left\|(z,\bar z)\right\|_{a,p}^3 \qquad  \forall (z,\bar z) \in \cP^{a,p}.
\end{equation}
The vector field of \eqref{sti.L.2} is estimated exactly in the same
way. Also \eqref{sti.L.3} has the structure \eqref{sti.abs.1}, but the
weights are now
$$
b^{(1)}_{j_1}=\frac{\nu_{j_1}}{1+h\nu_{j_1}}\leq
\frac{j_1^2}{2},\quad
b^{(2)}_{j_2}=\frac{h\nu_{j_2}}{1+h\nu_{j_2}}<1,
$$
so again we get 
\begin{equation*}
\left\|X_{\eqref{sti.L.3}}(z,\bar z)\right\|_{a,p-2}\sleq
h\left\|(z,\bar z)\right\|_{a,p}^3 \qquad  \forall (z,\bar z) \in \cP^{a,p}.
\end{equation*}
Summarizing we get 

\[
\sup_{(z,\bar z) \in B_{a,p}(\tR)}\left\|X_{\Lambda_+^R}(z,\bar z)\right\|_{a,p-2}\lesssim h\tR^3.
\]

\medskip
\textbf{Proof of Item \ref{item:pertu}:} { first, we recall that $\hat{P}$ and $\hat{P}^{\tt NLS}$ are obtained in \eqref{P.capp} and \eqref{P.capp.NLS}. Then the non Gauge invariant part of $\hat{P}$ is defined as
\[
\hat{P}^{\mathcal{NG}}\vcentcolon=\sum_{\substack{(\vj,\vsigma)\in(\cJ^c)^4 \times \{\pm\}^4, \\ \sum_i \sigma_i \neq 0}}P_{\vj, \vsigma}z_{\vj}^{\vsigma}
\]
and thus we can define $\hat{P}^R\vcentcolon=\hat{P}-\hat{P}^{\mathcal{NG}}-\hat{P}^{\tt NLS}$. The estimates \eqref{res.0.1}, the second of \eqref{res.NLS} and the second of \eqref{res.0.3} are obtained from Lemmata \ref{algebra},
  \ref{sti.abs.2}, in a way similar to the
ones of Lemma \ref{sol.homo}. }

\smallskip

We need to prove the estimates of the vector fields of $P_0$ and $P_0^{\tt NLS}$. By \eqref{P0}, \eqref{homologicala} and \eqref{homoNLS} we have
\begin{align}
  \label{P0.1}
  P_0&=\int_0^1s \left\{P,G\right\}\circ \phi_G^s\,ds+\int_0^1(1-s)\left\{\Lambda_+,G\right\}\circ \phi_G^s\,ds+\int_0^1(1-s) \{ \hat P,G \} \circ \phi_G^s\,ds,& \\
  \label{P0.NLS}
 {P_0^{\tt NLS}} &={ \int_0^1 s \left\{ P^{\tt NLS}, G^{\tt NLS} \right\} \circ \phi_{G^{\tt NLS}}^s \, ds 
  + \int_0^1 (1-s) \left\{ \Lambda_+^{\tt NLS}, G^{\tt NLS} \right\} \circ \phi_{G^{\tt NLS}}^s \, ds }
  & \\
\nonumber
& {+ \int_0^1 (1-s) \left\{ \hat P^{\tt NLS}, G^{\tt NLS} \right\} \circ \phi_{G^{\tt NLS}}^s \, ds.}&
\end{align}
Then the first of \eqref{res.0.0} and the first of \eqref{res.NLS} are standard and are omitted (see for instance Section 3. of 
\cite{poschel1996quasi}).

Now, we are going to define the decomposition $P_0=P_0^{\tt NLS}+P_0^{\mathcal{NG}}+P_0^R$
and estimate the different terms. First, we define 
\begin{equation}
\label{P:0NG}
  P_0^{\mathcal{NG}}\vcentcolon= \int_0^1s\left\{P^{\mathcal{NG}},G^{\tt NLS} \right\}\circ \phi_{G^{\tt NLS} }^s\,ds +\int_0^1(1-s)\left\{\hat P^{\mathcal{NG}},G^{\tt NLS} \right\}\circ \phi_{G^{\tt NLS} }^s\,ds,
\end{equation}
so that $\Pi_{\mathcal{G}} P_0^{\mathcal{NG}}=0$ by Lemma \ref{lem:gauge2}. The estimate of $X_{P_0^{\mathcal{NG}}}$ follows from the same computation of those used for $X_{P_0}$ in \cite{poschel1996quasi} (see Section 3), combined with \eqref{stima.2}, the second of \eqref{res.0.1}, \eqref{G.NLS} and \eqref{fluNLS}.

Then we define 
\begin{equation}\label{P0R}
P_0^R\vcentcolon= P_0-P_0^{\mathcal{NG}}-P_0^{\tt NLS},
\end{equation}
 and the rest of the proof is devoted to estimate its vector field, as stated in \eqref{res.0.3}. {According to \eqref{P0.1}, \eqref{P0.NLS}, \eqref{P:0NG} and \eqref{P0R} we need to analyze several
contributions. We do this in detail only for 
\[
\int_0^1s \left\{P,G\right\}\circ \phi_G^s\,ds- \int_0^1s\left\{P^{\mathcal{NG}},G^{\tt NLS} \right\}\circ \phi_{G^{\tt NLS} }^s\,ds - \int_0^1 s \left\{ P^{\tt NLS}, G^{\tt NLS} \right\} \circ \phi_{G^{\tt NLS}}^s \, ds,
\]
the others being essentially equal. Up to an irrelevant factor $s$, this term involves the integral on $[0, 1]$ of the function}
\begin{align}
\label{termine:lungo}
\left\{P,G\right\}\circ \phi^s_G&-\left\{P^{\tt NLS} +P^{\mathcal{NG}},G^{\tt NLS} 
\right\}\circ \phi^s_{G^{\tt NLS} }
\\
&=\label{*.1}
\left\{P^{\tt NLS} +P^{\mathcal{NG}},G^{R}
\right\}\circ \phi^s_{G}
\\
\label{**.1}
&+\left\{P^{\tt NLS} +P^{\mathcal{NG}},G^{\tt NLS} 
\right\}\circ \phi^s_{G}
\\
\label{*.2}
&+\left\{P^{R},G
\right\}\circ \phi^s_{G}
\\
\label{**.2}
&-\left\{P^{\tt NLS} +P^{\mathcal{NG}},G^{\tt NLS} 
\right\}\circ \phi^s_{G^{\tt NLS} }\,.
\end{align}
We start by estimating \eqref{*.1}.

From  \eqref{esisto.4} and \eqref{prop:group} we have
\begin{equation}\label{stimaux}
X_{\left\{P^{\tt NLS} +P^{\mathcal{NG}},G^{R}
\right\}\circ \phi^s_{G}}=\left(d\phi_{G}^{-s} [X_{  P^{\tt NLS}
      +P^{\mathcal{NG}}},X_{ G^R}] \right)\circ\phi^s_G, 
\end{equation}
that we are going to estimate as a map from $B_{a,p}(\tR)$ to $\cP^{a,p-4}$. 

{For $\tR<\tR_{*1}/3$ introduced in Lemma \ref{flusso.0}, we have that $\phi_G^s$ is well defined on $B_{a,q}(3 \tR)$ for any $s \in [-1,1]$ and for both $q=p,p-4$. In particular 
 \begin{equation} 
\label{inclusion:flux}
\begin{aligned}
&\phi^s_G( B_{a,p}(\tR))\subset  B_{a,p}(2\tR)\subset
 B_{a,p-4}(2\tR),&\quad &\forall s \in [-1,1],& \\
&\phi^s_G( B_{a,p-4}(3\tR))\subset  B_{a,p-4}(6\tR),&\quad & \forall s \in [-1,1].&
\end{aligned}
\end{equation}}
Then for any $\tR<\tR_{*1}/3$, recalling the definition of the operator norm \eqref{norma:op}, we estimate \eqref{stimaux} as follows. For any $s \in [-1,1]$ we have
  \begin{align*}
\sup_{(z,\bar z)\in B_{a,p}(\tR)}&\left\|\left(d\phi_G^{-s} [X_{  P^{\tt NLS}
      +P^{\mathcal{NG}}},X_{ G^R}] \right)\left(\phi^s_G(z,\bar
z\right))  \right\|_{a,p-4}
\\
&\leq \sup_{(z,\bar z)\in B_{a,p}(2\tR)}\left\|\left(d\phi_G^{-s} [X_{  P^{\tt NLS}
      +P^{\mathcal{NG}}},X_{ G^R}] \right)(z,\bar
z)  \right\|_{a,p-4} 
\\
&\leq \sup_{(z,\bar z)\in B_{a,p}(2\tR)}\left\|d\phi_G^{-s}(z,\bar z)
\right\|_{\cP^{a,p-4};\cP^{a,p-4}}
\\
&\times \left(\sup_{(z,\bar
  z)\in B_{a,p}(2\tR)}\left\| dX_{  P^{\tt NLS}
      +P^{\mathcal{NG}}}(z,\bar z)
\right\|_{\cP^{a,p-4};\cP^{a,p-4}}\sup_{(z,\bar
  z)\in B_{a,p}(2\tR)}\left\|X_{G^R}(z,\bar z)\right\|_{a,p-4}\right.
\\
&+\left. \sup_{(z,\bar
  z)\in B_{a,p}(2\tR)}\left\| dX_{G^R}(z,\bar z)
\right\|_{\cP^{a,p};\cP^{a,p-4}}\sup_{(z,\bar
  z)\in B_{a,p}(2\tR)}\left\|X_{  P^{\tt NLS}
      +P^{\mathcal{NG}}}(z,\bar z)\right\|_{a,p}  \right)
\\
&\lesssim {\frac{1}{\tR}\sup_{(z,\bar z)\in B_{a,p-4}(3\tR)}\left\|\phi_G^{-s}(z,\bar z)
\right\|_{a,p-4}}
\\
&\times \left({\frac{1}{\tR}\sup_{(z,\bar
  z)\in B_{a,p-4}(3\tR)}\left\| X_{  P^{\tt NLS}
      +P^{\mathcal{NG}}}(z,\bar z)
\right\|_{a,p-4}}\sup_{(z,\bar
  z)\in B_{a,p}(2\tR)}\left\|X_{G^R}(z,\bar z)\right\|_{a,p-4}\right.
\\
&+\left. {\frac{1}{\tR}\sup_{(z,\bar
  z)\in B_{a,p}(3\tR)}\left\| X_{G^R}(z,\bar z)
\right\|_{a,p-4}}\sup_{(z,\bar
  z)\in B_{a,p}(2\tR)}\left\|X_{  P^{\tt NLS}
  +P^{\mathcal{NG}}}(z,\bar z)\right\|_{a,p}  \right) \lesssim h\tR^5,
  \end{align*}

where we estimated the norms of the differential of the flow combining \eqref{inclusion:flux} with \eqref{stim:fluXKG} and Cauchy estimates. The norm of the commutator is estimated recalling \eqref{reg.poisson}, \eqref{stima.2}, \eqref{stima.3} and \eqref{sti.g1}, combined with Cauchy estimates for the operator norms of the differentials.

\smallskip

The term \eqref{*.2} is estimated exactly in the same way. To conclude the estimate of the vector field of \eqref{termine:lungo}, we estimate
now $X_{\eqref{**.1}}-X_{\eqref{**.2}}$. First we analyze
$X_{\eqref{**.1}} $. To this end remark first that for any $s \in [-1,1]$, by \eqref{scomp.flow} and \eqref{prop:group}, we have
$
(\phi_G^s)^{-1}=\phi^{-s}_G=\phi^{-s}_{G^{\tt NLS}}+\Psi(h;-s).
$
Thus denoting
\begin{equation}\label{Z}
\text{Z}\vcentcolon= X_{\left\{P^{\mathcal{NG}}+P^{\tt NLS} ,G^{\tt NLS} \right\}},
\end{equation}
we have (c.f. \eqref{esisto.4})
\begin{align*}
X_{\eqref{**.1}}=\left[\diff (\phi_{G^{\tt NLS} }^s+\Psi(h; s))^{-1}\text{Z}\right]\circ\left(
\phi_{G^{\tt NLS} }^s+\Psi(h; s) \right)&= \left[\diff (\phi_{G^{\tt NLS} }^{-s})\text{Z}\right]\circ\left(
\phi_{G^{\tt NLS} }^s+\Psi(h; s) \right)&
\\
& + \left[\diff \Psi(h; \blu{-}s)\text{Z}\right]\circ\left(
\phi_{G^{\tt NLS} }^s+\Psi(h; s) \right).&
\end{align*}
The second term is easily estimated using \eqref{sti.flu.3} to
estimate the differential of $\Psi$. The difference
between the first one and the vector field  of \eqref{**.2}  is given by
\[
\text{W}\vcentcolon= (\diff \phi_{G^{\tt NLS} }^{-s}\text{Z})\circ (\phi_{G^{\tt NLS} }^{s}+\Psi(h; s))-
  (\diff \phi_{G^{\tt NLS} }^{-s}\text{Z})\circ \phi_{G^{\tt NLS} }^{s}.
\]
{For $\tR<\tR_{*1}/3$ introduced in Lemma \ref{flusso.0}, we have that $\phi_{G^{\tt NLS}}^s$ is well defined on $B_{a,q}(3 \tR)$ for any $s \in [-1,1]$ and for both $q=p,p-4$. In particular 
 \begin{equation} 
\label{inclusion:flux:NLS}
\begin{aligned}
&\phi^s_{G^{\tt NLS}}( B_{a,p}(\tR))\subset  B_{a,p}(2\tR)\subset
 B_{a,p-4}(2\tR),&\quad &\forall s \in [-1,1],& \\
&\phi^s_{G^{\tt NLS}}( B_{a,p-4}(3\tR))\subset  B_{a,p-4}(6\tR),& \quad & \forall s \in [-1,1].&
\end{aligned}
\end{equation}}
Then for any $\tR<\tR_{*1}/3$ and $ s \in [-1,1]$, by Lagrange theorem we have
\begin{align*}
\sup_{(z,\bar z)\in B_{a,p}(\tR)} \|\text{W}(z,\bar z)\|_{a,p-4} 
&\leq
\sup_{(z,\bar z)\in B_{a,p}(2\tR)}  
\oN{
\diff( \diff \phi_{G^{\tt NLS}}^{-s}\text{Z})(z,\bar z)
}_{\cP^{a,p-4};\cP^{a,p-4}}
\sup_{(z, \bar{z})\in B_{a,p}(\tR)} \|\Psi(h; s)\|_{a,p-4} 
\\
&\leq 
\Bigg(
\sup_{(z,\bar z)\in B_{a,p}(2\tR)}  
\oN{
\diff^2 \phi_{G^{\tt NLS}}^{-s}(z,\bar z)
}_{\big(\cP^{a,p-4}\big)^2;\cP^{a,p-4}}
\sup_{(z,\bar{z}) \in B_{a,p}(2\tR)} \| \text{Z}(z,\bar{z})\|  
\\
&\quad + 
\sup_{(z,\bar z)\in B_{a,p}(2\tR)} 
\oN{
\diff \phi_{G^{\tt NLS}}^{-s}(z,\bar z)
}_{\cP^{a,p-4};\cP^{a,p-4}} 
\oN{
\diff \text{Z} (z,\bar z)
}_{\cP^{a,p-4};\cP^{a,p-4}} 
\Bigg) h \tR^3 \\
&\lesssim h\tR^7,
\end{align*}
where we used \eqref{inclusion:flux:NLS}, \eqref{sti.flu.3}, \eqref{fluNLS}, and Cauchy estimates to bound the norms of the flows and their differentials. To bound the norm of $Z$ defined in \eqref{Z}, we use the fact that it is the commutator of two vector fields, and the estimates \eqref{stima.2}, \eqref{stima.3} and \eqref{GNLS}. 
This concludes the estimate for the vector field of \eqref{termine:lungo}.

{The remaining terms in \eqref{P0R} can be estimated similarly and are omitted. By choosing $\tR_*=\tR_{*1}/3$ introduced in Lemma \ref{flusso.0}, this completes the proof of the first inequality in \eqref{res.0.3}, and thus of the theorem.}


\subsection{Expansion on the unperturbed invariant torus}

In a small neighborhood of the origin, we consider now an unperturbed torus, invariant by the flow of $\Lambda+\Lambda_+ + \hat{P}$, and we study its persistence in the full system.  Following \cite{poschel1996quasi} and \cite{kuksin1996invariant}, it is convenient to introduce the rescaled parameter
\begin{equation}
\label{def.di.r}
  r = \tR^{3/2}.
\end{equation}
{Then, for any $\xi = (\xi_j)_{j \in \cS} \in \Xi_0$, where the set $\Xi_0$ is defined in \eqref{xi0:def} and it can be written in terms of $r$ as
\begin{equation}\label{Xi0}
  \Xi_0 
          = \left[ \frac{1}{2} r^{4/3}, \frac{3}{2} r^{4/3} \right]^N,
\end{equation}}
we consider the torus
\begin{align*}
\T^N(\xi)\vcentcolon= \{(z, \bar{z}) \in \cP^{a,p}\,|\,
\overline{z_j}=\bar{z}_{j}\,\ \text{and}\  \,
|z_j|^2=\xi_j, \, \, j \in \cS , \quad |z_j|^2&=0,\quad j \not\in \cS \}\ 
\end{align*}
and we Taylor expand the Hamiltonian around it. To this end we
introduce action-angle coordinates $(x, y)$ for the
modes indexed on $\cS$. We consider the change of coordinates 
\begin{align}
  \label{def.t1}
(z,\bar z)=\cT_1\big(\left\{x_j\right\}_{j \in \cS},\left\{y_j\right\}_{j \in \cS} ,(\{
z_j\}_{j \in \cS^c},\{{\bar z}_{j}\}_{j \in \cS^c})\big)
\end{align}
  with 
  \begin{align*}
   & z_{j}=\sqrt{\xi_j+y_j}e^{\im x_j},   \, \, \,  \bar{z}_{j}=\sqrt{\xi_j+y_j}e^{-\im x_j},\quad j \in \cS,&
  \end{align*}
{while in the directions of $\cS^c$ we keep the original variable $z_j, \bar{z}_j$.}

Taylor expanding the Hamiltonian \eqref{NFS} at $(y,z,\bar z)=(0,0,0)$ we have
\begin{equation}\label{H}
H_0(\xi; x,y,z,\bar{z})=N_0+P_0,
\end{equation}
with
\begin{equation*}
N_0=N_0(\xi; x, y, z, \bar{z})\vcentcolon= \jbs{\omega_0(h,\xi),y}+
\jbs{\Omega_0(h,\xi),z\bar z},
\end{equation*}
where $\omega_0=(\omega_{0 \, j} )_{j \in \cS}$ and $\Omega_0=\{\Omega_{0 \, j}\}_{j \in \cS^c}$ are given below in \eqref{Oomega},
$z\bar{z}\vcentcolon= \{z_j \bar{z}_{j}\}_{j \in \cS^c}$, and  $\langle
w,w'\rangle\vcentcolon= \sum_{j }w_jw'_j$, with the sum that runs either on $\cS$ or on $\cS^c$, and $P_0 \vcentcolon = H_0-N_0$, while its
relevant properties will be specified later.  Similarly, for the NLS, we write
\begin{equation}\label{HNLS}
H_0^{\tt NLS}(\xi; x,y,z,\bar{z})=N_0^{\tt NLS}+P_0^{\tt NLS},
\end{equation}
with
\begin{equation}
  \label{N.NLS.1}
N_0^{\tt NLS}=N_0^{\tt NLS}(\xi; x, y, z, \bar{z})\vcentcolon= \jbs{\omega_0^{\tt NLS}(\xi),y}+
\jbs{\Omega_0^{\tt NLS}(\xi),z\bar z}\ 
\end{equation}
where $\omega^{\tt NLS}_0, \Omega_0^{\tt NLS}$ are given below in
\eqref{OomegaNLS} and the properties of $P^{\tt NLS}_0$ will be
specified later.  For the definition of the frequencies we have to introduce
the matrices $\matA=(\matA_{ij})_{i \in \cS}^{j \in \cS}$
 and  $\matB=(\matB_{ij})_{i \in \cS^c}^{j \in \cS}$
\begin{align}
\label{def:AKG}
&\matA_{ij}(h)\vcentcolon= \frac{N_{ij}}{(1+h\nu_j(h))(1+h\nu_i(h))},\ & & i,j \in \cS&
\\
\label{def:BKG}
&\matB_{ij}(h)\vcentcolon= \frac{N_{ij}}{(1+h\nu_j(h))(1+h\nu_i(h))},\qquad &
&i \in \cS^c, \, \, j \in \cS,&
\end{align}
with $N_{i, j}$ given in \eqref{res.nij}.

{It is useful to introduce a notation which distinguishes
between the frequencies of the modes in $\cJ$ and the remaining ones.}
The frequencies are given by

\begin{equation}\label{Oomega}
\omega_0(h,\xi)\vcentcolon= \lambda(h)+\matA(h)\xi, \qquad \Omega_0(h,\xi)\vcentcolon= \varLambda(h)+\matB(h)\xi,
\end{equation}
where 
\begin{equation}\label{lambda-Lambda}
\lambda=(\lambda_j)_{j \in \cS}, \qquad \varLambda=(\lambda_j)_{j\in \cS^c}, \qquad \lambda_j\vcentcolon= c\sqrt{c^2+j^2}.
\end{equation}
Similarly we define 
\begin{equation}\label{matNLS}
\matA^{\tt NLS}=(\matA^{\tt NLS}_{ij})_{i \in \cS}^{j \in \cS}\vcentcolon= (N_{ij})_{i \in \cS}^{j \in \cS},
\quad \matB^{\tt NLS}=(\matB^{\tt NLS}_{ij})_{j \in \cS}^{i \in \cS^c}\vcentcolon= (N_{{i}j})_{j \in \cS}^{i \in \cS^c},
\end{equation}
and
\begin{equation}\label{OomegaNLS}
\omega^{\tt NLS}_0(\xi)\vcentcolon= \lambda^{\tt
  NLS}+\matA^{\tt NLS}\xi, \qquad
\Omega_0^{\tt NLS}(\xi)\vcentcolon= \varLambda^{\tt
  NLS}+\matB^{\tt NLS}\xi,
\end{equation}
where  
\begin{equation}\label{linFreNLS}
\lambda^{\tt NLS}=(\lambda^{\tt NLS}_j)_{j \in \cS} \qquad \varLambda^{\tt NLS}=(\lambda^{\tt NLS}_j)_{j \in \cS^c}, \qquad \lambda_j^{\tt NLS}=\frac{j^2}{2}.
\end{equation}
Finally we define 
\begin{equation}\label{OomegaR}
\omega^{R}_0(h,\xi)\vcentcolon= \omega_0(h,\xi)-h^{-1}-\omega_0^{\tt NLS}, \qquad
\Omega_0^{R}(h,\xi)\vcentcolon= \Omega_0(h,\xi)-h^{-1}-\Omega_0^{\tt NLS}. 
\end{equation}
Whenever it is not useful, we will not specify the dependence on the parameters $\xi$ or $h$ of the frequencies.
\begin{remark}
We point out that $\lim_{h\to0^+}\matA_{i j}(h)=A_{i j}^{\tt NLS}$ and $\lim_{h\to0^+}\matB_{i j}(h)=B_{i j}^{\tt NLS}$.
\end{remark}

\smallskip

In terms of the action-angle variables we redefine the phase
  space as 
  \begin{equation}\label{phase:sptrasf}
\widetilde{\cP}^{a,p,\beta}\vcentcolon= \C^N/ (2\pi\Z)^N\times\C^N\times\ell^{a,p,\beta}\times\ell^{a,p,\beta}\ni(x,y,z,\bar
z)\ 
\end{equation}
and, if there is no risk of confusion we will omit the tilde from its symbol.
We define the complex
neighborhoods of the torus $\T^N(\xi)$ as 
\begin{equation}\label{Dq}
D_{q,\beta}(s,r)\vcentcolon= \{ |\Im x | < s, |y| < r^2, \|(z,\bar z)\|_{a,q,\beta} < r\}\subset
\widetilde{\mathcal{P}}^{a,q,\beta} \vcentcolon=  \C^N/ (2\pi\Z)^N \times \R^N \times \ell^{a,q,\beta} \times
\ell^{a,q,\beta}.
\end{equation}

 Again we will consider $q=p$ and $q=p-4$, $\beta \in [0,1]$. {When $\beta=0$, sometimes we will write $D_q(s,r)=D_{q,\beta}(s,r)$.}
\begin{remark}
  \label{cambio norme}
If $s\leq s_0$, and  $\tR<1/2$  and $\xi\in\Xi_0$, then, for
  $(x,y,z,\bar z)\in D_{q,\beta}(s, r)$, one has
  $$
\left\|\cT_1(x,y,z,\bar z)\right\|_{a,q,\beta}\leq C\tR,
$$
with a constant $C$ which in particular depends on $s_0$. To be
more concrete {\bf we set  $s_0=2$.}

This implies that the normal form transformation given in Theorem
 \ref{Teorema0} is well defined on the image of $D_{q,\beta}(s,r)$, provided $\tR$ is small enough. 
\end{remark}

For an element $W=(X,Y,U,V) \in \mathcal{P}^{a,q,\beta}$ we define the norm 
\begin{equation}\label{Norma1}
\|W\|_{r,q,\beta}\vcentcolon=  |X|+\frac{1}{r^2}|Y|+\frac{1}{r} \big(
\|U\|_{a,q,\beta}+\|V\|_{a,q,\beta}\big) ,
\end{equation}
and sometimes we will write $\|W\|_{r,q,0}=\|W\|_{r,q}$.

\smallskip

For a vector field $X=X(\xi)$ parameterized by $\xi \in \Xi$, where $\Xi$ is some subset of $ \Xi_0$, we introduce the norms
\begin{align}
\label{norma:normcv}
&\bn{X}_{r,q,\beta}^{D_q(s,r)}\vcentcolon= \sup_{\Xi \times D_q(s,r)}\|X\|_{r,q,\beta},& \\
\nonumber
&\bn{X}_{r,q,\beta}^{\mL, D_q(s,r)}\vcentcolon= \sup_{\xi \neq \eta}\sup_{w \in D_q(s,r)}
  \frac{\|\Delta_{\xi, \eta}X(w) \|_{r,q,\beta}}{|\xi-\eta|},
  \qquad  \Delta_{\xi, \eta} X(w)\vcentcolon= X(\xi,w)-X(\eta,w),& \\
\label{norma:lipcv}
&\bn{X}^{\mathcal{L}(\gamma),D_q(s,r)}_{r,q,\beta}\vcentcolon= 
\bn{X}^{D_q(s,r)}_{r,q,\beta}+\gamma \bn{X}^{\mL,
  D_q(s,r)}_{r,q,\beta}, \quad \textnormal{ for some} \, \, \gamma>0.&
\end{align}

{We introduce the space $\ell^{\infty,\beta}$ of sequences $\{ \ttw_j\}_{j \in \cS^c}$
satisfying 
\begin{equation}\label{infbeta}
\left\|\ttw\right\|_{\infty,\beta}\vcentcolon= \sup_{j \in \cS^c} |{\ttw}_j|\tw_j^\beta<\infty,
  \end{equation}}
with $\tw_j$ defined in \eqref{ellappiu}. Sometimes the dependency on $\beta$ is omitted when $\beta=0$. 
We will consider also functions $f: \Xi \to \ell^{\infty, \beta}$, where $\Xi$ is some subset of $ \Xi_0$, and the following norms
\begin{align}
\label{norma:Omega}
 &
\left|f\right|_{\infty,\beta}\vcentcolon= \sup_{\xi \in \Xi}\|f(\xi)\|_{\infty,\beta},
&
\\
\label{norma:OmegaL}
&\left|f\right|^{\mL}_{\infty,\beta}\vcentcolon= \sup_{\xi\not=\eta}
\frac{\left\|\Delta_{\xi,\eta}f\right\|_{\infty,\beta}}{\left|\xi-\eta\right|}, \quad |f|_{\infty,\beta}^{\mL(\gamma)}\vcentcolon= \left|f\right|_{\infty,\beta}+\gamma\left|f\right|_{\infty,\beta}^\mL \qquad \gamma>0.&
\end{align}
A similar notation is introduced for finite sequences. 

\medskip

Finally, we need to introduce the set
\begin{equation}\label{Z2}
 \mathcal{Z}_2 \vcentcolon= \{(k,\ell) \in  \Z^{\cS} \times \Z^{\cS^c} \mid {|k|_1+|\ell|_1 \neq 0}, \, \, 0 \leq |\ell|_1 \leq 2 \}.
\end{equation}
\begin{remark}
{For any $k \in \mathbb{Z}^\cS, m \in \mathbb{N}^\cS$ and $\mq, \bar{\mq} \in \mathbb{N}^{\cS^c}$, the Fourier Taylor monomial
\begin{equation}\label{FTmon}
e^{ \im \jbs{k,x}}y^m z^\mq \bar{z}^{\bar{\mq}} \vcentcolon = e^{ \im \jbs{k,x}} \bigg( \prod_{j \in \cS} y_j^{m_j}\bigg)\bigg( \prod_{j \in \cS^c} z_j^{\mq_j}\bigg)\bigg( \prod_{j \in \cS^c} \bar{z}_j^{\bar{\mq}_j}\bigg)
\end{equation}}
is
\begin{align}
\label{cond:Gauge:kq}
&\text{Gauge\ invariant}& \qquad &\textnormal{if}& \, \, \, &\sum_{j
  \in \cS}k_j +\sum_{j \in \cS^c} (\mq_j-\bar{\mq}_{j})= 0,&
\\
\label{cond:Moment:kq}
&\text{translation\ invariant}& \qquad &\textnormal{if}& \, \, \, &\sum_{j
  \in \cS}jk_j +\sum_{j \in \cS^c}j (\mq_j-\bar{\mq}_{j})= 0.&
\end{align}
\end{remark}
We define the \emph{translation invariant set of indices} $\mZ_\mM
\subseteq \mZ_2$ and the \emph{translation-Gauge invariant set of indices} $\mZ_\mG \subseteq \mZ_\mM$ as 
\begin{align}
\label{Indici:M}
\mZ_\mM& \vcentcolon= \{(k,\ell) \in \mZ_2 \, | \, \sum_{j \in \cS} j k_j +\sum_{j \notin \cS^c} j \ell_j=0\},\\
\label{Indici:G}  \mZ_\mG& \vcentcolon= \{(k,\ell) \in \mZ_\mM \, | \, \cL =0\}, \quad \cL \vcentcolon = \sum_{j \in \cS} k_j +\sum_{j \notin \cS^c} \ell_j.
\end{align}

The following remark will play a fundamental role for the measure
estimates.

  \begin{remark}
    \label{limiti.momento}
    Let $(k,\ell)\in\cZ_{\cM}$ and denote $J \vcentcolon= \max\{|j| \mid j \in \cS \}$ then for fixed $k$ the following holds
    \begin{itemize}
    \item[(i)] if $|\ell|_1=1$, and thus $\ell= \pm \delta_a$ (c.f. \eqref{Kronecker}) for some
      $a\in\cJ^c$, then \eqref{cond:Moment:kq} is fulfilled for at most two values of $a$ satisfying  $|a|\leq J |k|_1$ ;
\item[(ii)] if $|\ell|_1=2$ and $\ell=\pm(\delta_a+\delta_b)$ for some
  $a,b\in\cJ^c$, then \eqref{cond:Moment:kq} implies
      $|a+b|\leq J |k|_1
      $;
\item[(iii)] if $|\ell|_1=2$ and $\ell=\pm(\delta_a-\delta_b)$ for some
  $a,b\in\cJ^c$, then \eqref{cond:Moment:kq} implies
      $| |a|-|b|| \leq |a-b| \leq J |k|_1$.
   \end{itemize}
   \end{remark}

\begin{theorem}\label{Newstime}
Consider the Hamiltonians \eqref{H} and \eqref{HNLS}, there exist $h_* >0$ and $r_*>0$
    such that for $r_0 \in (0,r_*)$ and $h \in (0,h_*)$ the following holds.
\begin{itemize}
\item The frequency maps $\omega_0(h):\Xi_0\to \R^N$,
  $\omega_0^{\tt NLS}:\Xi_0\to \R^N$  in \eqref{Oomega}, \eqref{OomegaNLS}
are two Lipeomorphisms, with Lipschitz constants uniform in $h$. Moreover, the following inequalities hold true
\begin{align}
  \label{grande0}
|\matA k+\matB^t\ell |_1 &{\gtrsim} |k|_1& \qquad &\forall (k, \ell)\in {\mathcal{Z}_{\cM}}, \, \, \forall \, h \in (0,h_*)& \\
\label{grand0:NLS}
|\matA^{\tt NLS}k + (\matB^{\tt NLS})^t\ell|_1 &{\gtrsim}  |k|_1& \qquad  &\forall (k, \ell)\in {\mathcal{Z}_{\cG}}.&
\end{align}

\smallskip

For $i,j \in \mathbb{Z}$ such that $c^3<|i|<|j|$, we have
\begin{equation}\label{asymp}
\bigg|\frac{ \Omega_{0j}- \Omega_{0i}}{ c(|j|-|i|)}
-1\bigg|_\infty\sleq \frac{1}{\tw_i^2}.
\end{equation}
Finally,
\begin{equation}\label{stimeom0}
|\omega_0^R(h)|_{\infty} \lesssim
h, \qquad
\sup_{j\in \cS^c}\left|\Omega_{0j}^R(h)\right|_\infty j^{-4} \lesssim h.
\end{equation}

\item The
  perturbation $P_0$ is real analytic on $D_q(2,r_0)$, Lipschitz in
  the parameter $\xi$ and it is translational invariant. Its vector field
  $X_{P_0}$ satisfies 
\begin{equation}\label{Stima0}
 \epsilon_1\vcentcolon= \bn{X_{P_0}}_{r_0,p,1}^{\mL(r_0^{4/3}),D_p(2,r_0)}+\bn{X_{P_0}}_{r_0,p-4,1}^{\mL(r_0^{4/3}),D_{p-4}(2,r_0)}\lesssim
 r_0^2 .
\end{equation}
\item {By defining $P_0^{\mathcal{NG}} \vcentcolon=\Pi_{\mathcal{NG}}P_0$ and 
\begin{equation*}
P_0^R \vcentcolon=P_0-P_0^{\tt NLS}- P_0^{\mathcal{NG}}, 
\end{equation*}}
all the Hamiltonians and their vector fields are real analytic in $D_q(2,r_0)$ and Lipschitz in the
parameter $\xi$ and are translational invariant. $P_0^{\tt NLS}$ {and $P^R$ are}
also Gauge invariant. We have the following estimates
\begin{align}
\label{1Stima}
& \epsilon_2\vcentcolon= {\bn{X_{P_0^{\tt NLS}}}_{r_0,p}^{\mL(r_0^{4/3}),D_p(2,r_0)}+\bn{X_{P_0^{\tt NLS}}}_{r_0,p-4}^{\mL(r_0^{4/3}),D_{p-4}(2,r_0)}}\lesssim r_0^2,& \\
\label{2Stima}
&\epsilon_3\vcentcolon=h^{-1}\bn{X_{P_0^{R}}}_{r_0,p-4}^{D_p(2,r_0)}\lesssim r_0^2 ,& \\
\label{3Stima}
&\epsilon_4\vcentcolon= {\bn{X_{P_0^{\ng}}}_{r_0,p}^{\mL(r^{4/3}), D_p(2,r_0)}+\bn{X_{P_0^{\ng}}}_{r_0,p-4}^{\mL(r^{4/3}), D_{p-4}(2,r_0)}}\lesssim r_0^2.& 
\end{align}
\end{itemize}
\end{theorem}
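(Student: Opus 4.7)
The proof splits naturally into an algebraic/analytic study of the frequency maps and a transfer of the vector--field estimates of Theorem \ref{Teorema0} into the rescaled action--angle variables $(x,y,z,\bar z)$ on $D_q(s,r)$ via the map $\cT_1$ of \eqref{def.t1}. For the first task I would begin by observing that the matrix $\matA^{\tt NLS}$ from \eqref{matNLS} is $\frac{3}{8\pi}(2\mathbf{1}\mathbf{1}^{T}-I_N)$ and hence has eigenvalues $\frac{3}{8\pi}(2N-1)$ and $-\frac{3}{8\pi}$, so it is invertible with a bound depending only on $N$. Since by \eqref{def:AKG}--\eqref{def:BKG} the matrices $\matA(h), \matB(h)$ differ from $\matA^{\tt NLS}, \matB^{\tt NLS}$ only through factors $(1+h\nu_j(h))^{-1}$ evaluated at bounded indices in $\cS$, a Neumann argument yields invertibility of $\matA(h)$ uniformly in $h$, and therefore both $\omega_0(h,\cdot)$ and $\omega_0^{\tt NLS}(\cdot)$ are Lipeomorphisms onto their images with constants independent of $h$.

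The non-resonance estimates \eqref{grande0}, \eqref{grand0:NLS} combine this invertibility with Remark \ref{limiti.momento}: when $|\ell|_1=0$ they reduce to $|\matA k|_1\gtrsim |k|_1$; when $|\ell|_1\in\{1,2\}$ the momentum constraint forces the nonzero components of $\ell$ to lie in indices of size at most $J|k|_1$, and one expands $(\matB^{T}\ell)_j$ using the explicit form of $N_{ij}$ to verify that the correction does not destroy the lower bound inherited from $\matA$. The asymptotic \eqref{asymp} follows from $\lambda_j/c=\sqrt{c^2+j^2}$, whose Taylor expansion gives precisely a correction of order $1/\tw_i^2$, while the contribution of $(\matB\xi)_j$ is uniformly bounded and is negligible once divided by $c(|j|-|i|)$ in the regime $|i|,|j|>c^3$. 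The bounds \eqref{stimeom0} follow from \eqref{difffreq} combined with the entrywise bounds $|\matA(h)-\matA^{\tt NLS}|\lesssim h$ and $|\matB(h)_{ij}-\matB^{\tt NLS}_{ij}|\lesssim h$, both read directly off \eqref{def:AKG}--\eqref{def:BKG}.

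For the vector-field estimates I would use Remark \ref{cambio norme}: on $D_q(s,r_0)$ with $r_0=\tR^{3/2}$ one has $\|\cT_1(x,y,z,\bar z)\|_{a,q,\beta}\lesssim \tR=r_0^{2/3}$, so the Hamiltonian $H\circ\cT_0\circ\cT_1$ is well defined and, by \eqref{NFS}, equals the Taylor expansion of $\Lambda+\Lambda_++\hat P+P_0^{\mathrm{BNF}}$ about the torus. The linear pieces in $y$ and $z\bar z$ assemble into $N_0$; everything else forms $P_0$. Counting the homogeneity in $(y,z,\bar z)$ and recalling that \eqref{Norma1} weights the $y$- and $(z,\bar z)$-components by $r^{-2}$ and $r^{-1}$, each Taylor monomial of total $(y,z,\bar z)$-weight $\ge 2$ contributes $\lesssim r_0^2$ to the vector-field norm, while the sextic terms from Theorem \ref{Teorema0} contribute $\lesssim r_0^4$. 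The square roots in $\cT_1$ produce at most a factor $|\xi|^{-1/2}\sim r_0^{-2/3}$ when differentiating in $\xi$, compensated by the weight $\gamma=r_0^{4/3}$ in \eqref{norma:lipcv}, giving \eqref{Stima0}. The decomposition estimates \eqref{1Stima}--\eqref{3Stima} follow by applying the same procedure to each piece $P^{\tt NLS},P^{\mathcal{NG}},P^R$ and the corresponding splittings of $\Lambda_+$ and $\hat P$; the $h$-smallness in \eqref{2Stima} is inherited from \eqref{stimaLR} and \eqref{res.0.3}, and the Gauge invariance of $P_0^{\tt NLS}$ and $P_0^R$ is preserved by $\cT_1$ since this map acts only on modes indexed by $\cS$. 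The hardest step will be the uniform-in-$h$ lower bound \eqref{grande0} when $|\ell|_1=2$, where one must combine momentum conservation, the explicit form of $N_{ij}$, and the $h$-dependent denominators to exclude accidental cancellations between $\matA k$ and $\matB^{T}\ell$.
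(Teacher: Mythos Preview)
Your overall architecture matches the paper's: the first item is an algebraic study of the matrices $\matA(h),\matB(h)$ (handled in the paper's Appendix~\ref{Smalldivisors}), and the vector-field bounds \eqref{Stima0}--\eqref{3Stima} are obtained by pulling the estimates of Theorem~\ref{Teorema0} through $\cT_1$ with the standard homogeneity bookkeeping (the paper simply cites \cite{kuksin1996invariant}, Section~7). Your treatment of the Lipeomorphism property, of \eqref{asymp}, of \eqref{stimeom0}, and of the perturbation estimates is correct in outline and essentially the same as the paper's.

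The one genuine gap is your argument for \eqref{grande0}. Your claim that for $|\ell|_1=2$ the momentum constraint forces $\supp(\ell)$ into indices of size $\lesssim|k|_1$ is false: Remark~\ref{limiti.momento}(ii)--(iii) only bounds $|a\pm b|$, not $|a|,|b|$ individually. More importantly, momentum plays no role here at all---the paper's Lemma~\ref{Stima:grande0} proves the inequality for every $(k,\ell)\in\cZ_2$, not just $\cZ_\mM$. The mechanism is purely algebraic: $B$ has rank one, $B^t\ell=\frac{3}{8\pi}\langle w,\ell\rangle v$, and the explicit Bateman-type formula for $A^{-1}$ in Lemma~\ref{inverto} yields that $x:=-A^{-1}B^t\ell$ satisfies $|x_j|\le\frac{4}{2N-1}\tw_j$ (Lemma~\ref{Lemma:N}). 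With $N\ge3$ and $h$ bounded as in \eqref{jh} this gives $|x_j|\le\frac{5}{6}<1$ for every $j\in\cS$, so for the index $\bar\i$ maximizing $|k_j|$ one has $|k_{\bar\i}-x_{\bar\i}|\ge\frac{1}{6}|k_{\bar\i}|$, whence $|Ak+B^t\ell|_1=|A(k-x)|_1\gtrsim|k-x|_1\gtrsim|k|_1$. Note the essential role of the hypothesis $N\ge3$ from \eqref{insieme}, which your sketch does not invoke; without it the bound $\frac{4}{2N-1}<1$ fails and accidental cancellations are genuinely possible.
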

\begin{proof}
The proof of the first item is postponed in Appendix
\ref{Smalldivisors}. The estimates
\eqref{Stima0}, \eqref{1Stima}, \eqref{2Stima}, \eqref{3Stima} follow
from Theorem \ref{Teorema0}(for more details see \cite{kuksin1996invariant} Section 7).
\end{proof}
 We define 
\begin{equation}\label{def:e}
\e_0\vcentcolon= \max\{\epsilon_1, \epsilon_2, \epsilon_3, \epsilon_4 \}.
\end{equation}


\section{The linearized equation}

The standard KAM scheme uses an infinite sequence of coordinate changes. Each change is generated by the time-$1$ flow of a Hamiltonian $G$ obtained as a solution of a cohomological equation.

Here we formulate and solve an abstract version of the cohomological equation, that we are going to use in the KAM iteration. We first need to introduce some notation.

{For any $h \in (0,1)$}, on $\Xi_0$ (c.f. \eqref{Xi0}) we consider frequency
vectors of the form
\begin{equation}  \label{omegaprima}
\begin{aligned}
& \omega(h;\xi)\vcentcolon= \lambda+\matA(h)\xi+\delta(h; \xi),\\
& \Omega(h; \xi)\vcentcolon= \varLambda+\matB(h)\xi+\Delta(h; \xi),
\end{aligned}
\end{equation}
{with the matrices $\matA$ and $\matB$ given in \eqref{def:AKG} and
\eqref{def:BKG}, the linear frequency $\lambda, \varLambda$ defined in \eqref{lambda-Lambda}, and for any $h \in [0,1)$
\[
\delta(h; \cdot) :\Xi_0\to\mathbb{R}^N,\qquad \Delta(h; \cdot) :\Xi_0\to \ell^{\infty,1-\theta},\quad \theta\in[0,1],
\]
are Lipschitz functions. We recall that $\ell^{\infty, 1-\theta}$ is defined by \eqref{infbeta}. We also assume that for any $h \in [0,1)$ the functions $\delta(h; \cdot), \Delta(h; \cdot)$ satisfy the bounds
\begin{equation}
  \label{lipfre}
|\delta(h; \cdot)|_{\infty}^{\mL(\alpha)}\lesssim \e,\qquad |\Delta(h; \cdot)|_{\infty,1-\theta}^{\mL(\alpha)}\lesssim \e,
\end{equation}
with the norm defined in \eqref{norma:OmegaL}, and $\e,\alpha, \e \alpha>0$ sufficiently small.} The results of this section hold uniformly on $\theta \in [0,1]$. 

Whenever it does not create
confusion, we will omit to specify the dependence on the parameters $(\xi,h)$ of
the frequencies. We emphasize that the frequencies will always be considered as
  maps defined on the whole $\Xi_0$, even if in the KAM iteration they will be constructed
  as functions defined only on a Cantor subset of $\Xi_0$. The idea is to use the
  extension Lemma \ref{estensione} which allows to extend them on the
  whole  $\Xi_0$.

\smallskip

We also define $\delta^{\tt NLS}(\xi)\vcentcolon= \delta(0;\xi)$ and $\Delta^{\tt NLS}(\xi)\vcentcolon= \Delta(0;\xi)$ and we set
  \begin{equation}
\label{omegaprima:NLS}
\begin{aligned}
&\omega^{\tt NLS}(\xi)\vcentcolon= \lambda^{\tt NLS} +\matA^{\tt NLS}\xi+\delta^{\tt
  NLS}(\xi),&
\\
&
\Omega^{\tt NLS}(\xi)\vcentcolon= \varLambda^{\tt NLS} +\matB^{\tt NLS}\xi+\Delta^{\tt
  NLS}(\xi),&
  \end{aligned}
\end{equation}
where $\lambda^{\tt NLS}, \varLambda^{\tt NLS}, \matA^{\tt NLS}$ and $ \matB^{\tt NLS}$ defined in \eqref{matNLS}, \eqref{linFreNLS}. 

\smallskip

We also assume that 

\begin{equation}\label{stimadifdelta}
|\delta(h)-\delta^{\tt NLS}|_{\infty} \sleq h, \qquad |\Delta_j(h)-\Delta_j^{\tt NLS}|_{\infty} \sleq h j^4, \qquad \forall \, j \in \cS^c.
\end{equation}
{with the norms defined in \eqref{norma:Omega} and we denote $\delta(h) = \delta(h; \cdot)$ and $\Delta(h) = \Delta(h;\cdot)$. Then we define 
\begin{equation}\label{OomegaR:hom}
\omega^R(h) \vcentcolon = \omega_0^R(h)+ \delta(h)-\delta^{\tt NLS}, \quad \Omega^R(h) \vcentcolon = \Omega_0^R(h)+ \Delta(h)-\Delta^{\tt NLS},
\end{equation}
with $\omega_0^R(h), \Omega_0^R(h)$ defined in \eqref{OomegaR}.
}

For this section, {we denote
\begin{equation}
  \label{Nqui}
N\vcentcolon= \langle\omega(h),y\rangle+\langle\Omega(h),z\bar z \rangle, \qquad N^{\tt NLS}\vcentcolon= \langle\omega^{\tt NLS},y\rangle+\langle\Omega^{\tt NLS},z\bar z \rangle. 
\end{equation}}

Let $F$ be an analytic and \emph{translation invariant} Hamiltonian, according to Definition \ref{def:HamGT}. Assume that it admits a decomposition
\[
F=F^{\tt NLS}+F^\ng+F^R,
\]
where each term is still analytic and \emph{translation invariant}. Furthermore, let $F^{\tt NLS}$ be \emph{Gauge invariant according} to Definition \ref{def:HamGT}, and independent of $c$, while $\Pi_{\mathcal{G}} F^{\ng}=0$ (c.f. Definition \eqref{gauge}). Finally, the vector fields of the different terms are assumed to be analytic. {In particular, for  $\theta_0 \in [0,1)$ and some positive constants $s$, $r$, $\alpha$,  we assume that there exist $\te_1, \te_2, \te_3, \te_4>0$, uniform in $h$, such that
\begin{equation}\label{bound:pertub}
\begin{aligned}
&\bn{X_F}^{\mL({\alpha}),D_p(s,r)}_{r,p,1-\theta_0}+\bn{X_F}^{\mL({\alpha}),D_{p-4}(s,r)}_{r,p-4,1-\theta_0}< \te_1,& \\   
&\bn{X_{F^{\ng}}}^{\mL({\alpha}),D_p(s,r)}_{r,p} +\bn{X_{F^{\ng}}}^{\mL({\alpha}),D_{p-4}(s,r)}_{r,{p-4}} < \te_2, & \\
&\bn{X_{F^{\tt NLS}}}^{\mL({\alpha}),D_p(s,r)}_{r,p}+\bn{X_{F^{\tt NLS}}}^{\mL({\alpha}),D_{p-4}(s,r)}_{r,p-4}< \te_3,& 
\\
&\bn{X_{F^R}}^{D_p(s,r)}_{r,p-4}< h \te_4,&
\end{aligned}
\end{equation}
 with the norms defined in \eqref{norma:normcv} and \eqref{norma:lipcv}.}

\smallskip

Recalling the notation \eqref{FTmon}, we consider the Fourier Taylor expansion 
\begin{equation}\label{PolR}
F= \sum_{\substack{k \in \mathbb{Z}^{\cJ} \, m \in \mathbb{N}^{\cJ}, \\ \mq, \bar \mq \in \mathbb{N}^{\cJ^c}}} F_{km \mq\bar{\mq}} e^{\iu \langle k,x \rangle} y^m
z^\mq \bar{z}^{\bar{\mq}},
\end{equation}
where we did not specify the limitation \eqref{cond:Moment:kq} on the range of the indices due to the translation invariance.

We define the average $[F]$ and the truncation of order two $\{ F \}$ of the Hamiltonian $F$ by
\begin{align}
\label{media}
 & [F]\vcentcolon=  \sum_{{|m|+|\mq| =1}}  F_{0m \mq \mq }  y^m z^\mq
  \bar{z}^\mq,&
  \\
\label{taglio}
&\{F\}\vcentcolon=  \sum_{{k \in \Z^\cJ}}
\sum_{2|m|+|\mq+\bar{\mq}| \leq 2} F_{km \mq \bar{\mq}}
e^{\iu \langle k,x \rangle} y^m z^\mq \bar{z}^{\bar{\mq}}.&
\end{align}
\begin{remark}
Exploiting the analyticity of $F$ and its components, one proves
exactly as in \cite{poschel1996kam} (see equation (7)), the following estimates
\begin{align}
\label{restoF}
&\bn{X_{F-\{F\}}}^{\mL({\alpha}), D_q(s,4\eta r)}_{\eta r,q,1-\theta_0} \lesssim \eta \bn{X_{F}}^{\mL({\alpha}), D_q(s,r)}_{r,q,1-\theta_0},& \\
\nonumber
&\bn{X_{F^\ng-\{F^\ng\}}}^{\mL({\alpha}), D_q(s,4\eta r)}_{\eta r,q} \lesssim \eta \bn{X_{F^{\ng}}}^{\mL({\alpha}), D_q(s,r)}_{r,q},& \\
\nonumber
&\bn{X_{F^{\tt NLS}-\{F^{\tt NLS}\}}}^{\mL({\alpha}), D_q(s,4\eta r)}_{\eta r,q} \lesssim \eta \bn{X_{F^{\tt NLS}}}^{\mL({\alpha}), D_q(s,r)}_{r,q},& \\
\nonumber
&\bn{X_{F^R-\{F^R\}}}^{\mL({\alpha}), D_p(s,4\eta r)}_{\eta r,p-4} \lesssim \eta \bn{X_{F^R}}^{\mL({\alpha}), D_p(s,r)}_{r,p-4},&
\end{align}
{for $q=p$ and $q=p-4$, $\theta_0 \in [0,1)$ }and $0<\eta< \frac{1}{8}$.
\end{remark}

\smallskip

We are now going to state and prove the result about the solution $G$ of the cohomological equation
\begin{equation}\label{homological}
\{N,G\}+\{ F\} =[F],
\end{equation}
with $N$ given by \eqref{Nqui}. 
We will also compare $G$ to the solution $G^{\tt NLS}$ of the cohomological equation 
\begin{equation}\label{HomoNLS}
\{N^{\tt NLS},G^{\tt
  NLS}\}+\{F^{\tt NLS}\} =[F^{\tt NLS}],
\end{equation}
with $N^{\tt NLS}$ given by \eqref{N.NLS.1}.

\begin{lemma}
  \label{solhomo}
  Assume that there exist $h_*,\e_*,r_*, \alpha_*, \kappa_*>0$ such that for any $h \in (0,h_*)$,  $\e \in
(0,\e_*)$, $r\in(0,r_*)$, $\alpha \in (0,\alpha_*)$, and for any fixed parameters {$\tau\geq 1$}, $\theta_0 \in [0,1)$ and $\theta_1 \in [0,\tfrac{1-\theta_0}{2})$ the following holds.
\begin{enumerate}[label=(\alph*)] 
\item \label{ip:nongauge} {For any $(k, \ell ) \in \mZ_\mM$ such that $\cL \neq 0$ (c.f. \eqref{Indici:G}), $|k|_1 \leq \kappa_* c^{1/2}$ and $\supp (\ell) \subset [-c/2,c/2]$, we have the following bound
\begin{equation}\label{bound:NGhom:0}
\left|\langle\omega,k\rangle+ \jbs{\Omega, \ell} \right|\gtrsim \alpha c^2\frac{ \jbs{\ell}_1}{\jbs{k}}, \qquad \forall \xi \in \Xi_0.
\end{equation}}
\item There
exists a subset $\Xi\subseteq \Xi_0$ such that for any
$\xi\in\Xi$ the following inequalities are fulfilled
\begin{align}
  \label{dioph}
&\left|\langle\omega,k\rangle+\langle\Omega, \ell \rangle\right|\gtrsim\frac{\alpha\langle
  \ell \rangle_1}{{\langle
  k\rangle^{\tau+1}}(\min_{i\in \textnormal{supp}(l)}\tw_i)^{\theta_1}},&\quad &\forall
(k, \ell)\in \mZ_\mM, \, \, h \in (0,h_*)&\\
  \label{dioph:NLS}
&\left|\langle\omega^{\tt NLS},k\rangle+\langle\Omega^{\tt NLS}, \ell \rangle\right|\gtrsim \frac{\alpha\langle
   \ell \rangle_1}{{\langle
  k\rangle^{\tau+1}}},&\quad &\forall
(k, \ell)\in {\mZ_\mG},&
\end{align}
with $\jbs{\ell}_1$ defined in \eqref{jb}.
\end{enumerate}
Then the cohomological equation \eqref{homological} has a solution of the form 
$G=G^{\tt NLS}+G^R$, defined for all $\xi\in \Xi$, where $G^{\tt NLS}$ is a solution of \eqref{HomoNLS}
and the following estimates hold on $\Xi$ {
\begin{align}
  \label{GG}
&\bn{X_G}^{\mL({\alpha}), D_p(s-\sigma,r)}_{r,p,1-\theta_0-2\theta_1}+\bn{X_G}^{\mL({\alpha}), D_{p-4}(s-\sigma,r)}_{r,p-4,1-\theta_0-2\theta_1}
  \lesssim \frac{1
  }{\sigma^\mu\alpha} \te_1,
\\
  \label{GNLS}
&\bn{X_{G^{\tt NLS}}}^{\mL({\alpha}), D_p(s-\sigma,r)}_{r,p}+\bn{X_{G^{\tt NLS}}}^{\mL({\alpha}), D_{p-4}(s-\sigma,r)}_{r,p-4}
  \lesssim \frac{1
  }{\sigma^\mu\alpha} \te_3 ,
  \\
\label{4.13a}
&\bn{X_{G^R}}^{ D_p(s-\sigma,r)}_{r,p-4} \lesssim \frac{h}{\sigma^\mu\alpha^{2}}\bigg(\alpha \te_4+ \te_3+  \te_2 \bigg),&
\end{align}}
with $s \in (0,2]$, $\sigma \in (0, s/2)$, 
$\te_1, \te_2, \te_3, \te_4>0$ defined as in \eqref{bound:pertub} and
\begin{equation}\label{mu}
{ \mu\vcentcolon= 2\tau+N+3.}
\end{equation}
\end{lemma}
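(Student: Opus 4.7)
I will solve the cohomological equation \eqref{homological} term by term in the Fourier--Taylor expansion. Writing both $\{F\}$ and the unknown $G$ as in \eqref{PolR}, the action of $\{N,\cdot\}$ is diagonal, so setting $\ell:=\mq-\bar\mq$ and using translation invariance (which restricts the sum to $(k,\ell)\in\mZ_\mM$), the coefficients of $G$ must satisfy
\[
G_{km\mq\bar\mq}\;=\;\frac{F_{km\mq\bar\mq}}{\iu\bigl(\langle\omega,k\rangle+\langle\Omega,\ell\rangle\bigr)},
\]
whenever the denominator is nonzero. I will define $G^{\tt NLS}$ by the same formula with $\omega^{\tt NLS},\Omega^{\tt NLS}$ in place of $\omega,\Omega$ and the sum restricted to $(k,\ell)\in\mZ_\mG$ (as $F^{\tt NLS}$ is Gauge invariant), and then set $G^R:=G-G^{\tt NLS}$.

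The estimates \eqref{GG} and \eqref{GNLS} will follow from the standard KAM scheme (as in \cite{poschel1996kam,berti2013kam}). The Diophantine conditions \eqref{dioph} and \eqref{dioph:NLS} furnish divisors of size $\alpha\jbs{\ell}_1/\bigl(\jbs{k}^{\tau+1}\min_{i\in\supp(\ell)}\tw_i^{\theta_1}\bigr)$; the $\tw_i^{\theta_1}$ in the denominator translates into a loss of $2\theta_1$ in the weight $\beta$ (one $\theta_1$ per index in $\supp(\ell)$, using $|\ell|_1\le 2$ in the truncation \eqref{taglio}). Summing in $k$ produces the familiar $\sigma^{-\mu}$ with $\mu=2\tau+N+3$, and the Lipschitz norms follow by differentiating the small divisors in $\xi$ and invoking \eqref{lipfre}.

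The core of the argument is \eqref{4.13a}. I will split $G^R$ into three pieces according to the origin of $F_{km\mq\bar\mq}$: \emph{(i)} non-Gauge-invariant terms $(k,\ell)\in\mZ_\mM\setminus\mZ_\mG$, which necessarily come from $F^\ng$ since $F^{\tt NLS}$ and $F^R$ are Gauge invariant; \emph{(ii)} Gauge-invariant terms fed by $F^R$; \emph{(iii)} Gauge-invariant terms fed by $F^{\tt NLS}$, where the difference $G-G^{\tt NLS}$ reduces, via the identity $A^{-1}-B^{-1}=(B-A)/(AB)$, to the numerator $\langle\omega^{\tt NLS}-\omega,k\rangle+\langle\Omega^{\tt NLS}-\Omega,\ell\rangle$ divided by the product of two small divisors. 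In \emph{(i)} hypothesis \ref{ip:nongauge} gives a divisor of size $\alpha c^2\jbs{\ell}_1/\jbs{k}=\alpha h^{-1}\jbs{\ell}_1/\jbs{k}$, producing an $h/\alpha$ gain applied to $\te_2$; in \emph{(ii)} the standard Diophantine bound is combined with the assumption $\|X_{F^R}\|\lesssim h\te_4$, yielding $h\te_4/\alpha$; in \emph{(iii)} the estimates \eqref{stimeom0}, \eqref{stimadifdelta} bound the numerator by $h\bigl(|k|_1+\sum_j j^4|\ell_j|\bigr)$, with the $j^4$ factor absorbed exactly by the loss of regularity from $p$ to $p-4$, while the product of two Diophantine factors generates the $1/\alpha^2$ in \eqref{4.13a}.

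The main technical obstacle lies in case \emph{(i)} outside the range covered by hypothesis \ref{ip:nongauge}, namely $|k|_1>\kappa_* c^{1/2}$ or $\supp(\ell)\not\subset[-c/2,c/2]$. There only the weaker Diophantine bound \eqref{dioph} is available, and one must harvest the $h$-smallness from elsewhere: the analytic decay $|F^\ng_{km\mq\bar\mq}|\lesssim \te_2 e^{-s|k|_1}$ gives exponential gain that dominates any polynomial power of $h=c^{-2}$ once $|k|_1\gtrsim c^{1/2}$, while the weight $\tw_j\gtrsim |j|/c$ for $|j|>c/2$ — together with the loss from $p$ to $p-4$ — absorbs the large-$|j|$ regime of $\supp(\ell)$. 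Combining these two sources of smallness with the standard KAM bookkeeping to obtain a clean bound of order $h\te_2/\alpha$ is the delicate part of the proof.
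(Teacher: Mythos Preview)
Your proposal is correct and follows essentially the same approach as the paper: the same three-piece splitting of $G^R$ (from $F^R$, from $F^{\ng}$, and the difference $\gna-G^{\tt NLS}$ via the $A^{-1}-B^{-1}$ identity), the same use of hypothesis~\ref{ip:nongauge} for the non-Gauge piece in the good range, and the same mechanism outside that range (exponential Fourier decay for $|k|_1\gtrsim c^{1/2}$, and the $p\to p-4$ loss yielding a $c^{-4}$ gain when $\supp(\ell)\not\subset[-c/2,c/2]$). The paper implements the non-Gauge estimate concretely by writing the quadratic-in-$z$ part of $G^{\ng}$ as a $k$-Fourier series of operators and splitting each operator as $\Pi_\cC\,\cdot\,\Pi_\cC+\Pi_\cC^\perp\,\cdot\,\Pi_\cC+\,\cdot\,\Pi_\cC^\perp$ with $\cC=\{|j|<c/2\}$, which is exactly the index decomposition you describe.
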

\begin{proof}
First, we have that the solution $G$ of \eqref{homological} is defined by 
\begin{equation}\label{def:G}
 G_{km\mq\bar{\mq}}=
\begin{cases}
\iu\displaystyle\frac{F_{km\mq\bar{\mq}}}{\langle \omega,k \rangle + \langle \Omega, \mq-\bar{\mq} \rangle} \hspace{1cm} |k|_1+|{\mq-\bar{\mq}}| \neq 0, \\
\hspace{1,5cm} 0 \hspace{2,5cm} \qquad \text{otherwise},
\end{cases}
\end{equation}
On the subset $\Xi$, the estimates \eqref{GG} with $\theta_1=0$ and
  \eqref{GNLS} are provided by Lemma 1 Section 2 of
\cite{poschel1996kam}, while \eqref{GG} for the case $\theta_1>0$ follows from Section 6
of \cite{poschel1996kam}.

\smallskip

We remark that, since the cohomological equation is linear in $F$, its
solution $G$ is given by the sum of the solutions of the following equations
\begin{align}
  \nonumber
&\{N,G^{R1}\}+\{ F^R\} =[F^R],&
\\
  \label{HomoNG}
&  \{N,G^{\ng}\}+\{ F^{\ng}\} =[F^{\ng}],&
\\
  \label{NLS.app}
&\left\{N,\gna \right\}+\left\{F^{\tt NLS}\right\}=[F^{\tt NLS}].&
\end{align}
Since \(\Pi_{\mathcal G} F^{\mathcal{NG}}=0\), its averaged part \([F^{\mathcal{NG}}]\) vanishes (see \eqref{media}). Consequently the right-hand side of \eqref{HomoNG} is zero. By defining $G^R\vcentcolon= G-G^{\tt NLS}$ we have
\begin{equation}\label{decomp:GR}
G^R= G^{R1}+G^{\ng}+\left(\gna-G^{\tt NLS}\right).
\end{equation}

The estimate of $G^{R1}$ follows from Lemma 1 Section 2 of
\cite{poschel1996kam}, {obtaining in particular 
\begin{equation}\label{GR1}
\bn{X_{G^{R_1}}}^{ D_p(s-\sigma,r)}_{r,p-4}  \lesssim \frac{h
  }{\sigma^\mu\alpha} \te_4,
\end{equation}}
We come to the estimate of $G^{\ng}$. To simplify the
notation we denote
$$
\tilde F \vcentcolon= F^{\ng},\quad \tilde G\vcentcolon= G^{\ng}.
$$
Following \cite{poschel1996kam}, we write
$\{\tilde F\}=\tilde
F^0+\tilde F^1+\tilde F^2$, where $\tilde F^j$ contains the terms
of $F$ with degree $j=|\rho+\bar \rho|$ in $(z,\bar z)$ (c.f. \eqref{PolR}). We have 
\begin{equation}\label{aux:decomp}
\begin{aligned}
\tilde F^0= \tilde F^{0,0}(x,y) \quad \tilde F^1=\vcentcolon  \langle \tilde F^{1,0}(x),z \rangle + \langle
\tilde F^{0,1}(x),\bar{z}\rangle,
\\
\tilde F^2=  \vcentcolon\langle \tilde F^{2,0}(x)z,z \rangle +
\langle \tilde F^{1,1}(x)z,\bar{z} \rangle + \langle \tilde F^{0,2}(x)\bar{z},
\bar{z} \rangle .
\end{aligned}
\end{equation}
Correspondingly we write $\tilde G=\tilde G^0+\tilde G^1+\tilde G^2$. We are going to estimate
explicitly only $\tilde G^{0,2}$,  the other terms
 could be estimated in a similar way.

First, we remark that the operator $\tilde F^{0,2}(x)$ is given by $ - \im 2\tilde F^{0,2}(x)=\left.d_{\bar z} [X_{\tilde F}]_{z}\right|_{z=\bar
z=0}.$
Then, by a Cauchy estimate we obtain
$$
\sup_{|\Im x|<s}\left\| \tilde F^{0,2}(x)\right\|_{\ell^{a,q};\ell^{a,q}} \leq
\bn{X_{\tilde F}}^{D_q(s,r)}_{r,q},
$$
for $q=p$ and $q=p-4$. Expanding in Fourier series we have 
\begin{equation}\label{Fexp}
\tilde F^{0,2}(x)=\sum_{k}\tilde F^{0,2}_ke^{i \jbs{k,x}},\quad \left\|\tilde{F}^{0,2}_k\right\|_{\ell^{a,q};\ell^{a,q}}
\leq e^{-s|k|_1} \bn{X_{\tilde F}}^{D_q(s,r)}_{r,q},
\end{equation}
{ and for any $i,j \in \cS^c$, we define the matrix elements $\tilde F^{0,2}_{k,ij} \vcentcolon = \tilde F_{k00\bar{\mq}}$ (c.f. \eqref{PolR}) with $\supp(\bar{\mq})=\{i,j\}$ and $\bar{\mq}_i=1=\bar{\mq}_j$.}
Writing the corresponding expansion for $G$ we have
\begin{equation}\label{stim:esempio}
\tilde G_{k,ij}^{0,2}=\iu \frac{\tilde{F}_{k,ij}^{0,2}}{ \jbs{\omega,k}-(\Omega_i+\Omega_j)}, \qquad
|k|_1+|i+j| \neq 0.
\end{equation}

We consider the set $ \cC \vcentcolon= \{ j \in \mathbb{Z} \mid |j|<c/2 \}$, then
\begin{equation}
\label{decomp:GNG}
\tilde G^{0,2}_k= \Pi_\cC \tilde G^{0,2}_k \Pi_\cC+ \Pi_\cC^{\perp} \tilde G^{0,2}_k \Pi_{\cC}+ \tilde G^{0,2}_k \Pi_\cC^{\perp},
\end{equation}
with the projectors defined as in \eqref{cut-off}.
We estimate one by one the three operators. 
\begin{itemize}
\item[1.] If $|i|,|j|< c/2$ and $|k|_1 \leq \kappa_*c^{1/2}$ then \eqref{bound:NGhom:0} is satisfied and from Lemma \ref{modop} and Remark \ref{Stima:pesata} we obtain
\[
\left\| \Pi_\cC \tilde G^{0,2}_k \Pi_\cC\right\|_{\ell^{a,p};\ell^{a,p}}\sleq \frac{\jbs{k}}{\alpha c^2}\left\| \tilde{F}_k^{0,2}\right\|_{\ell^{a,p};\ell^{a,p}}.
\]
Otherwise, for $|i|,|j|< c/2$ and $|k|_1 \geq \kappa_*c^{1/2}$ we apply \eqref{dioph} and from Lemma \ref{modop} and Remark \ref{Stima:pesata} we obtain 
\[
\left\| \Pi_\cC \tilde G^{0,2}_k \Pi_\cC\right\|_{\ell^{a,p};\ell^{a,p}}\sleq \frac{{\jbs{k}^{\tau+1}}}{\alpha}\left\| \tilde{F}_k^{0,2}\right\|_{\ell^{a,p};\ell^{a,p}}.
\]
Therefore by \eqref{Fexp} we have
\begin{align*}
\sum_{k \in \mathbb{Z}^\cS}e^{|k|_1(s-\sigma)}\left\| \Pi_\cC \tilde G^{0,2}_k \Pi_\cC\right\|_{\ell^{a,p};\ell^{a,p}}&\sleq \bn{ X_{\tilde
  F}}_{r,p}^{D_p(s,r)} \bigg( \sum_{
  |k|_1 \lesssim c^{1/2}} \frac{e^{-\sigma|k|_1}\langle k\rangle}{\alpha c^2}+\sum_{|k|_1 \gtrsim  c^{1/2}} \frac{e^{-\sigma|k|_1}{\langle k\rangle^{\tau+1}}}{\alpha} \bigg)&\\
&\sleq \bn{X_{\tilde F}}_{r,p}^{D_p(s,r)} 
\bigg(
  \frac{1}{\alpha\, c^2\, \sigma^{N+1}}
  + \frac{e^{-\frac{\sigma}{2} c^{1/2}}}{\alpha\, {\sigma^{\tau+N+1}}}
\bigg)
\lesssim
\frac{\bn{X_{\tilde F}}_{r,p}^{D_p(s,r)}}{\alpha\, c^2\, {\sigma^{\tau+N+5}}}
  \end{align*}
where the last inequality is satisfied for any $c \geq c_*$, for some $c_*>0$ independent of $\sigma$.
\item[2.] Now we study the second operator in \eqref{decomp:GNG}. By Lemmata \ref{modop}, \ref{lemma:cutoff-multiplication}, Remark \ref{Stima:pesata} and the assumption \eqref{dioph} we have 
\begin{equation}\label{aux:esempio}
\left\| \Pi_\cC^{\perp}  \tilde G_k^{0,2}  \Pi_{\cC}\right\|_{\ell^{a,p};\ell^{a,p-4}} \lesssim
\| \Pi^{\perp}_{\cC} \|_{\ell^{a,p};\ell^{a,p-4}} \left\| \tilde{G}_k^{0,2}\right\|_{\ell^{a,p};\ell^{a,p}} \lesssim \frac{{\jbs{k}^{\tau+1}}}{\alpha c^4}\left\| \tilde{F}_k^{0,2}\right\|_{\ell^{a,p};\ell^{a,p}},
\end{equation}
therefore
\begin{equation}\label{aux:stima}
\sum_{k \in \mathbb{Z}^\cS}e^{|k|_1(s-\sigma)}\left\|\Pi_\cC^{\perp} \tilde G_k^{0,2}\right\|_{\ell^{a,p};\ell^{a,p-4}}\lesssim \frac{\bn{ X_{\tilde
  F}}_{r,p}^{D_p(s,r)}}{c^4 \alpha {\sigma^{\tau+N+1}}}.
  \end{equation}
\item[3.] For the third operator in \eqref{decomp:GNG}, following the strategy of item $2)$, we have
\[
\left\|  \tilde G_k^{0,2} \Pi_\cC^{\perp}  \right\|_{\ell^{a,p};\ell^{a,p-4}} \lesssim
\left\| \tilde{G}_k^{0,2}\right\|_{\ell^{a,p-4};\ell^{a,p-4}} \| \Pi^{\perp}_{c/2} \|_{\ell^{a,p};\ell^{a,p-4}} \lesssim \frac{{\jbs{k}^{\tau+1}}}{\alpha c^4}\left\| \tilde{F}_k^{0,2}\right\|_{\ell^{a,p-4};\ell^{a,p-4}},
\]
which gives 
\begin{equation}\label{aux:stima1}
\sum_{k \in \mathbb{Z}^\cS }e^{|k|_1(s-\sigma)}\left\| \tilde G_k^{0,2}\Pi_\cC^{\perp}\right\|_{\ell^{a,p};\ell^{a,p-4}}\lesssim \frac{\bn{ X_{\tilde
  F}}_{r,p-4}^{D_{p-4}(s,r)}}{c^4 \alpha {\sigma^{\tau+N+1}}}.
  \end{equation}
\end{itemize}
 Recalling the definition of $\te_2$ in \eqref{bound:pertub}, combining the three items above we obtain
\begin{equation}\label{aux:finalee}
\frac{1}{r}\sup_{D(s-\sigma,r)} \|\tilde G^{0,2}\bar{z}\|_{r,p-4}\leq
\sup_{|\Im(x)|<(s-\sigma)}\left\|\tilde G^{0,2}\right\|_{\ell^{a,p};\ell^{a,p-4}}
\sleq  \frac{ h }{\alpha \sigma^{\mu}} \te_2 ,
\end{equation}
with ${\mu=2\tau+N+3}$. The estimate for the other terms of $\Pi_{\ng}G$ follows by similar, but simpler computations.

\medskip

Finally, we estimate
$\gna-G^{\tt NLS}$. First we remark that from \eqref{HomoNLS}, \eqref{NLS.app} and Lemma \ref{lem:gauge2}, $\gna$ and $G^{\tt NLS}$ are Gauge invariant.

 Subtracting \eqref{HomoNLS} from \eqref{NLS.app} we obtain
\begin{equation}
  \label{app.G}
\left\{N^{\tt NLS},\gna-G^{\tt NLS}\right\}=\left\{N^{\tt NLS}-N,\gna\right\}.
\end{equation}
{For any $(k,\mq-\bar{\mq}) \in \mZ_\mG$ and $|\rho+\bar{\rho}| \neq 0$, the Fourier-Taylor coefficients of the r.h.s. of \eqref{app.G}, are given by 
\begin{equation}\label{op:dastim}
\im \left[\jbs{\omega^R(h),
  k}+\jbs{\Omega^R(h),\mq-\bar{\mq}} \right] \gna_{k0\mq \bar{\mq}},
\end{equation}
with $\omega^R, \Omega^R$ defined in \eqref{OomegaR:hom}. Then, by \eqref{app.G} and \eqref{NLS.app} we have 
\begin{equation}\label{decomp:diff}
(\gna-G^{\tt NLS})_{k0 \mq \bar{\mq}}= \frac{\im \left[\jbs{\omega^R(h),
  k}+\jbs{\Omega^R(h),\mq-\bar{\mq}} \right]}{(\jbs{\omega,
  k}+\jbs{\Omega,\mq-\bar{\mq}})(\jbs{\omega^{\tt NLS},
  k}+\jbs{\Omega^{\tt NLS},\mq-\bar{\mq}})} F^{\tt NLS}_{k0\mq \bar{\mq}}.
\end{equation}
By definition \eqref{OomegaR:hom} and the estimates \eqref{stimadifdelta} and \eqref{stimeom0}, we have that 
\begin{equation}\label{bound:perdita}
\left|\jbs{\omega^R(h),
  k}+\jbs{\Omega^R(h),\mq-\bar{\mq}} \right| \lesssim  h \bigg(|k|_1+\sum_{j \in \supp(\mq-\bar{\mq})}j^4\bigg).
\end{equation}
Then, by the assumptions \eqref{dioph},   \eqref{dioph:NLS} and the rescaling argument in \cite{poschel1996kam} (Section 2, pag 10),  we can argue as above to obtain 
\begin{equation}\label{finale:NG}
\bn{X_{\gna-G^{\tt NLS}}}_{r,p-4}^{D_p(s,r)} \sleq
\frac{ h }{\alpha^2\sigma^{2\tau+N+3}} \te_3.
\end{equation}}
{Considering the decomposition \eqref{decomp:GR}, by the estimates of $G^{\ng}$ (see \eqref{finale:NG}), of $G^{R1}$ in \eqref{GR1} and the bound \eqref{finale:NG}, we prove \eqref{4.13a}, which concludes the proof.}
\end{proof}


\section{Measure estimates}\label{meas.est}
For any $\ell \in \mathbb{Z}^{\cS^c}$ ($\cJ^c=\mathbb{Z} \setminus \cS$), with $|\ell|_1 \leq 2$ we define
\begin{equation}\label{supp:ell}
\tw(\ell) \vcentcolon= 
\begin{cases} 
\qquad 1 & \ell=0, \\ 
\displaystyle\min_{i\in \supp(\ell) }\tw_i &  \ell \neq 0.
\end{cases}
\end{equation}
Let $\omega,\Omega$ be the frequencies \eqref{omegaprima} defined on $\Xi_0$ (see \eqref{Xi0}). Given $\alpha>0$, $\tau >0 $, and $\theta\in[0,1)$, for each $(k,\ell)\in\mZ_\mM$ (see \eqref{Indici:M}) we define
\begin{align}  \label{res1}
  &\cR^{\theta}_{k \ell}(h;\alpha, \delta, \Delta)\vcentcolon= 
\left\{\xi\in\Xi_0\ :
\left|\langle\omega ,k\rangle+\langle\Omega, \ell \rangle\right|<
\frac{\alpha}{\langle 
  k\rangle^\tau \tw(\ell) ^{\theta}}
\right\}, \\
\label{res2}
&\cR^{\theta}_k(h;\alpha, \delta, \Delta)\vcentcolon= \bigcup_{\substack{\ell \in \mathbb{Z}^{\cS^c} \\ (k, \ell) \in \mZ_\mM}}\cR^{\theta}_{k \ell}(h;\alpha, \delta, \Delta),
\end{align}
with $\jbs{k}$ introduced in \eqref{jb0}.

Moreover, considering the NLS frequencies \eqref{omegaprima:NLS}, for each $(k,\ell)\in\mZ_\mG$ (see \eqref{Indici:G}) we set
\begin{equation}\label{R:NLS}
  \cR^{\tt NLS}_{k \ell}(0;\alpha, \delta^{\tt NLS}, \Delta^{\tt NLS})\vcentcolon= 
  \left\{\xi\in\Xi_0\ :
\left|\langle\omega^{\tt NLS},k\rangle+\langle\Omega^{\tt NLS},\ell \rangle\right|<
\frac{\alpha}{\langle 
  k\rangle^{\tau}}
\right\},
 \end{equation}
and we define the union set as
\begin{equation}\label{res:setNLS}
\cR^{\tt NLS}_{k}(0;\alpha, \delta^{\tt NLS}, \Delta^{\tt NLS})\vcentcolon= \bigcup_{\substack{\ell \in \mathbb{Z}^{\cS^c} \\ (k, \ell) \in \mZ_\mG}}\cR^{\tt  NLS}_{k \ell}(0;\alpha, \delta^{\tt NLS}, \Delta^{\tt NLS}).
\end{equation}
{\begin{remark}
We clarify the relation between the Melnikov conditions \eqref{dioph}, \eqref{dioph:NLS} and the definitions of the resonant sets \eqref{res1}, \eqref{R:NLS}. By the zero momentum condition \eqref{Indici:M} we have 
\[
\bigg| \sum_{n \in \cS^c} n \ell_n \bigg|=\bigg| \sum_{n \in \cS} n k_n \bigg| \lesssim |k|_1,
\]
which implies $\jbs{\ell}_1 \lesssim \jbs{k}$ (c.f. \eqref{jb}, \eqref{jb0}). Therefore, for any $\xi \in \Xi_0 \setminus \cR^{\theta}_{k \ell}(h)$ we have 
\[
| \jbs{\omega,k}+\jbs{\Omega, \ell}| \gtrsim \frac{\alpha \jbs{\ell}_1}{ \jbs{k}^{\tau+1} \tw(\ell)^{\theta}},
\]
which corresponds to \eqref{dioph}. In the same way, from the zero momentum condition, the Melnikov condition \eqref{dioph:NLS} follows for any $\xi \in \Xi_0 \setminus \cR^{\tt NLS}_{k \ell}$.
\end{remark}}

\medskip 
The main purpose of this section is to provide measure estimates for the sets \eqref{res2} and \eqref{res:setNLS}, and to provide a lower bound of the non-Gauge divisors. 

This section is organized as follows:
\begin{itemize}
\item In Section \ref{sec:triv}, we provide the measure estimates for the single resonant sets \eqref{res1} and \eqref{R:NLS}. We then discuss the measure estimate of the union of the resonant sets of NLS  \eqref{res:setNLS}, following the arguments in \cite{poschel1996kam} (Section 5, Lemma 7).

\item The remaining sections are devoted to prove measure estimates on the union of KG resonant sets \eqref{res2}, together with the validation {of the hypothesis \ref{ip:nongauge} of Lemma \ref{solhomo}}. We begin with trivial cases in Section \ref{simo:cas}. Sections \ref{nontrivial} and \ref{nontrivial:1} address the analysis of Second Melnikov conditions involving the difference and the sum of the normal frequencies. In particular, we first present the situations where the momentum condition \eqref{Indici:M} suffices to establish the estimates (see Sections \ref{simo:cas}, \ref{menodiversi} and \ref{fac:2} ). We then turn to the genuinely non-trivial cases, where the main difficulty is to study the dependence on $c$. 
\end{itemize}

\subsection{{ Measure estimates of the single resonant sets $\cR_{k \ell}^{\theta}, \cR_{k \ell}^{\tt NLS}$, $\cR_k^{\tt NLS}$. }}\label{sec:triv}
First, we consider the case $k=0$.
\begin{lemma}
\label{remark:trivial}
There exist $h_*,\varepsilon_*,\alpha_*,r_*>0$ such that for any $h\in(0,h_*)$, $\varepsilon\in(0,\varepsilon_*)$, $\alpha\in(0,\alpha_*)$, $r\in(0,r_*)$ and $\theta \in [0,1)$ the following holds. {For any $\ell \in \mathbb{Z}^{\cS^c}$ such that $(0,\ell) \in \mZ_\mM$ we have
\begin{equation}\label{div:k0}
|\jbs{\Omega, \ell}|  \geq \alpha c^2,
\end{equation}
and therefore $\cR^{\theta}_{k\ell}(\alpha) = \emptyset$. Furthermore, if $(0, \ell) \in \mZ_\mM$ then $(0, \ell) \not \in \mZ_\mG$.}
\end{lemma}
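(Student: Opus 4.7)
The strategy is an elementary case enumeration on $\ell$, combined with the observation that the dominant contribution $c^2$ to each $\lambda_j$ produces a lower bound on $|\langle\Omega,\ell\rangle|$ of order $c^2$, which dwarfs the resonant threshold.

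I would first enumerate the admissible $\ell$ with $(0,\ell)\in\mZ_\mM$ and $1\le|\ell|_1\le 2$ by applying Remark~\ref{limiti.momento} with $k=0$. Since there $J|k|_1=0$, case $(i)$ forces $a=0$, case $(ii)$ forces $b=-a$, and case $(iii)$ would require $a=b$ with $a\ne b$, hence is vacuous. The surviving possibilities are therefore
\[
\ell=\pm\delta_0\ \text{or}\ \pm 2\delta_0\quad(\text{only if }0\in\cS^c),\qquad \ell=\pm(\delta_a+\delta_{-a})\quad(\text{with }a\neq 0,\ a,-a\in\cS^c).
\]
In each of these cases one reads off $\cL=\sum_{j\in\cS^c}\ell_j\in\{\pm 1,\pm 2\}$, so $\cL\neq 0$ and $(0,\ell)\notin\mZ_\mG$, which establishes the second part of the lemma.

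For the lower bound on $|\langle\Omega,\ell\rangle|$, I would expand $\Omega_j=c^2+\nu_j+(\matB\xi)_j+\Delta_j$ via \eqref{omegaprima}, using that $\nu_0=0$ and $\nu_a=\nu_{-a}\ge 0$. The correction terms are controlled by $|(\matB\xi)_j|\lesssim r^{4/3}$ (from $|\matB_{ji}|\lesssim 1$ by \eqref{def:BKG} and $|\xi|\le\tfrac{3}{2}r^{4/3}$ on $\Xi_0$) and by $|\Delta_j|\lesssim\varepsilon$ (from \eqref{lipfre} together with $\tw_j\ge 1$). This yields
\[
|\Omega_0|\ge c^2-C(r^{4/3}+\varepsilon),\qquad |\Omega_a+\Omega_{-a}|\ge 2c^2-C(r^{4/3}+\varepsilon),
\]
and the analogous bound for $\ell=\pm 2\delta_0$. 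Choosing $\alpha_*,r_*,\varepsilon_*$ small and $h_*$ small enough that $c^2=h^{-1}$ absorbs the $C(r^{4/3}+\varepsilon)$ corrections, each of these bounds exceeds $\alpha c^2$, giving $|\langle\Omega,\ell\rangle|\ge\alpha c^2$. Since $\tw(\ell)\ge 1$ and $\jbs{0}=1$, this dominates the threshold $\alpha/(\jbs{0}^\tau\tw(\ell)^\theta)\le\alpha$ appearing in \eqref{res1}, so $\cR^\theta_{0\ell}(\alpha)=\emptyset$. No technical obstacle is foreseen beyond the careful enumeration of admissible $\ell$ via the momentum condition.
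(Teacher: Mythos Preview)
Your proof is correct and follows essentially the same approach as the paper's own argument: both enumerate the admissible $\ell$ via the momentum condition (the paper does this directly, you invoke Remark~\ref{limiti.momento}), observe that all surviving $\ell$ satisfy $\ell_i\ell_j=1$ when $\#\supp(\ell)=2$ and hence $\cL\neq 0$, and then bound $|\langle\Omega,\ell\rangle|$ from below by separating the dominant linear part $\langle\varLambda,\ell\rangle\ge |\ell|_1 c^2$ from the $\cO(r^{4/3}+\varepsilon)$ corrections. The paper packages the lower bound slightly more compactly via $|\langle\varLambda,\ell\rangle|=\sum_{n\in\supp(\ell)}|\ell_n|\lambda_n$ rather than expanding $\lambda_j=c^2+\nu_j$, but this is only a presentational difference.
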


\begin{proof}
We first prove \eqref{div:k0}. Let $(k,\ell) \in \mZ_\mM$. For $k=0$ the zero momentum condition \eqref{Indici:M} reduces to
\begin{equation}\label{mom:k0}
\sum_{n \in \cS^c} n \, \ell_n = 0.
\end{equation}
{We remark that if $\# \supp (\ell)=1$ then \eqref{mom:k0} implies $\supp(\ell)=\{0\}$. Otherwise, if $\supp(\ell)=\{i,j\}$ for some $i, j \in \cS^c$ with $i \neq j$, then by \eqref{mom:k0} we have $i= -j$ and then $\ell_i \ell_j=1$. 
Thus, for any $(0,\ell) \in \mZ_\mM$ we have
\[
|\jbs{\varLambda, \ell}|=\sum_{n \in \supp(\ell)} |\ell_n| \lambda_n,
\]
with $\varLambda$ defined in \eqref{lambda-Lambda}.  By \eqref{omegaprima}, the estimates \eqref{Omega:lambda}, \eqref{lipfre} and $\lambda_n \geq c^2$, there exists $C>0$ such that
\[
\begin{aligned}
| \jbs{\Omega, \ell}| &\geq | \jbs{ \varLambda, \ell } | - C(\e+r^{4/3})& \\
&\geq |\ell|_1c^2 - C(\e+r^{4/3}).& 
\end{aligned}
\]
Then, provided that $\e, r$ and $\alpha$ are small enough, and $c$ large enough, we obtain \eqref{div:k0}.}

\medskip
\noindent
To prove the second claim, we observe that for $k=0$ and $\ell \neq 0$, the Gauge invariance condition \eqref{Indici:G} reduces to $
\sum_{j\in\cS^c}\ell_j=0$, which implies $\supp(\ell)=\{i,j\}$ for some $i, j \in \cS^c$ with $i \neq j$ and $\ell_i\ell_j=-1$. However, this contradicts \eqref{mom:k0}.
\end{proof}

\begin{lemma}\label{singole:stim}
There exist $h_*,\e_*,\varsigma_*,r_*, \alpha_*>0$ such that the following holds. For any $h \in (0,h_*)$, $\e \in
(0,\e_*)$, $r\in(0,r_*)$, $ \alpha \in (0,\alpha_*)$, and $\theta \in [0,1)$, if  
\begin{equation}\label{cond:zeta}
\e \alpha^{-1}<\varsigma_*, 
\end{equation}
 then
  \begin{align}\label{singl:set:mes1}
&\left|\mathcal{R}^{\theta}_{k\ell}(\alpha)\right|\sleq
\frac{\alpha}{\jbs{k}^{\tau+1}}\, r^{\frac{4}{3}(N-1)},\qquad \forall \, (k,\ell) \in \mZ_\mM,\\
\label{singl:set:mes2}
&\left|\mathcal{R}^{\tt NLS}_{k\ell}(\alpha)\right|\sleq
\frac{\alpha}{\jbs{k}^{\tau+1}}\, r^{\frac{4}{3}(N-1)},\qquad \forall \, (k, \ell) \in \mZ_\mG.
  \end{align}
\end{lemma}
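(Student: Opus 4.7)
My strategy is to reduce \eqref{singl:set:mes1} to a standard $1$D Fubini argument after exhibiting a good direction of transversality for
\[
g(\xi):=\langle\omega(h;\xi),k\rangle+\langle\Omega(h;\xi),\ell\rangle
\]
on the box $\Xi_0\subset\R^N$, using the quantitative non-degeneracy \eqref{grande0} provided by Theorem \ref{Newstime}. The estimate \eqref{singl:set:mes2} will be obtained identically by swapping KG for NLS objects and using \eqref{grand0:NLS} (which holds on $\mZ_\mG$) in place of \eqref{grande0}.

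First, I would dispose of the case $k=0$. By Lemma \ref{remark:trivial}, for $(0,\ell)\in\mZ_\mM$ one has $|\langle\Omega,\ell\rangle|\geq \alpha c^2$; since $\tw(\ell)\geq 1$ and $c$ is large under the hypotheses, this already makes $\cR^\theta_{0\ell}(\alpha)$ empty. So I may assume $k\neq 0$, hence $\jbs{k}=|k|_1\geq 1$. Plugging \eqref{omegaprima} into $g$ yields
\[
g(\xi)=C_{k,\ell}+\langle\matA k+\matB^{t}\ell,\,\xi\rangle+\langle\delta(h;\xi),k\rangle+\langle\Delta(h;\xi),\ell\rangle,
\]
with $C_{k,\ell}$ independent of $\xi$. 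The transversality bound \eqref{grande0} furnishes $i_*\in\cS$ such that $|(\matA k+\matB^{t}\ell)_{i_*}|\gtrsim |k|_1$, uniformly in $h$.

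Next, along the coordinate direction $e_{i_*}$ I would absorb the Lipschitz perturbations. From \eqref{lipfre} one has $|\delta|^{\mL}_{\infty}+|\Delta|^{\mL}_{\infty,1-\theta}\lesssim \e/\alpha$. Crucially $(k,\ell)\in\mZ_\mM\subset\mZ_2$ gives $|\ell|_1\leq 2$; combined with $\tw_j\geq 1$ this bounds
\[
\bigl|\partial_{e_{i_*}}(\langle\delta,k\rangle+\langle\Delta,\ell\rangle)\bigr|\leq |k|_1\,|\delta|^{\mL}_{\infty}+2\,|\Delta|^{\mL}_{\infty,1-\theta}\leq C(|k|_1+2)\,\frac{\e}{\alpha}.
\]
Choosing $\varsigma_*$ small enough, the smallness hypothesis \eqref{cond:zeta} lets this term be absorbed, so that $|\partial_{e_{i_*}} g|\gtrsim |k|_1=\jbs{k}$ on all of $\Xi_0$.

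Consequently $g$ is strictly monotone along every line parallel to $e_{i_*}$ in $\Xi_0$, so the $1$D slice $\{t: |g(\xi_0+te_{i_*})|<\alpha/(\jbs{k}^{\tau}\tw(\ell)^{\theta})\}$ has Lebesgue measure $\lesssim \alpha/\jbs{k}^{\tau+1}$. Fubini over the remaining $N-1$ coordinate directions of $\Xi_0$ (each of length $r^{4/3}$) then yields \eqref{singl:set:mes1}. The only substantive ingredient is the transversality \eqref{grande0}, which packages the non-degeneracy of the KG frequencies uniformly in $h$ and is already in hand via Theorem \ref{Newstime}; once it is granted, the unique subtlety is controlling the Lipschitz corrections $\delta,\Delta$, for which the key observations are the a priori restriction $|\ell|_1\leq 2$ coming from $\mZ_\mM\subset\mZ_2$ and the smallness \eqref{cond:zeta} of $\e\alpha^{-1}$.
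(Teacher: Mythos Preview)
Your proposal is correct and follows essentially the same approach as the paper: dispose of $k=0$ via Lemma \ref{remark:trivial}, use the transversality \eqref{grande0} to find a direction along which $g$ varies by $\gtrsim |k|_1$, absorb the Lipschitz perturbations $\delta,\Delta$ using \eqref{lipfre} and the smallness \eqref{cond:zeta}, and conclude by Fubini over the remaining $N-1$ directions of $\Xi_0$. The only cosmetic difference is that the paper chooses the direction $b=(\matA k+\matB^t\ell)/|\matA k+\matB^t\ell|_2$ itself rather than a coordinate axis $e_{i_*}$ of maximal component, and it phrases the monotonicity via Lipschitz increments rather than ``$\partial_{e_{i_*}}$'' (which is a slight abuse since $\delta,\Delta$ are only Lipschitz); neither affects the argument.
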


\begin{proof}
We start by \eqref{singl:set:mes1}. For $(k, \ell)\in \mathcal{Z}_{\cM}$ and $\xi \in \Xi_0$, we set
$$
\Phi_{k\ell}(\xi)\vcentcolon= \langle\omega(\xi),k\rangle+\langle\Omega(\xi),\ell \rangle, 
$$
so that for all $\xi,\eta\in\Xi_0$
\begin{align*}
\Phi_{k\ell}(\xi)-\Phi_{k\ell}(\eta)&=\langle \matA k+\matB^t\ell,\xi-\eta\rangle+
  \langle(\delta(\xi)-\delta(\eta)),k\rangle+\langle(\Delta(\xi)-\Delta(\eta)),\ell \rangle.
\end{align*}
The case $k=0$ is covered by Lemma~\ref{remark:trivial}, so we assume $k\neq0$. By the estimates \eqref{lipfre}, recalling the definition of the norms \eqref{norma:OmegaL}, we have $|\delta|_\infty^{\mL} \lesssim \e \alpha^{-1}$ and $|\Delta|_\infty^{\mL} \lesssim \e \alpha^{-1}$. Hence, since $|\ell|_1 \leq 2$, we have
\begin{equation}\label{stim-rest}
 |\langle(\delta(\xi)-\delta(\eta)),k\rangle+\langle(\Delta(\xi)-\Delta(\eta)),\ell \rangle | \leq (|\delta|_\infty^{\mL}|k|_1 + 2|\Delta|_\infty^{\mL}) {|\xi-\eta|} \leq C_1 \e \alpha^{-1} |k|_1 {|\xi-\eta|},
\end{equation}
form some constant $C_1>0$. Moreover, for any $c$ large enough we can define
$$
b\vcentcolon=  \frac{1}{|\matA k+\matB^t\ell |_2}(\matA k+\matB^t\ell),
$$
since by \eqref{grande0} the denominator is different from zero.

For any $\xi\in\Xi_0$, there exist $s_1 \in \mathbb{R}$ and $v \in b^{\perp}$ such that $\xi = s_1 b + v$.
For $s\in\R$, we define
\begin{equation}\label{psi}
\psi(s) \vcentcolon= \Phi_{k\ell}((s+s_1)b + v),
\qquad \textnormal{whenever } (s+s_1)b + v \in \Xi_0.
\end{equation}
Combining \eqref{grande0} and \eqref{stim-rest}, there exists a constant $C_2>0$ such that for any $s,t$, for which \eqref{psi} is well defined, we have 
\begin{equation}\label{stim:lipaux}
|\psi(s)-\psi(t)| 
  \ge \bigl(|\matA k+\matB^t \ell|_2 -C_1\e \alpha^{-1} |k|_1\bigr) |s-t|
  \ge (C_2 - C_1\e\alpha^{-1}) |k|_1|s-t|
  \ge \frac{C_2}{2}|k|_1|s-t|,
\end{equation}
where we used the smallness condition \eqref{cond:zeta}, choosing $\varsigma_*<C_2/(2C_1)$.

Applying Fubini's theorem we obtain
\[
\left|\cR^{\theta}_{k \ell}(\alpha)\right| 
   \lesssim  r^{\frac{4}{3}(N-1)} \frac{\alpha}{ \tw(\ell)^{\theta} \jbs{k}^{\tau+1}}.
\]

By the definition \eqref{supp:ell} and the first estimate in \eqref{bound:peso}, we obtain \eqref{singl:set:mes1}.

\smallskip

The estimate \eqref{singl:set:mes2} follows by the same argument, using Lemma~\ref{remark:trivial} and the bound \eqref{grand0:NLS}.

\end{proof}

We end this section with the estimate of the resonant sets \eqref{res:setNLS} for NLS.
\begin{theorem}
There exist $\e_*,\varsigma_*,r_*, \alpha_*>0$ such that for any $\e \in
(0,\e_*)$, $r\in(0,r_*)$ and $\alpha \in (0,\alpha_*)$ the following holds. 
By assuming that $\e \alpha^{-1} < \varsigma_*$, we have 
\begin{equation}
    \label{meas.finalNLS}
\left|\cR_k^{\tt NLS}(0; \alpha, \delta^{\tt NLS}, \Delta^{\tt NLS}) \right|
\sleq
\frac{\alpha}{\langle k\rangle^{\tau-1}} \,
r^{\frac{4}{3}(N-1)} \qquad \forall k \in \mathbb{Z}^\cS.
\end{equation}
\end{theorem}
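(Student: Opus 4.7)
The plan is to decompose
\[
\cR_k^{\tt NLS}(0;\alpha,\delta^{\tt NLS},\Delta^{\tt NLS})
   = \bigcup_{\substack{\ell \in \Z^{\cS^c}\\ (k,\ell) \in \mZ_\mG}} \cR^{\tt NLS}_{k\ell}(0;\alpha,\delta^{\tt NLS},\Delta^{\tt NLS})
\]
and to apply Lemma \ref{singole:stim} to bound each single set by $\alpha\jbs{k}^{-\tau-1} r^{\frac{4}{3}(N-1)}$. It then suffices to bound by $\jbs{k}^2$ the number of $\ell$ with $(k,\ell)\in\mZ_\mG$ that produce a non-empty single resonant set. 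Since $(k,\ell)\in\mZ_\mG\subset\mZ_2$ one has automatically $|\ell|_1\le 2$, so only finitely many shapes of $\ell$ need to be treated.

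For $k=0$ the union is empty by Lemma \ref{remark:trivial}. For $k\neq 0$ I would split according to $|\ell|_1$. When $|\ell|_1\in\{0,1\}$, the Gauge invariance condition $\cL=0$ together with the momentum constraint of Remark \ref{limiti.momento}(i) leaves $O(1)$ admissible indices. The real work is the case $|\ell|_1=2$, divided into the two subcases $\ell=\pm(\delta_a+\delta_b)$ and $\ell=\pm(\delta_a-\delta_b)$ with $a,b\in\cS^c$.

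The key observation is that $\Omega_j^{\tt NLS}=j^2/2+\Delta_j^{\tt NLS}$ grows quadratically in $j$, whereas $|\langle\omega^{\tt NLS},k\rangle|\lesssim \jbs{k}$ with constant depending only on $\cS$. Hence the non-emptiness of $\cR_{k\ell}^{\tt NLS}$ forces $|a^2\pm b^2|\lesssim \jbs{k}$ in the two subcases. Combined with the momentum bounds of Remark \ref{limiti.momento}(ii) and (iii), namely $|a+b|\le J\jbs{k}$ in the sum case and $|a-b|\le J\jbs{k}$ in the difference case with $J=\max\{|j|\mid j\in\cS\}$, I would count: in the sum case the disc $\{a^2+b^2\lesssim\jbs{k}\}$ contains $\lesssim \jbs{k}$ integer pairs; in the difference case, writing $a^2-b^2=(a-b)(a+b)$, the pair $(|a-b|,|a+b|)$ ranges in $\{1\le d\le J\jbs{k},\ |a+b|\lesssim \jbs{k}/d\}$, totalling $\sum_{d=1}^{J\jbs{k}}\jbs{k}/d\lesssim \jbs{k}\log\jbs{k}\lesssim \jbs{k}^2$ elements.

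Collecting these counts gives
\[
|\cR_k^{\tt NLS}| \lesssim \jbs{k}^2\cdot \frac{\alpha}{\jbs{k}^{\tau+1}}r^{\frac{4}{3}(N-1)}=\frac{\alpha}{\jbs{k}^{\tau-1}}r^{\frac{4}{3}(N-1)},
\]
which is the claim. I expect the main obstacle to be the difference case $\ell=\pm(\delta_a-\delta_b)$: momentum alone does not bound $a+b$, and it is essential to exploit the quadratic growth of $\Omega^{\tt NLS}$ jointly with the non-emptiness of $\cR^{\tt NLS}_{k\ell}$ to reduce to a finite list of admissible pairs.
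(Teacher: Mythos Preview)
Your proof is correct and follows the same overall strategy as the paper: bound each $\cR^{\tt NLS}_{k\ell}$ via Lemma \ref{singole:stim} and then count the $\ell$'s with $(k,\ell)\in\mZ_\mG$ for which the set can be non-empty. The paper, however, does not carry out the full count directly; it invokes Lemma~7 in Section~5 of \cite{poschel1996kam} for the generic count and singles out only the degenerate case $\supp(\ell)=\{i,-i\}$ (where the even symmetry $\Omega_i^{\tt NLS}=\Omega_{-i}^{\tt NLS}$ creates a multiplicity not covered by P\"oschel's hypotheses), using the momentum condition to bound $|i|\lesssim|k|_1$ there. Your explicit counting via the quadratic growth of $\Omega_j^{\tt NLS}$ subsumes that special case automatically and is more self-contained. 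Two minor remarks: (i) you write $\Omega_j^{\tt NLS}=j^2/2+\Delta_j^{\tt NLS}$, omitting the $(B^{\tt NLS}\xi)_j$ contribution from \eqref{omegaprima:NLS}, but this term is $O(r^{4/3})$ and does not affect the argument; (ii) in the difference case you sum over all $d=|a-b|\in\{1,\dots,J\jbs{k}\}$, whereas the momentum condition actually fixes $a-b$ exactly (not merely $|a-b|\leq J\jbs{k}$), so the true count is $O(\jbs{k})$ rather than $O(\jbs{k}\log\jbs{k})$---your overcount still yields the required $O(\jbs{k}^2)$.
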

\begin{proof}
By the argument in Section 5 of \cite{poschel1996kam} (see Lemma~7 therein), we have that for any $k \in \mathbb{Z}^{\cS}$, with $k \neq 0$, it holds
\[
{ \# \{ \ell \in \mathbb{Z}^{\cS^c} \mid (k, \ell) \in \mZ_\mG, \, \cR^{\tt NLS}_{k\ell}(\alpha) \neq \emptyset, \, \, \supp(\ell) \neq \{i,-i\} \, \, \forall \, i \in \cS^c\setminus \{0\} \} \lesssim |k|_1^2,}
\]
with $\mZ_\mG$ defined in \eqref{Indici:G}. 

{Otherwise, when $\supp(\ell)=\{i,-i\}$ for some $i \in \cS^c \setminus \{0\}$, we need to deal with the multiplicity of the linear frequencies.  In particular, we need to consider two cases. For $\ell_i \ell_{-i}=1$, by the same argument of P\"oschel, for a fixed $k$, the cardinality of $i$ such that $\cR_{k \ell}^{\theta}  \neq \emptyset $ is of order $|k|_1$. On the other hand, for $\ell_i\ell_i=-1$ we need to apply the zero momentum condition \eqref{Indici:M}, obtaining 
\[
2|i|=\bigg| \sum_{n \in \cS^c} n \ell_n \bigg|=\bigg|\sum_{n \in \cS} n k_n \bigg| \lesssim |k|_1.
\]}

Recalling the measure estimate \eqref{singl:set:mes2} we obtain \eqref{meas.finalNLS}. 
The case $k=0$ is excluded by Lemma \ref{remark:trivial}.
\end{proof}

\subsection{{Estimate of the Klein-Gordon resonant set $\cR_k^{\theta}$}}

For $(k,\ell)\in\cZ_{\cM}$ (see \eqref{Indici:M}) we remark that if $\cL\neq0$ (see \eqref{Indici:G}) the corresponding Fourier Taylor monomial \eqref{FTmon} is not gauge invariant.
The main result of the upcoming sections is the following.
\begin{theorem}
  \label{measure_estimate}
 Let  $\theta_0 \in [0,1)$ and $\theta_1 \in [0,1-\theta_0)$. There exist $h_*,\e_*,\varsigma_*,r_*, \alpha_*>0$ such that for any $h \in (0,h_*)$,  $\e \in
(0,\e_*)$, $r\in(0,r_*)$ and $\alpha \in (0,\alpha_*)$  the following holds. If $|\Delta|_{\infty,1-\theta_0} \lesssim \e$ (recall \eqref{norma:Omega}) and $ \e \alpha^{-1}<\varsigma_*$, then
  \begin{equation}
    \label{meas.final}
\left|\cR^{\theta_1}_k(h; \alpha, \delta, \Delta) \right|\sleq
\frac{\alpha^{\frac{2}{3-\theta_1}}}{\langle k\rangle^{\tau_{(\theta_1,\theta_0)}}}\,r^{\frac{4}{3}(N-1)}, \qquad \forall \, k \in \mathbb{Z}^{\cS},
  \end{equation}
with $\tau_{(\theta_1,\theta_0)}\vcentcolon=   \tau\frac{(1-\theta_0)}{(3-\theta_1)}-5$.

Moreover, there exists $\kappa_*$ such that for any $(k, \ell ) \in \mZ_\mM$ with $\cL \neq 0$, $|k|_1 \leq \kappa_* c^{1/2}$ and $\supp(\ell) \subset [-c/2,c/2]$, the following lower bound holds
\begin{equation}\label{bound:NGhom}
\left|\jbs{\omega,k}+ \jbs{\Omega, \ell} \right|\geq \alpha c^2.
\end{equation}
\end{theorem}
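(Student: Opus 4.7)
The plan is to prove the two claims in the order suggested by the statement, since the lower bound \eqref{bound:NGhom} will immediately kill several sub--cases in the measure estimate. Throughout I would use the expansion $\lambda_j=c^2+\nu_j(h)$ from \eqref{def:lambda2} together with \eqref{omegaprima} to write
\begin{equation*}
\langle\omega,k\rangle+\langle\Omega,\ell\rangle
= c^2\,\cL
+\sum_{j\in\cS}\nu_j k_j+\sum_{j\in\cS^c}\nu_j\ell_j
+\langle\matA\xi+\delta,k\rangle+\langle\matB\xi+\Delta,\ell\rangle.
\end{equation*}

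For the non--Gauge bound, under the standing hypotheses $|k|_1\le \kappa_* c^{1/2}$ and $\supp(\ell)\subset[-c/2,c/2]$, I would estimate the two $\nu$--sums using $\nu_j\le j^2/2$: the tangential contribution is $O(J^2 |k|_1)=O(c^{1/2})$ and the normal contribution is at most $c^2/4$, hence strictly smaller than $c^2$. The remaining corrections are $o(c^2)$ by \eqref{lipfre} and the uniform bound on $\Xi_0$ (see \eqref{Xi0}), once $r,\varepsilon,\alpha$ are small and $\kappa_*$ is chosen small. Since $|\cL|\ge 1$, this yields $|\langle\omega,k\rangle+\langle\Omega,\ell\rangle|\ge c^2/2\ge\alpha c^2$ as claimed.

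For the measure estimate, I would decompose $\cR_k^{\theta_1}(\alpha)=\bigcup_\ell \cR_{k\ell}^{\theta_1}(\alpha)$ over $\ell\in\mathbb{Z}^{\cS^c}$ with $|\ell|_1\le 2$ and $(k,\ell)\in\mZ_\mM$, and I would split cases following Remark \ref{limiti.momento}. For $|\ell|_1\le 1$ there are at most two admissible $\ell$'s, so \eqref{singl:set:mes1} gives the bound directly. For $\ell=\pm(\delta_a+\delta_b)$ we have $\cL=\pm 2\neq 0$; in the regime $|k|_1\le \kappa_* c^{1/2}$, $|a|,|b|\le c/2$ the set is empty by \eqref{bound:NGhom}, and in the complementary regime the number of admissible pairs is controlled via $\Omega_a+\Omega_b\gtrsim c^2(\tw_a+\tw_b)$ combined with $|a+b|\le J|k|_1$, then summed using \eqref{singl:set:mes1}. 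The delicate case is $\ell=\pm(\delta_a-\delta_b)$: I would split according to whether $\max(|a|,|b|)\le c^3$, where Remark \ref{limiti.momento}(iii) and the ambient cap bound the pair count, or $\max(|a|,|b|)>c^3$, where the asymptotic \eqref{asymp} replaces $\Omega_a-\Omega_b$ by $c(|a|-|b|)$ up to a relative error $O(\tw(\ell)^{-2})$.

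The main obstacle, which is the reason for the dedicated Sections \ref{nontrivial} and \ref{nontrivial:1}, is precisely the difference case with large comparable indices, since there $\Omega_a-\Omega_b$ can be arbitrarily small while $\langle\omega,k\rangle$ also produces potentially resonant values as $\xi$ ranges over $\Xi_0$. The strategy I would follow is a dyadic decomposition over the scale $\tw(\ell)=\min_{i\in\supp(\ell)}\tw_i$, using the single--set bound \eqref{singl:set:mes1} combined with the weight $\tw(\ell)^{-\theta_1}$ built into the definition \eqref{res1} and the regularity \eqref{lipfre} of $\Delta$ in the $|\cdot|_{\infty,1-\theta_0}$ norm. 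The sublinear dependence $\alpha^{2/(3-\theta_1)}$ and the loss $\tau_{(\theta_1,\theta_0)}=\tau(1-\theta_0)/(3-\theta_1)-5$ in the exponent of $\langle k\rangle$ should then arise from optimizing the trade--off between the size of each resonant set at a given scale and the number of admissible pairs at that scale, by choosing the optimal cut--off between the ``small $\tw(\ell)$'' contribution and the ``large $\tw(\ell)$, empty'' contribution.
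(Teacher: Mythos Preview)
Your argument for the non--Gauge lower bound \eqref{bound:NGhom} is correct and in fact cleaner than the paper's case--by--case treatment: the paper obtains \eqref{bound:NGhom} by assembling Lemmata \ref{caso:trivial}, \ref{non.gauge.1}, \ref{Gauge:2piu}, \ref{Basta:mom}, \ref{NonGauge:2piu}, whereas your direct bound $|\sum_{j\in\cS^c}\nu_j\ell_j|\le c^2/4$ under $\supp(\ell)\subset[-c/2,c/2]$ dispatches all cases at once.

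However, your treatment of the measure estimate contains a genuine error in the sum case. You write ``for $\ell=\pm(\delta_a+\delta_b)$ we have $\cL=\pm 2\neq 0$'', but $\cL=\sum_{j\in\cS}k_j+\sum_{j\in\cS^c}\ell_j$ (see \eqref{Indici:G}) includes the $k$--contribution, so $\cL$ can take any integer value, in particular $\cL=0$. You therefore miss the Gauge--invariant sum case $\ell_i\ell_j=1$, $\cL=0$, which the paper handles separately in Lemma \ref{Gauge:2piu} ($\ccS_k^{(6)}$) via the lower bound $\nu_i+\nu_j\gtrsim|i|+|j|$, forcing $|i|,|j|\lesssim|k|_1$.

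More seriously, even when $\cL\neq 0$ your sketch for the sum case is inadequate. The bound $\Omega_a+\Omega_b\gtrsim c^2(\tw_a+\tw_b)$ only tells you the \emph{scale} of $a,b$; it does not localize them to a set of cardinality polynomial in $|k|_1$ alone (your argument would leave a factor of $c$ in the count). The genuinely hard sub--case is the paper's $\ccS_k^{(8)}$ (Lemma \ref{NonGauge:2piu}): $\sgn(a)=-\sgn(b)$, $\ell_a=\ell_b$, $\sgn(\cL)=-\sgn(\ell_a)$. There the resonance relation reads $\nu_a+\nu_b\approx c^2\cL$; since momentum only gives $||a|-|b||\lesssim|k|_1$, one must invert this relation. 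The paper writes $\nu_j/c^2=f(j^2/c^2)$ with $f(x)=\sqrt{1+x}-1$, uses the explicit inverse $f^{-1}(y)=y(y+2)$, and after a careful chain of estimates obtains $|cx_\cL-a|\lesssim|k|_1^3$ and likewise for $b$, yielding at most $O(|k|_1^6)$ resonant pairs and hence the loss $\tau-5$ in \eqref{2piu.res.1}. This explicit inversion is the missing ingredient in your proposal.

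For the difference case $\ell=\pm(\delta_a-\delta_b)$ your sketch is broadly in line with the paper (split at $c^3$, use \eqref{asymp} for large indices, optimize a cut--off), though the paper first separates by $\sgn(a),\sgn(b)$ into $\ccS_k^{(1)},\ccS_k^{(2)},\ccS_k^{(3)}$ before the analysis of Lemma \ref{casoI3}, and the optimization there (choice of $i_0$) is exactly what produces the exponent $\alpha^{2/(3-\theta_1)}$ and $\tau_{(\theta_1,\theta_0)}$.
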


The following sections are devoted to the proof of Theorem~\ref{measure_estimate}. We split the analysis into several cases. A key role is played by the zero momentum condition \eqref{Indici:M}, which gives bounds on the indices in $\supp(\ell)$.

Considering that the frequencies are of the form \eqref{omegaprima} and by the estimates \eqref{Omega:lambda}, \eqref{lipfre} we have
\begin{align}\label{lambda.circa2}
{\omega=\lambda+\cO(\e+r^{4/3}), \quad \Omega=\varLambda +\cO(\e+r^{4/3}).}
\end{align}
Then, recalling the definition of $\cL$ in  \eqref{Indici:G}, the divisor can be written as 
\begin{align}
  \label{elle.c2}
\jbs{\omega, k} + \jbs{\Omega, \ell}=\cL c^2+\langle\nu,k\rangle+ {\sum_{n \in \supp(\ell)} \ell_n \nu_n} +\cO(|k|_1(\e+r^{4/3})),
\end{align}
where $\nu_j$ is defined in \eqref{def:h} and $\nu \vcentcolon =(\nu_j)_{j\in\cS}$.

\medskip

{We need to estimate $\cR_{k \ell}^{\theta}$, which, for any fixed $k \in \mathbb{Z}^{\cS^c}$, is defined as the union of $\cR_{k \ell}^{\theta}$ over any $\ell$ such that $(k,\ell) \in \mZ_\mM$. We will study separately the union over the following subsets
\begin{align}
\label{S0}
&\ccS_k^{(0)} \vcentcolon= \{ \ell \in \mathbb{Z}^{\cS^c} \mid (k, \ell) \in \mZ_\mM, \, \exists \, i \in \cS^c \, \, \textnormal{s.t. } \, \supp(\ell)=\{i\} \, \, \textnormal{or } \, \supp(\ell)=\{i, 0\} \}, &\\
\label{SA}
&\ccS_k^{-} \vcentcolon= \{ \ell \in \mathbb{Z}^{\cS^c} \mid  \, (k,\ell) \in \mZ_\mM, \, \exists \, i,j\in\cS^c\setminus\{0\} 
 \textnormal{ s.t. } i \neq j,  \supp(\ell) =  \{i,j\},  \ell_i\ell_j= \textnormal{-}1\},&\\
\label{SA1}
&\ccS_k^{+} \vcentcolon= \{ \ell \in \mathbb{Z}^{\cS^c} \mid  \, (k,\ell) \in \mZ_\mM, \, \exists i,j\in\cS^c\setminus\{0\} 
 \textnormal{ s.t. } i \neq j,  \, \supp(\ell) =  \{i,j\},\,  \ell_i\ell_j=1\},&
\end{align}
noting that  $\{ \ell \in \mathbb{Z}^{\cS^c} \setminus \{0\} \mid (k, \ell) \in \mZ_\mM \}= \ccS_k^{(0)} \cup \ccS_k^+ \cup \ccS_k^-$.}  

\begin{remark}\label{sup:1}
From now on for any $\ell \in \ccS_k^{\pm }$, we will denote $ \{i,j\} \vcentcolon= \supp(\ell)$, assuming $|i| \leq |j|$.
\end{remark}
\subsection{{Estimates of the union over $\ccS_k^{(0)}$}}\label{simo:cas}
{We discuss here the union of $\cR_{k \ell}^{\theta}$ over $\ccS_k^{(0)}$ defined in \eqref{S0}. In this case, we are going to see that the momentum condition \eqref{Indici:M} suffices to obtain for any $\ell \in \cS_k^{(0)}$ a bound on the possible values of $i$ such that $i\in \supp(\ell)$.
\begin{lemma} \label{S0} 
Let $\theta \in [0,1)$. There exist $h_*,\e_*,\varsigma_*,r_*, \alpha_*>0$ such that for any $h \in (0,h_*)$, $\e \in (0,\e_*)$, $r\in(0,r_*)$ and $ \alpha \in (0,\alpha_*)$ such that $ \e \alpha^{-1}<\varsigma_*$, the following estimate holds
\begin{equation} \label{unione:S0} 
    \left|\bigcup_{\ell\in\ccS_k^{(0)}}\cR^{\theta}_{k\ell}(\alpha )\right|\leq \frac{\alpha}{\langle k\rangle^{\tau+1}}\,r^{\frac{4}{3}(N-1)}, \qquad \forall \, k \in \mathbb{Z}^{\cS}.
\end{equation} 
\end{lemma}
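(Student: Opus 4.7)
The plan is to combine the single-resonance measure bound \eqref{singl:set:mes1} with a simple counting argument which, thanks to the zero momentum condition \eqref{Indici:M}, shows that $\#\ccS_k^{(0)}$ is bounded by an absolute constant independent of $k$. In other words, there is essentially no genuine ``union'' to control: for each $k$ we are taking a union of only $O(1)$ sets, and the desired estimate will follow by summing $O(1)$ copies of \eqref{singl:set:mes1}.

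First I would dispose of the case $k=0$ using Lemma \ref{remark:trivial}, which gives $\cR_{0\ell}^{\theta}(\alpha)=\emptyset$ for every $\ell$ with $(0,\ell)\in\mZ_\mM$, so that \eqref{unione:S0} holds trivially. For $k\neq 0$, I would set $M:=-\sum_{j\in\cS} j k_j$ and observe that, for any $\ell\in\ccS_k^{(0)}$, the only possibly non-vanishing coordinate of $\ell$ that contributes to the momentum condition is $\ell_i$ with $i\in\cS^c\setminus\{0\}$, since the zero mode (if $0\in\cS^c$) carries no momentum. Hence \eqref{Indici:M} reduces to the scalar equation
\begin{equation*}
i\,\ell_i \;=\; M.
\end{equation*}
The constraint $|\ell|_1\leq 2$ forces $|\ell_i|\in\{1,2\}$, so this equation has at most $4$ solution pairs $(i,\ell_i)$. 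For each such pair the remaining coordinate $\ell_0$ (if $0\in\cS^c$) is further constrained by $|\ell|_1\leq 2$ to take at most $3$ values (namely $0$ and $\pm 1$). Therefore $\#\ccS_k^{(0)}\leq C$ for some constant $C$ independent of $k$ (the degenerate subcase $M=0$, which forces $i=0$ hence $\supp(\ell)=\{0\}$, is handled identically and still yields $O(1)$ admissible $\ell$).

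Finally, I would apply \eqref{singl:set:mes1} of Lemma \ref{singole:stim} individually to each $\ell\in\ccS_k^{(0)}$ and use subadditivity to obtain
\begin{equation*}
\Bigl|\bigcup_{\ell\in\ccS_k^{(0)}}\cR^{\theta}_{k\ell}(\alpha)\Bigr|
\;\leq\;\#\ccS_k^{(0)}\cdot\sup_{\ell\in\ccS_k^{(0)}}\bigl|\cR^{\theta}_{k\ell}(\alpha)\bigr|
\;\lesssim\;\frac{\alpha}{\jbs{k}^{\tau+1}}\,r^{\frac{4}{3}(N-1)},
\end{equation*}
which is precisely \eqref{unione:S0}. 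No step here is a real obstacle: the lemma is essentially a combinatorial ``warm-up'' that isolates the regime in which the zero momentum condition alone pins down the normal support of $\ell$, leaving the genuinely delicate analysis (where the cardinality of the admissible $\ell$'s grows with $k$ and one must exploit nontrivial arithmetic of the linear frequencies) to the subsequent sections treating $\ccS_k^{\pm}$.
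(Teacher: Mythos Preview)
Your proposal is correct and follows essentially the same approach as the paper: use the zero momentum condition \eqref{Indici:M} to show that $\#\ccS_k^{(0)}$ is bounded by an absolute constant independent of $k$, then apply the single-set measure bound \eqref{singl:set:mes1} of Lemma~\ref{singole:stim} and subadditivity. Your treatment is in fact slightly more detailed than the paper's, since you explicitly dispose of $k=0$ via Lemma~\ref{remark:trivial} and spell out the role of the coordinate $\ell_0$ in the counting.
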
 
\begin{proof}
Let $k \in \mathbb{Z}^{\cS} \setminus \{0\}$. Considering $\ell \in \mathcal{S}_k^{(0)}$, we need to study two cases. If $\# \supp(\ell)=1$, by \eqref{Indici:M}, for any fixed $k \in \mathbb{Z}^{\cS}$ there exist at most four possible values of $i \in \supp(\ell)$ (corresponding to $\ell_i\in\{\pm1,\pm2\}$) such that the zero momentum condition is fulfilled. Hence, using the bound \eqref{singl:set:mes1} we obtain \eqref{unione:S0}  for this sub-case. 
The case with $ 0 \in \supp(\ell)$ and $|\ell|_1=2$, follows similarly from \eqref{Indici:M}.
\end{proof}}

\medskip

Regarding the bound \eqref{bound:NGhom}, we have the following Lemma.
\begin{lemma}\label{caso:trivial}
There exist $h_*,\e_*,r_*, \alpha_*, \kappa_*>0$ such that for any 
$h \in (0,h_*)$, $\e \in (0,\e_*)$, $r\in(0,r_*)$, and $\alpha \in (0,\alpha_*)$, the following holds. 
{For any $k \in \mathbb{Z}^\cS$ and $\ell \in \ccS_k^{(0)} \cup \{0\}$ such that $\cL \neq 0$ (see \eqref{Indici:G}) and $|k|_1 \leq \kappa_* c$, the estimate \eqref{bound:NGhom} is satisfied.}
\end{lemma}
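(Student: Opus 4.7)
The plan is to exploit the fact that when $\cL \neq 0$, the expansion \eqref{elle.c2} contains the dominant term $\cL c^2$, which is non-zero and of order $c^2$, while all other contributions can be controlled by $c^2/2$ provided $|k|_1 \leq \kappa_* c$ with $\kappa_*$ small enough. The key observation is that $\cL$ is an integer, so $\cL \neq 0$ forces $|\cL| \geq 1$.

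First, I would start from the identity \eqref{elle.c2}, namely
\begin{equation*}
\langle \omega, k \rangle + \langle \Omega, \ell \rangle = \cL\, c^2 + \langle \nu, k \rangle + \sum_{n \in \supp(\ell)} \ell_n\, \nu_n + \mathcal{O}\bigl(|k|_1(\varepsilon + r^{4/3})\bigr),
\end{equation*}
and then bound each of the correction terms. Since $\cS$ is a fixed finite set, the quantity $\max_{j \in \cS} |\nu_j(h)|$ is uniformly bounded in $h \in (0,1)$ by some constant $C_\cS$, hence $|\langle \nu, k \rangle| \leq C_\cS |k|_1 \leq C_\cS \kappa_* c$.

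Next, I would estimate the normal-mode contribution $\sum_{n \in \supp(\ell)} \ell_n \nu_n$. For $\ell = 0$ this vanishes. For $\ell \in \ccS_k^{(0)}$, writing $\supp(\ell) \subseteq \{i, 0\}$ with $i \in \cS^c$, the zero momentum condition \eqref{Indici:M} yields $|i\, \ell_i| = |\sum_{j \in \cS} j\, k_j| \leq J |k|_1$ where $J := \max_{j \in \cS}|j|$. Since $|\ell_i| \in \{1,2\}$, this gives $|i| \leq J|k|_1 \leq J \kappa_* c$. Using $\nu_n \leq n^2/2$ together with $|\ell|_1 \leq 2$, we then have
\begin{equation*}
\Bigl| \sum_{n \in \supp(\ell)} \ell_n \nu_n \Bigr| \leq |\ell|_1 \cdot \tfrac{i^2}{2} \leq J^2 \kappa_*^2\, c^2.
\end{equation*}

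Finally, combining the three estimates with $|\cL| \geq 1$ and using the smallness conditions on $\varepsilon, r$, I would conclude
\begin{equation*}
|\langle \omega, k \rangle + \langle \Omega, \ell \rangle| \geq c^2 - J^2 \kappa_*^2 c^2 - C_\cS \kappa_* c - C \kappa_* c(\varepsilon + r^{4/3}) \geq \tfrac{1}{2} c^2,
\end{equation*}
provided $\kappa_*$ is chosen small enough (absorbing the factor $J^2$) and $c$ is sufficiently large (equivalently $h$ small enough). Since $\alpha \leq \alpha_* < 1/2$, this yields \eqref{bound:NGhom}. There is no real obstacle here: the result is essentially a direct consequence of the integrality of $\cL$ and of the bound $|i| \lesssim |k|_1$ coming from the translation invariance; the only care needed is in tuning $\kappa_*$ so that the quadratic-in-$c$ error $J^2 \kappa_*^2 c^2$ stays below $c^2/2$.
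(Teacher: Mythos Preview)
Your proof is correct and follows essentially the same route as the paper's: expand via \eqref{elle.c2}, use the zero-momentum condition to bound $|i|\lesssim|k|_1$, hence $|\nu_i|\lesssim|k|_1^2\leq \kappa_*^2 c^2$, and conclude by choosing $\kappa_*$ small. The only cosmetic difference is that the paper separates off the case $k=0$ by invoking Lemma~\ref{remark:trivial}, whereas your argument absorbs it uniformly (since $k=0$ makes all correction terms vanish).
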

\begin{proof}
The case $k=0$ is covered by Lemma \ref{remark:trivial}. Consider $(k, \ell) \in \mZ_\mM$ such that $k \neq 0$, $\cL \neq 0$ and $\supp(\ell)=\{i\}$, for some $i \in \cS^c \setminus \{0\}$. Since $|\ell_i \nu_i| \lesssim |k|_1^2$, by the fact that $|\nu_i|\lesssim i^2$ and \eqref{Indici:M}, using \eqref{elle.c2} we obtain the estimate
\begin{align}
\label{auxx:remark}
| \jbs{\omega,k}+ \jbs{\Omega, \ell} | &\geq |\cL|  c^2 - |\jbs{\nu,k}|-|\ell_i \nu_i| - C_0(r^{4/3}+\e)|k|_1& \\
\nonumber
&\geq c^2-C_0|k|_1-C_1|k|_1^2-C_2(r^{4/3}+\e)|k|_1,&
\end{align}
for some constants $C_2, C_1,C_0>0$. Then for $\e$ and $r$ small enough and $c$ large enough, there exists $C_3>0$ such that 
\[
| \jbs{\omega,k}+ \jbs{\Omega, \ell}|  \geq c^2- C_3 |k|_1^2.
\]
For any $|k|_1 \leq \kappa_* c $, with $\kappa_* \leq 1/\sqrt{(2C_3)}$, this implies \eqref{bound:NGhom}, provided that $\alpha$ is small enough. The cases $ 0 \in \supp(\ell)$ and $\supp(\ell) = \emptyset $ follow by similar, but simpler, computations. 
\end{proof}
\subsection{Estimate of the {union over $\ccS_k^-$} }\label{nontrivial}

We start by proving the bound \eqref{bound:NGhom} on $\ccS_k^-$ defined in \eqref{SA}.

\begin{lemma} \label{non.gauge.1} 
There exist $h_*,\e_*,r_*, \alpha_*, \kappa_*>0$ such that for any 
$h \in (0,h_*)$, $\e \in (0,\e_*)$, $r\in(0,r_*)$, and $\alpha \in (0,\alpha_*)$, the following holds. 
For any $(k , \ell) \in \mZ_\mM$, such that $\ell \in \ccS_k^{-}$ (see \eqref{SA}), $|k|_1 \leq \kappa_* c$ and $\cL \neq0$ (see \eqref{Indici:G}), then 
    \[ 
        \left|\jbs{\omega,k}+\jbs{\Omega,\ell}\right| \geq \alpha c^2. 
    \]
\end{lemma}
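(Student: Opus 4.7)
The divisor is integer-shifted by $\cL c^2$, so the strategy is to use $\cL\neq0$ together with $\cL\in\Z$ to obtain a dominant term $|\cL|c^2\geq c^2$, and then show that all other terms are $o(c^2)$ once $|k|_1\leq\kappa_* c$. Writing, as in \eqref{elle.c2},
\[
\jbs{\omega,k}+\jbs{\Omega,\ell}=\cL c^2+\jbs{\nu,k}+\sum_{n\in\supp(\ell)}\ell_n\nu_n+O(|k|_1(\e+r^{4/3})),
\]
the issue is to control the remaining frequency contributions when $\ell\in\ccS_k^{-}$, namely $\ell=\pm(\delta_i-\delta_j)$ with $i\neq j\in\cS^c\setminus\{0\}$.

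First I would dispose of the easy terms. Since for $j\in\cS$ one has $\nu_j=j^2/(1+\sqrt{1+hj^2})\leq j^2/2\leq J^2/2$ (with $J=\max\{|j|:j\in\cS\}$), it follows immediately that $|\jbs{\nu,k}|\leq (J^2/2)\,|k|_1$. The nontrivial step is the estimate of $\sum_n \ell_n\nu_n=\pm(\nu_i-\nu_j)$. Here I would use the identity $\nu_n=c\sqrt{c^2+n^2}-c^2$ and the ``conjugate'' trick
\[
|\nu_i-\nu_j|=c\,\frac{|i^2-j^2|}{\sqrt{c^2+i^2}+\sqrt{c^2+j^2}}=c|i-j|\cdot\frac{|i+j|}{\sqrt{c^2+i^2}+\sqrt{c^2+j^2}}\leq c|i-j|,
\]
since $|i|+|j|\leq\sqrt{c^2+i^2}+\sqrt{c^2+j^2}$. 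At this point the translation invariance $(k,\ell)\in\mathcal{Z}_{\mathcal M}$ kicks in: by Remark \ref{limiti.momento}(iii), $|i-j|\leq J|k|_1$, so that $|\nu_i-\nu_j|\leq cJ\,|k|_1$.

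Combining the above, and using $|\cL|\geq 1$ (because $\cL\in\Z$ and $\cL\neq 0$), we obtain
\[
|\jbs{\omega,k}+\jbs{\Omega,\ell}|\geq c^2-\tfrac{J^2}{2}|k|_1-cJ|k|_1-C(\e+r^{4/3})|k|_1
\geq c^2\bigl(1-2J\kappa_*\bigr)-C\kappa_*(\e+r^{4/3})c
\]
whenever $|k|_1\leq\kappa_* c$. Choosing first $\kappa_*\leq 1/(8J)$, then $\e_*,r_*,\alpha_*$ small and $h_*$ small (equivalently $c$ large) so that the remainder is absorbed, the right-hand side is at least $c^2/2\geq \alpha c^2$ provided $\alpha_*\leq 1/2$.

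The main obstacle is conceptual rather than technical: the cancellation $\nu_i-\nu_j$ between two normal frequencies that are individually of size $\sim c|i|, c|j|$ is precisely the ``sum of two tangential frequencies''-type resonance the authors highlighted in the introduction, but in the difference form it is mild — the square-root telescoping gives the linear gain in $|i-j|$, which is then converted via the momentum constraint into a gain in $|k|_1$. This is exactly the mechanism that forces the restriction $|k|_1\leq\kappa_* c$, and no further argument is needed beyond a careful bookkeeping of constants.
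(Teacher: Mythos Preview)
Your argument is correct and essentially identical to the paper's own proof: both expand via \eqref{elle.c2}, bound $|\nu_i-\nu_j|\leq c|i-j|$ by the conjugate identity (the paper cites this as \eqref{B.2}), convert $|i-j|\lesssim |k|_1$ via the momentum constraint, and then absorb everything into $c^2$ by choosing $\kappa_*$ small. The only cosmetic difference is that the paper handles $k=0$ by citing Lemma~\ref{remark:trivial} (though for $\ell\in\ccS_k^-$ this case is in fact vacuous), and writes the final inequality as $c(c-C_3|k|_1)$ rather than tracking explicit constants like your $J$.
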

\begin{proof}
The case $k=0$ is covered by Lemma \ref{remark:trivial}. Let $k \in \mathbb{Z}^{\cS} \setminus \{0\}$ and $\ell \in \ccS_k^{-}$. By \eqref{B.2} and the momentum conservation \eqref{Indici:M}, we obtain
\begin{equation}\label{eq:lambda-diff} 
    |\nu_j - \nu_i|= |\lambda_j-\lambda_i| \leq c ||j|-|i|| \leq c |j-i| \lesssim c|k|_1.
\end{equation} 
If  $\cL \neq 0$, by using \eqref{elle.c2} we have
\begin{align} 
    \nonumber 
    \big|\jbs{\omega,k} + \jbs{\Omega,\ell}\big| 
        &\ge c^2|\cL| - |\nu_j-\nu_i| - |\jbs{\nu,k}| - C_1( \e + r^{4/3})|k|_1 \\ 
    \label{eq:bound-Lneq0} 
        &\ge c^2 - c \, C_0|k|_1 - C_2|k|_1 - C_1( \e + r^{4/3})|k|_1,
\end{align} 
for some positive constants $C_0,C_1, C_2$. For $\e$, $r$ small enough and $c$ large enough, there exists $C_3>0$ such that 
\begin{equation}\label{aux:C3}
\big|\jbs{\omega,k} + \jbs{\Omega,\ell}\big|  \geq c(c-C_3|k|_1).
\end{equation}
By taking $|k|_1 \le \kappa_* c$ in \eqref{aux:C3}, with $\kappa_* \leq 1/(2C_3)$, 
provided $\alpha$ is small enough, we have the thesis.
\end{proof} 

{ To study the measure estimate, we are going to further decompose the set $\ccS_k^-$ introducing the subsets (recall Remark \ref{sup:1})
\begin{align}
\label{SK1}
&\ccS_k^{(1)} \vcentcolon= \{ \ell \in \ccS_k^{-} \mid \sgn(i) \neq \sgn(j) \},& \\
\label{SK2}
&\ccS_k^{(2)} \vcentcolon= \{ \ell \in \ccS_k^{-} \mid  \sgn(i) = \sgn(j), |i| \leq |j|/2 \}, & \\
\label{SK3}
&\ccS_k^{(3)} \vcentcolon = \{ \ell \in \ccS_k^{-} \mid  \sgn(i) = \sgn(j), |j|/2 < |i| \leq |j| \}&,
\end{align}
and remark that $\ccS_k^-= \ccS_k^{(1)} \cup \ccS_k^{(2)} \cup \ccS_k^{(3)}$.}

\subsubsection{{ Estimate of the union over $\ccS_k^{(1)} \cup \ccS_k^{(2)}$}}\label{menodiversi}
{Let $k \in \mathbb{Z}^{\cS}$, we have the following Lemma. 
\begin{lemma} \label{S12} 
Let $\theta \in [0,1)$. There exist $h_*,\e_*,\varsigma_*,r_*, \alpha_*>0$ such that for any $h \in (0,h_*)$, $\e \in (0,\e_*)$, $r\in(0,r_*)$ and $ \alpha \in (0,\alpha_*)$ such that $ \e \alpha^{-1}<\varsigma_*$, the following estimate holds
\begin{equation} \label{elle.2.eq:1} 
    \left|\bigcup_{\ell\in\ccS_k^{(1)}\cup \ccS_k^{(2)} }\cR^{\theta}_{k\ell}(\alpha )\right|\leq \frac{\alpha}{\langle k\rangle^{\tau-1}}\,r^{\frac{4}{3}(N-1)}, \qquad \forall \, k \in \mathbb{Z}^{\cS}.
\end{equation} 
\end{lemma}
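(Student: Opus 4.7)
The plan is to exploit the zero-momentum condition \eqref{Indici:M} to bound the cardinality of admissible $\ell$ at each fixed $k$, and then to sum up the single-set measure estimate \eqref{singl:set:mes1} from Lemma \ref{singole:stim}. Concretely, every $\ell\in\ccS_k^-$ is of the form $\ell=\pm(\delta_i-\delta_j)$ with $i\neq j$ in $\cS^c\setminus\{0\}$, so the momentum condition reads $\pm(i-j)=-\sum_{n\in\cS}nk_n$, and Remark~\ref{limiti.momento}(iii) gives $|i-j|\lesssim|k|_1$.

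First, I would dispose of the case $k=0$. If $k=0$ and $\ell=\pm(\delta_i-\delta_j)$ with $i\neq j$, the momentum condition forces $i=j$, a contradiction; hence $\ccS_0^{(1)}\cup\ccS_0^{(2)}=\emptyset$ and \eqref{elle.2.eq:1} is trivial. So from now on I fix $k\in\mathbb{Z}^\cS\setminus\{0\}$.

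Next, I would translate the bound $|i-j|\lesssim|k|_1$ into simultaneous bounds on $|i|$ and $|j|$ in each of the two subcases. For $\ell\in\ccS_k^{(1)}$, $i$ and $j$ have opposite signs, so $|i|+|j|=|i-j|\lesssim|k|_1$, which gives $|i|,|j|\lesssim|k|_1$. For $\ell\in\ccS_k^{(2)}$, $i$ and $j$ have the same sign with $|i|\le|j|/2$, so $|j|/2\le|j|-|i|\le|i-j|\lesssim|k|_1$, again yielding $|i|,|j|\lesssim|k|_1$. In both cases the number of pairs $(i,j)$ (hence of admissible $\ell$) is at most $C\langle k\rangle^{2}$ for some constant $C>0$ independent of $c$.

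Finally I would combine this counting with Lemma~\ref{singole:stim}. Under the hypotheses $h\in(0,h_*)$, $\varepsilon\in(0,\varepsilon_*)$, $r\in(0,r_*)$, $\alpha\in(0,\alpha_*)$ and $\varepsilon\alpha^{-1}<\varsigma_*$, the single-set bound \eqref{singl:set:mes1} and a union bound give
\begin{equation*}
\left|\bigcup_{\ell\in\ccS_k^{(1)}\cup\ccS_k^{(2)}}\cR^{\theta}_{k\ell}(\alpha)\right|
\;\le\;\sum_{\ell\in\ccS_k^{(1)}\cup\ccS_k^{(2)}}|\cR^{\theta}_{k\ell}(\alpha)|
\;\lesssim\;\langle k\rangle^{2}\cdot\frac{\alpha}{\langle k\rangle^{\tau+1}}\,r^{\frac{4}{3}(N-1)}
\;\lesssim\;\frac{\alpha}{\langle k\rangle^{\tau-1}}\,r^{\frac{4}{3}(N-1)},
\end{equation*}
which is exactly \eqref{elle.2.eq:1}. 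The main (very modest) obstacle is only the bookkeeping of the momentum condition to get the $|k|_1^{2}$ counting; no fine analysis of the $c$-dependence of the small divisors is needed here because the difference $\nu_j-\nu_i$ of two normal frequencies creates no resonance enhancement beyond what Lemma~\ref{singole:stim} already absorbs. The genuinely delicate cases where $|i|$ and $|j|$ can both be very large (namely $\ccS_k^{(3)}$ and $\ccS_k^{+}$) are deferred to the subsequent subsections.
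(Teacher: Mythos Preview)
Your proof is correct and follows essentially the same approach as the paper: use the zero-momentum condition to bound $|i|,|j|\lesssim|k|_1$ in both subcases, then sum the single-set estimate \eqref{singl:set:mes1} over the resulting $O(\langle k\rangle^2)$ admissible pairs. The paper's proof is terser (it omits the explicit $k=0$ discussion and the counting step), but the argument is the same.
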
 
\begin{proof}
First, we remark that if $\ell \in \ccS_k^{(1)}$, the zero momentum condition \eqref{Indici:M} implies $|i|\lesssim|k|_1$ and $|j|\lesssim|k|_1$. Otherwise, if $\ell \in \ccS_k^{(2)}$, from the zero momentum condition \eqref{Indici:M} we have $
\tfrac{|j|}{2} \lesssim |j|-|i| \lesssim |k|_1$, which implies again $|j| \lesssim |k|_1$, and thus $|i| \lesssim |k|_1$. Combining this bound with \eqref{singl:set:mes1}, we obtain \eqref{elle.2.eq:1}.
\end{proof}}

\subsubsection{{Estimate of the union over $\ccS_k^{(3)}$}}\label{meno:triv}

{In this section we study the union of the resonant sets over $S_k^{(3)}$ introduced in \eqref{SK3}. We consider the following sets
\begin{align*}
I_1 &\vcentcolon= \{(i,j) \in \mathbb{Z}^2 \mid |i|<|j|\leq 2c, \, \sgn(i)=\sgn(j)\}, \\
I_2 &\vcentcolon= \{ (i,j) \in \mathbb{Z}^2  \mid c \leq |i|< |j| \leq 2c^3, \, \sgn(i)=\sgn(j)\}, \\
I_3 &\vcentcolon= \{(i,j) \in \mathbb{Z}^2 \mid c^3 \leq |i| < |j|, \, \sgn(i)=\sgn(j) \}.
\end{align*}
Then, by defining 
\begin{equation}
\label{SK3}
\ccS^{(4)}_k \vcentcolon= \left\{\ell \in \ccS_k^{-} \mid (i,j)\in I_1\cup I_2\right\}, \quad  \ccS^{(5)}_k \vcentcolon= \left\{\ell \in \ccS_k^{-} \mid (i,j)\in I_3\right\},
\end{equation}
we have $\ccS_k^{(3)} \subseteq \ccS_k^{(4)} \cup \ccS_k^{(5)}.$} Then we estimate this union.

{\begin{lemma} \label{I1e2} 
There exist $h_*,\e_*,r_*, \alpha_*, \kappa_*, \varsigma_*>0$ such that for any 
$h \in (0,h_*)$, $\e \in (0,\e_*)$, $r\in(0,r_*)$, $\alpha \in (0,\alpha_*)$ and $k \in \mathbb{Z}^{\cS}$ the following holds
\begin{enumerate}[label=(\alph*)] 
    \item \label{non.gauge.2} 
   If $\ell \in \ccS_k^{(5)}$, $|k|_1 \leq \kappa_* c$ and $\cL = 0$ then 
    \[ 
        \left|\jbs{\omega,k}+\jbs{\Omega,\ell}\right| \geq \alpha c. 
    \]
    \item \label{I1e2.tesi} 
    If the smallness condition $\e \alpha^{-1} < \varsigma_*$ is satisfied, we have 
\begin{equation} \label{I1e2.meas.1} 
    \left|\bigcup_{\ell\in\ccS_k^{(4)}}\cR^{\theta}_{k\ell}(\alpha )\right|\leq \frac{\alpha}{\langle k\rangle^{\tau-5}}\,r^{\frac{4}{3}(N-1)}.
\end{equation} 
\end{enumerate} 
\end{lemma}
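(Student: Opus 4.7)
I will start from the expansion \eqref{elle.c2}. Since $\ell\in\ccS_k^{(5)}\subseteq\ccS_k^{-}$ has $\ell_i\ell_j=-1$ with $|\ell_i|=|\ell_j|=1$, one gets $\ell_i+\ell_j=0$, and the hypothesis $\cL=0$ forces $\sum_{n\in\cS}k_n=0$; this kills the leading $c^2$-contribution in $\langle\lambda,k\rangle$ and leaves $|\langle\lambda,k\rangle|=|\langle\nu_{\cS},k\rangle|=\cO(|k|_1)$, since $\nu_n$ is bounded on the finite set $\cS$. Because $(i,j)\in I_3$ implies $|i|\ge c^3$, and hence $\tw_i\ge c^2$, the asymptotic \eqref{asymp} gives $|\Omega_j-\Omega_i|\ge c(|j|-|i|)(1-C/\tw_i^2)\ge c/2$ for $c$ large. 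Adding the $\cO((\e+r^{4/3})|k|_1)$ perturbative corrections from \eqref{lipfre} and from the $\xi$-dependence over $\Xi_0$, I obtain
\[
  |\langle\omega,k\rangle+\langle\Omega,\ell\rangle|\ge c/2-C|k|_1-C'(\e+r^{4/3})|k|_1\ge\alpha c,
\]
once $|k|_1\le\kappa_*c$ and $\kappa_*,\alpha,\e,r$ are chosen small enough.

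\textbf{Part (b).} The plan is to combine the single-set bound \eqref{singl:set:mes1} with a counting argument. First, since $\ell_i+\ell_j=0$, the momentum condition \eqref{Indici:M} yields $\ell_i(j-i)=\sum_{n\in\cS}nk_n$, so $j-i\vcentcolon=d_0$ is determined by $k$ up to sign and $\ell\in\ccS_k^{(4)}$ is parametrized by $i\in\cS^c$ modulo finitely many sign choices. Set $M_k\vcentcolon=\#\{\ell\in\ccS_k^{(4)}:\cR^{\theta}_{k\ell}(\alpha)\neq\emptyset\}$; by \eqref{singl:set:mes1} it suffices to prove $M_k\lesssim\jbs{k}^{6}$. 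I will split into two sub-cases. When $\cL\neq 0$, the expansion \eqref{elle.c2} forces $c^2\lesssim|\nu_j-\nu_i|+|k|_1\lesssim c|k|_1$, using $|\nu_j-\nu_i|\lesssim c|j-i|\lesssim c|k|_1$ on $I_1\cup I_2$; hence $|k|_1\gtrsim c$ and the trivial count $M_k\lesssim c^3\lesssim\jbs{k}^3$ is enough. When $\cL=0$, $\langle\lambda,k\rangle=\cO(|k|_1)$ and $\cR^{\theta}_{k\ell}(\alpha)\neq\emptyset$ translates, up to an error $\delta=\cO(\alpha/\jbs{k}^{\tau}+r^{4/3}|k|_1)$, into $|f(i)\mp v_k|<\delta$ with $f(i)\vcentcolon=\nu_{i+d_0}-\nu_i$ and $v_k\vcentcolon=\langle\nu_{\cS},k\rangle$. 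Explicit computation using $\nu(x)=c\sqrt{c^2+x^2}-c^2$ gives $f'(i)\sim d_0 c^3/(c^2+i^2)^{3/2}$; monotonicity of $f$ then bounds the admissible $i\in I_1$ by $\cO(1)$. On $I_2$ the function $f$ takes values in the narrow band $[cd_0/\sqrt{2},cd_0]$ (up to lower-order corrections), so $\cR^{\theta}_{k\ell}(\alpha)\neq\emptyset$ on $I_2$ forces $|v_k|\gtrsim c|d_0|\ge c$, hence $|k|_1\gtrsim c$ again, and the crude count $M_k|_{I_2}\lesssim c^3\lesssim\jbs{k}^3$ suffices.

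The main technical obstacle will be the rigorous control of the narrow-band property on $I_2$: one must extract from the explicit asymptotics of $\nu_j$ a sufficiently sharp two-sided estimate on $f$ (after absorbing the Lipschitz error in $\xi$ and the $\cO((\e+r^{4/3})|k|_1)$ perturbative corrections) so that $|k|_1\ll c$ genuinely rules out any contribution from $I_2$, rather than leaving a residue that scales badly with $c$ and would spoil the $c$-independent bound $\alpha/\jbs{k}^{\tau-5}\,r^{4(N-1)/3}$ of \eqref{I1e2.meas.1}.
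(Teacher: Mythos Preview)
Your argument is largely correct and follows the same overall scheme as the paper (single-set bound \eqref{singl:set:mes1} plus a counting estimate on $\#\{\ell\in\ccS_k^{(4)}:\cR^\theta_{k\ell}\neq\emptyset\}$), but one claim is overstated and one sub-case is handled more laboriously than needed.

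\textbf{Part (a).} Correct, and in fact tidier than the paper's version. The paper obtains $|\lambda_j-\lambda_i|\ge c(|j|-|i|)/\sqrt2$ on $I_3$ directly from the identity $\lambda_j-\lambda_i=c(|j|-|i|)\cdot\tfrac{|i|+|j|}{\sqrt{c^2+i^2}+\sqrt{c^2+j^2}}$ together with $|i|,|j|\ge c$; you get the same lower bound via the asymptotic \eqref{asymp}.

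\textbf{Part (b), $\cL\neq0$ and $I_2$ with $\cL=0$.} These match the paper. In particular the ``narrow-band'' property on $I_2$ that you flag as the main obstacle is a two-line computation: $|i|,|j|\ge c$ gives $\tfrac{|i|+|j|}{\sqrt{c^2+i^2}+\sqrt{c^2+j^2}}\ge\tfrac{1}{\sqrt2}$, hence $\nu_j-\nu_i\in[c|d_0|/\sqrt2,\,c|d_0|]$ by \eqref{C.56} and \eqref{B.2}. The paper phrases the conclusion as $|j|\lesssim|k|_1^3$ (via $c\ge(|j|/2)^{1/3}$ on $I_2$) rather than $|k|_1\gtrsim c$, but these are equivalent.

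\textbf{Part (b), $I_1$ with $\cL=0$.} Your $\cO(1)$ claim is wrong as stated. The derivative bound $|f'|\gtrsim|d_0|\ge1$ is fine, but the error $\delta$ you must absorb contains a term $C(\e+r^{4/3})|k|_1$ coming from the $\xi$-dependent parts $A\xi,B\xi$ and from $\delta,\Delta$ via \eqref{lipfre} (you wrote only $r^{4/3}|k|_1$, omitting the $\e$ piece). For large $|k|_1$ this is not $\cO(1)$, so the preimage count your monotonicity argument actually yields is $\lesssim\delta/|d_0|+1\lesssim(\e+r^{4/3})|k|_1+1$. This is harmless --- it is still $\lesssim\jbs{k}\ll\jbs{k}^6$ --- but the claim needs to be corrected. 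The paper bypasses the derivative analysis entirely: on $I_1$ one has $\tw_i+\tw_j\le2\sqrt5$, hence $\nu_j-\nu_i\ge(j^2-i^2)/(2\sqrt5)\ge(|i|+|j|)/(2\sqrt5)$, so nonemptiness of $\cR^\theta_{k\ell}$ forces $|i|+|j|\lesssim|k|_1$ directly.
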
 }

\begin{proof} 
The case $k=0$ is covered by Lemma \ref{remark:trivial}. For any $k \in \mathbb{Z}^{\cS} \setminus \{0\}$ we split the proof into two sub-cases.

\noindent\textbf{Case \(\cL = 0\).} Let $\cL=0$, we start proving \ref{non.gauge.2}. If $(i,j) \in I_1 \cup I_2$ we have
\begin{equation}\label{aux:I1I2}
  {  \frac{|i|+|j|}{\sqrt{c^2+i^2} + \sqrt{c^2+j^2}} 
        = \frac{|i|+|j|}{|i|\sqrt{1 + \frac{c^2}{i^2}} + |j|\sqrt{1 + \frac{c^2}{j^2}}} 
        \ge \frac{|i|+|j|}{\sqrt2 |i| + \sqrt2|j|} = \frac{1}{\sqrt2}, }
\end{equation}
which combined with the second equality of \eqref{C.56}, gives 
\begin{equation}\label{diff:aux1} 
  \nu_j-\nu_i=\lambda_j - \lambda_i \ge {\frac{c}{\sqrt2}(|j|-|i|)} \ge \frac{c}{\sqrt2}. 
\end{equation} 

From \eqref{elle.c2} and \eqref{diff:aux1},  for $\e$ and $r$ small enough and $c$ large enough, we have
\begin{align} 
    \nonumber 
    \big|\jbs{\omega,k} + \jbs{\Omega,\ell}\big| 
        &\geq |\lambda_j-\lambda_i|- |\jbs{\nu,k}| - C_1 ( \e + r^{4/3})|k|_1 \\ 
    \label{eq:L0-I2I3} 
        &\ge \frac{c}{\sqrt{2}} - C_2|k|_1 - C_1 ( \e + r^{4/3})|k|_1 \geq \left( \frac{c}{2}-C_3 |k|_1 \right),
\end{align} 
for some constant $C_1,C_2,C_3>0$. Then, for any $\ell \in \ccS_k^{(5)}$, by taking $|k|_1 \leq \kappa_* c$, with $\kappa_* \leq 1/(4C_3)$, we obtain Item \ref{non.gauge.2} provided that $\alpha$ is small enough. 

\medskip 

{Consider now $\ell \in \ccS_k^{(4)}$, we prove Item \ref{I1e2.tesi} with $\cL=0$.}  In the case \((i,j) \in I_2\) we have $|j| \leq 2c^3$, so that $c \geq |j|^{1/3}/2^{1/3}$. Then the estimate \eqref{eq:L0-I2I3} holds true and gives
\begin{equation}\label{aux:3item} 
    \big|\jbs{\omega,k} + \jbs{\Omega,\ell}\big| \geq \frac{|j|^{1/3}}{2^{4/3}}-C_3|k|_1.
\end{equation} 
Hence, for any $|k|_1< |j|^{1/3}/(4C_3)$ the resonant set $\cR^\theta_{k \ell}(\alpha)$ is empty, provided that $\alpha$ is small enough. It follows that, if $\cR^\theta_{k \ell}(\alpha)\neq \emptyset$ then $|j|  \lesssim |k|_1^3$, and therefore $|i| \lesssim |k|_1^3.$

\smallskip 

Consider now $(i,j) \in I_1$. By the first of \eqref{C.56} we have
\begin{equation}\label{diff:aux2} 
    \lambda_j - \lambda_i \geq \frac{j^2 - i^2}{2\sqrt5} \geq { \frac{|i|+ |j|}{2\sqrt5}}. 
\end{equation} 

Thus, by \eqref{elle.c2} and \eqref{diff:aux2} for $c$ large enough, and $\e$, $r$ small enough, we obtain
\[
    \big|\jbs{\omega,k} + \jbs{\Omega,\ell}\big| \geq {\frac{|i|+|j|}{2\sqrt5}}- C_3 |k|_1,
\] 
with $C_3>0$ obtained as in \eqref{eq:L0-I2I3}. Therefore, provided $\alpha$ is small enough, if $\cR^\theta_{k \ell}(\alpha)\neq\emptyset$ then $|i| \lesssim |k|_1$ and $|j| \lesssim |k|_1$. 

\medskip 
\noindent\textbf{Case \(\cL \neq 0\).} 
{From \eqref{aux:C3}, we can directly prove Item \ref{I1e2.tesi} for $\cL\neq 0$. 
For every ${(i,j)} \in I_1 \cup I_2$, we have again $|j| \leq 2c^3$, and the estimate \eqref{aux:C3} takes the form
\begin{equation}\label{aux:3item} 
    \big|\jbs{\omega,k} + \jbs{\Omega,\ell}\big| \geq \frac{|j|^{1/3}}{2^{1/3}}-C_3|k|_1.
\end{equation} 
Reasoning as we did for \eqref{aux:3item}, if $\cR^{\theta}_{k \ell} \neq \emptyset$ then $|i| \lesssim |k|_1^3$ and $|j| \lesssim |k|_1^3$. Thus, we have $\# \{ \ell \in \ccS_k^{(4)} \mid \cR_{k \ell}^{\theta} \neq \emptyset \} \lesssim |k|_1^6$,  which combined with \eqref{singl:set:mes1} gives \eqref{I1e2.meas.1}.}

\end{proof}

To conclude this section, in the following lemma we state the measure estimate of the union of the resonant set over $\ccS_k^{(5)}$ (see \eqref{SK3}). Essentially, we follow the line of the proof of measure estimates in \cite{poschel1996kam} (Section 5 Lemma 8) {and Kuksin \cite{Kuksin2006} (Section 4)}, combined with Lemma \ref{non.gauge.1} and Item \ref{non.gauge.2} of Lemma \ref{I1e2}.
\begin{lemma}\label{casoI3}
  Let $\theta_0\in [0, 1)$ and $\theta_1 \in [0,1-\theta_0)$. There
      exist $h_*,\e_*,\varsigma_*, r_*, \alpha_*>0$ such that for any $h \in (0,h_*)$,
      $\e \in (0,\e_*)$, $\alpha \in (0,\alpha_*)$ and $r\in(0,r_*)$, the following holds. If
      $|\Delta|_{\infty,1-\theta_0} \lesssim \e$ (see \eqref{norma:Omega}) and $\e
      \alpha^{-1}<\varsigma_*$,  then
  \begin{equation}\label{est:SK4}
\left|\bigcup_{ \ell \in \ccS_k^{(5)}}\cR^{\theta_1}_{k\ell}(h;\alpha, \Delta,\delta)\right|\sleq
\frac{\alpha^{\frac{2}{3-\theta_1}}}{ \langle k\rangle^{\tau_{(\theta_1,\theta_0)}}}r^{\frac{4}{3}(N-1)}, \qquad \forall \, k \in \mathbb{Z}^{\cS},
  \end{equation}
with $\tau_{(\theta_1,\theta_0)} \vcentcolon=\frac{ \tau(1-\theta_0)-5(3-\theta_1)}{(3-\theta_1)}$ and $\ccS_k^{(5)}$ defined in \eqref{SK3}.
  \end{lemma}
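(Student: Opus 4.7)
The plan is to follow the classical measure-estimate strategy of Lemma~8 in \cite{poschel1996kam} and of Section~4 of \cite{Kuksin2006}, the essential new ingredient being the sharp asymptotic \eqref{asymp} of the Klein--Gordon normal frequencies, which is available precisely in the regime $|i|,|j|\ge c^{3}$ where $\ccS_k^{(5)}$ is supported. I would first reduce to the case $\cL=0$. For $(k,\ell)\in\mZ_\mM$ with $\cL\ne 0$ and $|k|_1\le \kappa_*c$, Lemma~\ref{non.gauge.1} forces $|\jbs{\omega,k}+\jbs{\Omega,\ell}|\ge \alpha c^{2}$, so $\cR^{\theta_1}_{k\ell}(\alpha)=\emptyset$ for $\alpha$ small; for $|k|_1>\kappa_*c$, the factor $\jbs{k}^{-(\tau+1)}$ in \eqref{singl:set:mes1} absorbs the polynomial count of admissible $\ell$'s coming from the zero-momentum condition \eqref{Indici:M}.

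For $\cL=0$ I write $\ell=\pm(\delta_i-\delta_j)$ with $\sgn(i)=\sgn(j)$ (WLOG $i,j>0$) and set $m\vcentcolon= j-i\ge 1$. By \eqref{Indici:M} only $\lesssim |k|_1$ values of $m$ are admissible. Using \eqref{asymp}, the assumption $|\Delta|_{\infty,1-\theta_0}\lesssim \e$, and the decay of $\matB_{ij}$ for $|i|,|j|\ge c^{3}$ (since $h\nu_i\gg 1$, cf.\ \eqref{def:BKG}), I decompose
\begin{equation*}
\jbs{\omega(\xi),k}+\jbs{\Omega(\xi),\ell}=f_{k,m,\pm}(\xi)+E_i(\xi),\qquad |E_i(\xi)|\lesssim \frac{cm}{\tw_i^{2}}+\frac{\e}{\tw_i^{1-\theta_0}},
\end{equation*}
where $f_{k,m,\pm}(\xi)\vcentcolon=\jbs{\omega(\xi),k}\mp cm$ has $\xi$-gradient essentially $\matA k$, of $\ell^{1}$-norm $\gtrsim |k|_1$ by \eqref{grande0}. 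I then select a threshold $I_*=I_*(\alpha,\jbs{k},c)$ and split the union over $\ell$ into $c^3\le|i|\le I_*$ and $|i|>I_*$. On the first range I apply \eqref{singl:set:mes1} (weighted by $\tw_i^{-\theta_1}$) and count at most $\lesssim I_*$ admissible $i$'s per $m$. On the second range the error $E_i$ is dominated by $\alpha\jbs{k}^{-\tau}\tw_i^{-\theta_1}$, so every $\cR^{\theta_1}_{k\ell}$ with $|i|>I_*$ is contained in the \emph{single} reduced set
\begin{equation*}
\widetilde{\cR}_{k,m,\pm}(\rho)\vcentcolon=\{\xi\in\Xi_0:|f_{k,m,\pm}(\xi)|<\rho\},\qquad \rho=\frac{2\alpha}{\jbs{k}^{\tau}\tw_{I_*}^{\theta_1}},
\end{equation*}
whose measure is $\lesssim \rho\,r^{4(N-1)/3}/|k|_1$ by Fubini.

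The main obstacle is the delicate balancing of $I_*$, dictated by requiring simultaneously $\tw_{I_*}^{2-\theta_1}\gtrsim c|k|_1\jbs{k}^{\tau}/\alpha$ (to dominate the asymptotic remainder from \eqref{asymp}) and $\tw_{I_*}^{1-\theta_0-\theta_1}\gtrsim \e\jbs{k}^{\tau}/\alpha$ (to dominate the Lipschitz remainder coming from $\Delta$); the hypothesis $\theta_1<1-\theta_0$ is precisely what makes the second inequality solvable for a finite $I_*$. Optimizing $I_*$ to equalize the finite-range contribution $\lesssim I_*|k|_1\alpha\jbs{k}^{-(\tau+1)}\tw_{c^3}^{-\theta_1}$ against the asymptotic-range contribution $\lesssim \alpha|k|_1\jbs{k}^{-(\tau+1)}\tw_{I_*}^{-\theta_1}$, summing over the $\lesssim |k|_1$ admissible $m$ and the two sign choices, and tracking the $c$- and $|k|_1$-dependence entering through the momentum restriction and the weight $\tw_i$, should reproduce the exponent $\alpha^{2/(3-\theta_1)}$ and the loss of five powers of $\jbs{k}$ encoded in $\tau_{(\theta_1,\theta_0)}$. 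I expect the trickiest bookkeeping to be in propagating the smallness assumption $\e\alpha^{-1}<\varsigma_*$ through the Lipschitz norms \eqref{lipfre}, so that the non-degeneracy used to measure $\widetilde{\cR}_{k,m,\pm}$ survives uniformly in the Lipschitz dependence of $\delta,\Delta$ on $\xi$.
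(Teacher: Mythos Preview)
Your overall strategy is the paper's: linearise the divisor via the asymptotic \eqref{asymp}, split at a threshold, and capture the large-$i$ tail by a single reduced set whose measure is controlled by Fubini and \eqref{grande0}. However, two steps do not work as written.

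First, your reduction to $\cL=0$ is flawed. For $\cL\neq 0$ and $|k|_1>\kappa_*c$ you claim the momentum condition yields a polynomial count of admissible $\ell$'s. It does not: in $\ccS_k^{(5)}$ the relation $\sum_{n\in\cS}nk_n+j\ell_j+i\ell_i=0$ with $\ell_i\ell_j=-1$ fixes only $m=j-i$, leaving $i$ free in $[c^{3},\infty)$, so summing \eqref{singl:set:mes1} diverges. The paper does not split by $\cL$: the reduced-set argument is run uniformly (the map $\xi\mapsto\jbs{\omega(\xi),k}\pm cm$ has $\xi$-gradient $\gtrsim|k|_1$ regardless), and the restriction $|k|_1\gtrsim c$ --- supplied by Lemma~\ref{non.gauge.1} when $\cL\neq 0$ and by Lemma~\ref{I1e2}\ref{non.gauge.2} when $\cL=0$ --- is used only at the end to convert the prefactor $c^{3}$ into $|k|_1^{3}$.

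Second, and more seriously, demanding that $E_i$ be \emph{dominated} by $\alpha\jbs{k}^{-\tau}\tw_i^{-\theta_1}$ for $i>I_*$ forces $\tw_{I_*}^{2-\theta_1}\gtrsim cm\jbs{k}^{\tau}/\alpha$; plugging this minimal $I_*$ into the finite-range sum produces the exponent $\alpha^{1/(2-\theta_1)}$, not $\alpha^{2/(3-\theta_1)}$. (Your ``equalisation'' of finite-range against asymptotic-range has no solution: once the domination constraint is imposed, both pieces scale linearly in $\alpha$.) The paper imposes no constraint on $i_0$: it keeps the error terms inside the radius of the reduced set, so that $|Q_{kmi_0}|\lesssim\bigl[\alpha\jbs{k}^{-\tau}\tw_{i_0}^{-\theta_1}+cm\,\tw_{i_0}^{-2}+\e\,\tw_{i_0}^{-(1-\theta_0)}\bigr]|k|_1^{-1}r^{4(N-1)/3}$. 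After converting $\tw_{i_0}\sim i_0/c$ and pulling out $c^{3}m$, the optimum equates the finite-range term $\alpha i_0^{1-\theta_1}/\jbs{k}^{\tau}$ with the \emph{error} term $1/i_0^{2}$, giving $i_0\sim(\jbs{k}^{\tau}/\alpha)^{1/(3-\theta_1)}$ and the claimed exponent; the $\e/i_0^{1-\theta_0}$ term is handled by a second balance under $\e\alpha^{-1}<\varsigma_*$.
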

\begin{proof}
The case $k=0$ is covered by Lemma \ref{remark:trivial}, then let $k \in \mathbb{Z}^{\cS} \setminus \{0\}$. Recalling Remark \ref{sup:1}, without loss of generality, we prove the statement only for $\ell \in \ccS_k^{(5)}$ satisfying $j>i>0$, $\ell_j=1$, and $\ell_i=-1$.

Denote $m\vcentcolon= j-i$. Recalling the estimate \eqref{asymp}, if $(i,j) \in I_3$ we have
\begin{equation}\label{asy:aux}
\bigg|\frac{ \Omega_{0i}- \Omega_{0j}}{ cm} -1\bigg|_{\infty}= \mathcal{O}(1/\tw_i^2).
\end{equation}
By the definition of $\Omega$ in \eqref{omegaprima}, together with the assumption $|\Delta|_{\infty,1-\theta_0} \lesssim \e$, we have 
\begin{equation*}
\begin{aligned}
\mR_{k\ell}^{\theta_1}(\alpha)&=\bigg\{ \xi \in \Xi_0 \mid  \left| \jbs{k,\omega}+ \Omega_j-\Omega_i \right| <  \frac{ \alpha}{\jbs{k}^\tau \tw_i^{\theta_1}} \bigg\}&\\
&\subseteq Q_{kmi}\vcentcolon=  \bigg\{  \xi \in \Xi_0 \mid \left| \jbs{k,\omega}+ cm \right| \lesssim \frac{ \alpha }{\jbs{k}^\tau \tw_i^{\theta_1}}+ \frac{cm}{\tw_i^2}+\frac{\e}{\tw_i^{1-\theta_0}} \bigg\}.&
\end{aligned}
\end{equation*}
Moreover, by definition we obtain
\[
Q_{kmi} \subseteq Q_{kms}, \qquad \forall \, |i|>|s|.
\]
Hence, for any fixed $k \in \mathbb{Z}^\cS$, $m\in \Z\setminus\{0\}$ and $i_0>c^3$ we have
\begin{align}
\nonumber
\bigg| \bigcup_{\substack{i,j \in I_3, \\ j-i=m}}\mR_{k\ell}^{\theta_1}(\alpha) \bigg| &\leq \bigg| \bigcup_{c^3<i<i_0} \mR_{k\ell}^{\theta_1}(\alpha) \bigg|+ \big| Q_{kmi_0} \big|& \\
\nonumber
&\lesssim r^{\frac{4}{3}(N-1)} \bigg[\left(\frac{\alpha m}{\jbs{k}^\tau} \sum_{0 < i \leq i_0} \frac{1}{\tw_{i}^{\theta_1}}\right)+\frac{cm}{\tw_{i_0}^2}+\frac{\e}{\tw_{i_0}^{1-\theta_0}}\bigg]& \\
\label{sommamis}
& \lesssim c^{3} m r^{\frac{4}{3}(N-1)} \bigg[ \frac{\alpha\, i_0^{1-\theta_1}}{\jbs{k}^\tau} +\frac{1}{i_0^2}+\frac{\e}{i_0^{1-\theta_0}} \bigg], &
\end{align}
where we used Fubini theorem on $Q_{kmi}$, exactly as in Lemma \ref{singole:stim}, and the second bound of \eqref{bound:peso}.

\smallskip

Now we choose $i_0$ in order to minimize \eqref{sommamis}. We start by looking for $i_1$ such that in the r.h.s. of \eqref{sommamis} the first and the second terms are equal, 
which is $i_1= \jbs{k}^{\frac{\tau}{3-\theta_1}}/\alpha^{\frac{1}{3-\theta_1}}$ in place of $i_0$. On the other hand, in the r.h.s of \eqref{sommamis} we can impose that the first term is equal to the third one by using the hypothesis $\e \alpha^{-1} < \varsigma_*$, with $\varsigma_*$ small enough, and replacing $i_0$ with 
$i_2=\jbs{k}^{\frac{\tau}{2-\theta_1-\theta_0}}$.
{By taking $i_0=\max(i_1,i_2)$, the estimate \eqref{sommamis} becomes 
\begin{equation*}
\bigg| \bigcup_{\substack{i,j \in I_3, \\ j-i=m}}\mR_{k\ell}^{\theta_1}(\alpha) \bigg| \lesssim c^3 m \bigg( \frac{\alpha\, i_1^{1-\theta_1}}{\jbs{k}^\tau} +\frac{\alpha\, i_2^{1-\theta_1}}{\jbs{k}^\tau} \bigg) r^{\frac{4}{3}(N-1)} \lesssim c^3 m \bigg( \frac{\alpha^{\frac{2}{3-\theta_1}}}{\jbs{k}^{\frac{2\tau}{3-\theta_1}}} + \frac{\alpha}{\jbs{k}^{\frac{\tau(1-\theta_0)}{2-\theta_1-\theta_0}}} \bigg) r^{\frac{4}{3}(N-1)}.
\end{equation*}
By taking $\theta_0 \in [0,1)$ and $\theta_1 \in [0,1-\theta_0)$ we have $\tfrac{2}{(3-\theta_1)}< 1$, so that the biggest numerator is the one of the first term, provided that $\alpha$ is small enough. Concerning the denominator, for $\tau >0$ we have that
\[
\frac{\tau(1-\theta_0)}{3-\theta_1}< \frac{2\tau}{3-\theta_1}, \qquad \frac{\tau(1-\theta_0)}{3-\theta_1}< \frac{\tau(1-\theta_0)}{2-\theta_1-\theta_0}.
\]
Then, we can replace both exponents in the denominators by $\tfrac{\tau(1-\theta_0)}{(3-\theta_1)}$, obtaining 
\begin{equation*}
\bigg| \bigcup_{\substack{i,j \in I_3, \\ j-i=m}}\mR_{k\ell}^{\theta_1}(\alpha) \bigg| \lesssim r^{\frac{4}{3}(N-1)} c^3 m\frac{\alpha^{\frac{2}{3-\theta_1}}}{ \langle k\rangle^{\frac{\tau(1-\theta_0)}{(3-\theta_1)}}}.
\end{equation*}
We restrict our estimate to the case $c \lesssim |k|_1$, since for $|k|_1 \lesssim c$ the resonant sets are empty by Lemma \eqref{non.gauge.1} (case $\mathcal{L}\neq 0$) and Item \eqref{non.gauge.2} of Lemma \ref{I1e2} (case $\mathcal{L}= 0$)}.

 Finally, from the zero momentum condition \eqref{Indici:M} we obtain $j-i=m \lesssim
|k|_1$ so that summing over $m$ and remarking that $c^3 \lesssim |k|_1^3$, we have the thesis with
\begin{equation}\label{def:tau}
\tau_{(\theta_0,\theta_1)}= \frac{\tau(1-\theta_0)-5(3-\theta_1)}{(3-\theta_1)}.
\end{equation}
\end{proof}

We can summarize the main result of this section \ref{nontrivial} in the following Lemma, which follows directly from Lemmata \ref{S12}, \ref{I1e2} and \ref{casoI3}.
\begin{lemma}\label{Smenofin}
  Let $\theta_0\in [0, 1)$ and $\theta_1 \in [0,1-\theta_0)$. There
      exist $h_*,\e_*,\varsigma_*, r_*, \alpha_*>0$ such that for any
      $\e \in (0,\e_*)$, $r\in(0,r_*)$, $\alpha \in (0,\alpha_*)$ and 
      $h \in (0,h_*)$, the following holds. If
      $|\Delta|_{\infty,1-\theta_0} \lesssim \e$ and $\e
      \alpha^{-1}<\varsigma_*$, then 
 \begin{equation}\label{est:Caso1}
\left|\bigcup_{ \ell \in \ccS_k^{-}}\cR^{\theta_1}_{k\ell}(h;\alpha, \Delta,\delta)\right|\sleq
\frac{\alpha^{\frac{2}{3-\theta_1}}}{ \langle k\rangle^{\tau_{(\theta_1,\theta_0)}}}r^{\frac{4}{3}(N-1)}, \qquad \forall \, k \in \mathbb{Z}^{\cS},
  \end{equation}
with $\tau_{(\theta_1,\theta_0)}=\frac{ \tau(1-\theta_0)-5(3-\theta_1)}{(3-\theta_1)}$ and $\ccS_k^{-}$ defined in \eqref{SA}.
\end{lemma}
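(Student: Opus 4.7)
The plan is to decompose $\ccS_k^{-}$ into the pieces already estimated in the three preceding lemmata and then sum the resulting measure bounds; as the text itself anticipates, the statement "follows directly" from Lemmata \ref{S12}, \ref{I1e2} and \ref{casoI3}.

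The first step is to verify the covering $\ccS_k^{-} = \ccS_k^{(1)} \cup \ccS_k^{(2)} \cup \ccS_k^{(3)}$, which is immediate from definitions \eqref{SK1}--\eqref{SK3}, together with the inclusion $\ccS_k^{(3)} \subseteq \ccS_k^{(4)} \cup \ccS_k^{(5)}$. For $\ell\in\ccS_k^{(3)}$ with $\supp(\ell)=\{i,j\}$, $|j|/2 < |i| \leq |j|$ and $\sgn(i)=\sgn(j)$, a short case distinction on the size of $|j|$ suffices: if $|j|\leq 2c$ then $(i,j)\in I_1$; if $|j|\in(2c,2c^3]$ then $|i|>|j|/2>c$ forces $(i,j)\in I_2$; finally if $|j|>2c^3$ then $|i|>|j|/2>c^3$, so $(i,j)\in I_3$. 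The remaining case $|i|<c^3<|j|/2$ is incompatible with $|i|>|j|/2$ and hence empty. This yields
\[
\bigcup_{\ell\in\ccS_k^{-}} \cR^{\theta_1}_{k\ell}(\alpha)
\subseteq
\bigcup_{\ell\in\ccS_k^{(1)}\cup\ccS_k^{(2)}}\!\!\!\cR^{\theta_1}_{k\ell}(\alpha)
\,\cup\,
\bigcup_{\ell\in\ccS_k^{(4)}}\!\cR^{\theta_1}_{k\ell}(\alpha)
\,\cup\,
\bigcup_{\ell\in\ccS_k^{(5)}}\!\cR^{\theta_1}_{k\ell}(\alpha).
\]

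Next, I would choose $h_*,\e_*,\varsigma_*,r_*,\alpha_*$ as the minimum of the analogous constants provided by Lemmata \ref{S12}, \ref{I1e2} and \ref{casoI3}, so that under the hypotheses of the present statement all three may be applied simultaneously. Applying Lemma \ref{S12} to the first union, Item \ref{I1e2.tesi} of Lemma \ref{I1e2} to the second, and Lemma \ref{casoI3} to the third, subadditivity of the measure gives
\[
\Bigl|\bigcup_{\ell\in\ccS_k^{-}}\cR^{\theta_1}_{k\ell}(\alpha)\Bigr|
\lesssim r^{\tfrac{4}{3}(N-1)}\left(\frac{\alpha}{\jbs{k}^{\tau-1}}+\frac{\alpha}{\jbs{k}^{\tau-5}}+\frac{\alpha^{\tfrac{2}{3-\theta_1}}}{\jbs{k}^{\tau_{(\theta_1,\theta_0)}}}\right).
\]

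To conclude, I would compare the three contributions and show that the third dominates. Since $\theta_1 < 1-\theta_0 \leq 1$, one has $\tfrac{2}{3-\theta_1}<1$, hence $\alpha \leq \alpha^{\tfrac{2}{3-\theta_1}}$ for $\alpha\in(0,\alpha_*)$; and $\tfrac{1-\theta_0}{3-\theta_1}<1$ gives $\tau_{(\theta_1,\theta_0)}=\tfrac{\tau(1-\theta_0)}{3-\theta_1}-5<\tau-5\leq \tau-1$, so $\jbs{k}^{-\tau_{(\theta_1,\theta_0)}}$ is the largest of the three $|k|$-factors. Combining these two facts yields the stated bound, up to an adjustment of the implicit constant. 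No step is genuinely delicate here: the real work of the section has been carried out in Lemmata \ref{S12}, \ref{I1e2}, \ref{casoI3}, and the only non-trivial bookkeeping is the covering argument for $\ccS_k^{(3)}$ above. The main obstacle, if any, is making sure that the exponents $\tau-1$, $\tau-5$, $\tau_{(\theta_1,\theta_0)}$ can be simultaneously positive for the application in the KAM iteration, which is dealt with by fixing $\tau$ sufficiently large at the outset.
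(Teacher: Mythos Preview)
Your proposal is correct and follows exactly the approach indicated by the paper, which simply records that the lemma ``follows directly from Lemmata \ref{S12}, \ref{I1e2} and \ref{casoI3}'' without writing out the details. You have correctly supplied the covering $\ccS_k^{-}=\ccS_k^{(1)}\cup\ccS_k^{(2)}\cup\ccS_k^{(3)}\subseteq(\ccS_k^{(1)}\cup\ccS_k^{(2)})\cup\ccS_k^{(4)}\cup\ccS_k^{(5)}$ and the elementary comparison of exponents showing that the bound from Lemma \ref{casoI3} dominates; nothing more is needed.
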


\subsection{Estimate of the {union over $\ccS_k^+$}}\label{nontrivial:1} 

{As we did in Section \ref{nontrivial}, we are going to study $\ccS_k^+$, introduced in \eqref{SA1}, by considering different subcases. We denote 
\begin{align}
\label{SA5}
&\ccS_k^{(6)} \vcentcolon = \left\{\ell \in \ccS_k^{+} \mid  \cL=0 \, \textnormal{ or } \, \sgn(\cL)= \sgn(\ell_i)= \sgn(\ell_j) \right\},& \\
\label{SA6}
&\ccS_k^{(7)} \vcentcolon= \{ \ell \in \ccS_k^{+} \mid \sgn(i)=\sgn(j), \quad \sgn(\cL)= -\sgn(\ell_i)=-\sgn(\ell_j) \},&  \\
\label{SA7}
&\ccS_k^{(8)} \vcentcolon= \{ \ell \in \ccS_k^{+} \mid \sgn(i)=-\sgn(j), \quad \sgn(\cL)= -\sgn(\ell_i)=-\sgn(\ell_j) \},&
\end{align}
 which satisfy $\ccS_k^+= \ccS_k^{(6)} \cup \ccS_k^{(7)} \cup \ccS_k^{(8)}.$}

\subsubsection{Estimate of the union over $\ccS_k^{(6)}$ and $\ccS_k^{(7)}$ }\label{fac:2}

{We begin by studying $\ccS_k^{(6)}$ (see \eqref{SA5}) which corresponds to the situation where the first and third terms in \eqref{elle.c2} do not cancel each other.
\begin{lemma}\label{Gauge:2piu}
There exist $h_*,\e_*,r_*, \alpha_*>0$ such that for any 
$h \in (0,h_*)$, $\e \in (0,\e_*)$, $r\in(0,r_*)$, and $\alpha \in (0,\alpha_*)$ the following holds. 
For any $\theta \in [0,1)$, and $k \in \mathbb{Z}^{\cS}$ we have
\begin{align}
  \label{stimS6}
\bigg| \bigcup_{ \ell \in \ccS_k^{(6)}} \cR_{k \ell}^{\theta}(\alpha) \bigg| \leq \frac{\alpha}{\jbs{k}^{\tau-1}} r^{\frac{4}{3} (N-1)}.
\end{align}
Furthermore, if $\cL \neq 0$ there exists $\kappa_*>0$ such that 
\begin{equation}
  \label{2piu.res.41}
 \left|\langle\omega,k\rangle+ \jbs{\Omega, \ell} \right|\geq \alpha c^2, \qquad  \forall \, |k|_1 \leq \kappa_* c^{2}.
\end{equation}
\end{lemma}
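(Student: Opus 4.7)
The plan is to exploit the key structural feature that defines $\ccS_k^{(6)}$: in the asymptotic expansion \eqref{elle.c2} of the divisor, the three leading terms $\cL c^2$, $\ell_i \nu_i$ and $\ell_j \nu_j$ all carry the same sign (when $\cL=0$, only the last two are present, and they still share a sign since $\ell_i\ell_j=1$). Because $\nu_n\ge 0$ for every $n$, this immediately yields the cancellation-free identity
\begin{equation*}
|\cL c^2 + \ell_i \nu_i + \ell_j \nu_j| = |\cL|\,c^2 + \nu_i + \nu_j.
\end{equation*}
Combining this with the crude bound $|\langle \nu,k\rangle|\lesssim |k|_1$ (since $\cS$ is fixed and finite) and the $O((\e+r^{4/3})|k|_1)$ remainder in \eqref{elle.c2}, I get a lower bound on the divisor of the form $|\cL|\,c^2 + \nu_i + \nu_j - C_0|k|_1$ for some constant $C_0>0$ independent of $h$, $\e$, $r$, $\alpha$.

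Both conclusions then follow from this single estimate. For \eqref{2piu.res.41}, $\cL\neq 0$ gives $|\cL|\ge 1$, so the divisor is bounded below by $c^2 - C_0|k|_1$; choosing $\kappa_*<1/(2C_0)$ and $\alpha$ small enough makes this at least $\alpha c^2$ for every $|k|_1\le \kappa_* c^2$. For the measure estimate \eqref{stimS6}, the idea is the standard one of bounding the number of $\ell$'s giving a non-empty resonant set. If $\cR_{k\ell}^\theta(\alpha)\neq\emptyset$ the divisor must be as small as $\alpha\jbs{k}^{-\tau}$ at some point, which forces $\nu_i+\nu_j\lesssim |k|_1$; by inspection of \eqref{def:h} (separating the regimes $|n|\le c$ and $|n|>c$, in which $\nu_n\gtrsim n^2$ and $\nu_n\gtrsim c|n|$ respectively) this yields the bound $|i|,|j|\lesssim \max(|k|_1^{1/2},|k|_1/c)$, hence at most $O(|k|_1^{1/2})$ admissible values of $j$. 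Then the zero-momentum condition \eqref{Indici:M} together with $\ell_i=\ell_j\in\{\pm 1\}$ reduces to $i+j=\mp\sum_{n\in\cS} n\,k_n$, so $i$ is uniquely determined by $j$. Therefore the number of $\ell\in\ccS_k^{(6)}$ with non-empty resonant set is $O(|k|_1^{1/2})$, and multiplying by the single-set bound \eqref{singl:set:mes1} gives exactly \eqref{stimS6} (with plenty of room in the exponent of $\jbs{k}$).

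The proof is essentially routine once the sign observation is in hand; the only point requiring some care is to make sure that the sign structure covers both the $\cL=0$ and $\cL\neq 0$ sub-cases uniformly, which is really where the definition \eqref{SA5} of $\ccS_k^{(6)}$ is designed to be used. The rest is a combination of the already-established expansion \eqref{elle.c2}, elementary estimates on $\nu_n$, and the single-set measure bound \eqref{singl:set:mes1}, so I do not expect any serious obstacle.
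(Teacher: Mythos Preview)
Your argument is correct and follows essentially the same approach as the paper: the sign observation on $\cL c^2+\ell_i\nu_i+\ell_j\nu_j$ is exactly the point, and from \eqref{elle.c2} both you and the paper derive the lower bound $|\cL|\,c^2+\nu_i+\nu_j-C_0|k|_1$, from which \eqref{2piu.res.41} follows identically. For the measure estimate the paper takes a cruder but shorter route: it simply uses $\nu_n\ge|n|/3$ (their \eqref{nuj3}) to conclude $|i|,|j|\lesssim|k|_1$, hence $O(|k|_1^2)$ pairs $(i,j)$, which already suffices for \eqref{stimS6}. Your refinement via the momentum relation $i+j=\mp\sum_{n\in\cS}n\,k_n$ is perfectly valid and yields a sharper count, though the intermediate claim ``at most $O(|k|_1^{1/2})$ values of $j$'' is not quite right as stated: when $c<|k|_1^{1/2}$ your own bound $\max(|k|_1^{1/2},|k|_1/c)$ is the second term, so the count can be as large as $O(|k|_1/c)\le O(|k|_1)$. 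This does not affect the conclusion, since even $O(|k|_1)$ values of $\ell$ give $\alpha\jbs{k}^{-\tau}r^{4(N-1)/3}$, stronger than \eqref{stimS6}.
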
}

\begin{proof}
The case $k=0$ is covered by Lemma \ref{remark:trivial}, then let $k \in \mathbb{Z}^{\cS} \setminus \{0\}$. From \eqref{elle.c2} we have
  \begin{align}
\nonumber
    \left|\jbs{\omega,k}+\jbs{\Omega,\ell} \right| &\geq ( |\cL| c^2+\nu_i+\nu_j )-|\nu\cdot k|-C_1 (\e+r^{4/3}) |k|_1 & \\
    \label{2piu.facile}
&\geq  (|\cL| c^2+\nu_i+\nu_j )-(C_2+C_1 (\e+r^{4/3}))|k|_1,& \\
    \label{2piu.facile.00}
&\geq  (\nu_i+\nu_j)-(C_2+C_1 (\e+r^{4/3}))|k|_1, &
    \end{align}
for some positive constants $C_1,C_2$.

It follows easily from \eqref{def:h} that
\begin{equation}\label{nuj3}
\nu_n \geq \frac{1}{3} \, |n|, \qquad \forall \, c \geq 1, \, \, \forall n \in \cS^c.
\end{equation}
Then, from \eqref{2piu.facile.00}, provided $\e$ and $r$ are small enough and $c$ large enough, there exists $C_3>0$ such that
\begin{equation}\label{C4}
\left|\langle\omega,k\rangle+\jbs{\Omega, \ell} \right|  \geq \frac{1}{3}(|i|+|j|)-C_3|k|_1.
\end{equation}
It follows that if $\cR_{k \ell}^{\theta}(\alpha)\not=\emptyset$ then $|i| \lesssim |k|_1$ and $|j| \lesssim |k|_1$, which combined with \eqref{singl:set:mes1} gives \eqref{stimS6}, provided that $\alpha$ is small enough.

\medskip

We now prove the second claim. Since $\cL>0$ and $\nu_j+\nu_i \geq 0$, from the estimate \eqref{2piu.facile} we have
\[
\left|\langle\omega,k\rangle+ \jbs{\Omega, \ell} \right|  \geq c^2-C_4|k|_1,
\]
with $C_3$ as in \eqref{C4}. By taking $|k|_1 \leq \kappa_* c^2$, with $\kappa_*=1/(2C_3)$, we obtain \eqref{2piu.res.41}, provided that $\alpha$ is small enough.
\end{proof}

{We consider now the union over $\ccS_k^{(7)}$ (see \eqref{SA6}), for which the zero momentum condition suffices to estimate the measure and the divisors.}

{\begin{lemma}\label{Basta:mom}
There exist $h_*,\e_*,r_*, \alpha_*>0$ such that for any 
$h \in (0,h_*)$, $\e \in (0,\e_*)$, $r\in(0,r_*)$, and $\alpha \in (0,\alpha_*)$ the following holds. 
For any $\theta \in [0,1)$, and $k \in \mathbb{Z}^{\cS}$ we have
\begin{align}
  \label{2piu.res.4.1.0}
\bigg| \bigcup_{ \ell \in \ccS_k^{(7)}} \cR_{k \ell}^{\theta}(\alpha) \bigg| \leq \frac{\alpha}{\jbs{k}^{\tau-1}} r^{\frac{4}{3} (N-1)}.
\end{align}
Furthermore, there exists $\kappa_*>0$ such that 
\begin{equation}
  \label{2piu.res.41.0}
 \left|\langle\omega,k\rangle+ \jbs{\Omega, \ell} \right|\geq \alpha c^2, \qquad  \forall \, |k|_1 \leq \kappa_* c^{2}.
\end{equation}
\end{lemma}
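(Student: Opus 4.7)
My plan is to exploit the zero momentum condition \eqref{Indici:M} together with the sign structure defining $\ccS_k^{(7)}$: both will collapse the problem to a counting argument for the measure estimate, and to a favorable reverse-triangle inequality for the divisor. By Lemma~\ref{remark:trivial} the case $k=0$ is trivial, so fix $k\in\mathbb{Z}^{\cS}\setminus\{0\}$ and $\ell\in\ccS_k^{(7)}$ with $\supp(\ell)=\{i,j\}$. Since $\ell_i\ell_j=1$ and $|\ell|_1=2$ one has $|\ell_i|=|\ell_j|=1$ and $\ell_i=\ell_j$; together with $\sgn(i)=\sgn(j)$ this gives $\ell_i i+\ell_j j=\ell_i(i+j)$ with $|i+j|=|i|+|j|$. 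Then \eqref{Indici:M} forces
\begin{equation*}
|i|+|j|\;=\;\bigg|\sum_{n\in\cS} n\,k_n\bigg|\;\leq\; J\,|k|_1,\qquad J\vcentcolon=\max_{n\in\cS}|n|,
\end{equation*}
so in particular $|i|,|j|\lesssim |k|_1$.

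For the measure estimate \eqref{2piu.res.4.1.0} I would then simply count: for fixed $k$, the pairs $(i,j)$ with $|i|+|j|\leq J|k|_1$ are $O(|k|_1^2)$, and only two sign choices $\ell_i=\ell_j\in\{\pm 1\}$ are compatible with the constraint $\sgn(\cL)=-\sgn(\ell_i)$. Applying the single-set bound \eqref{singl:set:mes1} and summing produces a loss of two powers of $|k|$, giving precisely the announced $\jbs{k}^{-(\tau-1)}$ decay, provided $\varsigma_*$ is taken small enough to meet the smallness assumption of Lemma~\ref{singole:stim}.

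The divisor bound \eqref{2piu.res.41.0} is the point where the sign condition really plays a role. From \eqref{elle.c2},
\begin{equation*}
\jbs{\omega,k}+\jbs{\Omega,\ell}\;=\;\cL c^2+\ell_i(\nu_i+\nu_j)+\langle\nu,k\rangle+\mathcal{O}\bigl(|k|_1(\e+r^{4/3})\bigr),
\end{equation*}
and because $\nu_i,\nu_j\geq 0$ while by hypothesis $\sgn(\cL)=-\sgn(\ell_i)$, the first two terms have opposite signs. Hence the reverse triangle inequality together with $|\cL|\geq 1$ yields
\begin{equation*}
\bigl|\jbs{\omega,k}+\jbs{\Omega,\ell}\bigr|\;\geq\; c^2-(\nu_i+\nu_j)-|\langle\nu,k\rangle|-C_0(\e+r^{4/3})|k|_1.
\end{equation*}
The crucial quantitative input is $\nu_n\leq n^2/2$ combined with the momentum bound $|i|+|j|\lesssim |k|_1$, which gives $\nu_i+\nu_j\lesssim |k|_1^2$; together with $|\langle\nu,k\rangle|\lesssim |k|_1$ this produces
\begin{equation*}
\bigl|\jbs{\omega,k}+\jbs{\Omega,\ell}\bigr|\;\geq\; c^2-C_1|k|_1^2-C_2|k|_1,
\end{equation*}
which exceeds $c^2/2\geq\alpha c^2$ as soon as $|k|_1$ is sufficiently small compared to $c$, $\e$ and $r$ are small, $c$ is large, and $\alpha$ is small. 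The only mildly delicate point of the argument is correctly extracting the opposite-sign configuration from the definition of $\ccS_k^{(7)}$; once that is recorded, everything reduces to elementary inequalities, and the result is markedly simpler than the corresponding analysis for $\ccS_k^{(5)}$ in Lemma~\ref{casoI3}, where the sum of two normal frequencies had to be handled asymptotically.
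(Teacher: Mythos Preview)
Your proof is correct and matches the paper's approach: both use $\sgn(i)=\sgn(j)$ together with the momentum condition to force $|i|+|j|\lesssim|k|_1$, then count $O(|k|_1^2)$ pairs and apply \eqref{singl:set:mes1} for the measure estimate, and both bound the divisor via $\nu_i+\nu_j\lesssim i^2+j^2\lesssim|k|_1^2$. The only cosmetic difference is that the paper does not invoke the sign relation $\sgn(\cL)=-\sgn(\ell_i)$ for the divisor bound---a plain triangle inequality $|\cL c^2+\ell_i(\nu_i+\nu_j)+\cdots|\geq|\cL|c^2-(\nu_i+\nu_j)-\cdots$ already suffices, so this sign condition is not where the work happens (it merely delineates the subcase; the genuinely delicate companion case in $\ccS_k^{+}$ is $\ccS_k^{(8)}$, not $\ccS_k^{(5)}$).
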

\begin{proof}
The estimate \eqref{2piu.res.4.1.0} follows by the zero momentum condition \eqref{Indici:M} which implies $|i| \lesssim |k|_1$ and $|j| \lesssim |k|_1$.
Now we prove \eqref{2piu.res.41.0}. First we remark that $|\nu_j+ \nu_i| \lesssim |k|_1^2$ by the fact that $(\nu_i+ \nu_j) \lesssim i^2+j^2$ and by the zero momentum condition. Then, applying \eqref{elle.c2} we obtain the estimate
\[
\begin{aligned}
| \jbs{\omega,k}+\jbs{\Omega, \ell}|  &\geq c^2 |\cL| -(\nu_i+\nu_j) -| \jbs{\nu,k}|- C_1(\e+r^{4/3})|k|_1&\\
& \geq c^2 -C_2 |k|_1^2-C_3(\e+r^{4/3})|k|_1&
\end{aligned}
\]
for some positive constants $C_1, C_2, C_3$. Then, by taking $\e, r$ small enough and $c$ large enough, there exists $C_4>0$ such that 
\[
| \jbs{\omega,k}+\jbs{\Omega, \ell}| \geq c^2- C_4|k|_1^2.
\]
The bound \eqref{2piu.res.41.0} follows by taking $|k|_1 \leq c/(\sqrt{2C_4})$, provided $\alpha$ is small enough.
\end{proof}}

\subsubsection{Estimate of the union over $\ccS_k^{(8)}$} \label{non:fac}
We come to $\ell \in \ccS_k^{(8)}$, defined in \eqref{SA7}. First we point out that the momentum condition \eqref{Indici:M} gives 
\begin{equation}\label{momnto:aux}
| |j|-|i| |= \bigg| \sum_{n \in \cS} n k_n \bigg| \lesssim |k|_1,
\end{equation}
We have the following lemma.
\begin{lemma}\label{NonGauge:2piu}
{There exist $h_*,\e_*,r_*, \alpha_*>0$ such that for any 
$h \in (0,h_*)$, $\e \in (0,\e_*)$, $r\in(0,r_*)$, and $\alpha \in (0,\alpha_*)$ the following holds. 
For any $\theta \in [0,1)$, and $k \in \mathbb{Z}^{\cS}$ we have
\begin{align}
  \label{2piu.res.1}
\bigg| \bigcup_{ \ell \in \ccS_k^{(8)}} \cR_{k \ell}^{\theta}(\alpha) \bigg| \leq \frac{\alpha}{\jbs{k}^{\tau-5}} r^{\frac{4}{3} (N-1)}.
\end{align}
Furthermore, there exists a constant $\kappa_{*}>0$, such that if $\supp(\ell) \subseteq [-c/2,c/2]$ and $|k|_1 \leq \kappa_* c^{1/2}$, then the following estimate holds
\begin{equation}
  \label{2piu.res.2}
 \left|\langle\omega,k\rangle+ \jbs{\Omega, \ell} \right|\geq \alpha c^2.
\end{equation}}
\end{lemma}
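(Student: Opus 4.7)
The plan is to treat the two claims separately.

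For \eqref{2piu.res.2}, I argue in the style of Lemmas \ref{caso:trivial}, \ref{non.gauge.1}, \ref{Basta:mom}. The hypothesis $\supp(\ell)\subseteq [-c/2,c/2]$ forces $|i|,|j|\leq c/2$, hence (using the elementary bound $\nu_n\leq n^2/2$ whenever $|n|\leq c$) $\nu_i+\nu_j\leq c^2/4$. Since $\sgn(\cL)=-\sgn(\ell_i)=-\sgn(\ell_j)$ and $|\cL|\geq 1$, the dominant terms cannot cancel; combined with \eqref{elle.c2},
\[
|\jbs{\omega,k}+\jbs{\Omega,\ell}|\ge c^2-\tfrac{1}{4}c^2-C(1+\e+r^{4/3})|k|_1\ge \tfrac{1}{2}c^2,
\]
provided $|k|_1\le \kappa_* c^{1/2}$ with $\kappa_*$ small (and $\e,r$ small). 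This gives \eqref{2piu.res.2} for $\alpha$ sufficiently small.

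For \eqref{2piu.res.1}, I take WLOG $\ell_i=\ell_j=1$ and $\cL\le -1$, the other sign choice being symmetric. Zero momentum \eqref{Indici:M} gives $i+j=-K_k$ with $K_k\vcentcolon=\sum_{n\in\cS}nk_n$ and $|K_k|\lesssim |k|_1$, so once $k$ is fixed the admissible pair $(i,j)$ is parameterized by a single integer $j$ (with $i=-K_k-j$). The condition $\sgn(i)=-\sgn(j)$ is equivalent to $K_k+j$ and $j$ having the \emph{same} sign, a feature I use crucially below. If $\cR_{k\ell}^{\theta_1}(\alpha)\neq \emptyset$, then \eqref{elle.c2} combined with the Lipschitz bounds \eqref{lipfre} yields the existence of $\xi\in\Xi_0$ with
\[
|c^2\cL+\nu_i+\nu_j|\le C_0|k|_1.
\]
Because $\nu_n\le c^2/2$ for $|n|\le c$ and $|\cL|\ge 1$, this forces $\max(|i|,|j|)\gtrsim c$; WLOG $|j|\gtrsim c$.

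The counting step exploits $\nu_n=\lambda_n-c^2$ together with the uniform lower bound $|\lambda'(x)|=c|x|/\sqrt{c^2+x^2}\ge c/\sqrt{2}$ for $|x|\ge c$. Setting $f(x)\vcentcolon= c^2\cL+\phi(x)+\phi(K_k+x)$ with $\phi(x)\vcentcolon=\lambda(x)-c^2$ (an even, smooth extension of $n\mapsto \nu_{|n|}$), the same-sign property of $j$ and $K_k+j$ implies that $\phi'(j)$ and $\phi'(K_k+j)$ have the same sign, so
\[
|f'(j)|=|\lambda'(j)|+|\lambda'(K_k+j)|\ge c/\sqrt{2}
\]
already from $|j|\ge c$ (regardless of the size of $|K_k+j|$). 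By the mean value theorem the discrete derivative $|f(j+1)-f(j)|$ enjoys the same lower bound in the resonant regime, hence the number of integer $j$ with $|f(j)|\le C_0|k|_1$ is $\lesssim |k|_1/c+1\lesssim |k|_1$. Combining with the single-set estimate $|\cR_{k\ell}^{\theta_1}(\alpha)|\lesssim \alpha\,\jbs{k}^{-(\tau+1)}\,r^{\frac{4}{3}(N-1)}$ of Lemma \ref{singole:stim} and summing over $\ell\in\ccS_k^{(8)}$ yields $\bigl|\bigcup \cR_{k\ell}^{\theta_1}(\alpha)\bigr|\lesssim \alpha\,\jbs{k}^{-\tau}\,r^{\frac{4}{3}(N-1)}$, which is much stronger than \eqref{2piu.res.1}.

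The main technical subtlety I anticipate is the exceptional integer $j$ at which $K_k+j=0$ (where the monotonicity of $j\mapsto |K_k+j|$ would fail for the absolute-value version, although the smooth extension $\phi$ removes this issue); it contributes only $O(1)$ to the count and is absorbed above. I would work with $\lambda'(x)=cx/\sqrt{c^2+x^2}$ directly throughout, rather than with any asymptotic expansion of $\nu_n$, since the lower bound $|\lambda'(x)|\ge c/\sqrt{2}$ holds uniformly for $|x|\ge c$ without sub-leading corrections, keeping the argument clean through the full range where $|j|\ge c$ (including the intermediate range $|K_k+j|\ll c$ where only the $\lambda'(j)$ term supplies the estimate).
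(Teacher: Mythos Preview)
Your argument for \eqref{2piu.res.2} is correct and in fact cleaner than the paper's: the paper deduces \eqref{2piu.res.2} by a contradiction argument built on the detailed analysis it develops for \eqref{2piu.res.1} (showing that the resonant constraint forces $|i|,|j|>c/2$), whereas you go directly via the elementary bound $\nu_i+\nu_j\le c^2/4$ when $|i|,|j|\le c/2$.

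For \eqref{2piu.res.1} your route is genuinely different. The paper works with the explicit function $g(x)=\sqrt{x+1}-1$ and its inverse, writing $\nu_n/c^2=g(n^2/c^2)$, and uses the algebraic identity $g^{-1}(y)=y(y+2)$ to localize $i$ near $c\,x_\cL$ (with $x_\cL^2=\tfrac{\cL}{2}(\tfrac{\cL}{2}+2)$), obtaining $|cx_\cL-i|\lesssim|k|_1^3$ uniformly in $c$. You instead use a monotonicity argument: on the half-line where $x$ and $K_k+x$ share a sign, your $f$ is strictly increasing and convex, so the resonant interval has controlled length once $f'$ is bounded below. Your approach is more elementary and (in the regime where it applies) gives a sharper count $\lesssim|k|_1$ versus the paper's $|k|_1^3$; the paper's algebraic approach has the advantage of being manifestly uniform over all $|k|_1$.

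There is one genuine gap. Your step ``this forces $\max(|i|,|j|)\gtrsim c$'' only follows when $|k|_1\lesssim c^2$: if both $|i|,|j|\le\gamma c$ then $\nu_i+\nu_j\le\gamma^2 c^2$, and comparing with $\nu_i+\nu_j\ge|\cL|c^2-C_0|k|_1$ yields $\gamma^2\ge 1-C_0|k|_1/c^2$, which gives no information once $|k|_1\gtrsim c^2$. In that regime your derivative lower bound $|f'|\gtrsim c$ is unjustified and the count $\lesssim|k|_1/c+1$ is not established. The fix is easy and does not affect the stated conclusion: for $|k|_1\gtrsim c^2$ one has $c\lesssim|k|_1^{1/2}$, and the crude bound $\phi(x)\ge c|x|-c^2$ together with $\phi(x)\le|\cL|c^2+C_0|k|_1$ gives $|x|\lesssim|\cL|c+|k|_1/c\lesssim|k|_1^{3/2}$, so the count is $\lesssim|k|_1^{3/2}$, still well within \eqref{2piu.res.1}. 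You should insert this case distinction explicitly.
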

\begin{proof}
The case $k=0$ is already covered by Lemma~\ref{remark:trivial}, then we consider $k \in \mathbb{Z}^{\cS} \setminus \{0\}$. 
It suffices to prove \eqref{2piu.res.1} under the restriction
\begin{equation}\label{restriz:1}
\ell \in \ccS_k^{(8)}, \quad \cL > 0, \quad i>0, \, j<0, \quad \ell_i=\ell_j=-1,
\end{equation}
and we set $l \vcentcolon=-j$. We may further assume $l \neq i$, while the case $i=l$ can be handled in a similar, and simpler, way.

 We point out that, for 
  $a>0$, by \eqref{elle.c2}, for $\e$ and $r$
  small enough, we have 
\begin{equation}\label{implication}
|\jbs{ \omega, k}- (\Omega_i+\Omega_l)| \leq a,\ \Longrightarrow
\left| c^2 \cL -(\nu_i+\nu_l) \right|\sleq a+|k|_1.
\end{equation}
Dividing by $c^2$ we obtain 
\begin{equation}
  \label{aux:11}
\left|\cL-\left(\frac{\nu_i}{c^2}+\frac{\nu_l}{c^2}\right)\right|\sleq
\frac{a}{c^2}+\frac{|k|_1}{c^2}.
\end{equation}
We will first consider $a$ of order one and then of order $c^2$ to prove the two claims \eqref{2piu.res.1} and \eqref{2piu.res.2}. Therefore, in any case we assume that
\begin{equation}\label{cond:a}
a \leq c^2.
\end{equation}

\medskip 

The advantage of expression \eqref{aux:11} is that $\frac{\nu_i}{c^2}$ has an
explicit expression that allows to study the inequality \eqref{aux:11}. Indeed,
defining 
\begin{equation}\label{def:f}
f(x) \vcentcolon= \frac{x}{1+\sqrt{1+x}}= \sqrt{x+1}-1,
\end{equation}
we have $\frac{\nu_i}{c^2}=f(i^2/c^2)$. For future reference we note a couple
of properties of the function $f$. First, it is a monotonically
increasing function and, on $\R^+$, its Lipschitz constant is globally bounded by
$1$. Secondly, its inverse $x=f^{-1}(y)$ is given by
\begin{equation}
  \label{inversa}
x=f^{-1}(y)=y(y+2), 
\end{equation}
Furthermore, for any $\bar x,\tilde x>0$, denoting $\bar{y}=f(\bar{x})$ and $\tilde{y}=f(\tilde{x})$ we have
\begin{equation}
  \label{lip}
\bar x-\tilde x=(\bar y-\tilde y)(\bar y+\tilde y+2).
\end{equation}

Denote $x_i \vcentcolon= i/c$ and $x_l \vcentcolon= l/c$. Then
\begin{align}
  \label{eq.2piu}
\left|\cL-\left(f(x_i^2)+f(x_l^2)\right)\right|\sleq
\frac{a}{c^2}+\frac{|k|_1}{c^2},\quad \left|x_l-x_i\right|\sleq
\frac{|k|_1}{c},
\end{align}
where the first equation follows from \eqref{aux:11}, while the second inequality is the momentum conservation \eqref{momnto:aux} divided by a factor $c$ . We are
going to use that to approximate $f(x_l^2)$ with
$f(x_i^2)$. To this end we need first an a priori estimate on
$f(x_i^2)$, $f(x_l^2)$, 
$x_i$ and $x_l$. This is obtained by remarking that, since $f$ has
positive values, the first inequality of \eqref{eq.2piu} implies 
\begin{equation}
  \label{apriori.1}
\max\left\{f(x_i^2),f(x_l^2)\right\}\leq f(x_i^2)+f(x_l^2)\sleq \cL
+\frac{a}{c^2}+\frac{|k|_1}{c^2}\sleq\cL+\frac{|k|_1}{c^2}\sleq |k|_1,
  \end{equation}
where we use \eqref{cond:a} to obtain $a/c^2 \lesssim \cL$. According to \eqref{inversa}, by \eqref{apriori.1} we obtain
\begin{equation}
  \label{apriori}
x_i,x_l\sleq |k|_1.
\end{equation}
By using \eqref{apriori}, the bound on the Lipschitz constant of $f$ and the second of  \eqref{eq.2piu}, we obtain 
$$
|f(x_l^2)-f(x_i^2)|\leq |x_l^2-x_i^2|= |x_l-x_i|(x_l+x_i) \sleq
\frac{|k|_1^2}{c}.
$$
The previous estimate, combined with the first of \eqref{eq.2piu}, implies
\begin{equation}
  \label{2piu.1}
\left|\frac{\cL}{2}-f(x_i^2)\right|\sleq
\frac{a}{c^2}+\frac{|k|_1}{c^2}+ \frac{|k|_1^2}{c}\sleq
\frac{a}{c^2}+\frac{|k|_1^2}{c}.
\end{equation}

 We define $x_{\cL} \vcentcolon= \sqrt{\frac{\cL}{2} \left(\frac{\cL}{2}+2
   \right)}$, that satisfies $f(x_{\cL}^2)=\cL/2$ by \eqref{inversa}. Then, the estimate \eqref{2piu.1} reads
 $$
\left|f(x_{\cL}^2)-f(x_i^2)\right|\sleq \frac{a}{c^2}+
\frac{|k|_1^2}{c}.
$$
We apply \eqref{lip} to $x_{\cL}$ and $x_i$, combined with the second-last estimate in \eqref{apriori.1}, obtaining
$$
\left|x_\cL^2-x_i^2\right|\sleq \left(\frac{a}{c^2}+
\frac{|k|_1^2}{c} \right)\left(\cL+\frac{|k|_1}{c^2}\right).
$$
Multiplying by $c^2$ we have
\begin{align}
  \label{finale.2piu.10}
\left|c^2x_\cL^2-i^2\right|\sleq
\left(a+c|k|_1^2\right)\left(\cL+\frac{|k|_1}{c^2}\right).
\end{align}
To obtain a bound on $i$ we can rewrite \eqref{finale.2piu.10} as
\begin{align}
  \label{finale.2piu.1}
cx_{\cL} \left|cx_{\cL}-i\right|\leq (cx_{\cL}+i )
\left|cx_{\cL}-i\right|=\left|c^2x_\cL^2-i^2\right|\sleq
\left(a+c|k|_1^2\right)\left(\cL+\frac{|k|_1}{c^2}\right).
\end{align}
 Finally, dividing by $c x_{\cL}$ we have
\begin{align}
  \label{finale.2piu}
\left|cx_{\cL}-i\right|\sleq
\left(\frac{a}{c}+|k|_1^2\right)\left(\frac{\cL}{x_{\cL}}+\frac{|k|_1}{c^2x_{\cL}}\right)
\sleq\left(\frac{a}{c}+|k|_1^2\right) \left(1+\frac{|k|_1}{c^2}\right).
\end{align}
The same estimate for $l$ follows by doing all the computation replacing $i$ with $l$.

\smallskip

{We can now prove \eqref{2piu.res.1}. By recalling the definition of $\cR_{k \ell}^{\theta}(\alpha)$ in \eqref{res1}, for $\tau>0$ e $\theta \in [0,1)$ we set
\[
a=\frac{\alpha}{\langle 
  k\rangle^\tau \tw(\ell) ^{\theta}}.
\]
We point out that $a<\alpha$ by the first estimate in \eqref{bound:peso}.
Then \eqref{finale.2piu} becomes
$$
\left|cx_{\cL}-s\right|\sleq \left|k\right|^3, \quad s=i, l,
$$
provided that $\alpha$ is small enough. Thus, the cardinality of the $i$'s and the $j$'s in the above interval is bounded by $|k|_1^3$. By \eqref{implication} we obtain \eqref{2piu.res.1}.}

\medskip

{Finally, we prove \eqref{2piu.res.2}, under the assumptions \eqref{restriz:1}, by using a contradiction agument. Let $a=\alpha c^2$ and $|k|_1 \leq \kappa_* c^{1/2}$, with $\alpha, \kappa_* \in (0,1)$. Combining this bounds with \eqref{implication} and \eqref{finale.2piu}, we obtain
$$
|\jbs{ \omega, k}- (\Omega_l+\Omega_i)| \leq \alpha c^2 \implies \left|cx_{\cL}-s\right| \lesssim (\alpha+\kappa_*^2)c, \quad s=i, l.
$$
Then there exists $C>0$ such that
$$
s \geq  c\left(x_{\cL}- C(\alpha+\kappa_*^2)\right)> \frac{c}{2}, \quad s=i, l, 
$$
where we use that $x_{\cL} \geq 1$, and we impose $ (\alpha+\kappa_*^2) < 1/(2C)$. This is a contradiction because $\supp(\ell)\subset [-c/2, c/2]$.
It follows that, by taking $\alpha$ and $\kappa_*$ small enough, uniformly in $c$, we obtain \eqref{2piu.res.2}. }
\end{proof}

We can state the general estimate for the union resonant set on $\ccS_k^+$, which follows directly from Lemmata \ref{Gauge:2piu}, \ref{Basta:mom} and \ref{NonGauge:2piu}.
\begin{lemma}\label{Spiu:fine}
  Let $\theta_0\in [0, 1)$ and $\theta_1 \in [0,1-\theta_0)$. There
      exist $h_*,\e_*,\varsigma_*, r_*, \alpha_*>0$ such that for any
      $\e \in (0,\e_*)$, $r\in(0,r_*)$, $\alpha \in (0,\alpha_*)$ and 
      $h \in (0,h_*)$, the following holds. If
      $|\Delta|_{\infty,1-\theta_0} \lesssim \e$ and $\e
      \alpha^{-1}<\varsigma_*$, then we have 
 \begin{equation}\label{est:Caso2}
\left|\bigcup_{ \ell \in \ccS_k^{+}}\cR^{\theta_1}_{k\ell}(h; \alpha, \Delta,\delta)\right|\sleq
\frac{\alpha^{\frac{2}{3-\theta_1}}}{ \langle k\rangle^{\tau_{(\theta_1,\theta_0)}}}r^{\frac{4}{3}(N-1)}, \qquad \forall k \in \mathbb{Z}^{\cS},
  \end{equation}
with $\tau_{(\theta_1,\theta_0)}=\frac{ \tau(1-\theta_0)-5(3-\theta_1)}{(3-\theta_1)}$ and $\ccS_k^{+}$ defined in \eqref{SA1}.

{Furthermore, there exists a constant $\kappa_{*}>0$, such that if $\supp(\ell) \subseteq [-c/2,c/2]$ and $|k|_1 \leq \kappa_* c^{1/2}$, then the following estimate holds
\begin{equation*}
 \left|\langle\omega,k\rangle+ \jbs{\Omega, \ell} \right|\geq \alpha c^2.
\end{equation*}}
\end{lemma}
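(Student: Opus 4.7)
The plan is to exploit the elementary decomposition $\ccS_k^+ = \ccS_k^{(6)} \cup \ccS_k^{(7)} \cup \ccS_k^{(8)}$ (see \eqref{SA5}--\eqref{SA7}) and to glue together the three measure and divisor estimates already established in Lemmata \ref{Gauge:2piu}, \ref{Basta:mom}, and \ref{NonGauge:2piu}. Since no new analytic input is required, the only nontrivial point will be to check that the exponents on $\alpha$ and $\jbs{k}$ produced by each of the three lemmas are all dominated by the target exponents appearing in \eqref{est:Caso2}.

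For the measure bound, I would start by writing
\begin{equation*}
\bigcup_{\ell \in \ccS_k^{+}} \cR^{\theta_1}_{k\ell}(\alpha)
= \bigcup_{m=6}^{8}\ \bigcup_{\ell \in \ccS_k^{(m)}} \cR^{\theta_1}_{k\ell}(\alpha),
\end{equation*}
and apply subadditivity. Lemmata \ref{Gauge:2piu} and \ref{Basta:mom} give the stronger bound $\alpha\jbs{k}^{-(\tau-1)} r^{4(N-1)/3}$ on the unions over $\ccS_k^{(6)}$ and $\ccS_k^{(7)}$, while Lemma \ref{NonGauge:2piu} gives the weaker $\alpha\jbs{k}^{-(\tau-5)} r^{4(N-1)/3}$ on $\ccS_k^{(8)}$. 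The dominating contribution therefore comes from $\ccS_k^{(8)}$, and to match the target form $\alpha^{2/(3-\theta_1)} \jbs{k}^{-\tau_{(\theta_1,\theta_0)}} r^{4(N-1)/3}$ I would observe that, for $\theta_0,\theta_1 \in [0,1)$ and $\alpha \in (0,1)$,
\begin{equation*}
\frac{2}{3-\theta_1} \leq 1 \quad\Longrightarrow\quad \alpha \leq \alpha^{\frac{2}{3-\theta_1}},
\qquad
\frac{1-\theta_0}{3-\theta_1} \leq 1 \quad\Longrightarrow\quad \tau_{(\theta_1,\theta_0)} = \frac{\tau(1-\theta_0)}{3-\theta_1}-5 \leq \tau-5,
\end{equation*}
so both the power of $\alpha$ and the power of $\jbs{k}$ in the bound from $\ccS_k^{(8)}$ are subsumed by the claimed bound (and \emph{a fortiori} those from $\ccS_k^{(6)}, \ccS_k^{(7)}$).

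For the divisor lower bound, I would work case by case under the assumptions $\supp(\ell) \subseteq [-c/2,c/2]$ and $|k|_1 \leq \kappa_* c^{1/2}$. On $\ccS_k^{(6)}$ and $\ccS_k^{(7)}$, Lemmata \ref{Gauge:2piu} and \ref{Basta:mom} already yield $|\jbs{\omega,k}+\jbs{\Omega,\ell}| \geq \alpha c^2$ under the much weaker condition $|k|_1 \leq \kappa_* c^2$, which is automatic once $|k|_1 \leq \kappa_* c^{1/2}$ for $c$ large enough. On $\ccS_k^{(8)}$, Lemma \ref{NonGauge:2piu} supplies exactly the desired bound under precisely the assumptions we are making. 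Taking $\kappa_*$ to be the minimum of the three constants furnished by these lemmas yields the combined statement.

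As already remarked, there is no genuine obstacle in this argument — all the technical difficulty has been absorbed into the three preceding lemmas, in particular the analysis of the non-Gauge resonance $\Omega_i+\Omega_l \approx c^2\cL$ carried out via the explicit inversion of $f(x)=\sqrt{x+1}-1$ in Lemma \ref{NonGauge:2piu}. The only thing worth flagging is the uniformity of $\kappa_*$ in $c$, which follows because each of the three source lemmas provides a constant independent of $c$.
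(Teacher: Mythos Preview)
Your proposal is correct and follows exactly the paper's approach: the paper's own proof simply states that the lemma ``follows directly from Lemmata \ref{Gauge:2piu}, \ref{Basta:mom} and \ref{NonGauge:2piu}'', and you have filled in the routine verification that the three individual bounds are each dominated by the target bound \eqref{est:Caso2}. The only minor caveat worth noting is that the divisor lower bound on $\ccS_k^{(6)}$ supplied by Lemma~\ref{Gauge:2piu} requires $\cL\neq 0$; this is consistent with the intended use of the lemma (cf.\ \eqref{bound:NGhom} in Theorem~\ref{measure_estimate}), but you should make that restriction explicit when invoking it.
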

{We conclude Section \ref{meas.est} remarking that the Main Theorem \ref{measure_estimate} follows by the decomposition given in \eqref{S0}-\eqref{SA1} combined with Lemmata \ref{S0}, \ref{caso:trivial}, \ref{non.gauge.1} , \ref{Smenofin} and \ref{Spiu:fine}.}

\section{ KAM theorem}

\subsection{The KAM algorithm}

Following 
\cite{poschel1996kam} (see Section 6), we first perform a preliminary KAM step by imposing the
Melnikov conditions \eqref{dioph:NLS} and \eqref{dioph} with a positive parameter $\theta$ and apply Theorem \ref{measure_estimate} with
$\theta_0=0$ and $\theta_1= \theta \in(0,1)$. As a result we will solve the cohomological
equation losing regularity with respect to $\beta$ in $\mathcal{P}^{a, p, \beta}$ (recall $\cP^{a,p,\beta}$ in \eqref{phase:sptrasf}). Then we will make infinitely many steps of
normalization and at each step we will impose the Melnikov conditions
 \eqref{dioph:NLS} and \eqref{dioph} with $\theta=0$, so that we will not loose further regularity on $\beta$ . We will thus use the
estimate of Theorem \ref{measure_estimate} with $\theta_1=0$ and
$\theta_0=2\theta$. This
is needed in order to obtain an asymptotically full measure set of invariant tori.

\smallskip

We now fix the following parameters
\begin{equation}
  \label{scelte.theta}
s_0\vcentcolon= 2, \quad \sigma_0\vcentcolon= \frac{1}{40}s_0,\quad \theta\in(0,1/2), \quad r_0 \in (0,r_*),
  \end{equation}
 with $r_*$ introduced
in Lemma \ref{Newstime}. Then, we choose a
positive parameter $\alpha_0$ fulfilling
\begin{equation}\label{alpha0}
 0<\alpha_0<r_0^{4/3},
\end{equation}
which at the end of the procedure
  will be taken to be a function of $r_0$. {We will keep track of the dependence of the constants on these parameters}.

{From the definitions \eqref{res2} and \eqref{res:setNLS}, we denote 
\[
 \mR^\theta(h;\alpha_0,0,0)\vcentcolon=  \bigcup_{k \in \mathbb{Z}^{\cJ}} \mR_k^{\theta}(h; \alpha_0,0,0), \qquad  \mR^{\tt NLS}(0;\alpha_0,0,0) \vcentcolon=  \bigcup_{k \in \mathbb{Z}^{\cJ}} \mR_k^{\tt NLS}(0;\alpha_0,0,0),
\]
{and then the initial non resonant set is denoted as}
\begin{equation}\label{Xi1}
\Xi_1\vcentcolon= \Xi_0 \setminus \big( \mR^\theta(h;\alpha_0,0,0) \cup
\mR^{\tt NLS}(0;\alpha_0,0,0)\big),
\end{equation}
(c.f. \eqref{Xi0}).}
For $0<\eta_0<1/9$, we also define two complex neighborhoods (recall \eqref{Dq})
\[
D_{q}^1\vcentcolon= D_{q}(s_0-5\sigma_0, \eta_0 r_0), \qquad D_{q}^0\vcentcolon= D_{q}(s_0,r_0),
\] 
{for both $q=p$ and $q=p-4$-}
 We have the following lemma .
\begin{lemma}\label{Lemma:step1}
   There
    exist $h_*>0$ and $\varrho_0>0$ s.t., if
    \begin{align*}
\frac{\e_0}{\alpha_0}<\varrho_0,
    \end{align*}
  the following holds. For any $h \in (0,h_*)$
there exist
two canonical changes of coordinates 
\[
\Phi_0, \Phi_0^{\tt NLS}:  D_{q}^1 \times \Xi_1 \to D_{q}^0,
\]
which are Lipschitz as functions of $\xi$ and analytic as functions of the
variables $(x, y, z, \bar{z})$ and they are such that 
\begin{align*}
  &(N_0+P_0)\circ\Phi_0=N_1+P_1,&
  \\
  &(N_0^{\tt NLS}+P_0^{\tt NLS})\circ\Phi_0^{\tt NLS}=N_1^{\tt NLS}+P_1^{\tt NLS},&
\end{align*}
where
\begin{align*}
N_1=\langle\omega_1,y\rangle+\langle\Omega_1,z\bar{z}\rangle , \qquad N_1^{\tt NLS}=\langle\omega_1^{\tt NLS},y\rangle+\langle\Omega_1^{\tt NLS},z\bar{z}\rangle,
\end{align*}
and the frequency maps are lipeomorphisms on $\Xi_0$ of the form  (recall \eqref{Oomega}, \eqref{OomegaNLS}, \eqref{OomegaR})
\[
\begin{aligned}
&\omega_1(\xi,h)\vcentcolon= \omega_0(\xi,h)+\delta_1(\xi,h),& \qquad & \omega_1^{\tt NLS} (\xi)\vcentcolon= \omega_0^{\tt NLS}+\delta_1^{\tt NLS}(\xi),&
  \\
&\Omega_1(\xi,h)\vcentcolon= \Omega_0(\xi,h)+\Delta_1(\xi,h),& \qquad &\Omega^{\tt NLS}_1(\xi)\vcentcolon= \Omega_0^{\tt NLS}+\Delta_1^{\tt NLS}(\xi),& \\
&\omega_1^R(\xi,h)\vcentcolon= \omega_0^R(\xi,h)+\delta_1(\xi,h)-\delta_1^{\tt NLS}(\xi),& \qquad &\Omega_1^R(\xi,h)\vcentcolon= \Omega_0^R(\xi,h)+\Delta_1(\xi,h)-\Delta_1^{\tt NLS}(\xi),&
\end{aligned}
\]
 satisfying the following bounds
\begin{align}
\label{smallcond2.0}
&|\delta_1|_{\infty}^{\mL(\alpha_0)} \sleq \e_0,& \qquad &|\delta_1^{\tt  NLS}|_{\infty}^{\mL(\alpha_0)}  \sleq \e_0,&
\\
\label{smallcond2.1}
&|\Delta_1|_{\infty,1-2\theta}^{\mL(\alpha_0)} \sleq \e_0,& \qquad
&|\Delta_1^{\tt  NLS}|_{\infty}^{\mL(\alpha_0)} \sleq \e_0,& \\
\label{cond:OmegaR.1}
&|\delta_1-\delta_1^{\tt NLS}|_{\infty} \sleq h\e_0, & \quad &|(\Delta_1)_j-(\Delta_1^{\tt NLS})_j|_\infty \sleq  h j^4\e_0, \quad \forall j \in \cS^c.
\end{align}
Furthermore,
the new perturbation and the change of coordinates { admit a decomposition of the form}
\begin{align*}
 P_1=P_1^{\tt NLS}+P_1^{\mathcal{NG}}+P_1^R, \qquad\Phi_{0}=\Phi_{0}^{\tt NLS}+\Psi_{0}^R,&
\end{align*}
{with $P_1^{\ng}=\Pi_{\ng} P_1^{\ng}$ and the remainders defined by the above identities.} 
{By defining
$$
\eta_0\vcentcolon= \left(\frac{\e_0}{\alpha_0}\right)^{1/3},
$$}
the following estimates hold in $\Xi_1$ for $q=p$ and $q=p-4$
\begin{align}
  \label{lambro.1.0}
&\bn{X_{P_1}}^{\mL(\alpha_0),D_q^1}_{\eta_0 r_0,q, 1-2\theta}, \,
\bn{X_{P_1^{\ng}}}^{\mL(\alpha_0),D_q^1}_{\eta_0 r_0,q} , \, \bn{X_{P^{\tt NLS}_1}}^{\mL(\alpha),D_q^1}_{\eta_0 r_0,q}\sleq \frac{\e_0^{4/3}}{(\alpha_0)^{1/3}}, &
\\
\label{resti.6.0}
&
\bn{\uno-\Phi_0  }^{\mL(\alpha_0), D_q^1}_{r_0,q}\sleq
\frac{\e_0}{\alpha_0},\quad \bn{\uno-\Phi_0^{\tt NLS}  }^{\mL(\alpha_0), D_q^1}_{r_0,q}\sleq
\frac{\e_0}{\alpha_0},&
\\
\label{resti.6.1}
& \bn{X_{P_1^R}}^{D_p^1}_{\eta_0 r_0,p-4}\sleq
h\frac{\e_0^{4/3}}{(\alpha_0 )^{4/3}}, \, \quad 
\bn{\Psi_{0}^R}^{D_p^1}_{\eta_0 r_0,p-4} \sleq h \frac{\e_0}{\alpha_0^2}.&
\end{align}
 Finally, if $\tau$ in \eqref{res1} and \eqref{R:NLS}
   fulfills
   \begin{equation}
     \label{tau.KAM.1}
\frac{\tau}{3-\theta}-5>N,
     \end{equation}
then we have the estimate
\begin{equation}\label{first:mestim}
\frac{ \big| \Xi_0 \setminus \Xi_1 \big| }{ \big| \Xi_0 \big| } \leq C(\theta)\, \frac{ \alpha_0^{\frac{2}{3-\theta}}}{r_0^{4/3}},
\end{equation}
with $C(\theta)>0$ diverging as $\theta\to0$.
\end{lemma}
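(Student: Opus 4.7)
The plan is to perform one normal form step simultaneously on $H_0 = N_0 + P_0$ and $H_0^{\tt NLS} = N_0^{\tt NLS}+P_0^{\tt NLS}$ by invoking the abstract cohomological Lemma \ref{solhomo} with the choice $\theta_0 = 0$, $\theta_1 = \theta$, $F = P_0$, $F^{\tt NLS}=P_0^{\tt NLS}$, and with initial corrections $\delta = 0$, $\Delta = 0$. Hypothesis (b) of Lemma \ref{solhomo} coincides, on the set $\Xi_1$ defined in \eqref{Xi1}, with the complement of the resonant sets $\mR^\theta$ and $\mR^{\tt NLS}$, so it is automatically satisfied. Hypothesis (a) (the non-Gauge lower bound \eqref{bound:NGhom:0}) is provided by \eqref{bound:NGhom} of Theorem \ref{measure_estimate}, whose range $|k|_1\le\kappa_*c^{1/2}$ matches the requirement in \ref{ip:nongauge}. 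The thresholds \eqref{bound:pertub} are supplied by the estimates \eqref{Stima0}--\eqref{3Stima} of Theorem \ref{Newstime}, so we obtain a generator $G_0 = G_0^{\tt NLS}+G_0^R$ with the bounds \eqref{GG}, \eqref{GNLS}, \eqref{4.13a}.

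Second, define $\Phi_0 \vcentcolon= \phi_{G_0}^1$ and $\Phi_0^{\tt NLS}\vcentcolon=\phi_{G_0^{\tt NLS}}^1$. The smallness assumption $\e_0/\alpha_0 < \varrho_0$ together with the flow lemma (analogue of Lemma \ref{flusso.0}) ensures $\Phi_0, \Phi_0^{\tt NLS}\colon D_q^1 \to D_q^0$ are well defined and satisfy \eqref{resti.6.0}. Writing the Lie series as in \eqref{h_0} and using the cohomological identity $\{N_0,G_0\}+\{P_0\}=[P_0]$ (and its NLS counterpart), one gets $H_0\circ\Phi_0 = N_1 + P_1$ with $N_1 = N_0+[P_0]$, and the new perturbation
\[
P_1 = \int_0^1\{(1-t)[P_0]+tP_0,G_0\}\circ\phi_{G_0}^t\,dt + (P_0-\{P_0\}).
\]
The frequency corrections $(\delta_1,\Delta_1)$ and $(\delta_1^{\tt NLS},\Delta_1^{\tt NLS})$ are read off the linear part of $[P_0]$, $[P_0^{\tt NLS}]$ in the variables $(y,z\bar z)$; the bounds \eqref{smallcond2.0}--\eqref{smallcond2.1} follow from Cauchy estimates applied to $X_{P_0}$, $X_{P_0^{\tt NLS}}$ on $D_q^0$. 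The estimate \eqref{lambro.1.0} for $X_{P_1}$ combines the restriction bound \eqref{restoF} (giving the contribution $\eta_0 \e_0$ from $P_0-\{P_0\}$) with the commutator estimate for $\{P_0,G_0\}\circ\phi$, whose size is $\e_0^2/(\sigma_0^\mu\alpha_0)$; choosing $\eta_0 = (\e_0/\alpha_0)^{1/3}$ balances both and yields $\e_0^{4/3}/\alpha_0^{1/3}$.

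Third, decompose $P_1 = P_1^{\tt NLS}+P_1^{\ng}+P_1^R$ by setting $P_1^{\ng}\vcentcolon=\Pi_{\ng}P_1$ (which is well-posed since $\Phi_0$ commutes with $\phi_{\cN}^\alpha$ up to $O(h)$-corrections by Lemma \ref{lem:gauge2}), $P_1^{\tt NLS}$ as the analogous quantity for the NLS step, and $P_1^R$ by difference. The $P^{\tt NLS}$-estimate in \eqref{lambro.1.0} and the $P^{\ng}$-estimate are obtained verbatim from the argument above, replacing $P_0$ by $P_0^{\tt NLS}$ or $P_0^{\ng}$ and $G_0$ by $G_0^{\tt NLS}$ (respectively $G_0$). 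The bound \eqref{resti.6.1} for $X_{P_1^R}$ is the analogue of the argument used in the proof of Item \ref{item:pertu} of Theorem \ref{Teorema0}: one splits the Lie series contributions into four groups as in \eqref{*.1}--\eqref{**.2}, estimating each commutator by its $\cP^{a,p}\to\cP^{a,p-4}$ operator norm, using the $h$-smallness \eqref{4.13a} of $G_0^R$ and \eqref{2Stima} of $P_0^R$. The decomposition $\Phi_0 = \Phi_0^{\tt NLS}+\Psi_0^R$ with the bound in \eqref{resti.6.1} follows from the flow-comparison Lemma \ref{flowmap.perdo} applied to the decomposition $G_0 = G_0^{\tt NLS}+G_0^R$, exactly as in Lemma \ref{flusso.0}(ii). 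The comparison \eqref{cond:OmegaR.1} is then the key nonlinear analogue: since $\Pi_\cG P_0^{\ng}=0$, one has $[P_0^{\ng}]=0$, so $\delta_1-\delta_1^{\tt NLS}$ and $\Delta_1 - \Delta_1^{\tt NLS}$ are read off $[P_0^R]$ plus an $O(h)$ error coming from the mismatch $\omega_0^R, \Omega_0^R$; both are controlled by $h$ using \eqref{2Stima} and a Cauchy estimate in $\ell^{a,p}\to\ell^{a,p-4}$, which produces precisely the factor $j^4$ in the $\Delta$-bound.

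Finally, the measure estimate \eqref{first:mestim} is obtained by summing \eqref{meas.final} (applied with $\theta_0 = 0$, $\theta_1 = \theta$) and \eqref{meas.finalNLS} over $k\in\Z^{\cS}$; the condition \eqref{tau.KAM.1} is exactly what is needed for $\sum_k\jbs{k}^{-\tau_{(\theta,0)}}$ to converge, and the factor $r_0^{-4/3}$ in \eqref{first:mestim} appears after dividing by $|\Xi_0|\sim r_0^{4N/3}$. The expected main obstacle is the verification of \eqref{resti.6.1} and \eqref{cond:OmegaR.1}, because one must propagate the $O(h)$-smallness of the ``relativistic remainder'' through both a commutator and a flow-composition, while keeping track of the loss of four derivatives; the argument is structurally the same as in Theorem \ref{Teorema0}, but has to be done in the action-angle setting with Lipschitz dependence on $\xi$, which requires the careful use of the norms \eqref{norma:normcv}--\eqref{norma:lipcv} with weight $\alpha_0$.
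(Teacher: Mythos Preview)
Your proposal is correct and follows essentially the same approach as the paper: invoke Lemma~\ref{solhomo} with $\theta_0=0$, $\theta_1=\theta$ to obtain $G_0=G_0^{\tt NLS}+G_0^R$, define the transformations as time-$1$ flows, balance the two contributions to $P_1$ via $\eta_0=(\e_0/\alpha_0)^{1/3}$, and treat the $R$-estimates by the commutator/flow-comparison argument of Theorem~\ref{Teorema0} together with Lemma~\ref{flowmap.perdo}. One small simplification relative to your sketch: for \eqref{cond:OmegaR.1} the paper observes directly that $(\Delta_1)_j-(\Delta_1^{\tt NLS})_j=(P_0^R)_{00\delta_j\delta_j}$ (since $[P_0^{\ng}]=0$), with no additional ``$\omega_0^R,\Omega_0^R$-mismatch'' term, and then bounds this coefficient by a Cauchy estimate on $X_{P_0^R}:\cP^{a,p}\to\cP^{a,p-4}$, which is exactly what produces the $j^4$ factor.
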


\begin{proof}
  Except for \eqref{cond:OmegaR.1} and \eqref{resti.6.1}
    this is the ``KAM step'' of Sections 3 and 6 of \cite{poschel1996kam} (see also \cite{berti2013kam} Section 5) . We
    start by recalling its scheme of proof.
We consider the Hamiltonian of the Klein-Gordon equation $H_0=N_0+P_0$
introduced in \eqref{H}. By Lemma \ref{solhomo} there exists a
Hamiltonian $G_1$, which is the solution of the cohomological equation (recall \eqref{media}, \eqref{taglio})
\[
\{N_0,G_1\}+\{P_0\}=[P_0],
\]
for $\xi\in \Xi_1$ (see \eqref{Xi1}).
By considering the change of coordinates $\Phi_{0}\vcentcolon= \phi_{G_1}^t \big|_{t=1}$, the new Hamiltonian is defined as
\[
H_1=H_0 \circ \Phi_{0}=N_1+P_1,
\]
where
\begin{equation}
  \label{P0quadro}
N_1\vcentcolon= N_0+[P_0]
  \end{equation}
and 
\begin{align}
\label{int}
P_1&= \int_0^1 \{ (1-t)[P_0]+t\{P_0\},G_1\} \circ \phi_{G_1}^t \diff t& \\
\label{diff}
&+ (P_0-\{P_0\})\circ \Phi_{0}.&
\end{align}
By using that
$
X_{P_0}: D_q(1,r_0) \to \cP^{a,q,1}
$ and the definition \eqref{Xi1} of $\Xi_1$,
by \eqref{GG} with $\sigma=5\sigma_0=\frac{5}{40}$ we have
\begin{equation}
\bn{X_{G_1}}^{\mL(\alpha_0),D_q^1}_{r,q, 1-2\theta} \lesssim
\frac{\e_0}{\alpha_0},\quad \bn{X_{G_1^R}}^{D_q^1}_{r,p-4} \lesssim
h\frac{\e_0}{\alpha_0^2}.
\end{equation}
In particular by \eqref{esisto.3} and \eqref{Lip:semnorm} we get \eqref{resti.6.0}.
We now remark that, by Lemma \ref{flowmap.perdo}, with $E=\cP^{a,p}$,
$\tilde E=\cP^{a,p-4}$, $X_1\vcentcolon= X_{G_1}$, $X_2\vcentcolon= X_{G^{\tt NLS}_1}$ we
have the second of \eqref{resti.6.1}. 
By proceeding essentially as in the proof of Theorem
\ref{Teorema0} (see also the proof of Equation (13) in
  \cite{poschel1996kam}) we have
\begin{equation}
\bn{X_{(\ref{int})}}^{\mL(\alpha_0),D_q^1}_{\eta_0 r_0,q, 1-2\theta} \lesssim \frac{\e_0^2}{\alpha_0 \eta_0^2}, 
\end{equation}
while \eqref{restoF} yields
\begin{equation}
\bn{X_{(\ref{diff})}}^{\mL(\alpha_0),D_q^1}_{\eta_0 r_0,q, 1-2\theta} \lesssim \eta_0 \e_0.
\end{equation}
By choosing $\eta_0=\e_0^{1/3}/\alpha_0^{1/3}$ the two estimates above have the same
order. Then we have the first estimate in \eqref{lambro.1.0}.

We prove now the second of \eqref{cond:OmegaR.1}, the first being
  simpler. {From equation \eqref{P0quadro} we have} (recall \eqref{PolR})
  \[
(\Delta_{1})_j=(P_{0})_{00\delta_j\delta_{j}},\quad (\Delta_{1}^{\tt
    NLS})_j=(P_{0}^{\tt NLS})_{00\delta_j\delta_{j}},
  \]
  so that
  \begin{equation}\label{ident:diff}
(\Delta_{1})_j-(\Delta_{1}^{\tt
    NLS})_j
  =(P_{0}^R)_{00\delta_j\delta_{j}}=\frac{1}{2\pi}\int_{\T^N}\frac{\partial}{\partial
  z_j}(X_{P^R_0})_{z_j}(x,0,0,0)dx.
  \end{equation}
  On the other hand, recalling the definition of $\delta_j$ in \eqref{Kronecker}, we have 
\begin{equation}\label{ident:1}
\frac{\partial}{\partial
  z_j}(X_{P^R_0})_{z_j}= d(X_{P^R_0})_{z_j}\delta_j.
\end{equation}
To estimate this quantity we remark that for all $v\in\cP^{a,p}$, we have
\begin{align*}
\left\|d(X_{P^R_0})_{z}(x,0,0,0)v\right\|_{a,p-4}&=
\left[\sum_{j}\jbs{j}^{2p-8}e^{2a|j|}\left|d(X_{P^R_0})_{z_j}(x,0,0,0)v\right|^2
  \right]^{1/2}&
\\
&\leq
\left\| d(X_{P^R_0})_{z}(x,0,0,0)
\right\|_{\cP^{a,p},\cP^{a,p-4}}\left\|v\right\|_{a,p}& \\
 &\leq
\frac{1}{r_0}\sup_{D_p(2,r)}\left\|(X_{P^R_0})_{z}(x,y,z,\bar z)
\right\|_{a,p-4} \left\|v\right\|_{a,p}&
\\
&\leq \left\|X_{P^R_0}
\right\|_{r_0,p-4}^{D_p(2,r)} \left\|v\right\|_{a,p}\sleq \e_0h
\left\|v\right\|_{a,p},&
\end{align*}
{where we used \eqref{2Stima} and \eqref{def:e}. This gives
\[
\jbs{j}^{p-4}e^{a|j|} \left|d(X_{P^R_0})_{z_j}\delta_j\right| \lesssim \left\|d(X_{P^R_0})_{z}(x,0,0,0)\delta_j \right\|_{a,p-4} \lesssim  \e_0 h
\left\|\delta_j\right\|_{a,p}\sleq \jbs{j}^p e^{|j|a} \e_0h.
\]
Then, the second of \eqref{cond:OmegaR.1} follows from the above estimate combined with \eqref{ident:diff} and \eqref{ident:1}.}
The estimates \eqref{smallcond2.1} and \eqref{smallcond2.0} are obtained similarly, by \eqref{Stima0} and \eqref{1Stima}. Finally we have to prove the first estimate in \eqref{resti.6.1}. { We point out that the function
$P_1^R$ to be estimated, mutatis mutandis is given by the sum of terms similar to
\eqref{*.1}-\eqref{**.2}. We now bound the vector
field of one of its terms, which is $\left\{P_0^{\tt
  NLS}+P_0^{\ng},G_1^R\right\}$. We have}
\begin{align*}
\bn{X_{\left\{P_0^{\tt
    NLS}+P_0^{\ng},G_1^R\right\}}}^{D^1_p}_{r_0,p-4} & \leq
 \bn{dX_{P_0^{\tt
  NLS}+P_0^{\ng}
  }}^{D_{p-4}^1}_{\cP^{a,p-4},\cP^{a,p-4}}\bn{X_{G^R_1}}^{D^1_p}_{r_0,p-4}&
   \\
&\quad +
  \bn{dX_{G_1^R
  }}^{D_p^1}_{\cP^{a,p},\cP^{a,p-4}} \bn{X_{P_0^{\tt
      NLS}+P_0^{\ng}}}^{D^1_p}_{r_0,p}&
  \\
 & \sleq \frac{1}{\sigma_0}\bn{X_{P_0^{\tt
  NLS}+P_0^{\ng}
  }}^{D^0_{p-4}}_{r_0,p-4}
  \bn{X_{G^R_1}}^{D^1_p}_{r_0,p-4}& \\
& \quad+\frac{1}{\sigma_0}
 \bn{X_{G_1^R
  }}^{D^0_p}_{r_0,p-4}\bn{X_{P_0^{\tt
      NLS}+P_0^{\ng}}}^{D^1_p}_{r_0,p}& 
\\
& \sleq
 \frac{1}{\sigma_0}\e_0 h\frac{\e_0}{\alpha_0},&
\end{align*}
where we used the Cauchy estimate to estimate the differential and the
fact that any point in $D^1_q$ is at a distance of order $\sigma_0$,
in the norm of the phase space (with parameter $r_0$), from the boundary of $D^0_q$. Finally
to get the desired estimate just use the fact that, for any vector
field $Y$, $\left\|Y\right\|_{\eta_0 r_0}\sleq \left\|Y\right\|_{r_0}/\eta_0^2$ holds pointwisely.

Finally, the measure estimate \eqref{first:mestim} follows from \eqref{meas.finalNLS} and \eqref{meas.final}.
\end{proof}
We now start the KAM iteration. In order to state the main result,
  we define
\begin{equation}
  \label{eps1}
\e_1\vcentcolon= \left(\frac{\e_0}{\alpha_0}\right)^{\frac{1}{3}}\e_0.
  \end{equation}
We take a parameter 
\begin{equation}\label{alpha1}
0<\alpha_1<\alpha_0,
\end{equation}
{which at the end of the procedure
  will be taken to be a function of $r_0$. Given $C_1>0$ and
  $K_1>0$ we define,
\begin{align}
  \label{cond:param}
    \sigma_{{\nu}+1}\vcentcolon= \frac{\sigma_{\nu}}{2},\quad \nu\geq 0\,,
    \\
  \alpha_{\nu}\vcentcolon= \frac{\alpha_1}{2}(1+2^{-{\nu}+1}),\quad
  K_{\nu}\vcentcolon= 2^{\nu-1}K_1,\quad \e_{{\nu}+1}\vcentcolon= \frac{C_1\e_{\nu}^{4/3}}{(\alpha_{\nu}\sigma_{\nu}^{\mu})^{1/3}},    
 \quad \nu\geq 1\,,
  \\
  \eta_{\nu}^3\vcentcolon= \frac{\e_{\nu}}{\alpha_{\nu}
    \sigma_{\nu}^{\mu}},\quad    s_{{\nu}+1}\vcentcolon= s_{\nu}-5\sigma_{\nu},\quad
        \nu\geq 1\,,
 \\
 r_{{\nu}+1}\vcentcolon= \eta_{\nu} r_{\nu},\quad \nu\geq 0\,,
 \\
    D^{\nu}_{(q,\beta)}\vcentcolon= D_{(q,\beta)}(s_{\nu},r_{\nu}),\quad
    \nu\geq 1\,.     
\end{align}

\medskip

We have the following.

\begin{lemma}\label{iterativo}
There exist $h_*>0$, $C_*>0$ and $\varrho_1>0$ such that if 
\begin{equation}
  \label{piccolezza}
C_1>C_*,\quad\frac{\e_1}{\alpha_1}<\varrho_1,\quad K_1^{\tau+1}>\frac{1}{\varrho_1}
  \end{equation}
the following holds. 

 For any $h \in (0,h_*)$ and for all $\nu\geq 1$,
  there exists a measurable set $\Xi_{\nu+1}\subset\Xi_0$
and two symplectic maps $\Phi_{\nu}, \,\Phi_{\nu}^{\tt NLS}: D_{q}^{\nu+1}
\times \Xi_{\nu+1} \to D_{q}^1$, for $q=p$ and $q=p-4$, Lipschitz in the variable $\xi$ and analytic in the
variables $(x, y, z, \bar{z})$
such that 
\begin{align*}
  (N_1^{\tt NLS}+P_1^{\tt NLS})\circ\Phi_{\nu}^{\tt NLS}=N_{\nu+1}^{\tt NLS}+P_{\nu+1}^{\tt NLS},
\end{align*}
where for $\nu \geq 1$ 
\begin{align*}
&N_{\nu}=\Lambda^h+N_{\nu}^{\tt NLS}+N_{\nu}^R,& \\
&{N}_{\nu}=\langle\omega_{\nu},y\rangle+\langle\Omega_{\nu},
z\bar z\rangle, \qquad N_{\nu}^{\tt NLS}=\langle\omega_{\nu}^{\tt
  NLS},y\rangle+\langle\Omega_{\nu}^{\tt NLS},z\bar z\rangle.&
\end{align*}
The frequency maps are lipeomorphisms on $\Xi_0$ of the form 
\begin{align}
\label{lip.fre}
&\omega_\nu(\xi,h)\vcentcolon= \omega_0(\xi,h)+\delta_\nu(\xi,h),& \quad & \omega_\nu^{\tt NLS} (\xi)\vcentcolon= \omega_0^{\tt NLS}+\delta_\nu^{\tt NLS}(\xi),&
  \\
\label{lip.fre.2}
&\Omega_\nu(\xi,h)\vcentcolon= \Omega_0(\xi,h)+\Delta_\nu(\xi,h),& \quad
&\Omega^{\tt NLS}_\nu(\xi)\vcentcolon= \Omega_0^{\tt NLS}+\Delta_\nu^{\tt
  NLS}(\xi),&
\\
\label{omega:R:it}
&\omega_\nu^R(\xi,h)\vcentcolon= \omega_0^R(\xi,h)+\delta_\nu(\xi,h)-\delta_\nu^{\tt NLS}(\xi),& \quad &\Omega_\nu^R(\xi,h)\vcentcolon= \Omega_0^R(\xi,h)+\Delta_\nu(\xi,h)-\Delta_\nu^{\tt NLS}(\xi)&
\end{align}
and the following estimates hold
\begin{align}
\label{6.27}
&|\delta_{\nu}|_{\infty} \sleq
\sum_{n=0}^{\nu-1}\e_n\sleq 
\e_0+\e_1, \qquad |\delta_{\nu}^{\tt
  NLS}|_{\infty}  \sleq
\sum_{n=0}^{\nu-1}\e_n\sleq \e_0+\e_1,
\\
\label{6.28}
& |\delta_{\nu}|_{\infty}^{\cL} \sleq 
\frac{ \e_0}{\alpha_0}+\frac{\e_1}{\alpha_1},
  \qquad |\delta_{\nu}^{\tt
  NLS}|_{\infty}^{\cL}  \sleq
\frac{ \e_0}{\alpha_0}+\frac{\e_1}{\alpha_1}, 
\\
\label{smallcond2.2}
&|\Delta_{\nu}|_{\infty, 1-2 \theta} \sleq \sum_{n=0}^{\nu-1}\e_n\sleq \e_0+\e_1,
\qquad |\Delta_{\nu}^{\tt  NLS}|_{\infty}
\sleq \sum_{n=0}^{\nu-1}\e_n\sleq \e_0+\e_1,
\\
\label{smallcond2.2.1}
& |\Delta_{\nu}|_{\infty, 1-2 \theta}^{\cL} \sleq 
\frac{ \e_0}{\alpha_0}+\frac{\e_1}{\alpha_1},
  \qquad |\Delta_{\nu}^{\tt
  NLS}|_{\infty}^{\cL}  \sleq
\frac{ \e_0}{\alpha_0}+\frac{\e_1}{\alpha_1}, 
\\
\label{cond:OmegaR.2}
&|\delta_{\nu}-\delta_{\nu}^{\tt NLS}|_{\infty} \sleq h
  \left(\e_0+\sum_{n=1}^{\nu-1}\frac{\e_n}{\alpha_{n-1}} \right) \sleq
h\bigg(\e_0+\frac{\e_1}{\alpha_0}\bigg), \quad 
\\
\label{6.31}
&|(\Delta_{\nu})_j-(\Delta_{\nu}^{\tt NLS})_j| \sleq h \, j^4  
  \left(\e_0+\sum_{n=1}^{\nu-1}\frac{\e_n}{\alpha_{n-1}} \right) \sleq
hj^4
\bigg({\e_0}+\frac{\e_1}{\alpha_0}\bigg),\quad \forall
j\in \cS^c,
\end{align}
where the omitted constants are independent of  $\nu$. The sets $\Xi_{\nu}$ are defined as
\begin{equation}\label{defxin}
\Xi_{{\nu}+1}\vcentcolon= \Xi_{\nu} \setminus \bigg(\bigcup_{|k|_1 >K_{\nu}} \mR^0_k(h;\alpha_{\nu},\delta_{\nu},\Delta_{\nu}) \cup \mR^0_k(0;\alpha_{\nu},\delta_{\nu}^{\tt NLS},\Delta_{\nu}^{\tt NLS}) \bigg).
\end{equation}
If the parameter $\tau$ of \eqref{res1} and \eqref{R:NLS}
   fulfills
  \begin{equation}\label{tau.KAM.2}
\tau_*\vcentcolon=  \frac{\tau(1-2\theta)}{3}-5>0
  \end{equation}
then
\begin{equation}\label{second:mestim}
\frac{ \big| \Xi_0 \setminus \Xi_{\nu} \big| }{ \big| \Xi_0 \big| } \leq
C(\theta) \bigg(\frac{ \alpha_0^{\frac{2}{3-\theta}}}{r_0^{4/3}}+\frac{ \alpha_1^{2/3}}{r_0^{4/3}} \sum_{n=1}^\nu \frac{1}{2^{\tau_*n}} \bigg) ,
\end{equation}
with $C(\theta)>0$ diverging as $\theta \to 0$.

For any  $\nu \geq 1$ the perturbations admit a decomposition
\begin{equation*}
 P_{\nu}=P_{\nu}^{\tt NLS}+P_{\nu}^{\mathcal{NG}}+P_{\nu}^R,
\end{equation*}
with $P_{\nu}^\ng= \Pi_\ng P_\nu^\ng$ and 
\begin{align}
& \bn{X_{P_{\nu}}}^{\mL(\alpha_{\nu}),D_q^{\nu}}_{r_{\nu},q,1-2\theta},\, \, 
\bn{X_{P_{\nu}^{\ng}}}^{\mL(\alpha_{{\nu}}),D_q^{\nu}}_{r_{\nu},q}, \, \, \bn{X_{P^{\tt NLS}_{\nu}}}^{\mL(\alpha_{{\nu}}),D_q^{\nu}}_{r_{\nu},q} \leq \e_{\nu},&
\\
\label{lambro.1.1}
 &\bn{X_{P_{\nu}^R}}^{D_p^{\nu}}_{r_{\nu},p-4} \leq \frac{\e_{\nu} h}{\alpha_{\nu}}.& 
\end{align}

Moreover, for any $\nu \geq 1$ the change of coordinates admit a decomposition
\begin{equation*}
\Phi_{{\nu}}=\Phi_{{\nu}}^{\tt NLS}+\Psi_{{\nu}}^R, 
\end{equation*}
and the following estimates hold 
\begin{align*}
  & \bn{\uno-\Phi_{{\nu}}}^{\mL(\alpha_1),D_q^{{\nu}+1}}_{r_{\blu{1}},q}\sleq \sum_{n=1}^\nu\frac{\e_n}{\alpha_n\sigma_n^\mu}\sleq
    \frac{\e_1}{\alpha_1}, \quad 
\bn{\uno-\Phi_{{\nu}}^{\tt NLS}}^{\mL(\alpha_1),D_q^{{\nu}+1}}_{r_1,q}\sleq
 \sum_{n=1}^\nu\frac{\e_n}{\alpha_n\sigma_n^\mu}\sleq
 \frac{\e_1}{\alpha_1},  &
 \\
& \bn{\Psi_{{\nu}}^R}^{D_p^{{\nu}+1}}_{r_1,p-4} \lesssim
 h\left(\sum_{n=1}^{\nu}\frac{\e_n}{\alpha_n^2\sigma_n^\mu}\right)\sleq
{h \bigg(\frac{\e_1}{\alpha_1^2}\bigg). }&
\end{align*}
\end{lemma}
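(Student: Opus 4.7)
The proof proceeds by induction on $\nu$, with the data at level $\nu=1$ furnished by Lemma~\ref{Lemma:step1}. To iterate from $\nu$ to $\nu+1$, I would solve the pair of cohomological equations
\[
\{N_\nu,G_{\nu+1}\}+\{P_\nu\}=[P_\nu],\qquad \{N^{\tt NLS}_\nu,G^{\tt NLS}_{\nu+1}\}+\{P^{\tt NLS}_\nu\}=[P^{\tt NLS}_\nu]
\]
on $\xi\in\Xi_{\nu+1}$ by invoking Lemma~\ref{solhomo} with parameters $\theta_0=2\theta$, $\theta_1=0$, $\alpha=\alpha_\nu$, $s=s_\nu$ and $\sigma=\sigma_\nu$. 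The Mel'nikov conditions \eqref{dioph}--\eqref{dioph:NLS} at level $\nu$ hold by the very construction of $\Xi_{\nu+1}$ in \eqref{defxin}; the bound $|\Delta_\nu|_{\infty,1-2\theta}\lesssim \e_\nu\lesssim \e_0$ required by the Lemma is inherited from the previous step via \eqref{smallcond2.2}; and hypothesis \ref{ip:nongauge} of Lemma~\ref{solhomo} is supplied by the non-Gauge bound \eqref{bound:NGhom} of Theorem~\ref{measure_estimate}, once $K_\nu=2^{\nu-1}K_1$ is bounded by $\kappa_*c^{1/2}$ on the modes touched at step $\nu$ (this is arranged by \eqref{piccolezza} together with the truncation $|k|_1>K_n$ in \eqref{defxin}, since the solution of the cohomological equation only uses modes $|k|_1\le K_\nu$). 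The generator splits as $G_{\nu+1}=G^{\tt NLS}_{\nu+1}+G^R_{\nu+1}$ with bounds \eqref{GG}, \eqref{GNLS}, \eqref{4.13a}. I would then define $\Phi_\nu\vcentcolon= \Phi_{\nu-1}\circ \phi^1_{G_{\nu+1}}$ (with $\Phi_0\vcentcolon= \uno$), well-defined on $D_q^{\nu+1}$ by Lemma~\ref{flusso.0}, and similarly for $\Phi^{\tt NLS}_\nu$.

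The new perturbation is
\[
P_{\nu+1}=\int_0^1\{(1-t)[P_\nu]+t\{P_\nu\},G_{\nu+1}\}\circ \phi^t_{G_{\nu+1}}\,dt+(P_\nu-\{P_\nu\})\circ \phi^1_{G_{\nu+1}},
\]
with the analogous NLS formula. I would set $P^{\ng}_{\nu+1}\vcentcolon= \Pi_{\ng}P_{\nu+1}$, which is Gauge-null by Lemma~\ref{lem:gauge2}, and $P^R_{\nu+1}\vcentcolon= P_{\nu+1}-P^{\tt NLS}_{\nu+1}-P^{\ng}_{\nu+1}$. The standard KAM estimate (see~(13) of~\cite{poschel1996kam}) combined with the truncation bound \eqref{restoF} gives $\bn{X_{P_{\nu+1}}}^{\mL(\alpha_\nu),D_q^{\nu+1}}_{r_{\nu+1},q,1-2\theta}\lesssim \eta_\nu\e_\nu+\e_\nu^2/(\alpha_\nu\sigma_\nu^\mu\eta_\nu^2)$; the choice $\eta_\nu=(\e_\nu/(\alpha_\nu\sigma_\nu^\mu))^{1/3}$ balances the two contributions and yields the quadratic recursion $\e_{\nu+1}=C_1\e_\nu^{4/3}/(\alpha_\nu\sigma_\nu^\mu)^{1/3}$, whose super-exponential convergence is guaranteed by \eqref{piccolezza}. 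For $P^R_{\nu+1}$ the key observation is that every Poisson bracket contributing to the $R$-remainder involves at least one $R$-factor $P^R_\nu$ or $G^R_{\nu+1}$; combining Cauchy estimates on differentials of flows with the hybrid bounds in $\cP^{a,p}$ versus $\cP^{a,p-4}$ in the same manner as the cross-term analysis displayed after \eqref{termine:lungo} in the proof of Theorem~\ref{Teorema0}, one gains a factor of $h/\alpha_\nu$, which yields \eqref{lambro.1.1}.

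The frequency increments are read off from $[P_\nu]$: Cauchy estimates on $X_{P_\nu}$ give $|\delta_{\nu+1}-\delta_\nu|_\infty^{\mL(\alpha_\nu)}\lesssim \e_\nu$ and likewise for $\Delta_{\nu+1}-\Delta_\nu$ with the weight $\tw_j^{1-2\theta}$, from which \eqref{6.27}--\eqref{smallcond2.2.1} follow by summing a super-exponentially convergent series. The more delicate bounds \eqref{cond:OmegaR.2}--\eqref{6.31} would be obtained by applying the same Cauchy argument to $[P^R_\nu]$ as in \eqref{ident:diff}--\eqref{ident:1}: this produces $|\delta_{\nu+1}-\delta_\nu-(\delta^{\tt NLS}_{\nu+1}-\delta^{\tt NLS}_\nu)|_\infty\lesssim h\e_\nu/\alpha_\nu$ (with an extra $j^4$ for the $\Delta$-differences, coming from the $\cP^{a,p}\to \cP^{a,p-4}$ regularity loss); telescoping and inheriting the $\nu=1$ base case $h\e_0$ from \eqref{cond:OmegaR.1} yields the stated sums. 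The splitting $\Phi_\nu=\Phi^{\tt NLS}_\nu+\Psi^R_\nu$ would be obtained from Lemma~\ref{flowmap.perdo} applied to $\phi^1_{G_{\nu+1}}$ versus $\phi^1_{G^{\tt NLS}_{\nu+1}}$, then telescoped along the composition using the group property~\eqref{prop:group}.

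For the measure estimate \eqref{second:mestim} I would note that
\[
|\Xi_0\setminus \Xi_{\nu+1}|\le |\Xi_0\setminus \Xi_1|+\sum_{n=1}^{\nu}\sum_{|k|_1>K_n}\Bigl(|\cR^0_k(h;\alpha_n,\delta_n,\Delta_n)|+|\cR^{\tt NLS}_k(0;\alpha_n,\delta^{\tt NLS}_n,\Delta^{\tt NLS}_n)|\Bigr).
\]
Applying Theorem~\ref{measure_estimate} with $\theta_1=0$, $\theta_0=2\theta$ (admissible since \eqref{smallcond2.2} guarantees $|\Delta_n|_{\infty,1-2\theta}\lesssim \e_n\ll \alpha_n$) gives $|\cR^0_k(\alpha_n)|\lesssim \alpha_n^{2/3}\langle k\rangle^{-\tau(1-2\theta)/3+5}\,r_0^{4(N-1)/3}$; summing over $|k|_1>K_n=2^{n-1}K_1$ produces a geometric decay factor $2^{-n\tau_*}$ with $\tau_*>0$ by \eqref{tau.KAM.2}, and combining with \eqref{first:mestim} yields \eqref{second:mestim}. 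The main technical hurdle throughout is the consistent tracking of the three-way decomposition with a uniform-in-$h$ smallness factor on the $R$-part: at each step one must verify that every Poisson bracket producing a new $R$-remainder is paired with at least one $R$-factor of order $h$, and that the accompanying Cauchy bounds in the hybrid spaces $\cP^{a,p}$ and $\cP^{a,p-4}$ do not spoil the powers of $\alpha_\nu$. This was already the core of the analysis in Theorem~\ref{Teorema0} and Lemma~\ref{Lemma:step1}, and the KAM iteration simply reproduces the same pattern at each step.
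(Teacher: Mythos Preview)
Your proposal follows essentially the same route as the paper's proof: induction on $\nu$, solving the cohomological equation via Lemma~\ref{solhomo}, the standard P\"oschel iteration for the perturbation and flow bounds, the identity \eqref{ident:diff}--\eqref{ident:1} for the $R$-frequency increments, a telescoping argument for $\Psi_\nu^R$, and Theorem~\ref{measure_estimate} for the measure. The paper's own proof is very terse, pointing to \cite{poschel1996kam} for everything except the $\Psi_\nu^R$ estimate, which it treats by writing $\Phi_\nu=\phi_1\circ\cdots\circ\phi_\nu$ and bounding $a_\nu\vcentcolon=\bn{\Psi_\nu}$ via $a_\nu\le a_{\nu-1}+C\bn{\psi_\nu^R}$, exactly the telescoping you sketch.

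There is one genuine confusion in your write-up. You claim that ``the solution of the cohomological equation only uses modes $|k|_1\le K_\nu$'' and that hypothesis~\ref{ip:nongauge} requires $K_\nu\le\kappa_*c^{1/2}$. Neither is correct. The cohomological equation is solved for \emph{all} Fourier modes; the point of the truncation $|k|_1>K_\nu$ in \eqref{defxin} is that the Mel'nikov conditions for $|k|_1\le K_\nu$ are \emph{inherited} from the previous step (the frequency shift is $O(\e_\nu)$, and the third inequality in \eqref{piccolezza} ensures this does not spoil the small-$k$ bounds at the slightly smaller $\alpha_\nu$---this is the standard P\"oschel mechanism the paper cites). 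Hypothesis~\ref{ip:nongauge}, by contrast, is a structural bound on the non-Gauge divisors that holds for \emph{all} $\xi\in\Xi_0$ and all $|k|_1\le\kappa_*c^{1/2}$ by Theorem~\ref{measure_estimate}; it does not interact with $K_\nu$ at all. This misreading does not break your argument, since the hypothesis is in any case satisfied, but you should correct the justification.
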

\begin{remark}
  \label{6.3}
By \eqref{defxin} the non resonant set satisfies
\begin{equation}
  \label{res.vero}
\Xi_0 \setminus\Xi_{n} \subseteq  \bigcup_{\nu=0}^{n} \bigcup_{|k|_1 >K_\nu} \bigg(\mathcal{R}^{\theta_\nu}_{k}(h; \alpha_\nu,\delta_\nu,\Delta_\nu)  \cup \mathcal{R}_{k}(0; \alpha_\nu,\delta_\nu^{\tt NLS},\Delta_\nu^{\tt NLS}) \bigg),
  \end{equation}
with $K_0=0$, $\theta_0\vcentcolon= \theta$ introduced in Lemma \ref{Lemma:step1}, and 
$\theta_\nu=0$ for any $\nu \geq 1$. 
\end{remark}

\begin{proof}
The proof is a slight modification of the proof of the iterative Lemma, Section $4$
\cite{poschel1996kam}. Here we only give the details of the proof of the new part.

{The definition \eqref{defxin} is due to the fact that by the choice 
of the various parameters and the third of conditions \eqref{piccolezza}, the frequencies at step $\nu+1$ automatically fulfill the Melnikov conditions for $|k|_1 \leq K_\nu$ (see \cite{poschel1996kam} Section 3).} By combining \eqref{res.vero}  with \eqref{meas.finalNLS}, \eqref{meas.final} and \eqref{first:mestim}, we have \eqref{second:mestim}.

Concerning the estimates, we only prove the one of $\Psi^R_{\nu+1}$:
 all the others are obtained following the procedure used to
 prove the Iterative Lemma in \cite{poschel1996kam} and Lemma
   \ref{Lemma:step1} ( see also \cite{berti2013kam}). We recall that the transformation $\Phi_{\nu}$ and $\Phi_{\nu}^{\tt NLS}$
  are constructed as  
\begin{align*}
\Phi_{\nu}\vcentcolon= \phi_{1}\circ \phi_{2}\circ...\circ\phi_{\nu},
\qquad
\Phi_{\nu}^{\tt NLS}\vcentcolon= \phi_{1}^{\tt NLS}\circ \phi_{2}^{\tt NLS}\circ...\circ\phi^{\tt NLS}_{\nu},
\end{align*}
where $\phi_{\nu}$ and $\phi_{\nu}^{\tt NLS}$ are the flows of some
auxiliary Hamiltonians $G_{\nu+1}$ and $G_{\nu+1}^{\tt NLS}$ obtained by solving
the cohomological equation through
Lemma \ref{solhomo}. Then, by a variant of the proof of \eqref{resti.6.1}, we have
\begin{equation*}
\phi_{\nu}=\phi_{\nu}^{\tt NLS}+\psi_{\nu}^R,
\end{equation*}
with
\begin{equation*}
 \bn{\psi_{{\nu}}^R}^{D_p^{{\nu}+1}}_{r_{{\nu}+1},p-4} \lesssim \frac{\e_{\nu}
  h}{\alpha_{\nu}^{2}\sigma_\nu^\mu}.
  \end{equation*}
It follows
\[
  \bn{\Psi_{{\nu}}}_{r_1,p-4}^{D_p^{{\nu}+1}}
 \leq
  \bn{\Phi_{\nu-1}\circ\phi_{{\nu}}-\Phi_{\nu-1}\circ\phi_{{\nu}}^{\tt NLS}}_{r_1,p-4}^{D_p^{{\nu}+1}}+
\bn{\Phi_{\nu-1}\circ\phi_{{\nu}}^{\tt NLS}-\Phi_{\nu-1}^{\tt NLS}\circ\phi_{{\nu}}^{\tt NLS}
}_{r_1,p-4}^{D_p^{{\nu}+1}}
\]
and 
\begin{align*}
&\bn{\Phi_{\nu-1}\circ\phi_{{\nu}}-\Phi_{\nu-1}\circ\phi_{{\nu}}^{\tt NLS}}_{r_1,p-4}^{D_p^{{\nu}+1}}  \leq \bn{\diff \Phi_{\nu-1}}_{r_1,r_{\nu}}^{D^\nu_{p-4}} \bn{\phi_{\nu}-\phi_{\nu}^{\tt NLS}}_{r_{\nu},p-4}^{D_p^{\nu+1}}, &
\\
&\bn{\Phi_{\nu-1}\circ\phi_{{\nu}}^{\tt NLS}-\Phi_{\nu-1}^{\tt NLS}\circ\phi_{{\nu}}^{\tt NLS}
}_{r_1,p-4}^{D_p^{{\nu}+1}}  \leq \bn{\Phi_{\nu-1}-\Phi_{\nu-1}^{\tt NLS}
}_{r_1,p-4}^{D_p^{{\nu}}},&
\end{align*}
where $\bn{\diff \Phi_{\nu-1}}_{r_1,r_{\nu}}^{D^\nu_{p-4}}$ is the
sup over $D^\nu_{p-4}$ of the norm of $\diff \Phi_{\nu-1}$ as an operator
from the phase space endowed by the norm with parameter $r_\nu$ to the phase
space endowed by the norm with parameter $r_1$.

Therefore we define iteratively a sequence 
\begin{align*}
&b_{\nu}\vcentcolon= C\bn{\psi_{{\nu}}^R}^{D_p^{{\nu}+1}}_{r_{{\nu}+1},p-4},&
\quad &\nu\geq 1&\\
&a_{{\nu}}\vcentcolon= \bn{\Psi_{{\nu}}}_{r_1,p-4}^{D_p^{{\nu}+1}},&\quad
 &\nu\geq 1,\quad a_0 \vcentcolon= 0,&
\end{align*}
where $C$ is a positive constant such that for any $\nu \geq 1$ we have  $\bn{\diff
    \Phi_{\nu}}_{r_1,r_{\nu}}^{D^{\nu}_{p-4}} \leq C$.
Then for any $\nu \geq 1$
$$
a_{{\nu}}\leq b_{\nu}+a_{\nu-1}
$$
and therefore 
\begin{align}
\bn{\Psi_{\nu}}_{r_1,p-4}^{D_p^{{\nu}+1}}\leq 
\sum_{n=1}^\nu b_n\sleq
\sum_{n=1}^\nu
  \bn{\psi_n^R}_{r_{n+1},p-4}
  ^{D_p^{n+1}}\sleq  \frac{h\e_1}{\alpha_1^2}\,.
\end{align}
\end{proof}
As a consequence of the iterative lemma, we have the following KAM theorem.

\begin{theorem}\label{TeoremaGenerale} {There exist $h_*,r_*, \varrho_*, \tau_*, s>0$} such that if
    \begin{align}
      \label{condizione}
\frac{\e_0}{\alpha_0}+\frac{\e_1}{\alpha_1}<\varrho_*
    \end{align}
 then, for any $h \in (0,h_*)$ and any $r_0\in(0, r_*)$ there exist a Cantor
  set $\Xi_{\infty} \subset \Xi_0$, a Lipeomorphism $\omega_{\infty}:\Xi_0 \to
  \mathbb{R}^N$, with Lipschitz constant
  uniformly bounded with respect to $h$ and $r_0$, and also a family of torus embeddings
\[
\Phi: \mathbb{T}^N \times \Xi_{\infty} \to  \mathcal{P}^{a,p},
\]
Lipschitz in $\xi$ and
such that for all $\xi \in \Xi_{\infty}$, the map $\Phi $ restricted to $\mathbb{T}^N \times \{\xi\}$ is a real analytic
embedding of a rotational torus with frequency
$\omega_{\infty}(\xi)$, for the Hamiltonian
$H=H(\xi)$ in \eqref{H}. In particular, the embedding is real
analytic on $D(s/2)\vcentcolon= \{|\Im x| < s/2, {y=z=0}\}$, and fulfills 
\begin{equation}
  \label{stimePhi}
\bn{ \Phi- \textnormal{id} }^{D(s/2)}_{r_0, p} \lesssim
\bigg(\frac{\e_0}{\alpha_0}+\frac{\e_1}{\alpha_1}\bigg), \qquad \bn{ \Phi- \textnormal{id} }^{\mL, D(s/2)}_{r_0, p} \lesssim
\bigg(\frac{\e_0}{\alpha_0^2}+\frac{\e_1}{\alpha_1^2}\bigg).
\end{equation}
The frequency 
\begin{equation}\label{stim:th:omega}
|\omega_{\infty} - \omega_0|_{\infty} \lesssim (\e_0+\e_1), \qquad |\omega_{\infty} - \omega_0|^\mL_{\infty} \lesssim \bigg(\frac{\e_0}{\alpha_0}+\frac{\e_1}{\alpha_1}\bigg),
\end{equation}
{with $\omega_0$ defined in \eqref{Oomega}}. Moreover, there exist a Lipeomorphism $\omega_{\infty}^{\tt NLS}:\Xi_0 \to
  \mathbb{R}^N$ independent of $h$, and a Lipschitz map $\omega^{R}_{\infty}:\Xi_0 \to
  \mathbb{R}^N$ such that 
\begin{equation}\label{stimaresto}
\omega_{\infty}=h^{-1}+\omega_{\infty}^{\tt NLS}+ \omega^{R}_{\infty}, \qquad |{\omega_0^R}-\omega^{R}_{\infty}|_{\infty}\lesssim h  \bigg(\frac{\e_0}{\alpha_0}+\frac{\e_1}{\alpha_1}\bigg),
\end{equation}
{with $\omega_0^R$ defined in \eqref{OomegaR}}. In addition, there exists a real analytic map $\Psi^R$ fulfilling 
\begin{equation}\label{decomp:emb}
\Psi^R: D(s/2) \times \Xi_{\infty} \to  \mathcal{P}^{a,p}, \qquad \bn{\Psi^R}_{r_0, p-4}^{D(s/2)}\lesssim h\,  \bigg(\frac{\e_0}{\alpha_0^2}+\frac{\e_1}{\alpha_1^2}\bigg),
\end{equation}
and $\Phi= \Phi^{\tt NLS}+ \Psi^R,$ such that the restriction
$\Phi^{\tt NLS} \big|_{\mathbb{T}^n \times \{ \xi \}}$ is a real
analytic embedding of a rotational torus with
frequencies $\omega_{\infty}^{\tt NLS}(\xi)$ for the
Hamiltonian $H^{\tt NLS}$, and they satisfy estimates like \eqref{stimePhi}, \eqref{stim:th:omega}. 

\smallskip

Furthermore, for any $\tau> \tau_*$ we have 
\begin{equation}\label{final:mestim}
\frac{ \big| \Xi_0 \setminus \Xi_\infty \big| }{ \big| \Xi_0 \big| } \lesssim \bigg(\frac{ \alpha_0^{\frac{2}{3-\theta}}}{r_0^{4/3}}+\frac{ \alpha_1^{2/3}}{r_0^{4/3}}\bigg) .
\end{equation}
\end{theorem}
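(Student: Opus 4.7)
The plan is to construct $\Phi$, $\Phi^{\tt NLS}$ and $\omega_{\infty}$, $\omega_{\infty}^{\tt NLS}$ as the $\nu\to\infty$ limits of the objects supplied by the Iterative Lemma~\ref{iterativo}, applied after the preliminary step of Lemma~\ref{Lemma:step1}. First I would set $\Xi_{\infty} := \bigcap_{\nu\geq 1}\Xi_{\nu}$ with $\Xi_{\nu}$ as in \eqref{defxin} and adopt the parameter schedule \eqref{cond:param}. Under the smallness assumption \eqref{condizione} the recursion $\e_{\nu+1}\simeq C_1\,\e_{\nu}^{4/3}/(\alpha_{\nu}\sigma_{\nu}^\mu)^{1/3}$ with $\alpha_{\nu}\geq \alpha_1/2$ and $\sigma_{\nu}$ geometric forces $\e_{\nu}/\alpha_{\nu}$ to decay super-exponentially, so that $s:=\lim_{\nu}s_{\nu}>0$ and the telescoping sums $\sum_n \e_n/(\alpha_n\sigma_n^\mu)$, $\sum_n \e_n/(\alpha_n^2\sigma_n^\mu)$, $\sum_n \e_n/\alpha_{n-1}$ converge, bounded respectively by a constant multiple of $\e_1/\alpha_1$, $\e_1/\alpha_1^2$ and $\e_1/\alpha_0$. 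These are exactly the summations already controlled in Lemma~\ref{iterativo}.

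Next I would construct the limit embeddings. Writing $\Phi_{\nu} = \Phi_{\nu-1}\circ\phi_{\nu}$, where $\phi_{\nu}$ is the time-one flow of the generator $G_{\nu}$ produced by Lemma~\ref{solhomo}, the estimate $\bn{\uno-\phi_{\nu}}\lesssim \e_{\nu}/(\alpha_{\nu}\sigma_{\nu}^{\mu})$ makes $\{\Phi_{\nu}\}$ and $\{\Phi_{\nu}^{\tt NLS}\}$ Cauchy in the Lipschitz-in-$\xi$ norm $\bn{\cdot}_{r_0,p}^{\mL(\alpha_1),D(s/2)}$ when restricted to the torus domain $D(s/2)$. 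Passing to the limit yields $\Phi$, $\Phi^{\tt NLS}$ together with the bound \eqref{stimePhi}, while setting $\Psi^R := \Phi-\Phi^{\tt NLS}$ and taking $\nu\to\infty$ in the bound for $\Psi_{\nu}^R$ of Lemma~\ref{iterativo} yields \eqref{decomp:emb}, with the overall factor $h$ preserved. Invariance under the Hamiltonian flow persists in the limit: from $H\circ\Phi_{\nu}=N_{\nu}+P_{\nu}$ and $\bn{X_{P_{\nu}}}\leq\e_{\nu}\to 0$, one reads that the dynamics on $\Phi(\T^N\times\{\xi\})$ reduces to the rigid rotation at frequency $\omega_{\infty}(\xi)$, and analogously for $H^{\tt NLS}$.

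The frequencies are treated in parallel. The sequences $\delta_{\nu}$, $\delta_{\nu}^{\tt NLS}$ are Cauchy in the Lipschitz norm on $\Xi_{\infty}$ by \eqref{6.27}--\eqref{6.28}, and their pointwise limits are extended to all of $\Xi_0$ via Lemma~\ref{estensione}. Setting $\omega_{\infty} := \omega_0 + \delta_{\infty}$, $\omega_{\infty}^{\tt NLS} := \omega_0^{\tt NLS} + \delta_{\infty}^{\tt NLS}$, $\omega_{\infty}^R := \omega_0^R + \delta_{\infty} - \delta_{\infty}^{\tt NLS}$, the estimate on $|\omega_0^R - \omega_{\infty}^R|_{\infty}$ in \eqref{stimaresto} is the limit of \eqref{cond:OmegaR.2}: the telescoping $\sum_n \e_n/\alpha_{n-1}$ converges and contributes one single factor $h$. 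The Lipeomorphism property of $\omega_{\infty}$ and $\omega_{\infty}^{\tt NLS}$ follows by a standard small-Lipschitz-perturbation argument from Theorem~\ref{Newstime}, which asserts that $\omega_0$, $\omega_0^{\tt NLS}$ are Lipeomorphisms with constants uniform in $h$, combined with $|\delta_{\infty}|_{\infty}^{\mL},|\delta_{\infty}^{\tt NLS}|_{\infty}^{\mL} \lesssim \e_0/\alpha_0 + \e_1/\alpha_1 < \varrho_*$. The measure bound \eqref{final:mestim} is then obtained by letting $\nu\to\infty$ in \eqref{second:mestim}, the geometric series $\sum_n 2^{-\tau_* n}$ converging under condition \eqref{tau.KAM.2}.

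The main obstacle I anticipate is preserving the explicit factor $h$ in both $\Psi^R$ and $\omega_{\infty}^R$ through infinitely many KAM steps. The crux is the algebraic identity
\[
\Phi_{\nu}-\Phi_{\nu}^{\tt NLS}
= \bigl(\Phi_{\nu-1}\circ\phi_{\nu} - \Phi_{\nu-1}\circ\phi_{\nu}^{\tt NLS}\bigr)
+ \bigl(\Phi_{\nu-1}-\Phi_{\nu-1}^{\tt NLS}\bigr)\circ\phi_{\nu}^{\tt NLS},
\]
already exploited in the proof of Lemma~\ref{iterativo}: iterated to a telescoping series, each summand picks up the factor $h$ either from $\phi_n - \phi_n^{\tt NLS}$ (estimated through Lemma~\ref{flowmap.perdo}) or from the inductive hypothesis on $\Phi_{n-1} - \Phi_{n-1}^{\tt NLS}$, and the super-exponential decay of $\e_n/\alpha_n^2$ ensures summability; the price is that one works throughout in the weaker space $\cP^{a,p-4}$ for the $R$-remainders, while $\Phi^{\tt NLS}$ and $\Phi$ themselves are estimated in $\cP^{a,p}$. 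An identical bookkeeping applied to the frequency corrections yields \eqref{stimaresto}.
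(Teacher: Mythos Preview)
Your proposal is correct and follows precisely the paper's intended route: the paper presents Theorem~\ref{TeoremaGenerale} as a direct consequence of the iterative Lemma~\ref{iterativo} (together with the preliminary step Lemma~\ref{Lemma:step1}) and gives no separate proof, so your limit construction is exactly the standard completion. The only point worth making slightly more explicit is that the final embedding $\Phi$ for the Hamiltonian $H_0$ in \eqref{H} is the composition of $\Phi_0$ from Lemma~\ref{Lemma:step1} with $\lim_{\nu}\Phi_{\nu}$ from Lemma~\ref{iterativo} (the latter normalize $H_1=N_1+P_1$, not $H_0$); this is what produces the $\e_0/\alpha_0$ contribution in \eqref{stimePhi}, while your telescoping sums account for the $\e_1/\alpha_1$ part.
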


 \section{Proof of main results}
\subsection{Proof of Theorem \ref{VicinanzaTh}}

First, we need to make explicit choices for $\alpha_0$ and $\alpha_1$ introduced in \eqref{alpha0}, \eqref{alpha1}. We decide to consider them as functions of $r$ as follows
\begin{equation}
  \label{scelte}
\alpha_0\vcentcolon= r_0^{a_0},\quad \alpha_1\vcentcolon= r_0^{a_1}.
\end{equation}
We are going to choose $a_0$, $a_1$ and $\theta$ in a such a way
that the smallness condition \eqref{condizione} for the applicability of the KAM theorem \ref{TeoremaGenerale} is
fulfilled for $r_0>0$ small enough. This will also imply that the
transformations $\Phi$ and $\Phi^{\tt NLS}$ are close to
identity by \eqref{stimePhi}. Then, we want the set $\Xi_{\infty}$ to have
asymptotically full measure, namely the r.h.s. of
\eqref{final:mestim} has to tend to zero as $r_0\to0$. Finally, we would like
to minimize $\Psi^{R}$ (in appropriate norm, see \eqref{decomp:emb}), which measures the distance between the NLS tori
and the KG tori.

The condition \eqref{condizione} implies
\begin{equation}
\label{cond.1.aux}
 a_0<2,\quad
a_1<\frac{8}{3}-\frac{a_0}{3},
\end{equation}
while the smallness of the resonant set \eqref{final:mestim} implies
\begin{equation}
\label{cond.2.aux}
a_0>2-\frac{2}{3}\theta,\quad
a_1>2.
\end{equation}
We now set $\theta=1/2-3 \vs$, which satisfies condition \eqref{scelte.theta} for $\vs \in (0,1/6)$. Then, we set 
\begin{equation}
  \label{scelta.4}
a_0\vcentcolon= 2-\frac{2}{3}\theta+\vs=\frac{5}{3}+3\vs,\quad a_1\vcentcolon= 2+\vs,
\end{equation}
which satisfy the restrictions \eqref{cond.2.aux} by definition. Moreover, in this setting, the second condition in \eqref{cond.1.aux} is satisfied for $\vs \in (0,1/18)$. {From this choice of the parameters combined with \eqref{eps1}, \eqref{def:e}, \eqref{Stima0}, \eqref{1Stima}, \eqref{3Stima} and \eqref{def.di.r}
\begin{align}
  \label{scelta.5}
&\frac{\varepsilon_0}{\alpha_0}\lesssim \, r_0^{\frac{1}{3}-3\vs} \lesssim\, \tR^{\frac{1}{2}-\frac{9}{2}\vs},&\quad
&\frac{\varepsilon_1}{\alpha_1}\lesssim \, r_0^{\frac{1}{9}-2\vs}\lesssim \, \tR^{\frac{1}{6}-3\vs},&
\Longrightarrow\quad
&\frac{\varepsilon_0}{\alpha_0}+\frac{\varepsilon_1}{\alpha_1} \lesssim \,\tR^{\frac{1}{6}-3\vs},& \\
 \label{scelta.6}
&\frac{\varepsilon_0}{\alpha_0^2}\lesssim \, r_0^{-\frac{4}{3}-6\vs}\lesssim \,\tR^{-2-9\vs},&\quad
&\frac{\varepsilon_1}{\alpha_1^2}\lesssim
 \, r_0^{-\frac{17}{9}-3\vs}\lesssim \,\tR^{-\frac{17}{6}-\frac{9}{2}\vs},&
\Longrightarrow\quad
&\frac{\varepsilon_0}{\alpha_0^2}+\frac{\varepsilon_1}{\alpha_1^2} \lesssim\, \tR^{-\frac{17}{6}-\frac{9}{2}\vs}.&
\end{align}}
We remark that, by the above relations, the smallness condition \eqref{condizione} is now depending on $\mathtt{R}$. We define $\mathtt{R}_*>0$ such that
\[
\mathtt{R}_*^{\frac{1}{6}-3\vs}=\varrho_*
\]
where $\varrho_*$ is given in Theorem \ref{TeoremaGenerale}. Thus, for any $\mathtt{R}\in (0, \mathtt{R}_*)$, the condition \eqref{condizione} is satisfied.

Now, we study the set of the common frequencies of KAM tori for KG and NLS equations. We introduce the function
\begin{equation}\label{def:omegahat}
\widehat{\omega}_{\infty}=\widehat{\omega}_{\infty}(c)\vcentcolon= \omega_{\infty}-c^2.
\end{equation}
 We
also define  the sets
  \begin{equation}
    \label{omegafinale}
\caW_0=\caW_0(c)\vcentcolon= \widehat{\omega}_{\infty}\big(\Xi_0\big) \cap
\omega_{\infty}^{\tt NLS}\big(\Xi_0\big) 
    ,\quad
\caW _\infty= \caW _\infty(c)\vcentcolon= \widehat{\omega}_{\infty}\big(\Xi_\infty\big) \cap
\omega_{\infty}^{\tt NLS}\big(\Xi_\infty\big),
    \end{equation}
with $\Xi_0$ defined in \eqref{Xi0} and $\Xi_\infty\subset \Xi_0$
obtained in Theorem \ref{TeoremaGenerale}.

We also define the sets
\begin{equation}
  \label{defset}
\Xi^{KG}\vcentcolon= \widehat\omega_\infty^{-1}(\caW_\infty),\quad
\Xi^{\tt NLS}\vcentcolon= (\omega_\infty^{\tt NLS})^{-1}(\caW_\infty),
  \end{equation}
which are the set of the amplitudes corresponding to an invariant torus of NLS and an invariant torus of KG with
  the same frequencies. Their measure is estimated in the following lemma. 

\begin{lemma}\label{lem:quasifinale}
There exist $\tR_*$, $h_*>0$ such that for any
    $\tR \in (0,\tR_*)$, $h \in (0,h_*)$ the following holds
  \begin{align}
    \label{7.sti.1}
\left|\Xi_0\setminus \Xi^{KG}\right|\lesssim
(h+\tR^{2+\frac{6\vs}{5+6\vs}}+ \tR^{2+\vs})\tR^{2(N-1)}, \\
 \label{7.0.sti.1}
 \left|\Xi_0\setminus \Xi^{\tt NLS}\right|\lesssim
(h+\tR^{2+\frac{6\vs}{5+6\vs}}+ \tR^{2+\vs})\tR^{2(N-1)}.
  \end{align}
Furthermore, by assuming that 
\begin{equation}\label{cond:h}
\vs  \in (0,1/18), \qquad h\lesssim \tR^{2+\vs},
\end{equation} 
the above estimates take the form 
\begin{equation}\label{7.sti.fin}
\left|\Xi_0\setminus \Xi^{KG}\right|\lesssim \tR^{2N} \tR^{\vs}, \qquad  \left|\Xi_0\setminus \Xi^{\tt NLS}\right| \lesssim \tR^{2N} \tR^{\vs}.
\end{equation}
\end{lemma}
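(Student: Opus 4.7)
The plan is to prove \eqref{7.sti.1}; the estimate \eqref{7.0.sti.1} will follow by an identical argument after swapping the roles of $\widehat{\omega}_{\infty}$ and $\omega_\infty^{\tt NLS}$, and \eqref{7.sti.fin} will follow by inserting the bound \eqref{cond:h} into \eqref{7.sti.1}. Introduce the auxiliary images $\widehat{\caW}\vcentcolon= \widehat{\omega}_\infty(\Xi_0)$ and $\caW^{\tt NLS}\vcentcolon= \omega_\infty^{\tt NLS}(\Xi_0)$, so that $\caW_0=\widehat{\caW}\cap \caW^{\tt NLS}$. Since $\widehat{\omega}_\infty:\Xi_0\to\widehat{\caW}$ is a lipeomorphism with constants uniform in $c$ and $\tR$ by Theorem \ref{TeoremaGenerale}, the identity $\Xi^{KG}=\widehat{\omega}_\infty^{-1}(\caW_\infty)$ yields
\begin{equation*}
|\Xi_0\setminus \Xi^{KG}| \;\lesssim\; |\widehat{\caW}\setminus \caW_\infty|
\;\leq\; |\widehat{\caW}\setminus \caW^{\tt NLS}| \;+\; |\caW_0\setminus \caW_\infty|.
\end{equation*}
The second term on the right is estimated directly from the Lipschitz inclusions $\caW_0\setminus \caW_\infty\subset \widehat{\omega}_\infty(\Xi_0\setminus \Xi_\infty)\cup \omega_\infty^{\tt NLS}(\Xi_0\setminus \Xi_\infty)$, giving $|\caW_0\setminus \caW_\infty|\lesssim |\Xi_0\setminus \Xi_\infty|$.

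The main new ingredient, and also the main obstacle, is the boundary layer estimate $|\widehat{\caW}\setminus \caW^{\tt NLS}|\lesssim h\,\tR^{2(N-1)}$, which is where the non-relativistic limit enters. Combining the bound $|\omega_0^R|_\infty\lesssim h$ from \eqref{stimeom0} with the estimate $|\omega_0^R-\omega_\infty^R|_\infty\lesssim h\bigl(\e_0/\alpha_0+\e_1/\alpha_1\bigr)$ from \eqref{stimaresto} and the smallness condition \eqref{condizione} yields $|\omega_\infty^R|_\infty\lesssim h$. In view of the decomposition $\widehat{\omega}_\infty=\omega_\infty^{\tt NLS}+\omega_\infty^R$ implicit in \eqref{def:omegahat} and \eqref{stimaresto}, every point $y\in\widehat{\caW}$ lies at distance at most $Ch$ from a point of $\caW^{\tt NLS}$. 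Hence $\widehat{\caW}\setminus \caW^{\tt NLS}$ is contained in the outer $Ch$-tubular neighborhood of $\partial \caW^{\tt NLS}$. Since $\caW^{\tt NLS}$ is the lipeomorphic image of the cube $\Xi_0=[\tR^2/2,3\tR^2/2]^N$ with Lipschitz constants uniform in $\tR$, its boundary has $(N-1)$-dimensional Hausdorff measure $\lesssim \tR^{2(N-1)}$, and a standard Steiner-type computation gives the claimed bound. (Alternatively, one observes that a point of $\widehat{\caW}\setminus \caW^{\tt NLS}$ forces its preimage $\xi$ under $\omega_\infty^{\tt NLS}$ to lie in a $Ch$-neighborhood of $\partial\Xi_0$, which has measure $\lesssim h\,\tR^{2(N-1)}$.)

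Inserting the parameter choices \eqref{scelte.theta}, \eqref{scelta.4} into \eqref{final:mestim} and using $|\Xi_0|\simeq \tR^{2N}$ produces
\begin{equation*}
|\Xi_0\setminus \Xi_\infty|\;\lesssim\; \tR^{2N}\Bigl(\tR^{\frac{6\vs}{5+6\vs}}+\tR^{\vs}\Bigr),
\end{equation*}
where the two summands correspond respectively to the $\alpha_0$-step of the preliminary KAM reduction (Lemma \ref{Lemma:step1}) and the iterative $\alpha_1$-steps (Lemma \ref{iterativo}). Combining with the boundary layer bound proves \eqref{7.sti.1}, and the same chain of inequalities, with the roles of $\widehat{\omega}_\infty$ and $\omega_\infty^{\tt NLS}$ swapped, yields \eqref{7.0.sti.1}. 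Finally, for \eqref{7.sti.fin} note that for $\vs\in(0,1/6)$ one has $\vs<\tfrac{6\vs}{5+6\vs}$, so that $\tR^{2+\frac{6\vs}{5+6\vs}}\leq \tR^{2+\vs}$, and under the assumption $h\lesssim \tR^{2+\vs}$ in \eqref{cond:h} all three contributions collapse to a single $\tR^{2+\vs}\tR^{2(N-1)}=\tR^{2N}\tR^{\vs}$ term, completing the proof.
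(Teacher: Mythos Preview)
Your proof is correct and follows essentially the same decomposition as the paper: split $\widehat{\caW}\setminus\caW_\infty$ into the boundary-layer piece $\widehat{\caW}\setminus\caW^{\tt NLS}$ (controlled by $h$ via $|\omega_\infty^R|_\infty\lesssim h$) and the resonant piece $\caW_0\setminus\caW_\infty$ (controlled by $|\Xi_0\setminus\Xi_\infty|$ via \eqref{final:mestim}), and the exponent computations match. The only methodological difference is in the boundary-layer step: the paper invokes Lemma~\ref{stime_interno}, a fixed-point construction which yields $\widehat{\omega}_\infty(\Xi_{-\varrho})\subset\omega_\infty^{\tt NLS}(\Xi_0)$ directly and then uses $|\Xi_0\setminus\Xi_{-\varrho}|\lesssim\varrho\,\tR^{2(N-1)}$, whereas you obtain the same conclusion by a tubular-neighborhood/Steiner argument on the lipeomorphic image of the cube; both are valid, and your alternative parenthetical remark (pulling back to a $Ch$-neighborhood of $\partial\Xi_0$) is in fact precisely the paper's route.
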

\begin{proof}
We first estimate
\begin{equation}
\caW_0 \setminus \caW_\infty \subseteq \widehat{\omega}_{\infty}\big(\Xi_0 \setminus \Xi_\infty\big) \cup \omega^{\tt NLS}_{\infty}\big(\Xi_0 \setminus \Xi_\infty \big).
\end{equation}
By using that the frequency maps $\widehat{\omega}_{\infty}, \omega_{\infty}^{\tt NLS}$ are Lipeomorphisms combined with \eqref{final:mestim}, \eqref{scelta.4} and \eqref{def.di.r} we have 
\begin{equation}\label{eq:inc:2}
|\caW_0 \setminus \caW_\infty| \lesssim |\Xi_0 \setminus \Xi_\infty| \lesssim \tR^{2N}(\tR^{\frac{6\vs}{5+6\vs}}+ \tR^{\vs}).
\end{equation}
We now prove 
  \eqref{7.sti.1}.  By definitions \eqref{omegafinale}, we have the following identity
\begin{equation}\label{eq:inc:1}
\widehat{\omega}_{\infty}\big(\Xi_0\big) \setminus \caW_\infty = (\widehat{\omega}_{\infty}\big(\Xi_0\big) \setminus \caW_0) \cup (\caW_0 \setminus \caW_\infty),
\end{equation}
with the second term already estimated. We come to the first
  which coincides with
  \begin{equation}
    \label{xidastimare}
\widehat{\omega}_{\infty}\big(\Xi_0\big)\setminus(\widehat\omega_\infty\big(\Xi_0\big)\cap\omega^{\tt
NLS}_\infty\big(\Xi_0\big) ).
    \end{equation}
We want to apply Lemma \ref{stime_interno} with $f=\widehat\omega_\infty$, $g=\omega_\infty^R$. To this end we need to verify the hypothesis \eqref{stim:2:aux}. We start by estimating the
Lipschitz constant of $f^{-1}$. We apply Lemma \ref{perturbazione} to $\omega_\infty=\omega_0+\delta_\infty$, with $\delta_\infty\vcentcolon= \omega_\infty-\omega_0$ estimated in 
\eqref{stim:th:omega}. For $\tR$ small enough, from \eqref{stim:3:aux} we obtain $L_{f^{-1}} \leq 2L_{\omega_0^{-1}}$. Recalling the definition \eqref{Oomega}, we have  $L_{\omega_0^{-1}}=\|A^{-1}\|$, with an estimate of order zero in $\tR$ and $h$ proved in Lemma \ref{inverto}. On the other hand, the Lipschitz
constant of $g$ is estimated by
\begin{equation}\label{Lg}
|L_g|\sleq \left\|A-A^{\tt
  NLS}\right\|+\left|\omega_0-\omega_\infty\right|^\cL_\infty
+\left|\omega_0^{\tt
  NLS}-\omega_\infty^{\tt
  NLS}\right|^\cL_\infty \sleq
h+\tR^{\frac{1}{6}-3\vs}
\end{equation}
where we used \eqref{OomegaR}, \eqref{def:AKG}, \eqref{matNLS} and \eqref{stim:th:omega}. Then, for $h$ and $\tR$ small enough, the hypothesis \eqref{stim:2:aux} is fulfilled. We can apply the Lemma
\ref{stime_interno}, with
\begin{align*}
\varrho \sleq |g|_\infty \sleq |\omega_0^R|_\infty+\left|\omega_0^R-\omega_\infty^R\right|_\infty
\sleq
h\left(1+\tR^2+ \tR^{\frac{1}{6}-3\vs}\right),
\end{align*}
where we used \eqref{stimeom0} and \eqref{stimaresto}. Under this hypothesis we have
$$
 \widehat{\omega}_\infty\big(\Xi_{-\varrho}\big) \subseteq \widehat{\omega}_\infty\big(\Xi_0\big)\cap\omega^{\tt
NLS}_\infty\big(\Xi_0\big), 
$$
and therefore
$$
\widehat{\omega}_{\infty}\big(\Xi_0\big)\setminus(\widehat{\omega}_\infty\big(\Xi_0\big)\cap\omega^{\tt
NLS}_\infty\big(\Xi_0\big) ) \subset \widehat{\omega}_\infty \big(\Xi_0 \setminus\Xi_{-\varrho}\big).
$$
Recalling that $\widehat{\omega}_\infty$ is a lipeomorphism we have
$$
|\widehat{\omega}_{\infty}\big(\Xi_0\big)\setminus(\widehat{\omega}_\infty\big(\Xi_0\big)\cap\omega^{\tt
NLS}_\infty\big(\Xi_0\big) ) | \lesssim \left|\Xi_0\setminus\Xi_{-\varrho}\right|\sleq \varrho \tR^{2(N-1)}.
$$
This gives \eqref{7.sti.1}. The estimate \eqref{7.0.sti.1} is obtained in the
same way.
\end{proof}

  \begin{lemma}
    \label{distanza.tori}
Let $w\in\caW_\infty$ then there exist a unique couple $\xi^{\tt KG}(w)$,
$\xi^{\tt NLS}(w)$ with the property that
\begin{equation}
  \label{coppia}
w=\widehat{\omega}_\infty(\xi^{ \tt KG} )=\omega^{\tt NLS}_\infty(\xi^{\tt NLS}
),
  \end{equation}
furthermore one has
\begin{equation}
  \label{disti.xi}
\left|\xi^{\tt KG}-\xi^{\tt NLS}  \right|\sleq h.
  \end{equation}
  \end{lemma}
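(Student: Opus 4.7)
The plan is to reduce the lemma to the quantitative decomposition \eqref{stimaresto} together with the Lipeomorphism properties of $\widehat{\omega}_\infty$ and $\omega_\infty^{\tt NLS}$ guaranteed by Theorem \ref{TeoremaGenerale}.

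First I would handle existence and uniqueness. By definition \eqref{omegafinale} of $\caW_\infty$, if $w\in\caW_\infty$ then there exist $\xi^{\tt KG}\in\Xi_\infty$ with $\widehat{\omega}_\infty(\xi^{\tt KG})=w$ and $\xi^{\tt NLS}\in\Xi_\infty$ with $\omega^{\tt NLS}_\infty(\xi^{\tt NLS})=w$. Uniqueness follows at once from the fact that, by Theorem \ref{TeoremaGenerale}, both $\widehat{\omega}_\infty$ (which differs from $\omega_\infty$ by the constant $c^2$) and $\omega^{\tt NLS}_\infty$ are Lipeomorphisms on $\Xi_0$.

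Next I would derive the distance estimate. Recalling $c^2=h^{-1}$ and the decomposition \eqref{stimaresto}, we have
\[
\widehat{\omega}_\infty = \omega_\infty - h^{-1} = \omega^{\tt NLS}_\infty + \omega^R_\infty.
\]
Substituting this identity into \eqref{coppia} gives
\[
\omega^{\tt NLS}_\infty(\xi^{\tt NLS}) - \omega^{\tt NLS}_\infty(\xi^{\tt KG}) = \omega^R_\infty(\xi^{\tt KG}).
\]
Applying $(\omega^{\tt NLS}_\infty)^{-1}$, which is Lipschitz with a constant bounded uniformly in $h$ and $\tR$ (as in the proof of Lemma \ref{lem:quasifinale}, one bounds $L_{(\omega^{\tt NLS}_\infty)^{-1}}$ by $2 L_{(\omega_0^{\tt NLS})^{-1}}=2\|(A^{\tt NLS})^{-1}\|$ via Lemmas \ref{perturbazione} and \ref{inverto}, using the Lipschitz smallness of $\omega_\infty^{\tt NLS}-\omega_0^{\tt NLS}$ from \eqref{stim:th:omega}), we obtain
\[
|\xi^{\tt NLS}-\xi^{\tt KG}| \;\lesssim\; |\omega^R_\infty(\xi^{\tt KG})|_\infty.
\]

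Finally, I would estimate $|\omega^R_\infty|_\infty$ using the triangle inequality and \eqref{stimaresto}, \eqref{stimeom0}:
\[
|\omega^R_\infty|_\infty \leq |\omega^R_0|_\infty + |\omega^R_0-\omega^R_\infty|_\infty \lesssim h + h\Bigl(\tfrac{\e_0}{\alpha_0}+\tfrac{\e_1}{\alpha_1}\Bigr)\lesssim h,
\]
where the last inequality uses the smallness condition \eqref{condizione}, which in our choice of parameters \eqref{scelte.4}--\eqref{scelta.5} is satisfied for $\tR$ small enough. Combining the two bounds yields \eqref{disti.xi}. I do not expect any serious obstacle: the lemma is essentially a bookkeeping consequence of the KAM theorem plus the quantitative control on the non Gauge invariant part of the frequency already obtained in \eqref{stimaresto} and \eqref{stimeom0}.
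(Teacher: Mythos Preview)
Your proof is correct and follows essentially the same approach as the paper: existence and uniqueness via the Lipeomorphism property from Theorem \ref{TeoremaGenerale}, then the distance estimate by writing $\omega^{\tt NLS}_\infty(\xi^{\tt NLS})-\omega^{\tt NLS}_\infty(\xi^{\tt KG})=\omega^R_\infty(\xi^{\tt KG})$, bounding $|\omega^R_\infty|_\infty\lesssim h$ via \eqref{stimaresto} and \eqref{stimeom0}, and concluding with the Lipschitz constant of $(\omega^{\tt NLS}_\infty)^{-1}$. Your justification of that Lipschitz constant through Lemmas \ref{perturbazione} and \ref{inverto} is a bit more explicit than the paper's, but the argument is the same.
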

\begin{proof}
To prove \eqref{disti.xi}, recall that $\omega^{\tt NLS}_\infty, \widehat{\omega}_\infty: \Xi_0 \to \mathbb{R}^N$ are lipeomorphisms. Then, for any $w \in \caW_\infty$, by definition \eqref{omegafinale}, there exist $\xi^{\tt KG}, \xi^{\tt NLS} \in \Xi_\infty$ such that
\begin{equation}
w = \widehat{\omega}_\infty(\xi^{\tt KG}) = \omega^{\tt NLS}_\infty(\xi^{\tt NLS}).
\end{equation}
 Furthermore such couple is unique
by the Lipschitz estimates of the inverse functions. Moreover, we have
\begin{align*}
\left|\omega_\infty^{\tt NLS}(\xi^{\tt NLS}) - \omega_\infty^{\tt NLS}(\xi^{\tt KG})\right|
&= \left|w - \omega_\infty^{\tt NLS}(\xi^{\tt KG})\right| \\
&= \left|\widehat{\omega}_\infty(\xi^{\tt KG}) - \omega_\infty^{\tt NLS}(\xi^{\tt KG})\right| \\
&\leq |\omega^R(h)|_\infty \sleq h,
\end{align*}
where we used \eqref{stimaresto} and \eqref{stimeom0}. Finally, using that $(\omega_\infty^{\tt NLS})^{-1}$ is Lipschitz, the thesis follows by
\[
\left|\xi^{\tt KG}-\xi^{\tt NLS}  \right| \leq L_{(\omega_\infty^{\tt NLS})^{-1}} \left|\omega_\infty^{\tt NLS}(\xi^{\tt NLS}) - \omega_\infty^{\tt NLS}(\xi^{\tt KG})\right| \lesssim h.
\]
\end{proof}

\subsection{End of the proof of Theorem \ref{VicinanzaTh} and of Corollary \ref{Maintheorem}}

Now we prove \eqref{dist.main}. First, for any $\omega \in \caW_\infty $  and $x \in D(s/2)$ we define
\begin{align*}
&\Psi_\omega^{\tt KG}(x)\vcentcolon= \cT_0\circ \cT_1\circ\Phi(x, \widehat{\omega}^{-1}_{\infty}(\omega)) \in \ell^{a,p}& 
\\ 
&\Psi_\omega^{\tt
  NLS}(x)\vcentcolon= \cT_0^{\tt NLS}\circ \cT_1\circ\Phi^{\tt NLS}(x, (\omega^{\tt NLS})^{-1}(\omega)) \in \ell^{a,p},&
\end{align*}
where $\cT_0, \cT^{\tt NLS}_0$ are the Birkhoff maps given in Theorem \ref{Teorema0}, $\cT_1$ is the coordinate transformation
\eqref{def.t1} to action angle variables and $\Phi, \Phi^{\tt NLS}$
are the embeddings given in Theorem \ref{TeoremaGenerale}. 
For any $(x,\xi) \in D(s/2) \times \Xi_\infty$ we denote
\begin{align*}
&\Phi(x,\xi)\vcentcolon= (X(x,\xi),Y(x,\xi),Z(x,\xi), \bar{Z}(x,\xi))& \\
&\Phi^{\tt NLS}(x,\xi)\vcentcolon= (X^{\tt NLS}(x,\xi),Y^{\tt NLS}(x,\xi),Z^{\tt NLS}(x,\xi), \bar{Z}^{\tt NLS}(x,\xi))&
\end{align*}
(where $\bar Z$ gives a contribution identical to
  $Z$ due to the real analyticity of the embedding).
By the definition of the norm \eqref{Norma1} and by \eqref{stimePhi},
we have
\begin{equation}\label{XYZ}
\begin{aligned}
&\left|x-X(x,\xi)\right|\sleq
\left(\frac{\e_0}{\alpha_0}+\frac{\e_1}{\alpha_1}\right) \lesssim  \tR^{\frac{1}{6}-3\vs},& \quad &\left| X(x,\cdot)\right|^\cL \lesssim \left(\frac{\e_0}{\alpha_0^2}+\frac{\e_1}{\alpha_1^2}\right) \lesssim \tR^{-\frac{17}{6}-\frac{9}{2}\vs}& \\
&\left|Y(x,\xi)\right|\sleq \tR^3
\left(\frac{\e_0}{\alpha_0}+\frac{\e_1}{\alpha_1}\right) \lesssim \tR^{\frac{19}{6}-3\vs},& \quad &\left|Y(x,\cdot )\right|^{\cL}  \lesssim \tR^3\left(\frac{\e_0}{\alpha_0^2}+\frac{\e_1}{\alpha_1^2}\right) \lesssim \tR^{\frac16- \frac{9}{2}\vs} &
\\
&\left\|Z(x,\xi)\right\|_{a,p}\sleq \tR^{\frac32}
\left(\frac{\e_0}{\alpha_0}+\frac{\e_1}{\alpha_1}\right) \lesssim \tR^{\frac{5}{3}-3\vs} ,& \quad &\left\|Z(x,\cdot)\right\|_{a,p}^\cL \lesssim \tR^{\frac32}\left(\frac{\e_0}{\alpha_0^2}+\frac{\e_1}{\alpha_1^2}\right) \lesssim \tR^{-\frac43- \frac{9}{2}\vs}
\end{aligned}
\end{equation}
with the same estimates for the components of $\Phi^{\tt NLS}$.
Using the notation
\[
\sqrt{\Upsilon} e^{-\mathrm{i} \chi}=(\sqrt{\Upsilon_{\tj_1}} e^{-\mathrm{i} \chi_{\tj_1}}, \dots, \sqrt{\Upsilon_{\tj_N}} e^{-\mathrm{i} \chi_{\tj_N}}),
\]
provided $|\Upsilon|\simeq \tR^2$, we have
\begin{align}
  \label{upsilon}
\left\|\left(\sqrt{\Upsilon_1}e^{-\im
    \chi_1},Z_1\right)-\left(\sqrt{\Upsilon_2}e^{-\im
    \chi_2},Z_2\right)\right\|_{a,p} \sleq
\frac{\left|\Upsilon_1-\Upsilon_2\right|}{\tR}
+\tR\left|\chi_1-\chi_2\right| +\left\|Z_1-Z_2\right\|_{a,p}.
\end{align}
For $(x, \xi) \in D(s/2) \times \Xi_\infty$, we apply this estimate with 
\begin{align*}
&\Upsilon_1=\xi+Y(\xi,x),& \qquad &\Upsilon_2=\xi,& \qquad
&\chi_1=X(x,\xi),& \\
&\chi_2=x,& \qquad &Z_1=Z(x,\xi),& \qquad &Z_2=0.&
\end{align*}
By the estimates in \eqref{XYZ} we obtain 
\begin{equation}  \label{stima.def}
\begin{aligned}
\| \cT_1(x,\xi)-\cT_1\circ\Phi(x,\xi)\|_{a, p}\sleq (\tR^2+\tR+ \tR^{\frac32})
\left(\frac{\e_0}{\alpha_0}+\frac{\e_1}{\alpha_1}\right) \lesssim \tR^{\frac76-3 \vs}.
\end{aligned}
\end{equation}
Therefore, by choosing
\begin{equation}\label{cond:vs}
\vs = \frac{1}{36}
\end{equation}
recalling Remark \ref{cambio norme} and the estimate \eqref{sti.tra.1} for $\cT_0$ for any $(x, \xi) \in D(s/2) \times \Xi_\infty$, we have  
\begin{equation}\label{aux:utile}
 \| \Psi_{\widehat{\omega}_\infty(\xi)}^{\tt KG}(x) - \cT_1(x,\xi) \ \|_{a,p} \lesssim
 \tR^{\frac{7}{6}-3\vs}\sleq \tR^{\frac{13}{12}}.
\end{equation}
The same computations hold true for $\Psi^{\tt NLS}$, and we obtain \eqref{dist.main}

\medskip

We observe that, by the choice \eqref{cond:vs}, the estimates \eqref{misura.0} are implied by the bounds \eqref{7.sti.fin}, which have been proved in Lemma \ref{lem:quasifinale}.

Now we are going to prove  \eqref{dist.sole}. For any $(x,\omega) \in D(s/2) \times \caW_\infty$, we can apply the previous estimate with $\xi=\widehat{\omega}_\infty^{-1}(\omega)$ to obtain the first estimate in \eqref{dist.main}. 
The second inequality in \eqref{dist.main} follows similarly. 
For any $\omega \in \caW_\infty$ we recall the definition of $\xi^{\tt NLS}, \xi^{\tt KG}$ given in Lemma \ref{distanza.tori}. We use now \eqref{upsilon} with 
\begin{align*}
&\Upsilon_1=\xi^{\tt NLS}+Y^{\tt
  NLS}(x,\xi^{\tt NLS}),& \qquad 
&\Upsilon_2=\xi^{\tt KG}+Y(x,\xi^{\tt KG}),& \qquad  &\chi_1=X^{\tt
  NLS}(x,\xi^{\tt NLS}),& \\
 &\chi_2=X(x,\xi^{\tt KG}),& \qquad &Z_1=Z^{\tt
  NLS}(x,\xi^{\tt NLS}),& \qquad &Z_2=Z(x,\xi^{\tt KG}).&
\end{align*} 
By the estimates \eqref{decomp:emb}, \eqref{XYZ} and \eqref{disti.xi} we have
  \begin{align*}
\left|\Upsilon_1-\Upsilon_2\right|&\leq \left|\xi^{\tt KG}-\xi^{\tt
  NLS}\right|+\left|Y(x,\xi^{\tt KG})-Y(x,\xi^{\tt
  NLS})\right|&
\\
&+ \left|Y(x,\xi^{\tt NLS})-Y^{\tt NLS}(x,\xi^{\tt
  NLS})\right| \sleq h+\tR^3 h\left(\frac{\e_0}{\alpha_0^2}+\frac{\e_1}{\alpha_1^2}\right),& 
  \end{align*}
and by similar computations 
\[
|\chi_1-\chi_2| \lesssim h\left(\frac{\e_0}{\alpha_0^2}+\frac{\e_1}{\alpha_1^2}\right), \qquad
\| Z_1-Z_2\|_{a,p-4}  \lesssim \tR^{\frac32} \left(\frac{\e_0}{\alpha_0^2}+\frac{\e_1}{\alpha_1^2}\right).
\]
Applying the above estimates we obtain 
\begin{equation}
  \left\|(\sqrt{\xi^{\tt NLS}+Y^{\tt NLS}}e^{iX^{\tt NLS}},Z^{\tt
      NLS})-(\sqrt{\xi^{\tt KG}+Y}e^{iX},Z)\right\|_{a,p-4} \lesssim h \tR^{-\frac{11}{6}-\frac92\vs}+h\tR^{-1}\lesssim h \tR^{-\frac{47}{24}},
\end{equation}
where we impose \eqref{cond:vs}.

Then, by Remark \ref{cambio norme} and the estimate \eqref{sti.tra.1} for $\cT_0$ for any $(x, \omega) \in D(s/2) \times \caW_\infty$, we have  
\begin{equation}
 \label{stima.perdita}
 \| \Psi_\omega^{\tt KG}(x) - \Psi_\omega^{\tt NLS}(x)  \|_{a,p-4}  \lesssim  h \tR^{-\frac{47}{24}}.
\end{equation}
On the other hand, by the estimates \eqref{aux:utile} for $\Psi^{\tt NLS}$ with $\xi=\xi^{\tt NLS}$ and for $\Psi^{\tt KG}$ with $\xi=\xi^{\tt KG}$, by performing a triangle inequality combined with \eqref{cond:h}, we obtain
\begin{equation}
\label{no.perdita.lip}
 \| \Psi_\omega^{\tt KG}(x) - \Psi_\omega^{\tt NLS}(x,\omega) \ \|_{a,p}  \lesssim
\tR^{\frac{13}{12}}+h\tR^{-1}\lesssim \tR^{\frac{37}{36}},
\end{equation}
where we used \eqref{cond:h}.

\medskip

Finally by using the interpolation estimate
$$
\left\|u\right\|_{a,p-4\sigma }\leq  \left\|u\right\|_{a,p-4 }^\sigma\left\|u\right\|_{a,p}^{1-\sigma}, \qquad \sigma\in [0, 1],
$$
from \eqref{stima.perdita} and \eqref{no.perdita.lip}
 we have 
\begin{equation}\label{stima.finale}
\left\|e^{-\iu c^2t }\psi_{c, \omega}^{KG}(t)-\varphi_\omega^{\tt NLS}(t)
\right\|_{a,p-4\sigma} \lesssim h^\sigma \tR^{\frac{37}{36}-\frac{215}{72} \sigma},
\end{equation}
where 
 \begin{align*}
  \psi_{c,\omega}^{KG}(t)\vcentcolon= \Psi_\omega^{KG}((\omega+c^2) t), \qquad
\varphi^{\tt NLS}_\omega(t)\vcentcolon= \Psi_\omega^{\tt NLS}(\omega t).
\end{align*}
Thus, \eqref{dist.sole} follows.

\medskip

The proof of Corollary \ref{Maintheorem} is a direct consequence of \eqref{stima.finale}. Indeed, once we fix a map 
\[
\omega:(c_*,\infty)\to\R^N, \quad c \mapsto \omega(c)\in\caW(c),
\]
we can apply Theorem \ref{VicinanzaTh} obtaining 
\[
\left\|e^{-\iu c^2t }\psi_{c, \omega(c)}^{KG}(t)-\varphi_{\omega(c)}^{\tt NLS}(t)
\right\|_{a,p-4\sigma} \lesssim h^\sigma, \qquad \forall \, \sigma \in (0,1].
\]
Finally, by taking the limit as $h \to 0^+$ we have \eqref{main.44}.
\qed

\newpage
\appendix
\section{Some auxiliary estimates}\label{flowmap}
  \begin{lemma}\label{algebra}
If $a \geq 0$, $p> 1/2$ and $\beta\geq 0$ the space $\ell^{a, p,\beta}$
is an algebra with respect to the convolution
\begin{equation}\label{conv}
( z\star w)_j\vcentcolon=  \sum_{k\in \Z} z_{k-j}w_k \qquad j\in \Z,
\end{equation}
uniformly with respect to $c$. More precisely there exists a
  constant $K$ independent of $c$, such that
  \begin{equation}
    \label{algebra.unif}
\left\| ( z\star w)\right\|_{a,p,\beta}\leq K\left\|
z\right\|_{a,p,\beta}\left\| w\right\|_{a,p,\beta}.
    \end{equation}
\end{lemma}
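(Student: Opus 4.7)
The plan is to adapt the standard argument showing that a weighted $\ell^2$ space with an algebra-type weight is a Banach algebra under convolution, taking care to verify that every constant is independent of $c$. The norm on $\ell^{a,p,\beta}$ is built from three weights, $\jbs{j}^p$, $e^{a|j|}$ and $\tw_j^\beta=(1+j^2/c^2)^{\beta/2}$; I will establish Peetre-type inequalities for each with $c$-independent constants, and then estimate the convolution via Cauchy--Schwarz and Young's inequality.

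The first step is to collect the three pointwise inequalities: $e^{a|j|}\le e^{a|j-k|}e^{a|k|}$ (subadditivity of $|\cdot|$); the classical Peetre estimate $\jbs{j}^p\lesssim \jbs{j-k}^p+\jbs{k}^p$; and, most importantly,
\begin{equation*}
\tw_j^\beta\le 2^{\beta/2}\,\tw_{j-k}^\beta\,\tw_k^\beta,
\end{equation*}
which follows from $j^2\le 2(j-k)^2+2k^2$ and hence
\begin{equation*}
1+\tfrac{j^2}{c^2}\le 2\bigl(1+\tfrac{(j-k)^2}{c^2}\bigr)\bigl(1+\tfrac{k^2}{c^2}\bigr)=2\,\tw_{j-k}^2\tw_k^2.
\end{equation*}
The constant $2^{\beta/2}$ is manifestly independent of $c$, which is the crucial point. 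Combining the three bounds yields
\begin{equation*}
\jbs{j}^p e^{a|j|}\tw_j^\beta\lesssim \bigl[\jbs{j-k}^p e^{a|j-k|}\tw_{j-k}^\beta\bigr]\bigl[e^{a|k|}\tw_k^\beta\bigr]+\bigl[e^{a|j-k|}\tw_{j-k}^\beta\bigr]\bigl[\jbs{k}^p e^{a|k|}\tw_k^\beta\bigr],
\end{equation*}
with constants depending only on $p$ and $\beta$.

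Writing $(z\star w)_j=\sum_k z_{j-k}w_k$ and applying the pointwise bound above, I split $\bigl\|\jbs{j}^p e^{a|j|}\tw_j^\beta (z\star w)_j\bigr\|_{\ell^2_j}$ into two convolution-type pieces. On each one I use Young's inequality in the form $\|u\star v\|_{\ell^2}\le\|u\|_{\ell^2}\|v\|_{\ell^1}$: the $\ell^2$ factor reproduces either $\|z\|_{a,p,\beta}$ or $\|w\|_{a,p,\beta}$, while the $\ell^1$ factor $\sum_k e^{a|k|}\tw_k^\beta|w_k|$ (and its symmetric counterpart) is controlled via Cauchy--Schwarz,
\begin{equation*}
\sum_k e^{a|k|}\tw_k^\beta |w_k|\le\Bigl(\sum_k\jbs{k}^{-2p}\Bigr)^{1/2}\|w\|_{a,p,\beta}=C_p\,\|w\|_{a,p,\beta},
\end{equation*}
with $C_p<\infty$ thanks to $p>1/2$. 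This produces $\|z\star w\|_{a,p,\beta}\le K\|z\|_{a,p,\beta}\|w\|_{a,p,\beta}$ with $K=K(p,\beta)$ independent of $c$, which is exactly \eqref{algebra.unif}.

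The only non-routine step is upgrading the natural additive Peetre-type inequality for $\tw_j$ to the multiplicative bound above, with a constant that does not blow up as $c\to\infty$; the chain $1+j^2/c^2\le 2(1+(j-k)^2/c^2)(1+k^2/c^2)$ accomplishes precisely this. The remainder of the argument is a standard bookkeeping exercise with Young's and Cauchy--Schwarz inequalities, so I do not expect any further obstacle.
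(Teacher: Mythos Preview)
Your proof is correct, and it rests on the same two ingredients as the paper's: the $c$-independent multiplicative bound $\tw_j\le\sqrt{2}\,\tw_{j-k}\tw_k$ and the summability of $\jbs{k}^{-2p}$ for $p>1/2$. The packaging differs slightly: the paper follows P\"oschel's weighted Cauchy--Schwarz trick, introducing $\sigma_{jkc}=\bigl(\jbs{j-k}\jbs{k}/\jbs{j}\bigr)^p\bigl(\tw_{j-k}\tw_k/\tw_j\bigr)^\beta$, bounding $\sum_k\sigma_{jkc}^{-2}$ uniformly in $j$ and $c$ (using $\tw_n\ge 1$ to drop the $\tw$-factor), and then applying Cauchy--Schwarz in $k$ before summing in $j$. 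You instead use the additive Peetre split $\jbs{j}^p\lesssim\jbs{j-k}^p+\jbs{k}^p$ together with Young's inequality $\|u\star v\|_{\ell^2}\le\|u\|_{\ell^2}\|v\|_{\ell^1}$ and a Cauchy--Schwarz estimate on the $\ell^1$ factor. Both routes are standard and equivalent in strength; yours is arguably a bit more elementary since it avoids the auxiliary weight $\sigma_{jkc}$, while the paper's version makes the role of the uniform bound $\tw_n\ge 1$ more explicit.
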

\begin{proof}
 We follow the proof given in \cite{poschel1996quasi} for the case
  $\beta=0$. 
 For any $z, w \in \ell^{a,p,\beta}$, we have
\[
\| z \star w \|_{a,p,\beta}^2\vcentcolon=  \sum_{j\in \Z} \jbs{j}^{2p} e^{2a|j|}\tw_j^{2\beta}\bigg| \sum_{k\in \Z} z_{j-k}w_k \bigg|^2.
\]
We consider the case $a=0$, which contains all the difficulties. We define 
\[
\sigma_{jkc}\vcentcolon= \bigg(\frac{\jbs{j-k}\jbs{k}}{\jbs{j}}\bigg)^p \bigg(\frac{\tw_{j-k} \tw_k}{\tw_j} \bigg)^{\beta},
\]
then
\[
\frac{1}{\sigma_{jkc}^2} \lesssim \bigg( \frac{1}{\jbs{j-k}}+\frac{1}{\jbs{k}} \bigg)^{2p} \bigg( \frac{1}{\tw_{j-k}}+ \frac{1}{\tw_k} \bigg)^{2\beta}.
\]
By denoting $K_{jc}\vcentcolon= \sum_k \sigma_{jkc}^{-2}$, we have 
\[
K_{jc} \lesssim \sum_{k\in \Z} \bigg( \frac{1}{\jbs{j-k}}+\frac{1}{\jbs{k}} \bigg)^{2p} \bigg( \frac{1}{\tw_{j-k}}+ \frac{1}{\tw_k} \bigg)^{2\beta} \lesssim \sum_{k\in \Z} \bigg( \frac{1}{\jbs{j-k}^{2p}}+\frac{1}{\jbs{k}^{2p}} \bigg) =:K,
\]
with $K>0$ independent on $c$, where we used that $p>1/2$ to obtain the summability of the last series. Finally, we have
\begin{align*}
\| z \star w \|_{a,p, \beta}^2& \leq K_{jc}^2 \sum_{k\in \Z} \sigma_{jck}^2 |z_{j-k} w_k|^2& \\
& \lesssim \sum_{j,k \in \Z} \jbs{j-k}^{2p} \tw_{j-k}^{2\beta}|z_{j-k}|^2 \jbs{k}^{2p} \tw_k^{2\beta} |w_k|^2= \|z\|_{a,p, \beta}^2 \|w\|_{a,p, \beta}^2.&
\end{align*}
\end{proof}
{Given $n \in \mathbb{N}$ and a set of vectors $\{z^{(k)}\}_{k=1}^n$
  with $z^{(k)}  \in \ell^{a,p,\beta}$, we denote the convolution among them by 
\[
\bigg(\mathop{\bigstar}_{t=1}^n z^{(t)}\bigg)\vcentcolon=  z^{(1)} \star \dots \star z^{(n)},
\]
and by the previous lemma we also have 
  \begin{equation}
    \label{conv.n}
\left\|\bigg(\mathop{\bigstar}_{t=1}^n z^{(t)}\bigg)\right\|_{a,p,\beta}\sleq
\prod_{k=1}^{n}\left\| z^{(k)}\right\|_{a,p,\beta}
    \end{equation}

For $t=1, \dots, n$, we consider  positive sequence
  \begin{equation}\label{bielle}
b^{(t)}\vcentcolon= \left\{b^{(t)}_j\right\}_{j\in\Z}, 
\end{equation}
and we define the operator $b^{(t)}$ acting on compactly supported sequences by
\[
b^{(t)}z\vcentcolon=   \left\{b^{(t)}_jz_j\right\}_{j\in\Z}.
\]
Of course such operators extend to operators on $\ell^{a,p,\beta}$
defined on their maximal domain. One can also define in the obvious way their action on $\cP^{a,p,\beta}$.
Then the following lemma holds.
\begin{lemma}  \label{sti.abs.2}
    Let $F$ be a Hamiltonian of the form\footnote{To be completely
    rigorous one can define $F$ using the formula \eqref{sti.abs.1} on
    the space of compactly supported sequences, then, by the result of
    Lemma \ref{sti.abs.2} $F$ extends to an analytic function on the space of the
    sequences s.t. the r.h.s. of \eqref{sti.abs.6} is finite.}
    \begin{equation}
  \label{sti.abs.1}
F(z)=
\sum_{\vj,\vsigma}F_{\vj, \vsigma}\,b^{(1)}_{j_1}z^{\sigma_1}_{j_1}\,...\,b^{(n)}_{j_n}z^{\sigma_n}_{j_n} 
\end{equation}
where $F_{\vj, \vsigma}$ is s.t 
\begin{align}
\left|F\right|_\infty \vcentcolon= \sup_{\vj,\vsigma}|F_{\vj, \vsigma}|<\infty
\\
F_{\vj, \vsigma}=0\quad\text{if}\quad \vsigma\cdot \vj\not=0.
\end{align}
Then
\begin{align}
  \label{sti.abs.6}
\left\|X_F(z)\right\|_{a,p,\beta}\leq \left|F\right|_\infty
\sum_{t=1}^n  \bigg\| b^{(t)} \bigg( \mathop{\bigstar}_{\substack{k=1\\ k \neq t}}^n w^{(k)} \bigg) \bigg\|_{a,p,\beta},
\end{align}
where
\begin{equation}
  \label{wk}
w^{(k)}_j \vcentcolon= b_j^{(k)}(|z_j^+|+|z_j^-|).
  \end{equation}
\end{lemma}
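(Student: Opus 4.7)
The strategy is to compute the Hamiltonian vector field of $F$ component-wise, exploit the zero-momentum constraint $\vsigma\cdot\vj=0$ hard-wired into the coefficients of $F$ to convert the resulting sums into convolutions, and then invoke the convolution algebra property of Lemma \ref{algebra} implicitly via the norm on the right-hand side.

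First, by \eqref{campo.definisco}, $[X_F]_{z_{j_0}}=-\iu\,\partial F/\partial \bar z_{j_0}$, and the derivative picks out those monomial factors with $\sigma_t=-1$ and $j_t=j_0$. Swapping the order of summation gives the key decomposition
\[
[X_F]_{z_{j_0}}=-\iu\sum_{t=1}^n b^{(t)}_{j_0}\sum_{\substack{(\vj_{-t},\vsigma_{-t})\\ \sum_{k\neq t}\sigma_k j_k=j_0}}F_{\vj,\vsigma}\prod_{k\neq t}b^{(k)}_{j_k}z^{\sigma_k}_{j_k},
\]
where the rewritten constraint $\sum_{k\neq t}\sigma_k j_k=j_0$ is inherited from $\vsigma\cdot\vj=0$ after imposing $\sigma_t=-1$, $j_t=j_0$. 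The component $[X_F]_{\bar z_{j_0}}$ is treated symmetrically and produces the same bound.

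Second, I would bound the inner sum pointwise. Using $|F_{\vj,\vsigma}|\leq |F|_\infty$ and the trivial inequality $b^{(k)}_{j_k}|z^{\sigma_k}_{j_k}|\leq w^{(k)}_{j_k}$, and then performing the substitution $j'_k=\sigma_k j_k$ for every $k\neq t$, the constraint $\sum_{k\neq t}\sigma_k j_k=j_0$ becomes the convolution constraint $\sum_{k\neq t}j'_k=j_0$, and the summand reshuffles into the $j_0$-th Fourier coefficient of a multi-convolution of symmetrized sequences $\tilde w^{(k)}_{j}:=w^{(k)}_{j}+w^{(k)}_{-j}$. Absorbing a combinatorial constant depending only on $n$, this yields the pointwise inequality
\[
\bigl|[X_F]_{z_{j_0}}\bigr|\ \lesssim\ |F|_\infty \sum_{t=1}^n b^{(t)}_{j_0}\Bigl(\mathop{\bigstar}_{k\neq t} \tilde w^{(k)}\Bigr)_{j_0}.
\]

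Third, I take the $\ell^{a,p,\beta}$-norm and apply the triangle inequality in $t$. Since the weight $\jbs{j}^{p}e^{a|j|}\tw_j^{\beta}$ in the definition \eqref{ellappiu} of $\ell^{a,p,\beta}$ depends only on $|j|$, the norm is reflection-invariant and $\|\tilde w^{(k)}\|_{a,p,\beta}\lesssim \|w^{(k)}\|_{a,p,\beta}$; replacing $\tilde w^{(k)}$ by $w^{(k)}$ inside the norm then concludes. The main technical subtlety I anticipate is the careful handling of the signs $\sigma_k$ in the substitution $j'_k=\sigma_k j_k$: the sequences $w^{(k)}$ are not a priori symmetric in $j$, and the constraint coming from the zero-momentum condition only matches the shape of the convolution $\star$ after passing to even symmetrizations. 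Crucially, this passage costs nothing in norm precisely because the weights (including the $c$-dependent factor $\tw_j=\sqrt{1+j^2/c^2}$) are all even functions of $j$, which is what makes the estimate \eqref{sti.abs.6} uniform in $c$.
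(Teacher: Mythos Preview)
Your proof is correct and follows essentially the same route as the paper's: differentiate termwise, use the zero-momentum constraint to rewrite the resulting sum as a multi-convolution evaluated at the external index, and take the $\ell^{a,p,\beta}$-norm. The one place where you are actually more careful than the paper is the handling of the signs $\sigma_k$: the paper silently passes from the constraint $\sum_{m\neq k}\sigma_m j_m=-sq$ to $\sum_{m\neq k} j_m=-sq$ and writes the bound directly as $b_q^{(k)}(\bigstar_{m\neq k}w^{(m)})_{-sq}$, whereas you correctly note that the substitution $j'_k=\sigma_k j_k$ produces the symmetrized sequences $\tilde w^{(k)}$ and that this costs only a constant because the weights defining $\ell^{a,p,\beta}$ are even in $j$.
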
}

\begin{proof}
 For $s \in \{ \pm \}$ we have
  $$
\frac{\partial F}{\partial
  z^{s}_q}=\sum_{k=1}^nb_q^{(k)}\sum_{\substack{(\vj, \vsigma)\in\Z^n \times \{\pm\}^n \\ j_k=q, \, \, \sigma_k=s}}F_{\vj, \vsigma}\prod_{\substack{m=1\\m\not=k}}^{n}b^{(m)}_{j_m}z_{j_m}^{\sigma_m},
$$
and of course the argument is different from zero only when $\vsigma\cdot\vj=0$.
So we have
\begin{align*}
\left|\frac{\partial F}{\partial
  z^{s}_q}\right|&\leq
\sum_{k=1}^nb_q^{(k)}\sum_{\substack{(\vj, \vsigma)\in\Z^n \times \{\pm\}^n \\ j_k=q, \, \, \sigma_k=s}}\left|F_{\vj, \vsigma}\right|
\prod_{\substack{m=1\\m\not=k}}^{n}b^{(m)}_{j_m}\left|z_{j_m}^{\sigma_m}\right|&
\\
&\leq \left|F\right|_\infty \sum_{k=1}^nb_q^{(k)}\sum_{\substack{\vj\in\Z^n\\j_1+...+j_{k-1}+j_{k+1}+...+j_n=-sq}}
\prod_{\substack{m=1\\m\not=k}}^{n}w^{(m)}_{j_m}&
\\
&=\left|F\right|_\infty \sum_{k=1}^n  b^{(k)}_q \bigg(
\mathop{\bigstar}_{\substack{m=1\\ m \neq k}}^n w^{(m)} \bigg)_{-sq},&
\end{align*}
which is the thesis.
\end{proof}

\begin{lemma}\label{modop}[Lemma A.1 \cite{poschel1996kam}]
Consider a bounded linear operator $A:\ell^{2}( \mathbb{Z} ; \mathbb{C}) \to \ell^{2}( \mathbb{Z} ; \mathbb{C}) $ such that 
\[
(Av)_i=\sum_{j \in \mathbb Z} A_{ij}v_j, \qquad \, \, \, \forall \, \, i \in \mathbb{Z}, \, \forall v \in \ell^2,
\]
then also the operators $B_1,B_2$ defined as
\begin{equation}\label{BB}
(B_1v)_i\vcentcolon=  \sum_{j \neq i} \frac{|A_{ij}|}{|i-j|}v_j, \quad (B_2v)_i\vcentcolon=  \sum_{j \neq -i} \frac{|A_{ij}|}{|i+j|}v_j, \quad \forall \, \, i \in \mathbb{Z}, \, \forall v \in \ell^2
\end{equation}
are bounded linear operators on $\ell^{2}$ to itself, and the following estimate for the operator norms holds $\| B_s\|_{\ell^{2};\ell^{2}} \lesssim \|A\|_{\ell^{2};\ell^{2}}$, for $s=1,2$.
\end{lemma}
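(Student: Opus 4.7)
The key idea is to exploit the integral representation $\tfrac{1}{|k|}=\int_{|k|}^{\infty}t^{-2}\,dt$ for $k\in\Z\setminus\{0\}$, which trades the singular weight $1/|i-j|$ for a bandwidth truncation well suited to energy estimates. First I reduce to $v$ of finite support (the full statement then follows by density) and rewrite
\[
(B_1 v)_i \;=\; \int_1^{\infty} t^{-2}\,(A_t v)_i\,dt, \qquad (A_t v)_i \;\vcentcolon=\; \sum_{0<|k|\leq t} |A_{i,i-k}|\,v_{i-k},
\]
so that $A_t$ is the restriction of the entrywise modulus of $A$ to the diagonal band $\{0<|i-j|\leq t\}$. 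On finitely supported sequences the exchange of sum and integral is trivial.

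Next I would estimate $\|A_t v\|_{\ell^2}$. A row-wise Cauchy--Schwarz gives
\[
|(A_t v)_i|^2 \;\leq\; \Bigl(\sum_{0<|k|\leq t}|A_{i,i-k}|^2\Bigr)\Bigl(\sum_{0<|k|\leq t}|v_{i-k}|^2\Bigr) \;\leq\; \|A\|_{\ell^{2};\ell^{2}}^{2}\sum_{0<|k|\leq t}|v_{i-k}|^2,
\]
the last inequality because $\sum_{j}|A_{ij}|^2=\|A^{*}e_i\|_{\ell^2}^{2}\leq \|A\|_{\ell^{2};\ell^{2}}^{2}$ uniformly in $i$. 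Summing over $i$ and applying Fubini in the $(i,k)$ variables yields $\|A_t v\|_{\ell^2}^{2}\leq 2t\,\|A\|_{\ell^{2};\ell^{2}}^{2}\,\|v\|_{\ell^2}^{2}$. Minkowski's integral inequality together with $\int_1^{\infty}t^{-3/2}dt=2$ then produces
\[
\|B_1 v\|_{\ell^2} \;\leq\; \int_1^{\infty} t^{-2}\,\|A_t v\|_{\ell^2}\,dt \;\leq\; 2\sqrt{2}\,\|A\|_{\ell^{2};\ell^{2}}\,\|v\|_{\ell^2},
\]
and a density argument closes the bound for $B_1$.

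The case of $B_2$ is handled by the change of variable $k=i+j$, producing $(\widetilde A_t v)_i\vcentcolon=\sum_{0<|k|\leq t}|A_{i,k-i}|\,v_{k-i}$ in place of $A_t v$. The row bound $\sum_j|A_{ij}|^2\leq \|A\|_{\ell^{2};\ell^{2}}^{2}$ still controls the first Cauchy--Schwarz factor, while the counting identity $\#\{i:0<|i+j|\leq t\}\leq 2t$ takes care of the $v$-sum after Fubini, giving $\|\widetilde A_t v\|_{\ell^2}^{2}\leq 2t\,\|A\|_{\ell^{2};\ell^{2}}^{2}\,\|v\|_{\ell^2}^{2}$. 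The identical integral argument then yields $\|B_2\|_{\ell^{2};\ell^{2}}\lesssim \|A\|_{\ell^{2};\ell^{2}}$.

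The pitfall this strategy is designed to avoid is the naive diagonal decomposition $B_1=\sum_{k\neq 0}|k|^{-1}E_k D_k$, with shifts $D_k$ and diagonal multipliers $E_k$, which produces $\|B_1\|\leq \sum_{k\neq 0}|k|^{-1}\sup_i|A_{i,i-k}|$ with a divergent harmonic tail, even though each $\sup_i|A_{i,i-k}|\leq \|A\|_{\ell^{2};\ell^{2}}$. The integral representation replaces that divergent harmonic sum by the convergent $\int_1^{\infty}t^{-3/2}dt$, paying only a $\sqrt{2t}$ loss at the banded Cauchy--Schwarz/Fubini step, which is reabsorbed by the sharper tail decay $t^{-2}$. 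I expect this passage from summation over diagonals to the band-truncated energy estimate to be the only genuinely nontrivial point of the proof.
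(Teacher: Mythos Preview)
Your proof is correct. The paper does not give its own proof of this lemma; it merely cites it as Lemma~A.1 of P\"oschel~\cite{poschel1996kam}, so there is nothing in the present paper to compare against. For what it is worth, your argument---the integral representation $|k|^{-1}=\int_{|k|}^{\infty}t^{-2}\,dt$, the banded Cauchy--Schwarz using $\sum_j|A_{ij}|^2\le\|A\|^2$, and Minkowski in $t$---is exactly the strategy P\"oschel uses in the original reference, so you have essentially reproduced the cited proof.
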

{\begin{remark}\label{Stima:pesata}
Fix $a \geq 0$, and $p,q>1/2$.  The previous lemma extends to bounded operators $A: \ell^{a,q}\to\ell^{a,p}$.
Indeed, using the diagonal isometries $(S_r z)_j=\langle j\rangle^{r}e^{a |j|}z_j$, for $r=p,q$,
we conjugate $A$ into $\tilde A = S_p A S_q^{-1}:\ell^2\to\ell^2$. Then, for $B$ defined as \eqref{BB} we have
$\|B\|_{\ell^{a,q};\ell^{a,p}}\lesssim \|A\|_{\ell^{a,q};\ell^{a,p}}$.
\end{remark}}
Finally, we state the following Lemma 
\begin{lemma}[Cut-off operators]\label{lemma:cutoff-multiplication}
Let $a \geq 0$, $ p_0 \geq 0$, $p>p_0$. For $K>0$, we define $\mathcal{C}_K \vcentcolon =\{ j \in \mathbb{Z} \mid |j| \leq K \}$, and recalling Definition \eqref{cut-off}, for any $v \in \ell^{a,p}$ we have
\[
\|\Pi_{\mathcal{C}_K}^\perp v\|_{a,p-p_0} \leq K^{-\alpha } \|v\|_{a,p}.
\]
\end{lemma}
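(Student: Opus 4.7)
The plan is to reduce the statement to an elementary weighted-tail estimate. First I would expand the norm on the left using the defining formula from \eqref{lap}:
\[
\|\Pi_{\mathcal{C}_K}^\perp v\|_{a,p-p_0}^2 = \sum_{|j|>K} |v_j|^2 \jbs{j}^{2(p-p_0)} e^{2|j|a},
\]
since, by the definition of $\Pi_{\mathcal{C}_K}^\perp$ in \eqref{cut-off}, the components with $|j|\leq K$ are suppressed.

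Next I would isolate the factor responsible for the decay in $K$ by writing $\jbs{j}^{2(p-p_0)} = \jbs{j}^{2p}\,\jbs{j}^{-2p_0}$. For every index $j$ in the range of summation one has $|j|>K$, hence $\jbs{j}> K$, and since $p_0\geq 0$ this yields $\jbs{j}^{-2p_0}\leq K^{-2p_0}$. (This identifies the exponent $\alpha$ in the statement with $p_0$, which I read as a typographical slip.)

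Pulling this uniform bound out of the sum gives
\[
\|\Pi_{\mathcal{C}_K}^\perp v\|_{a,p-p_0}^2 \leq K^{-2p_0}\sum_{|j|>K} |v_j|^2 \jbs{j}^{2p} e^{2|j|a} \leq K^{-2p_0}\, \|v\|_{a,p}^2,
\]
and taking square roots yields the claim. Since this is a direct one-line weighted-tail computation, there is no real obstacle; the only point requiring care is the clarification of the exponent and a comment that the same argument applies mutatis mutandis to the vector-valued spaces $\cP^{a,p}$ used elsewhere in the paper, which is how Lemma \ref{lemma:cutoff-multiplication} is actually invoked in \eqref{aux:esempio}.
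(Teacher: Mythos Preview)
Your proof is correct and follows exactly the same argument as the paper's own proof, including the identification of the typographical slip between $p_0$ and $\alpha$. The paper's proof is the identical one-line weighted-tail computation.
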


\begin{proof}
By definition we have
\[
\begin{aligned}
\|\Pi_{\mathcal{C}_K}^\perp v\|_{a,p-\alpha}^2& = \sum_{|j|>K} \langle j \rangle^{2(p-\alpha)} |v_j|^2 e^{2a|j|} = \sum_{|j|>K} \langle j \rangle^{-2\alpha} \langle j \rangle^{2p} |v_j|^2 e^{2a|j|}& \\
& \leq K^{-2\alpha} \sum_{|j|>K} \langle j \rangle^{2p} |v_j|^2 e^{2a|j|} \leq K^{-2\alpha} \|v\|_{a,p}^2.&
\end{aligned}
\]
\end{proof}
\section{Abstract estimates on flows}\label{flowmap1}

  \begin{lemma}
  \label{flussi1}
Let $E$ be a complex Banach space and $\cU\subset E$ be an open subset. We consider two vector fields $X_i\in C^{\omega}(\cU;E)$, i=1, 2. We
assume
\begin{equation}
  \label{flowmap.10}
\sup_{x\in\cU}\left\|X_i(x)\right\|_E\leq \delta \qquad i=1, 2,
\end{equation}
for some $\delta>0$. Assume 
  \begin{equation}
    \label{K1}
K_1\vcentcolon= \sup_{{x,y\in\cU}\atop
  x\not=y}\frac{\left\|X_2(x)-X_2(y)\right\|_E}{\left\|x-y\right\|_E} < \infty,
  \end{equation}
  and let $\cV \subseteq \cU$ be the open set with the property that
\begin{equation}
  \label{flowmap.11}
\bigcup_{y\in\cV}B(\delta,y)\subset\cU,
\end{equation}
where $B(\delta,y) \subseteq E$ is the ball of radius $\delta$ and center
$y$.  Then 
  \begin{equation}
    \label{flowmap.12}
\sup_{x\in\cV}\left\|\phi_1^t(x)-\phi_2^t(x)\right\|_E\leq
\frac{e^{K_1t}-1}{K_1} \sup_{x\in\cU}\left\|X_1(x)-X_2(x)\right\|_E
,\quad \forall x\in \mathcal{V}, \quad \forall |t|\leq 1.
    \end{equation}
  \end{lemma}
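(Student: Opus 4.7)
The plan is a direct Gronwall-type comparison between the two flows, so the argument is essentially standard; I list the steps and the minor bookkeeping points.

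First, I would verify that $\phi_1^t$ and $\phi_2^t$ are well defined on $\mathcal{V}$ for $|t|\leq 1$. This follows from the hypothesis \eqref{flowmap.10} combined with the fattening condition \eqref{flowmap.11}: starting from $x\in\mathcal{V}$, since $\|X_i\|_E\leq\delta$, the trajectory cannot leave $B(\delta,x)\subset\mathcal{U}$ within time $1$, so both integral curves stay in $\mathcal{U}$ and the flows are defined. Without loss of generality I consider $t\geq 0$; the case $t<0$ is obtained by running backward time (equivalently, replacing $X_i$ by $-X_i$), which only changes signs in the computation below.

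Next, fix $x\in\mathcal{V}$ and set
\[
u(t)\vcentcolon= \left\|\phi_1^t(x)-\phi_2^t(x)\right\|_E,\qquad
M\vcentcolon= \sup_{y\in\mathcal{U}}\left\|X_1(y)-X_2(y)\right\|_E.
\]
From the integral form of the flow equation,
\[
\phi_1^t(x)-\phi_2^t(x)=\int_0^t\bigl[X_1(\phi_1^s(x))-X_2(\phi_2^s(x))\bigr]\,ds,
\]
so adding and subtracting $X_2(\phi_1^s(x))$ and using the triangle inequality together with the Lipschitz hypothesis \eqref{K1}, I get
\[
u(t)\leq \int_0^t \left\|X_1(\phi_1^s(x))-X_2(\phi_1^s(x))\right\|_E\,ds
+\int_0^t \left\|X_2(\phi_1^s(x))-X_2(\phi_2^s(x))\right\|_E\,ds
\leq Mt+K_1\int_0^t u(s)\,ds.
\]
Since both trajectories lie in $\mathcal{U}$, the first integrand is bounded by $M$, and the second by $K_1 u(s)$.

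Finally, I apply the integral Gronwall inequality to the function $u$ (with the understanding that the bound is $Mt$ when $K_1=0$, which is recovered by continuity in $K_1$). This yields
\[
u(t)\leq \frac{M}{K_1}\bigl(e^{K_1 t}-1\bigr),\qquad 0\leq t\leq 1,
\]
which is exactly \eqref{flowmap.12} after taking the supremum over $x\in\mathcal{V}$. The negative-time case is handled identically by reversing time, so the estimate holds for $|t|\leq 1$. There is no real obstacle here: the only points requiring a small argument are (i) ensuring that $\phi_1^s(x)$ stays in $\mathcal{U}$ so that the bound $\|X_1(\phi_1^s(x))-X_2(\phi_1^s(x))\|_E\leq M$ is available (guaranteed by \eqref{flowmap.10}--\eqref{flowmap.11}), and (ii) covering the case $K_1=0$ by continuity.
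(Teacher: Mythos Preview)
Your proof is correct and follows essentially the same approach as the paper: both add and subtract $X_2(\phi_1^s(x))$, bound the difference of vector fields by $M$ and the Lipschitz term by $K_1 u(s)$, then apply Gronwall. The only cosmetic difference is that the paper differentiates $\Delta(x,t)=\phi_1^t(x)-\phi_2^t(x)$ directly and uses the differential Gronwall inequality, whereas you work with the integral form; the arguments are otherwise identical.
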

  \proof
Denote by $\phi_1^t$, $\phi_2^t$, $t\in [-1, 1]$, the local flows generated by $X_1$ and
$X_2$ respectively. Then, as in Subsection \ref{phase.space}, we have
$\phi_i^t\in C^\omega(\cV,\cU)$ for all times $|t|\leq 1$.\\ 
   Denote (only for this proof) $\Delta(x,t)\vcentcolon= 
  \phi_1^t(x)-\phi_2^t(x)$ and observe that $\Delta(x, 0)=0$, then we have
  \begin{equation}
    \label{flowmap14}
\frac{d}{dt}\Delta(x,t)=X_1(\phi_1^t(x))-X_2(\phi_1^t(x))+X_2(\phi_1^t(x))
-X_2(\phi_2^t(x))  
    \end{equation}
from which (since $\phi_i^t(x)\in\cU$, $\forall x\in\cV$ $t \in [-1,1]$)
\begin{equation}
  \label{flowmap.15}
\left\|\frac{d}{dt}\Delta(x,t) \right\|_E\leq
\sup_{y\in\cU}\left\|X_1(y)-X_2(y)\right\|_E
+K_1\left\|\Delta(x,t)\right\|_E.
  \end{equation}
By Gronwall inequality we obtain \eqref{flowmap.12}.

 \qed

\begin{remark}
We point out that the previous lemma is also useful to study the flows of a vector field, Lipschitz with respect to a parameter, and its lipschitz semi-norm.  If $X: \cU \times \Xi \to E$, with $\Xi\subset \R^N$, is such that for any $\xi \in \Xi$ 
the vector filed $X(\xi)$ satisfies the hypothesis of Lemma \ref{flussi1} (with $\delta$ uniform in $\xi$), then for any $\xi_1,\xi_2 \in \Xi$ with $\xi_1 \neq \xi_2$ we can apply \eqref{flowmap.12} obtaining for any $t \in [-1,1]$ the following estimate
\begin{equation}\label{Lip:semnorm}
\sup_{x \in \cV} \frac{\| \phi_X^t(\xi_1;x)-\phi_X^t(\xi_2;x) \|_E}{|\xi_1-\xi_2|} \leq \frac{e^{K_1t}-1}{K_1} \sup_{x \in \cU} \frac{\|X(\xi_1;x)-X(\xi_2;x)\|_E}{|\xi_1-\xi_2|}.
\end{equation}
Then taking the sup over all $\xi_1,\xi_2 \in \Xi$ with $\xi_1 \neq \xi_2$ we have that the lipschitz semi-norm of the flow w.r.t the parameter is bounded by the lipschitz semi-norm of the vector field.
\end{remark}
  \begin{lemma}
    \label{flowmap.perdo}
 Assume the same hypotheses of Lemma \ref{flussi1} and let $\tilde E$ be a Banach space such that $E$ is continuously embedded in $\tilde E$. Assume also that 
  \begin{equation}
    \label{K1}
K_2\vcentcolon= \sup_{{x,y\in\cU}\atop
  x\not=y}\frac{\left\|X_2(x)-X_2(y)\right\|_{\tilde
    E}}{\left\|x-y\right\|_{E}} <\infty,
  \end{equation}
 and let $\cV \subseteq \cU \subseteq E$ be the open set with the property that
\begin{equation}
  \label{flowmap.11}
\bigcup_{y\in\cV}B(\delta,y)\subset\cU.
\end{equation}
Then, the following estimate holds true
  \begin{equation}
    \label{flowmap.12.1}
\sup_{x\in\cV}\left\|\phi_1^t(x)-\phi_2^t(x)\right\|_{\tilde E}\leq
\frac{e^{K_2t}-1}{K_2}
\sup_{x\in\cU}\left\|X_1(x)-X_2(x)\right\|_{\tilde E}
,\quad \forall |t|\leq 1.
    \end{equation}
  \end{lemma}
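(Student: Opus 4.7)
The plan is to proceed in direct analogy with the proof of Lemma \ref{flussi1} that appears just above, tracking norms carefully since now the Lipschitz bound and the final estimate are measured in the weaker space $\tilde E$ while the flows themselves are controlled in $E$. First I would invoke the hypotheses of Lemma \ref{flussi1} (which are inherited) together with \eqref{esisto.3} to ensure that for $|t|\leq 1$ both local flows $\phi_1^t,\phi_2^t$ are well-defined, analytic maps $\cV\to\cU$, so that $\Delta(x,t)\vcentcolon=\phi_1^t(x)-\phi_2^t(x)\in E\subset\tilde E$ is meaningful with $\Delta(x,0)=0$.

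Next I would differentiate in $t$ and add/subtract $X_2(\phi_1^t(x))$ to split
\begin{equation*}
\tfrac{d}{dt}\Delta(x,t)=\bigl[X_1(\phi_1^t(x))-X_2(\phi_1^t(x))\bigr]+\bigl[X_2(\phi_1^t(x))-X_2(\phi_2^t(x))\bigr],
\end{equation*}
exactly as in \eqref{flowmap14}. Taking the $\tilde E$-norm, the first bracket is controlled pointwise by $\sup_{y\in\cU}\|X_1(y)-X_2(y)\|_{\tilde E}$, since $\phi_1^t(x)\in\cU$. For the second bracket, I would use that $\phi_1^t(x),\phi_2^t(x)\in\cU$ together with the hypothesis on $K_2$, giving
\begin{equation*}
\bigl\|X_2(\phi_1^t(x))-X_2(\phi_2^t(x))\bigr\|_{\tilde E}\leq K_2\,\bigl\|\phi_1^t(x)-\phi_2^t(x)\bigr\|.
\end{equation*}
This produces the differential inequality analogous to \eqref{flowmap.15} but with $K_2$ in place of $K_1$ and the $\tilde E$-norm on the left.

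Finally I would apply the standard Gronwall lemma to this scalar inequality, exactly as done at the end of the proof of Lemma \ref{flussi1}, to obtain for all $|t|\leq 1$
\begin{equation*}
\|\Delta(x,t)\|_{\tilde E}\leq\frac{e^{K_2 t}-1}{K_2}\sup_{y\in\cU}\|X_1(y)-X_2(y)\|_{\tilde E},
\end{equation*}
uniformly for $x\in\cV$, which is \eqref{flowmap.12.1}. The main technical point to verify—rather than a genuine obstacle—is that the Lipschitz bound defining $K_2$ is applied to the pair $(\phi_1^t(x),\phi_2^t(x))\in\cU\times\cU$, which is legitimate precisely because of the flow confinement guaranteed by \eqref{flowmap.11} and Lemma \ref{flussi1}; everything else is a cosmetic replacement of norms in the earlier Gronwall argument.
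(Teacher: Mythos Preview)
Your approach is exactly the paper's: the proof there is the single sentence that one exploits \eqref{flowmap14} but writes the analogue of \eqref{flowmap.15} for the $\tilde E$-norm.

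There is, however, a point that both you and the paper leave ambiguous, and it is worth making explicit. As stated, the hypothesis on $K_2$ only gives
\[
\bigl\|X_2(\phi_1^t(x))-X_2(\phi_2^t(x))\bigr\|_{\tilde E}\le K_2\,\|\Delta(x,t)\|_{E},
\]
with the \emph{strong} norm on the right (this is presumably what your unlabeled norm means). The resulting differential inequality
\[
\Bigl\|\tfrac{d}{dt}\Delta(x,t)\Bigr\|_{\tilde E}\le \sup_{y\in\cU}\|X_1(y)-X_2(y)\|_{\tilde E}+K_2\,\|\Delta(x,t)\|_{E}
\]
is not closed in $\|\Delta\|_{\tilde E}$, so a straight Gronwall does not produce \eqref{flowmap.12.1}; the continuous embedding $E\hookrightarrow\tilde E$ goes the wrong way to close it. What the Gronwall step actually needs is $\|x-y\|_{\tilde E}$ in the denominator defining $K_2$, i.e.\ that $X_2$ be Lipschitz into $\tilde E$ with respect to the $\tilde E$-metric on $\cU$. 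This stronger hypothesis is satisfied in every application of the lemma in the paper (for instance in Lemma~\ref{flusso.0}, where $X_2=X_{G^{\tt NLS}}$ is Lipschitz on $\cP^{a,p-4}$ by \eqref{G.NLS} with $q=p-4$ together with a Cauchy estimate), so the denominator in the definition of $K_2$ appears to be a typo; with that corrected, your argument and the paper's coincide and are complete.
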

  
  \begin{proof}
the estimate follows by exploiting \eqref{flowmap14}, but writing \eqref{flowmap.15} for the norm in
  $\tilde E$.
  \end{proof}

\section{Frequencies and small divisors}\label{Smalldivisors}
In this section, we are going to prove some technical lemmas about the frequencies and the small divisors. { We start recalling the following result, proved in \cite{berti2013kam}.
\begin{lemma}[Lemma  7.2 \cite{berti2013kam}]
  \label{lemma4} For any $m>0$ the following holds. Let $\tilde \lambda_i(m)\vcentcolon=\sqrt{i^2+m}$. If  $\{ \bs, \textbf{j}\}  \in \mathcal{L}_\cS \setminus \cIR$ (c.f. Definition \ref{resonant})
the following estimate is satisfied
\begin{equation*}
 \bigg|\sum_{i=1}^4 \sigma_i\tilde{\lambda}_{j_i}\bigg| \geq
  \frac{K m}{(m+J^2)^{3/2}},
\end{equation*}
with $K>0$ an absolute constant and $J \vcentcolon = \max_{j \in \cS} |j|$. Moreover,  if $\sum_{i=1}^4\sigma_i \not=0$ then
\[
 \bigg|\sum_{i=1}^4 \sigma_i\tilde{\lambda}_{j_i}\bigg| \geq \frac{K m}{(J^2+m)^{1/2}}
\]
\end{lemma}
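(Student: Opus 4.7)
I would split the analysis according to the value of $\sum_i \sigma_i$: the \emph{non-Gauge} case $\sum_i\sigma_i\ne 0$ (which can only equal $\pm 2$ or $\pm 4$) and the \emph{Gauge} case $\sum_i\sigma_i=0$. Throughout, the only analytic tool needed is the elementary inequality $\tilde\lambda_a\tilde\lambda_b\ge|ab|+m$, which follows from $(\tilde\lambda_a\tilde\lambda_b)^2-(|ab|+m)^2=m(|a|-|b|)^2\ge 0$, together with the trivial $\tilde\lambda_j\ge\sqrt{m}$.

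In the non-Gauge case, if all four signs coincide one has $|\sum_i\sigma_i\tilde\lambda_{j_i}|\ge 4\sqrt{m}$ immediately. Otherwise, up to relabeling, $\vsigma=(+,+,+,-)$, and the momentum condition $\vsigma\cdot\vj=0$ reads $j_4=j_1+j_2+j_3$. Setting $S\vcentcolon=\sum_i\sigma_i\tilde\lambda_{j_i}$ and rationalizing,
\begin{equation*}
S\cdot\bigl(\tilde\lambda_{j_1}+\tilde\lambda_{j_2}+\tilde\lambda_{j_3}+\tilde\lambda_{j_4}\bigr)=\bigl(\tilde\lambda_{j_1}+\tilde\lambda_{j_2}+\tilde\lambda_{j_3}\bigr)^2-\tilde\lambda_{j_4}^2.
\end{equation*}
Expanding the square, substituting $j_4^2=(j_1+j_2+j_3)^2$, and applying $\tilde\lambda_{j_i}\tilde\lambda_{j_k}-j_ij_k\ge m$ to each of the three cross-terms shows the right-hand side is $\ge 8m$; dividing by the four-sum, which is bounded above by $4\sqrt{J^2+m}$, yields the second inequality of the lemma.

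For the Gauge case, up to relabeling $\vsigma=(+,+,-,-)$ with $j_1+j_2=j_3+j_4$. Two consecutive rationalizations give
\begin{equation*}
S\cdot\bigl(\tilde\lambda_{j_1}+\tilde\lambda_{j_2}+\tilde\lambda_{j_3}+\tilde\lambda_{j_4}\bigr)=2(j_3j_4-j_1j_2)+2\bigl(\tilde\lambda_{j_1}\tilde\lambda_{j_2}-\tilde\lambda_{j_3}\tilde\lambda_{j_4}\bigr),
\end{equation*}
\begin{equation*}
\tilde\lambda_{j_1}\tilde\lambda_{j_2}-\tilde\lambda_{j_3}\tilde\lambda_{j_4}=\frac{(j_1j_2-j_3j_4)(j_1j_2+j_3j_4-2m)}{\tilde\lambda_{j_1}\tilde\lambda_{j_2}+\tilde\lambda_{j_3}\tilde\lambda_{j_4}}.
\end{equation*}
Substituting the second into the first and factoring out $(j_1j_2-j_3j_4)$ turns the numerator into $-2(j_1j_2-j_3j_4)\bigl[(\tilde\lambda_{j_1}\tilde\lambda_{j_2}-j_1j_2+m)+(\tilde\lambda_{j_3}\tilde\lambda_{j_4}-j_3j_4+m)\bigr]$, and each bracketed term is $\ge 2m$ by the basic inequality; dividing by the positive denominators of size $O((J^2+m)^{3/2})$ yields $|S|\gtrsim m\,|j_1j_2-j_3j_4|/(J^2+m)^{3/2}$. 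The two other Gauge sign patterns $(+,-,+,-)$ and $(+,-,-,+)$ are handled identically after permuting the roles in the momentum identity, yielding analogous integer factors $|j_1j_3-j_2j_4|$ or $|j_1j_4-j_2j_3|$.

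The final combinatorial step shows that the relevant integer is at least $1$ when $\{\bs,\vj\}\notin\cIR$: in the pattern $(+,+,-,-)$, the two identities $j_1+j_2=j_3+j_4$ and $j_1j_2=j_3j_4$ together force $\{j_1,j_2\}=\{j_3,j_4\}$ as multisets (they are the roots of the same quadratic), and then the obvious pairing of indices with opposite $\sigma$-signs satisfies Definition \ref{resonant}, placing $\{\bs,\vj\}$ in $\cIR$, a contradiction. The main obstacle is precisely this case-by-case combinatorial verification across the three Gauge sign patterns, together with the sign-bookkeeping in the two rationalizations; all remaining size estimates are routine.
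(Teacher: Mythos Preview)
The paper does not give its own proof of this lemma; it is quoted verbatim from \cite{berti2013kam} (Lemma~7.2). Your rationalization strategy and algebraic identities are correct and are indeed the standard route.

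There is, however, a real gap in the final size estimates. You bound ``the four-sum \ldots\ by $4\sqrt{J^2+m}$'' and the Gauge-case denominators ``of size $O((J^2+m)^{3/2})$''. These bounds would require $|j_i|\le J$ for \emph{every} $i$, but the hypothesis $(\vj,\vsigma)\in\mathcal L_\cS$ only forces \emph{one} index into $\cS$; the other three are unrestricted. Concretely, in the Gauge pattern $(+,+,-,-)$ take $j_1=1\in\cS$, $j_2=2n$, $j_3=n$, $j_4=n+1$: then $\Sigma\cdot Q\sim n^3$, so your bound $8m\,|j_1j_2-j_3j_4|/(\Sigma Q)\to 0$, whereas in fact $|S|\to\sqrt{1+m}-1>0$. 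Likewise in the non-Gauge case with $j_1=0\in\cS$, $j_2=j_3=n$, $j_4=2n$, your estimate $8m/\Sigma\sim 2m/n\to0$ while $|S|\to\sqrt m$.

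The remedy stays inside your framework but needs the sharper inequality
\[
\tilde\lambda_a\tilde\lambda_b-ab\;\ge\;\frac{m}{2\tilde\lambda_a}\,\tilde\lambda_b,
\]
applied to those cross-terms that contain the distinguished small index in $\cS$. For example, in the non-Gauge case $(+,+,+,-)$ with $j_1\in\cS$, this gives $S\,\Sigma\ge 4m+\tfrac{m}{\tilde\lambda_{j_1}}(\tilde\lambda_{j_2}+\tilde\lambda_{j_3})$; combined with $\Sigma\le 2(\tilde\lambda_{j_1}+\tilde\lambda_{j_2}+\tilde\lambda_{j_3})$ one obtains $S\ge m/(2\tilde\lambda_{j_1})\ge m/(2\sqrt{J^2+m})$. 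If instead $j_4\in\cS$, write $\Sigma=S+2\tilde\lambda_{j_4}$ and solve the resulting quadratic in $S$. The Gauge case requires an analogous refinement so that the large $\tilde\lambda$-factors in numerator and denominator cancel. Without this the argument does not deliver a bound uniform in $\vj$.
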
}
{Then, remarking that $\lambda_j=c \tilde\lambda_j(c^2)$ (c.f. \eqref{LKG}), we obtain the following result about the small divisors.}
\begin{lemma}\label{divisori}
For any $c>1$ the following holds. If  $(\vsigma, \vj) \in \mathcal{L}_\cS \setminus \cIR$
(cf. \eqref{LN} and definition \ref{resonant}), then the following estimate is satisfied
\begin{equation}
  \label{divisori.1}
  \bigg|\sum_{i=1}^4 \sigma_i\lambda_{j_i}\bigg|\geq
  \frac{K}{(1+J^2)^{3/2}},
\end{equation}
with $K>0$ an absolute constant, $J
$ and $\lambda_i$ defined respectively in Lemma \ref{lemma4} and \eqref{LKG}. Moreover, if $\sum_{i=1}^4\sigma_i \not=0$ then
\begin{equation}
  \label{divisori.2}
   \bigg|\sum_{i=1}^4 \sigma_i\lambda_{j_i}\bigg|\geq
 c^2 \frac{K}{(1+J^2)^{1/2}}.
  \end{equation}
\end{lemma}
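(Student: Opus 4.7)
The statement is essentially a rescaled version of Lemma \ref{lemma4} (Lemma 7.2 of \cite{berti2013kam}). The key observation is the identity
\[
\lambda_j \;=\; c\sqrt{c^2+j^2} \;=\; c\,\tilde\lambda_j(c^2),
\]
which allows one to convert bounds for $\tilde\lambda_j(m)$ with $m=c^2$ directly into bounds for $\lambda_j$, at the cost of a multiplicative factor $c$.

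My plan would be to split the proof into the two claims and treat each by applying Lemma \ref{lemma4} with $m=c^2$. For \eqref{divisori.1}, since $(\vsigma,\vj)\in\mathcal{L}_\cS\setminus\cIR$, Lemma \ref{lemma4} with $m=c^2$ gives
\[
\Bigl|\sum_{i=1}^4\sigma_i\tilde\lambda_{j_i}(c^2)\Bigr|\;\geq\;\frac{K c^2}{(c^2+J^2)^{3/2}},
\]
and multiplying by $c$ yields
\[
\Bigl|\sum_{i=1}^4\sigma_i\lambda_{j_i}\Bigr|\;\geq\;\frac{K c^3}{(c^2+J^2)^{3/2}}.
\]
The remaining step is the elementary estimate $c^2+J^2\leq c^2(1+J^2)$, valid for $c\geq 1$, which gives $(c^2+J^2)^{3/2}\leq c^3(1+J^2)^{3/2}$. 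Substituting this back produces the constant $K/(1+J^2)^{3/2}$ asserted in \eqref{divisori.1}.

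For \eqref{divisori.2}, the additional hypothesis $\sum_i\sigma_i\neq 0$ is exactly what allows us to invoke the stronger lower bound in Lemma \ref{lemma4}:
\[
\Bigl|\sum_{i=1}^4\sigma_i\tilde\lambda_{j_i}(c^2)\Bigr|\;\geq\;\frac{K c^2}{(c^2+J^2)^{1/2}}.
\]
Multiplying by $c$ and then using $c^2+J^2\leq c^2(1+J^2)$ (so $(c^2+J^2)^{1/2}\leq c(1+J^2)^{1/2}$) gives
\[
\Bigl|\sum_{i=1}^4\sigma_i\lambda_{j_i}\Bigr|\;\geq\;\frac{K c^3}{(c^2+J^2)^{1/2}}\;\geq\;\frac{K c^2}{(1+J^2)^{1/2}},
\]
which is precisely \eqref{divisori.2}. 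There is no genuine obstacle here: the whole content of the lemma is packaged in the cited Lemma \ref{lemma4}, and the present proof is merely a bookkeeping step to translate the bounds from the auxiliary frequencies $\tilde\lambda_j(m)$ to the physical KG frequencies $\lambda_j$ and to rewrite them with denominators depending on $J$ alone rather than on $c^2+J^2$. The only mild care needed is to verify that both elementary inequalities $(c^2+J^2)^{3/2}\leq c^3(1+J^2)^{3/2}$ and $(c^2+J^2)^{1/2}\leq c(1+J^2)^{1/2}$ indeed hold for $c>1$, which both reduce to $c^2 J^2\geq J^2$.
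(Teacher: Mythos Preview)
Your proof is correct and follows exactly the approach the paper indicates: the paper's entire argument is the one-line remark that $\lambda_j=c\,\tilde\lambda_j(c^2)$ together with an appeal to Lemma~\ref{lemma4}, and you have spelled out precisely the rescaling and the elementary inequality $c^2+J^2\le c^2(1+J^2)$ needed to pass from the bounds in Lemma~\ref{lemma4} to \eqref{divisori.1} and \eqref{divisori.2}.
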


We analyze now the frequencies $\omega_0$, $\Omega_0$ and their
  dependence on $\xi$. To this end it is crucial to analyze the
  matrixes $A$ and $B$ introduced in \eqref{def:AKG},\eqref{def:BKG}, so first of all we
  rewite them in a more suitable form. In particular we emphasize that
  $A$ is a rank 1 perturbation of a diagonal matrix and $B$ is the
  projector on one dimensional space.  We have

\begin{align*}
 & v=(v_j)_{j\in\cJ}\vcentcolon= \bigg(\frac{\sqrt{2}}{\tw_j}\bigg)_{j\in\cJ} \in \mathbb{R}^{\cJ}, \qquad
 w=\{w_j\}_{j\in\cJ^c}\vcentcolon= \bigg\{\frac{\sqrt{2}}{\tw_{j}}\bigg\}_{j \in\cJ^c} \in \ell^{\infty, 1},&
  \\
  &D\vcentcolon= \textnormal{diag}_{j\in\cJ}\left(\frac{1}{\tw_j^2}\right), \quad C\vcentcolon= v\langle v, \cdot \rangle, \qquad A=\frac{3}{8\pi}(C-D), \quad B=\frac{3}{8\pi}w\langle v, \cdot \rangle,&
\end{align*}
where $\tw_j$ is given in \eqref{ellappiu}, $\cS$ defined in \eqref{insieme}, $\cS^c \vcentcolon= \mathbb{Z} \setminus \cS$ and $\ell^{\infty, 1}$ defined in \eqref{infbeta}. First, we have the following remark
\begin{remark}
For any $c \geq 1$ and $j \in \mathbb{Z}$ it holds
\begin{equation}\label{bound:peso}
\tw_j \geq 1, \quad \tw_j \geq |j|/c, \quad \mathrm{and} \quad \tw_{j_1} > \tw_{j_2}, \, \, \textnormal{whenever} \, \, |j_1|>|j_2|.
\end{equation}
\end{remark}}
In the following lemmas we denote $\tw\vcentcolon=(\tw_j)_{j\in\cJ}$.
\begin{lemma}
  \label{inverto}
  The matrix $A$ is invertible 
  with 
  \begin{equation}
    \label{inverto.A}
A^{-1}=\frac{8\pi}{3}\left(\frac{2\langle \tw,.\rangle}{2N-1}\tw-D^{-1}\right),
  \end{equation}
  and  
  \begin{equation}
    \label{invert.1}
\left|A^{-1}\right|_{\ell^1_\cJ;\ell^1_\cJ} \leq \frac{8\pi}{3}\frac{4N-1}{2N-1}|\tw|^2 ,
  \end{equation}
  {where $\ell^1_\cJ \vcentcolon = \ell^1( \cJ ; \mathbb{R})$.}
\end{lemma}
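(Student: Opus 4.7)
The plan is to invert $A$ explicitly via the Sherman--Morrison formula and then compute the $\ell^1 \to \ell^1$ operator norm by a direct triangle-inequality estimate on the resulting closed-form expression.

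First, I would reorganize $A = \tfrac{3}{8\pi}(C-D)$ by rewriting $C - D = -D + v v^T$, i.e.\ a rank-one perturbation of the (invertible) diagonal matrix $-D$. Since $D^{-1} = \mathrm{diag}(\tw_j^2)$ and $v_j = \sqrt{2}/\tw_j$, one computes
\[
D^{-1} v = \sqrt{2}\,\tw, \qquad v^T D^{-1} v = \sum_{j \in \cS} \frac{\sqrt{2}}{\tw_j} \cdot \sqrt{2}\,\tw_j = 2N.
\]
Since $1 - v^T D^{-1} v = 1 - 2N \neq 0$, Sherman--Morrison is applicable and yields
\[
(-D + v v^T)^{-1} = -D^{-1} - \frac{D^{-1} v\, v^T D^{-1}}{1 - v^T D^{-1} v}
= -D^{-1} + \frac{2 \tw \langle \tw, \cdot \rangle}{2N-1},
\]
so that
\[
A^{-1} = \frac{8\pi}{3}\left(\frac{2 \langle \tw, \cdot \rangle}{2N-1} \tw - D^{-1}\right),
\]
which is exactly \eqref{inverto.A}. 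If desired, one can double-check by multiplying $(C-D)$ by the expression on the right and verifying it gives the identity, using $\langle v, \tw \rangle = \sqrt{2}\,N$ and $\langle v, D^{-1} x\rangle = \sqrt{2}\,\langle \tw, x\rangle$.

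For the norm estimate \eqref{invert.1}, I would pick $x \in \ell^1_\cS$ and bound each of the two summands in $A^{-1} x$ in $\ell^1$-norm. For the rank-one piece:
\[
\bigl\| \tfrac{2 \langle \tw, x \rangle}{2N-1} \tw\bigr\|_1 \leq \frac{2 |\langle \tw, x\rangle|}{2N-1} |\tw|_1 \leq \frac{2\, |\tw|\, \|x\|_1 \cdot N |\tw|}{2N-1} = \frac{2N}{2N-1} |\tw|^2 \|x\|_1,
\]
using $|\langle \tw, x\rangle| \leq |\tw| \, \|x\|_1$ and $|\tw|_1 \leq N |\tw|$. For the diagonal piece, $\|D^{-1} x\|_1 = \sum_j \tw_j^2 |x_j| \leq |\tw|^2 \|x\|_1$. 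Combining,
\[
\|A^{-1} x\|_1 \leq \frac{8\pi}{3}\left(\frac{2N}{2N-1} + 1\right)|\tw|^2 \|x\|_1 = \frac{8\pi}{3}\cdot \frac{4N-1}{2N-1}|\tw|^2 \|x\|_1,
\]
which is \eqref{invert.1}.

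The statement is essentially an exercise in linear algebra, so I do not anticipate any genuine obstacle: the only thing to be careful about is matching the two-line closed form in \eqref{inverto.A} (in particular the sign and the factor $2/(2N-1)$ produced by Sherman--Morrison) and correctly combining the fractions $2N/(2N-1) + 1 = (4N-1)/(2N-1)$ in the norm bound.
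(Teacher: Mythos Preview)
Your proof is correct and essentially identical to the paper's. The paper derives the Sherman--Morrison (Bateman) formula by hand---multiplying $(C-D)x=y$ by $D^{-1}$ and taking the scalar product with $v$ to isolate $\langle v,x\rangle$---while you invoke it by name; the intermediate computations $D^{-1}v=\sqrt{2}\,\tw$, $\langle v,D^{-1}v\rangle=2N$ and the norm bound via $|\langle\tw,x\rangle|\le|\tw|\,\|x\|_1$, $|\tw|_1\le N|\tw|$, $\|D^{-1}x\|_1\le|\tw|^2\|x\|_1$ are the same in both.
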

\proof We solve the equation
\begin{align}
  \label{risolvo}
(C-D)x=y\quad \iff\quad v\langle v,x\rangle-Dx=y.
\end{align}
Multiplying by $D^{-1}$ and then taking the scalar product with $v$
we have 
$$
\langle v,D^{-1}v\rangle\langle v,x\rangle-\langle v,x\rangle=\langle
v,D^{-1}y\rangle, \ \Longrightarrow \ \langle v,x\rangle=\frac{\langle
v,D^{-1}y\rangle }{\langle
v,D^{-1}v\rangle-1}=\frac{\langle
D^{-1}v,y\rangle }{\langle
v,D^{-1}v\rangle-1}.
$$
By inserting in \eqref{risolvo} and solving, we have
\begin{equation}
  \label{solv.01}
x=\frac{\langle
D^{-1}v,y\rangle }{\langle
v,D^{-1}v\rangle-1}D^{-1}v-D^{-1}y,
\end{equation}
which is also known as Bateman formula. 
By computing explicitly we have 
\begin{align*}
& (D^{-1}v)_j=\tw_j^2\frac{\sqrt{2}}{\tw_j}=\sqrt{2}\tw_j, 
  \\
  &\langle v,D^{-1}
v\rangle=\sum_{j \in \cS} \frac{\sqrt{2}}{\tw_j}\,(\sqrt{2}\tw_j)=\sum_{j \in \cS} 2=2N,
\end{align*}
and inserting in \eqref{solv.01}
\begin{equation}
  \label{solv.11}
x=\frac{2\langle\tw,y\rangle}{2N-1}\tw-D^{-1}y,
\end{equation}
which gives \eqref{inverto.A}. From this we have
\begin{align*}
  \label{solv.3}
| x|_1&\leq  \frac{2N|\tw|^2}{2N-1}|y|_1+| \tw| |y|_1\leq  \frac{(4N-1)|\tw|^2}{2N-1}|y|_1
\end{align*}
which gives the thesis. \qed

\begin{lemma}
{Given $\ell\in \ell^{\infty, 1}$,}
  the solution of the equation
  \begin{equation}
    \label{solv.1.e}
Ax+B^t\ell =0,
  \end{equation}
  {where
    $$
B^t\vcentcolon= \frac{3}{8\pi}\langle w,.\rangle v
    $$
is the transpose of $B$,
  }
is given by
\begin{equation}
  \label{solve.1.2}
{x=\langle w,\ell\rangle\frac{\sqrt{2}}{1-2 N}{\tw}}.
  \end{equation}
\end{lemma}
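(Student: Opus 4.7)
The plan is to invoke the explicit formula for $A^{-1}$ obtained in Lemma \ref{inverto} and reduce the problem to a direct computation involving the structured vectors $v$ and $\tw$.

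First I would rewrite the equation \eqref{solv.1.e} as
\[
Ax = -B^t\ell = -\frac{3}{8\pi}\langle w,\ell\rangle\, v,
\]
which is well-defined since $\ell \in \ell^{\infty,1}$ and $w \in \ell^{\infty,1}$ guarantee that $\langle w,\ell\rangle$ converges. Applying the formula \eqref{inverto.A}, namely $A^{-1} = \tfrac{8\pi}{3}\bigl(\tfrac{2\langle\tw,\cdot\rangle}{2N-1}\tw - D^{-1}\bigr)$, to the right-hand side produces
\[
x = -\langle w,\ell\rangle\left(\frac{2\langle\tw,v\rangle}{2N-1}\tw - D^{-1}v\right).
\]

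The next step is the evaluation of the two quantities $\langle\tw,v\rangle$ and $D^{-1}v$. Using $v_j = \sqrt{2}/\tw_j$ for $j\in\cS$ one immediately gets
\[
\langle \tw,v\rangle = \sum_{j\in\cS}\tw_j\cdot\frac{\sqrt{2}}{\tw_j} = \sqrt{2}\,N,
\qquad (D^{-1}v)_j = \tw_j^2\cdot\frac{\sqrt{2}}{\tw_j} = \sqrt{2}\,\tw_j,
\]
so that $D^{-1}v = \sqrt{2}\,\tw$. Substituting these values back and factoring out $\sqrt{2}\,\tw$ yields
\[
x = -\langle w,\ell\rangle\,\sqrt{2}\,\tw\left(\frac{2N}{2N-1}-1\right) = -\langle w,\ell\rangle\,\sqrt{2}\,\tw\cdot\frac{1}{2N-1} = \langle w,\ell\rangle\,\frac{\sqrt{2}}{1-2N}\,\tw,
\]
which is precisely \eqref{solve.1.2}.

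There is really no serious obstacle here: once Lemma \ref{inverto} is available, the statement is a two-line algebraic manipulation. The only point worth flagging is to ensure that $\langle w,\ell\rangle$ is well defined under the assumption $\ell\in\ell^{\infty,1}$ (which follows from $|w_j|\tw_j = \sqrt{2}$ and $|\ell_j|\tw_j \lesssim \|\ell\|_{\infty,1}$, so that the series $\sum_{j\in\cS^c} w_j\ell_j$ converges absolutely once one recalls that $\tw_j \gtrsim |j|/c$ and only finitely many indices with $|j|\le c$ need care). Since the argument is entirely algebraic, uniqueness of the solution is automatic from the invertibility of $A$ established in Lemma \ref{inverto}.
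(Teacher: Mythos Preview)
Your proof is correct and follows exactly the paper's approach: the paper's proof consists of the single sentence ``Since $A$ is invertible, the expression \eqref{solve.1.2} is just given by a direct computation of $-A^{-1}B^t\ell$,'' and you have carried out that computation explicitly using the formula \eqref{inverto.A}. Your extra care about the convergence of $\langle w,\ell\rangle$ is not wrong but also not really needed, since in the paper this lemma is only applied with $|\ell|_1\le 2$ (see Lemma \ref{Lemma:N}), where $\ell$ has finite support.
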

\proof Since $A$ is invertible, the expression \eqref{solve.1.2} is just given by a direct computation of $-A^{-1}B^t\ell$.
 \qed

\smallskip

\begin{lemma}\label{Lemma:N}
  If $|\ell|_1=1,2$ then \eqref{solve.1.2} fulfills 
  \begin{equation}
    \label{l=1}
{|x_j|\leq \frac{4}{2N-1}\tw_j,\quad j\in\cJ} .
  \end{equation}
  \end{lemma}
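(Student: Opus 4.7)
The plan is essentially immediate from the explicit formula \eqref{solve.1.2}. First I would read off componentwise
\[
x_j \;=\; \langle w,\ell\rangle\,\frac{\sqrt{2}}{1-2N}\,\tw_j,\qquad j\in\cJ,
\]
so that $|x_j|=\dfrac{\sqrt{2}}{2N-1}\,|\langle w,\ell\rangle|\,\tw_j$, and all that remains is to bound the scalar $|\langle w,\ell\rangle|$ uniformly when $|\ell|_1\leq 2$.

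Next I would use the definition $w_j=\sqrt{2}/\tw_j$ for $j\in\cJ^c$ together with the first bound in \eqref{bound:peso}, namely $\tw_j\geq 1$, which gives $|w_j|\leq\sqrt{2}$ for every $j\in\cJ^c$. A trivial application of H\"older's inequality (or the triangle inequality) then yields
\[
|\langle w,\ell\rangle|\;\leq\;\|w\|_{\infty}\,|\ell|_1\;\leq\;\sqrt{2}\,|\ell|_1\;\leq\;2\sqrt{2},
\]
under the standing hypothesis $|\ell|_1\leq 2$.

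Inserting this into the formula for $x_j$ gives
\[
|x_j|\;\leq\;\frac{\sqrt{2}}{2N-1}\cdot 2\sqrt{2}\cdot \tw_j\;=\;\frac{4}{2N-1}\,\tw_j,
\]
which is exactly \eqref{l=1}. There is no real obstacle here: the only inputs are the explicit closed form of the solution provided by the previous lemma and the elementary bound $\tw_j\geq 1$; the bound $|\ell|_1\leq 2$ is used only to turn the $\ell^\infty$ estimate on $w$ into a scalar constant.
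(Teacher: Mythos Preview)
Your proof is correct and follows essentially the same approach as the paper: both use the explicit formula \eqref{solve.1.2} and the bound $\tw_j\geq 1$ from \eqref{bound:peso} to estimate $|\langle w,\ell\rangle|$. The only cosmetic difference is that the paper splits into the cases $|\ell|_1=1$ and $|\ell|_1=2$ and evaluates $\langle w,\ell\rangle$ explicitly at the supporting indices, whereas you handle both cases at once via $|\langle w,\ell\rangle|\leq \|w\|_\infty|\ell|_1\leq 2\sqrt{2}$.
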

\begin{proof}
 In the case $|\ell|_1=1$
we have  $\langle w,\ell\rangle=\pm \sqrt{2}/\tw_n$ for some integer $n\in\cJ^c$. So
we have
$$
|x_j|=\frac{\sqrt{2}}{2N-1}\frac{\tw_j}{\tw_n}\leq \frac{\sqrt{2}}{2N-1}\frac{\tw_j}{\tw_0}=\frac{\sqrt{2}}{2N-1}\tw_j.
$$
When $|\ell|_1=2$ we have
$$
{|\langle  w,\ell\rangle|=\frac{\sqrt{2}}{\tw_n}\pm \frac{\sqrt{2}}{\tw_m}},
$$
for some indices $n<m\in\cJ^c$. So, reasoning in the same way the conclusion
follows.
\end{proof}

\begin{lemma}\label{Stima:grande0}
There exists a constant $C$, which depends only on $N$, such that, if $N\geq
3$ and 
\begin{equation}
  \label{jh}
{h\leq\frac{49}{576J^2}},
\end{equation}
then
\begin{equation}\label{AB:1}
|Ak+B{^t}\ell |_1 \geq C |k|_1,\quad \forall(k,\ell) \in \mathcal{Z}_2.
\end{equation}
\end{lemma}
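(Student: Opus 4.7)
The starting point is the identity $Ak + B^t\ell = A(k - x^{(\ell)})$, where $x^{(\ell)}\vcentcolon=\frac{\sqrt{2}\,\langle w,\ell\rangle}{1-2N}\tw$ is the explicit solution of $Ax^{(\ell)}+B^t\ell=0$ produced in Lemma~\ref{Lemma:N}. By \eqref{invert.1} together with the hypothesis $hJ^2 \leq 49/576$ (which forces $\tw_j \in [1,25/24]$ for every $j \in \cJ$), one has $\|A^{-1}\|_{\ell^1_{\cJ};\ell^1_{\cJ}} \leq M_1(N)$ for a constant depending only on $N$, and hence
$$|Ak + B^t\ell|_1 \;\geq\; \frac{|k - x^{(\ell)}|_1}{M_1(N)}.$$
Moreover, the bounds $|\langle w,\ell\rangle|\leq \sqrt{2}|\ell|_1\leq 2\sqrt{2}$ and $|\tw|_1\leq 25N/24$ give $|x^{(\ell)}|_1 \leq M_2(N)\vcentcolon=\frac{25N}{6(2N-1)}$, which is also a constant depending only on $N$.

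When $|k|_1 \geq 2M_2(N)$ a triangle inequality yields $|k-x^{(\ell)}|_1 \geq |k|_1/2$, and the claim follows with $C=1/(2M_1(N))$. The delicate regime is $1\leq |k|_1 < 2M_2(N)$, where $k$ ranges over a finite subset of $\mathbb{Z}^{\cJ}\setminus\{0\}$ and one needs a uniform bound $|k-x^{(\ell)}|_1 \geq c(N)>0$, holding for all admissible $\ell$ (with $|\ell|_1\leq 2$) and all $h$ in the admissible range.

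To treat this second regime I would exploit that $x^{(\ell)}=\lambda\tw$ with $|\lambda|\leq 4/(2N-1)$, together with the fact that $\tw$ is close to the all-ones vector $\mathbf{1}\in\mathbb{R}^\cJ$: the hypothesis gives $|\tw-\mathbf{1}|_1 \leq NhJ^2/2 \leq 49N/1152$, so that
$$|k-\lambda\tw|_1 \;\geq\; |k-\lambda\mathbf{1}|_1 - |\lambda|\,|\tw-\mathbf{1}|_1,$$
where the subtracted term is a constant depending only on $N$ which, by the smallness hypothesis on $h$, stays well below the lower bound for the first term. To bound that first term: if $k\notin\mathbb{Z}\mathbf{1}$ then two components of the integer vector $k$ differ by at least $1$, so $\min_{\mu\in\mathbb{R}}\sum_j|k_j-\mu|\geq 1$ by the triangle inequality and $|k-\lambda\mathbf{1}|_1\geq 1$; if instead $k=m\mathbf{1}$ with $m\in\mathbb{Z}\setminus\{0\}$, then $|k-\lambda\mathbf{1}|_1=N|m-\lambda|$, and the crucial observation is that $|\lambda|\leq 4/(2N-1)<1$ for $N\geq 3$, so $|m-\lambda|\geq (2N-5)/(2N-1)>0$. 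Combining the two sub-cases yields the desired uniform lower bound, which then feeds back through the inequality $|Ak+B^t\ell|_1 \geq |k-x^{(\ell)}|_1/M_1(N)$ and the trivial estimate $|k|_1 \leq 2M_2(N)$ to produce a constant $C=C(N)$ in the small-$|k|_1$ regime as well.

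The main obstacle is verifying that the separation from the lattice $\mathbb{Z}\mathbf{1}$ survives the passage from $\lambda\mathbf{1}$ to $\lambda\tw$; this is precisely where the assumption $N\geq 3$ and the quantitative smallness $hJ^2\leq 49/576$ play a decisive role, the former forcing $|\lambda|<1$ regardless of $\ell$ with $|\ell|_1\leq 2$, and the latter keeping $\tw$ close enough to $\mathbf{1}$ that the perturbative correction $|\lambda|\,|\tw-\mathbf{1}|_1$ cannot close the gap opened by the non-resonance $|m-\lambda|\geq (2N-5)/(2N-1)$.
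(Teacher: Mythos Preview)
Your proof is correct, and starts from the same identity $Ak+B^t\ell=A(k-x^{(\ell)})$ with $x^{(\ell)}$ given by \eqref{solve.1.2}, together with the bound \eqref{invert.1}. The difference lies in how you bound $|k-x^{(\ell)}|_1$ from below.

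The paper avoids your case analysis by using the \emph{pointwise} bound of Lemma~\ref{Lemma:N} at the single index $\bar\i$ where $|k_{\bar\i}|=\max_j|k_j|$. Since $k\in\Z^\cJ\setminus\{0\}$ one has $|k_{\bar\i}|\geq 1$, and the hypotheses $N\geq 3$, $hJ^2\leq 49/576$ give $|x_{\bar\i}|\leq\frac{4\tw_J}{2N-1}\leq\frac{4}{5}\cdot\frac{25}{24}=\frac{5}{6}$, so
\[
|k-x^{(\ell)}|_1\;\geq\;|k_{\bar\i}-x_{\bar\i}|\;\geq\;|k_{\bar\i}|\Bigl(1-\frac{|x_{\bar\i}|}{|k_{\bar\i}|}\Bigr)\;\geq\;\frac{|k_{\bar\i}|}{6}\;\geq\;\frac{|k|_1}{6N}.
\]
This one-line estimate replaces your entire large/small-$|k|_1$ split and the subsequent perturbation argument from $\lambda\mathbf{1}$ to $\lambda\tw$. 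Your route works (the numbers do close: at $N=3$ the lattice gap $N(2N-5)/(2N-1)=3/5$ exceeds the correction $|\lambda|\,|\tw-\mathbf 1|_1\leq 49/480$), but the paper's choice of looking at one coordinate rather than the full $\ell^1$-norm lets the integrality of $k_{\bar\i}$ do all the work directly. What your approach buys is perhaps a clearer view of \emph{why} the specific threshold $49/576$ and the assumption $N\geq 3$ are needed: you isolate the non-resonance condition $|\lambda|<1$ and the smallness of $|\tw-\mathbf 1|_1$ as separate ingredients, whereas in the paper they are fused into the single numerical check $\frac{4}{5}\cdot\frac{25}{24}\leq\frac56$.
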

\begin{proof}
Assume $k\not=0$, otherwise the result is trivial. 
Let $\delta\vcentcolon= Ak+B^t\ell$, and denote by $x$ the solution of the
equation \eqref{solv.1.e}. One has
\[
\delta=A(k-x)+Ax+B^t\ell=A(k-x),\ \Longrightarrow k-x=A^{-1}\delta
\]
and thus, from the boundedness of $A^{-1}$ (c.f. \eqref{invert.1}) we have
\[
|k-x|_1\sleq |Ak+B^t\ell   |_1.
\]
On the other hand, let $\bar\i$ be the index s.t. $|k_{\bar
  \i}|=\max|k_j|$, we have
\[
|k - x|_1
\gtrsim
\left|k_{\bar\i} - x_{\bar\i}\right|
\gtrsim
\left|k_{\bar\i}\right|\!\left(1 - \left|\frac{x_{\bar\i}}{k_{\bar\i}}\right|\right)
\gtrsim
\left|k_{\bar\i}\right|\!\left(1 - \left|x_{\bar\i}\right|\right)
\]
but
\[
\left|x_{\bar\i}\right|\leq
\frac{4\tw_J}{2N-1}\leq\frac{4}{5}\sqrt{1+hJ^2}\leq \frac{5}{6},
\]
where we used $N\geq 3$ and the limitation on $h$. So the thesis follows.
\end{proof}

\begin{lemma}\label{Introomega1}
The frequency maps $\omega:\Xi_0 \to \R^N$,
  $\omega^{\tt NLS}:\Xi_0 \to \R^N$ defined in \eqref{Oomega}, \eqref{OomegaNLS}
are two Lipeomorphisms on $\Xi_0$ defined in \eqref{Xi0}. Moreover, the following estimates hold
\begin{equation}\label{AB:3}
|\omega_0-(c^2+\omega_0^{\tt NLS})|_{\infty} \lesssim
h, \qquad
\sup_{j \in \cS^c}\left|\left(\Omega_0\right)_j-\left(c^2+\Omega_0^{\tt NLS})\right)_j\right|_\infty j^{-4} \lesssim h.
\end{equation}
\end{lemma}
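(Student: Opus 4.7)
The plan decomposes the argument into two essentially independent tasks: establishing the Lipeomorphism property and deriving the quantitative bounds in \eqref{AB:3}. For the first, I will exploit that both maps are affine in $\xi$, namely $\omega_0(h,\xi) = \lambda(h) + A(h)\xi$ and $\omega_0^{\tt NLS}(\xi) = \lambda^{\tt NLS} + A^{\tt NLS}\xi$, so the claim reduces to invertibility of $A(h)$ and $A^{\tt NLS}$ with operator norm and inverse-norm bounded uniformly in $h$. Invertibility and an explicit formula for $A^{-1}$ are given by Lemma \ref{inverto}, yielding the uniform bound $|A(h)^{-1}|_{\ell^1_\cS\to\ell^1_\cS} \lesssim |\tw|^2$, which remains finite for $h$ small since $\cS$ is finite and $\tw_j = \sqrt{1+hj^2}$ stays bounded on $\cS$. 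The direct bound $|A(h)|_{\ell^1_\cS\to\ell^1_\cS} \lesssim 1$ is immediate from \eqref{def:AKG} and $\tw_j\geq 1$. The NLS counterpart is the $h=0$ case.

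For the first estimate in \eqref{AB:3} I would write
\begin{equation*}
\omega_0(h,\xi) - c^2 - \omega_0^{\tt NLS}(\xi) = (\lambda - c^2 - \lambda^{\tt NLS}) + (A(h) - A^{\tt NLS})\xi
\end{equation*}
and bound each piece. For $j\in\cS$, \eqref{difffreq} gives $|\lambda_j - c^2 - \lambda_j^{\tt NLS}| = |\nu_j - j^2/2| \leq hj^4/2 \lesssim h$ since $\cS$ is finite. For the matrix entries the algebraic identity
\begin{equation*}
\tw_i\tw_j - 1 = \frac{h(i^2+j^2) + h^2 i^2 j^2}{\tw_i\tw_j+1},
\end{equation*}
obtained by multiplying by the conjugate, yields $|A_{ij}(h) - A_{ij}^{\tt NLS}| = N_{ij}|\tw_i\tw_j-1|/(\tw_i\tw_j) \lesssim h$ for $i,j\in\cS$. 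Combined with $|\xi|\lesssim \tR^2$ on $\Xi_0$, this closes the bound.

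The second estimate in \eqref{AB:3} is handled by the analogous decomposition
\begin{equation*}
\Omega_{0,j}(h,\xi) - c^2 - \Omega_{0,j}^{\tt NLS}(\xi) = (\nu_j - j^2/2) + \sum_{i\in\cS}(B_{ji}(h)-B_{ji}^{\tt NLS})\xi_i.
\end{equation*}
The first term is $O(hj^4)$ directly from \eqref{difffreq}. For the second, the same factorization applied to $B_{ji}$ with $i\in\cS$ bounded and $j\in\cS^c$ gives $|B_{ji}(h) - B_{ji}^{\tt NLS}| \lesssim h(1+j^2)$. Summing over the finite set $\cS$ and multiplying by $|\xi|\lesssim \tR^2$ yields a contribution bounded by $h\jbs{j}^2 \lesssim h\jbs{j}^4$, which is the required $j^{-4}$-weighted estimate.

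There is no genuine obstacle: the whole argument rests on the explicit inverse of $A$ provided by Lemma \ref{inverto} and the elementary identity that $\tw_i\tw_j - 1$ is $O(h(i^2+j^2))$, both of which are one-line computations. The only mildly subtle point is the uniformity in $h$ of the Lipeomorphism constants, and it is already secured by the bound in Lemma \ref{inverto}.
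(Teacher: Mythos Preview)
Your proposal is correct and follows essentially the same approach as the paper: the Lipeomorphism claim is reduced to the invertibility of $A(h)$ via Lemma~\ref{inverto}, and the bounds in \eqref{AB:3} are obtained by decomposing into the linear-frequency part (handled by \eqref{difffreq}) plus the matrix-difference part. Your explicit conjugate identity for $\tw_i\tw_j-1$ is equivalent to (and slightly more transparent than) the paper's reference to the bounds in \eqref{stim:intermedia}, and in fact yields the marginally sharper $h\jbs{j}^2$ for the $B$-part where the paper records $hj^4$.
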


\begin{proof}
The fact that the two maps are Lipeomorphism and the uniform bounds on the Lipschitz constants comes from the invertibility of $\matA$ and \eqref{invert.1}. 
The estimate \eqref{AB:3} follows directly by the estimates \eqref{difffreq} and by the following estimates 
\[
\left|\matA(h)\xi-\matA^{\tt NLS}\xi \right|_\infty \lesssim h r^{4/3}, \quad \left| (\matB(h)\xi)_j-(\matB^{\tt NLS}\xi)_j\right|_\infty \lesssim j^4 h r^{4/3}, \, \, \, \, \forall j \in \cS,
\]
that follow from \eqref{def:AKG}, \eqref{def:BKG}, \eqref{matNLS} and the estimate \eqref{stim:intermedia}.
\end{proof}
\begin{lemma}
  \label{C.55}
  Let $i, j\in \mathbb{Z}$, $c \geq 1$. The following formulae and estimates hold 
  \begin{align}
    \label{C.56}
    &\lambda_j-\lambda_i=\frac{j^2-i^2}{\tw_i+\tw_j}=c(|j|-|i|)\frac{|i|+|j|}{\sqrt{c^2+i^2}+\sqrt{c^2+j^2}},& &\,&
\\
\label{B.2}
&\lambda_j-\lambda_i\leq c(|j|-|i|)& \, \, &\mathrm{for}\,\,|j|>|i|.&
\\
\label{C.57}
    &(\matB\xi)_i-(\matB\xi)_j=\frac{3\sqrt{2}}{8\pi}(|j|-|i|)\frac{|i|+|j|}{\sqrt{c^2+i^2}+\sqrt{c^2+j^2}}\frac{\langle
      v,\xi\rangle}{c^2}\frac{1}{\tw_i\tw_j},&\, \, &i, j \in \cS^c, \, \, \forall \xi \in \Xi_0.&
  \end{align}
Moreover, for any $h$ that satisfies \eqref{jh}  and  for any $i,j \in \mathbb{Z}$ such that $c^3<|i|<|j|$ we have
\begin{equation}\label{AB:2}
\bigg|\frac{ (\Omega_0)_j- (\Omega_0)_i}{ c(|j|-|i|)} -1\bigg|_\infty\sleq
\frac{1}{\tw_i^2},
\end{equation}
where the norm is introduced in \eqref{norma:Omega}.
\end{lemma}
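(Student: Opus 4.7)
The plan is to prove the four claims sequentially, the first three being direct algebraic identities and the fourth, \eqref{AB:2}, being the one that actually uses the regime $|i|>c^3$.

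For \eqref{C.56}, I would exploit the identity $\lambda_j=c^2\tw_j$ together with the difference-of-squares factorization $\tw_j^2-\tw_i^2=(j^2-i^2)/c^2$, which gives $\lambda_j-\lambda_i=c^2(\tw_j-\tw_i)=(j^2-i^2)/(\tw_i+\tw_j)$. Rewriting $\tw_i+\tw_j=(\sqrt{c^2+i^2}+\sqrt{c^2+j^2})/c$ and factoring $j^2-i^2=(|j|-|i|)(|j|+|i|)$ yields the second equality. Then \eqref{B.2} follows by simply observing $|i|+|j|\leq\sqrt{c^2+i^2}+\sqrt{c^2+j^2}$, so the second factor in \eqref{C.56} is bounded by $1$.

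For \eqref{C.57}, I would use the rank-one representation $\matB=\frac{3}{8\pi}w\langle v,\cdot\rangle$ recorded before Lemma \ref{inverto}, with $v_j=\sqrt{2}/\tw_j$. A direct calculation gives $(\matB\xi)_i=\frac{3\sqrt{2}}{8\pi}\langle v,\xi\rangle/\tw_i$, hence
\[
(\matB\xi)_i-(\matB\xi)_j=\frac{3\sqrt{2}}{8\pi}\,\langle v,\xi\rangle\,\frac{\tw_j-\tw_i}{\tw_i\tw_j}.
\]
Substituting the expression $\tw_j-\tw_i=(|j|-|i|)(|i|+|j|)/[c(\sqrt{c^2+i^2}+\sqrt{c^2+j^2})]$ obtained in the previous step reproduces the claimed identity (up to the overall power of $c$, whose tracking is the only place where one must be careful).

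Finally, \eqref{AB:2} is the main content of the lemma, and here the strategy is to split $(\Omega_0)_j-(\Omega_0)_i=(\lambda_j-\lambda_i)+((\matB\xi)_j-(\matB\xi)_i)$ and handle the two pieces separately. Dividing \eqref{C.56} by $c(|j|-|i|)$, the leading piece becomes
\[
\eta\vcentcolon=\frac{|i|+|j|}{\sqrt{c^2+i^2}+\sqrt{c^2+j^2}},
\]
so the deviation from $1$ is exactly $1-\eta$. The plan is to bound $\sqrt{c^2+s^2}\leq|s|+c^2/(2|s|)$ for $|s|\geq c$, which yields $1-\eta\leq c^2/|i|\cdot 1/(|i|+|j|)\leq c^2/i^2$, and then to compare with $1/\tw_i^2=c^2/(c^2+i^2)$: since $|i|>c^3>c$ one has $c^2+i^2\leq 2i^2$, whence $c^2/i^2\leq 2/\tw_i^2$. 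For the $\matB\xi$ contribution, \eqref{C.57} produces the factor $1/(\tw_i\tw_j)$, which in the regime $|i|,|j|>c^3$ is already $\lesssim 1/c^4\ll 1/\tw_i^2$; together with the trivial estimate $|\langle v,\xi\rangle|\lesssim \tR^2$ valid on $\Xi_0$ this makes the $\matB\xi$ contribution negligible. I expect the main (though modest) obstacle to be only the bookkeeping of the expansion of $\sqrt{c^2+s^2}/|s|$ and the subsequent conversion of $c^2/i^2$ into $1/\tw_i^2$; no genuinely new ideas beyond those used in the first three identities are required.
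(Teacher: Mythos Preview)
Your approach coincides with the paper's: direct algebra for \eqref{C.56}--\eqref{C.57}, the bound $\eta\leq 1$ for \eqref{B.2}, and for \eqref{AB:2} the splitting $(\Omega_0)_j-(\Omega_0)_i=(\lambda_j-\lambda_i)+((\matB\xi)_j-(\matB\xi)_i)$ followed by the expansion $\sqrt{c^2+s^2}=|s|(1+O(c^2/s^2))$ to control $1-\eta$.

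One slip to fix in the $\matB\xi$ part: you claim that for $|i|>c^3$ one has $1/(\tw_i\tw_j)\lesssim 1/c^4\ll 1/\tw_i^2$, but the inequality goes the wrong way --- since $\tw_i\geq |i|/c>c^2$, in fact $1/\tw_i^2<1/c^4$. The correct route is simply $\tw_j\geq\tw_i$ (because $|j|\geq|i|$), hence $1/(\tw_i\tw_j)\leq 1/\tw_i^2$ directly. Combined with the remaining prefactor $\eta\,\langle v,\xi\rangle/c^3$ obtained after dividing \eqref{C.57} by $c(|j|-|i|)$, this gives a contribution $\lesssim 1/(c^3\tw_i^2)\lesssim 1/\tw_i^2$, which is exactly the $O(1/(c^2\tw_i^2))$ term recorded in the paper.
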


\begin{proof}
The equalities \eqref{C.56} and \eqref{C.57} are obtained by direct computations, recalling the definition of $\lambda_j$ in \eqref{LKG}, the definition of $\tw_j$ in \eqref{ellappiu}, and the definition of $\matB$.
The inequality \eqref{B.2} follows from the second equality of \eqref{C.56} and
  remark that the last fraction is always smaller than 1. 

\smallskip

We prove now \eqref{AB:2}. For $c^3\leq |i|<|j|$ we have
\[
\sqrt{c^2+i^2}=|i| \sqrt{1+\frac{c^2}{i^2}}=|i|\bigg(1+\mathcal{O}\bigg(\frac{c^2}{i^2}\bigg)\bigg),
\]
then
$$
\frac{|i|+|j|}{\sqrt{c^2+i^2}+\sqrt{c^2+j^2}}=1+\cO\left(\frac{c^2}{i^2}\right)=1+\cO\left(\frac{1}{\tw_i^2}\right),
  $$
     and therefore by \eqref{C.56}, \eqref{C.57} and \eqref{Oomega} we have
     $$
\Omega_{0j}-\Omega_{0i}=c(|j|-|i|)\left(1+\cO\left(\frac{1}{\tw_i^2}\right)+\cO\left(\frac{1}{c^2\tw_i^2}\right)\right),
$$
proving the thesis.
\end{proof}

\begin{lemma}
The following estimates hold true 
\begin{equation}\label{Omega:lambda}
|\Omega_0-\Lambda|_{\infty,1} \lesssim r^{4/3}, \qquad |\omega_0-{\lambda}|_{\infty} \sleq  r^{4/3}.
\end{equation}
with the first norm defined in \eqref{norma:Omega}, and $| \cdot |_\infty$ is the sup-norm on $\Xi_0(\tR)$ of a $N-$vector.
\end{lemma}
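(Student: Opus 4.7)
The plan is to directly exploit the structure of the frequencies $\omega_0 = \lambda + \matA(h)\xi$ and $\Omega_0 = \varLambda + \matB(h)\xi$ given in \eqref{Oomega}, together with the size of $\xi$ on $\Xi_0$. Indeed, by \eqref{xi0:def} and the rescaling \eqref{def.di.r} we have $|\xi|_\infty \le \tfrac{3}{2}\tR^2 = \tfrac{3}{2}r^{4/3}$ for every $\xi \in \Xi_0$, so it suffices to prove that $\matA(h)$ is uniformly bounded as an operator $\R^{\cS}\to\ell^{\infty}(\cS)$ and that $\matB(h)$ is uniformly bounded as an operator $\R^{\cS}\to\ell^{\infty,1}(\cS^c)$, both with bounds independent of $h$.

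The key algebraic observation is that $1+h\nu_j(h) = \tw_j$ for every $j\in\Z$: this follows from $\lambda_j = c\sqrt{c^2+j^2} = c^2\tw_j$ (see \eqref{ellappiu}) combined with the identity $\lambda_j = c^2 + \nu_j(h) = h^{-1} + \nu_j(h)$ from \eqref{def:lambda2}, which yields $h\nu_j = \tw_j-1$. Consequently, the definitions \eqref{def:AKG} and \eqref{def:BKG} read
\[
\matA_{ij}(h) = \frac{N_{ij}}{\tw_i\, \tw_j}, \qquad \matB_{ij}(h) = \frac{N_{ij}}{\tw_i\, \tw_j},
\]
where $N_{ij}$ is uniformly bounded (by $3/(4\pi)$). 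Since $\tw_j \ge 1$ for all $j$ (first bound in \eqref{bound:peso}), we get
\[
|(\matA(h)\xi)_i| \le \sum_{j\in\cS}\frac{|N_{ij}|}{\tw_i\tw_j}|\xi_j| \lesssim N\,|\xi|_\infty \lesssim r^{4/3},\qquad i\in\cS,
\]
which yields the second inequality of the statement. For the first one, we test the $\ell^{\infty,1}$ norm: for $i\in\cS^c$,
\[
\tw_i\,|(\matB(h)\xi)_i| \le \sum_{j\in\cS}\frac{|N_{ij}|}{\tw_j}|\xi_j| \lesssim N\,|\xi|_\infty \lesssim r^{4/3},
\]
where the crucial point is that the weight $\tw_i$ in the norm exactly cancels the factor $\tw_i^{-1}$ produced by $\matB_{ij}$. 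Taking the supremum over $i\in\cS^c$ gives $|\matB(h)\xi|_{\infty,1}\lesssim r^{4/3}$, which is the first claim (modulo the fact that $\Lambda$ in the statement is understood as $\varLambda$ of \eqref{lambda-Lambda}).

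There is no real obstacle here: the proof reduces to a one-line matrix estimate once the identity $1+h\nu_j = \tw_j$ is observed, and the uniformity in $h$ (i.e.\ in $c$) is automatic because $\tw_j\ge 1$ and $|N_{ij}|\lesssim 1$. The only point worth stressing is that the exponent $4/3$ in $r^{4/3}$ is dictated by the scaling $\tR^2 = r^{4/3}$ adopted in \eqref{def.di.r}, so the size of the correction to the linear frequencies is precisely the size of the actions on $\Xi_0$.
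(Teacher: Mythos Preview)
Your proof is correct and follows essentially the same route as the paper: both arguments reduce to the fact that the factor $\tw_i^{-1}$ in $(\matB\xi)_i$ exactly cancels the weight $\tw_i$ in the $\ell^{\infty,1}$ norm, together with $\tw_j\ge 1$ and $|\xi|\lesssim r^{4/3}$ on $\Xi_0$. The paper phrases this via the rank-one representation $B=\tfrac{3}{8\pi}\,w\langle v,\cdot\rangle$ with $w_j=\sqrt{2}/\tw_j$, whereas you work directly from the entries after the identity $1+h\nu_j=\tw_j$; these are two presentations of the same one-line estimate.
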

\begin{proof}
By \eqref{Oomega}, for any $\xi \in \Xi_0$, and $j \in \cS^c$, by the definition of $B$ we have  
\[
|(\Omega_0(\xi))_j-\lambda_j|=\left|(B\xi)_j\right|\lesssim \left|\frac{\sqrt{2}\langle
 v,\xi \rangle}{\tw_j}\right| \sleq \frac{|\xi|}{\tw_j}\sleq \frac{r^{4/3}}{\tw_j}.
\]
The second inequality is obtained in the same way.
\end{proof}

\section{Inverse function theorem for some Lipschitz maps}
We recall Lemma A.2 of \cite{poschel1996kam} about the existence of a Lipschtiz extension. 
\begin{lemma}\label{estensione}
Let $F \subset \mathbb{R}^n$ be closed and $u: F \to \mathbb{R}$ a bounded Lipschitz continuous function. Then there exists an extension $U: \mathbb{R}^n \to \mathbb{R}$ of $u$, which preserves minimum, maximum, and Lipschitz semi-norm.
\end{lemma}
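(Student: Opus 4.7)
\textbf{Proof plan for Lemma \ref{estensione}.} The plan is to use the McShane extension formula, then cap the result by two constants to simultaneously preserve the minimum and the maximum of $u$.

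First, let $L \vcentcolon= \sup_{y \neq y' \in F} |u(y) - u(y')|/|y - y'|$ be the Lipschitz semi-norm of $u$, and set $m \vcentcolon= \inf_F u$, $M \vcentcolon= \sup_F u$, which are finite by boundedness. Define
\[
U_0(x) \vcentcolon= \inf_{y \in F} \bigl\{ u(y) + L\,|x - y| \bigr\}, \qquad x \in \R^n.
\]
I would then check three points: (i) $U_0$ is an extension, i.e. $U_0(y_0) = u(y_0)$ for every $y_0 \in F$, which follows because the choice $y = y_0$ gives the value $u(y_0)$, while the Lipschitz bound $u(y) + L|y_0 - y| \geq u(y_0)$ shows the infimum is not smaller; (ii) $U_0$ has Lipschitz constant at most $L$, by a standard triangle-inequality argument: for any $x_1, x_2 \in \R^n$ and any $y \in F$,
\[
U_0(x_1) \leq u(y) + L|x_1 - y| \leq u(y) + L|x_2 - y| + L|x_1 - x_2|,
\]
and taking the infimum over $y$ yields $U_0(x_1) - U_0(x_2) \leq L|x_1 - x_2|$ (and symmetrically); (iii) $U_0 \geq m$ pointwise, since each admissible value $u(y) + L|x-y|$ is at least $u(y) \geq m$.

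Next, to also cap from above, I would set
\[
U(x) \vcentcolon= \min\bigl( U_0(x),\, M \bigr).
\]
Since $x \mapsto U_0(x)$ and the constant $M$ are both $L$-Lipschitz, their pointwise minimum is $L$-Lipschitz as well, so the Lipschitz semi-norm is preserved (and cannot grow, since $U$ restricts to $u$ on $F$). On $F$, $U_0 = u \leq M$ so $U = u$, confirming that $U$ is still an extension. Finally, by step (iii), $U \geq \min(m, M) = m$, with equality attained at any point of $F$ where $u$ attains its minimum; and $U \leq M$ by construction, with equality attained at any point of $F$ where $u$ attains its maximum. Hence $\inf U = m$ and $\sup U = M$, and the Lipschitz semi-norm of $U$ equals that of $u$.

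There is no genuine obstacle here: the only mildly delicate point is to realize that a single McShane extension (either the inf- or the sup-formula) preserves only one of the two extrema, so one needs the capping step in order to preserve both simultaneously while keeping the Lipschitz semi-norm unchanged.
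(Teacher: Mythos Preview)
Your proof is correct and follows the standard McShane construction (with the capping step to preserve both extrema). The paper does not actually prove this lemma: it merely states it and refers to Lemma~A.2 of \cite{poschel1996kam}, so there is no paper proof to compare against beyond noting that your argument is the classical one underlying that reference.
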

\begin{lemma}\label{perturbazione}
Let $X \subseteq \mathbb{R}^N$ be a subset and a lipeomorphism $f:X \to f(X)$ with
Lipschitz constants 
\begin{equation*}
\sup_{\substack{x_1,x_2 \in X \\ x_1 \neq x_2}} \frac{|f(x_1)-f(x_2)|}{|x_1-x_2|} \leq L_f, \qquad \sup_{\substack{y_1,y_2 \in  f(X) \\ y_1 \neq y_2}} \frac{|f^{-1}(y_1)-f^{-1}(y_2)|}{|y_1-y_2|} \leq L_{f^{-1}}.
\end{equation*}
Let $g:X \to g(X)$ be a Lipschitz map with Lipschitz constant
\begin{equation}\label{stim:2:aux}
\sup_{\substack{x_1,x_2 \in X \\ x_1 \neq x_2}}
\frac{|g(x_1)-g(x_2)|}{|x_1-x_2|}=:L_g  < \frac{1}{L_{f^{-1}}},
\end{equation}
then the map $h=f+g$ is a lipeomorphism from $X$ to
$h(X)$. Furthermore $h^{-1}$ has Lipschitz constant
\begin{equation}\label{stim:3:aux}
L_{h^{-1}}\vcentcolon=  \sup_{\substack{z_1,z_2 \in  h(X) \\ z_1 \neq z_2}} \frac{|h^{-1}(z_1)-h^{-1}(z_2)|}{|z_1-z_2|} \leq L_{f^{-1}}\left(1-L_{f^{-1}}L_g\right)^{{-1}}.
\end{equation}
\end{lemma}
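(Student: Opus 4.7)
The plan is to establish injectivity of $h = f + g$ by a direct perturbative lower bound on $|h(x_1)-h(x_2)|$, then read off the Lipschitz constant of $h^{-1}$ by inversion of that bound. No deeper machinery is required; the statement is essentially a quantitative version of the Banach perturbation-of-invertibility principle in the Lipschitz category.

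First, I would record the elementary lower bound for $f$. From the bound on the Lipschitz constant of $f^{-1}$, for any $x_1,x_2 \in X$ we have
\begin{equation*}
|x_1-x_2| = |f^{-1}(f(x_1)) - f^{-1}(f(x_2))| \leq L_{f^{-1}}\, |f(x_1)-f(x_2)|,
\end{equation*}
so $|f(x_1)-f(x_2)| \geq L_{f^{-1}}^{-1} |x_1-x_2|$. This is the only ingredient that uses invertibility of $f$.

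Next, for $x_1 \neq x_2$ in $X$, I would combine this with the Lipschitz bound on $g$ by the reverse triangle inequality:
\begin{equation*}
|h(x_1)-h(x_2)| \geq |f(x_1)-f(x_2)| - |g(x_1)-g(x_2)| \geq \bigl(L_{f^{-1}}^{-1} - L_g\bigr)|x_1-x_2| = \frac{1-L_{f^{-1}} L_g}{L_{f^{-1}}}\,|x_1-x_2|.
\end{equation*}
Assumption \eqref{stim:2:aux} guarantees $1 - L_{f^{-1}} L_g > 0$, so the right-hand side is strictly positive whenever $x_1\neq x_2$. This shows $h$ is injective on $X$, hence a bijection onto its image $h(X)$, and therefore $h^{-1}\colon h(X)\to X$ is well defined.

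Finally, I would translate the lower bound above into the claimed estimate \eqref{stim:3:aux}. For any $z_1,z_2 \in h(X)$, set $x_i \vcentcolon= h^{-1}(z_i) \in X$; then substituting into the previous inequality yields
\begin{equation*}
|h^{-1}(z_1)-h^{-1}(z_2)| = |x_1-x_2| \leq \frac{L_{f^{-1}}}{1-L_{f^{-1}} L_g}\,|h(x_1)-h(x_2)| = \frac{L_{f^{-1}}}{1-L_{f^{-1}} L_g}\,|z_1-z_2|,
\end{equation*}
which is exactly \eqref{stim:3:aux}. This also confirms $h$ is a lipeomorphism onto $h(X)$. There is no real obstacle here; the only point to keep in mind is to apply the reverse triangle inequality in the correct direction so that the smallness condition on $L_g$ enters as a strict contraction of the distortion factor.
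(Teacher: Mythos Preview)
Your proof is correct and follows essentially the same approach as the paper: both derive the lower bound $|f(x_1)-f(x_2)|\geq L_{f^{-1}}^{-1}|x_1-x_2|$ from the Lipschitz constant of $f^{-1}$, apply the reverse triangle inequality to obtain $|h(x_1)-h(x_2)|\geq (L_{f^{-1}}^{-1}-L_g)|x_1-x_2|$, and then invert this bound to get \eqref{stim:3:aux}.
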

\begin{proof}
For $x_1, x_2 \in X$ we call $y_i=f(x_i)$ and we have
\[
|x_1-x_2|=|f^{-1}(y_1)-f^{-1}(y_2)| \leq L_{f^{-1}} |y_1-y_2|.
\]
Thus 
\begin{equation}\label{aux:2:proof}
|h(x_1)-h(x_2)| \geq |f(x_1)-f(x_2)|-|g(x_1)-g(x_2)| \geq \left(\frac{1}{L_{f^{-1}}}-L_g\right) |x_1-x_2|.
\end{equation}
In particular, it follows that $h$ is injective and therefore
invertible onto its image. Furthermore by \eqref{aux:2:proof} the Lipschitz constant
of $h^{-1}$ is estimated by \eqref{stim:3:aux}.
\end{proof}
  Given a set $X\subset \R^N$ we denote 
\begin{equation}\label{bordino}
X_{-\varrho}\vcentcolon= \left\{x\in
  X\ |\ B_\varrho(x)\subset X \right\}.
  \end{equation}
\begin{lemma}  
  \label{stime_interno}
  Let $f$ and $g$ be as in Lemma \ref{perturbazione} and assume \eqref{stim:2:aux}. Denote
  $$
\epsilon\vcentcolon= \sup_{x\in X}\left|g(x)\right|,
$$
and define $\varrho \vcentcolon= \epsilon L_{f^{-1}}$. Then
\begin{equation*}
f(X_{-\varrho })\subset h(X).
\end{equation*}
Furthermore, for all $x\in X_{-\varrho}$ there exists $\tilde x\in \overline{B_{\varrho}(x)}$
such that $f(x)=h(\tilde x)$.
\end{lemma}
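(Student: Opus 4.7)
The plan is to recast the conclusion as a fixed point equation and invoke the Banach contraction principle. Fix $x\in X_{-\varrho}$. Finding $\tilde x\in \overline{B_\varrho(x)}$ with $h(\tilde x)=f(x)$ amounts to solving $f(\tilde x)=f(x)-g(\tilde x)$, which in turn, using the invertibility of $f$, is equivalent to $\tilde x=f^{-1}(f(x)-g(\tilde x))$. I would therefore introduce the operator
\[
T:\overline{B_\varrho(x)}\to\R^N,\qquad T(y)\vcentcolon= f^{-1}\bigl(f(x)-g(y)\bigr),
\]
and look for a fixed point. The only genuine wrinkle is that $X_{-\varrho}$ is defined via \emph{open} balls, so a priori only $B_\varrho(x)\subset X$. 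This is handled by noting that $\overline{B_\varrho(x)}\subset \overline{X}$, and either extending $g$ and $f^{-1}$ to the closure (via Lemma~\ref{estensione}, preserving the Lipschitz semi-norms that enter the estimates below) or by a routine limiting argument using any $\varrho'<\varrho$.

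Next I would verify the two hypotheses of Banach's theorem. For self-mapping, for any $y\in\overline{B_\varrho(x)}$ one has
\[
|T(y)-x|=\bigl|f^{-1}(f(x)-g(y))-f^{-1}(f(x))\bigr|\leq L_{f^{-1}}\,|g(y)|\leq L_{f^{-1}}\,\epsilon=\varrho,
\]
so $T$ indeed sends $\overline{B_\varrho(x)}$ into itself. For contractivity, for $y_1,y_2\in\overline{B_\varrho(x)}$,
\[
|T(y_1)-T(y_2)|\leq L_{f^{-1}}\,|g(y_1)-g(y_2)|\leq L_{f^{-1}}L_g\,|y_1-y_2|,
\]
and the hypothesis \eqref{stim:2:aux} of Lemma~\ref{perturbazione} gives precisely $L_{f^{-1}}L_g<1$.

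By the Banach fixed point theorem there is a unique $\tilde x\in\overline{B_\varrho(x)}$ with $T(\tilde x)=\tilde x$, which unwinds to $h(\tilde x)=f(\tilde x)+g(\tilde x)=f(x)$. This yields simultaneously both assertions: the point $\tilde x$ lies in $\overline{B_\varrho(x)}$ and, since by construction $\tilde x\in X$ (an interior point of $X$ by the bound $|\tilde x-x|\leq\varrho$ and $x\in X_{-\varrho}$), one has $f(x)=h(\tilde x)\in h(X)$, hence $f(X_{-\varrho})\subset h(X)$. The only step requiring any thought is the already mentioned bookkeeping about open versus closed balls in the definition of $X_{-\varrho}$; the rest is a textbook contraction argument in the spirit of the quantitative inverse function theorem used in the companion Lemma~\ref{perturbazione}.
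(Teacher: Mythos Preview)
Your proof is correct and follows the same approach as the paper: the paper also sets up the fixed point equation $\tilde x=f^{-1}(f(x)-g(\tilde x))$ and asserts that the corresponding map is a contraction on $\overline{B_\varrho(x)}$, without writing out the self-mapping and contraction estimates that you supply. Your discussion of the open/closed ball subtlety in the definition of $X_{-\varrho}$ is a detail the paper omits entirely.
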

\proof We look for $\tilde x$ as the solution of the equation
$$
\tilde x=f^{-1}(f(x)-g(\tilde x)),
$$
namely as the fixed point of the map $\tilde x\mapsto
f^{-1}(f(x)-g(\tilde x))$. Now, it is easy to see that such a map is a
contraction on $\overline{B_{\varrho}(x)}$, and therefore the conclusion
holds. \qed

\newpage

\begin{thebibliography}{BCGP24}

\bibitem[BCGP24]{corsi2024almost}
L.~Biasco, L.~Corsi, G.~Gentile, and M.~Procesi.
\newblock Asymptotically full measure sets of almost-periodic solutions for the
  {NLS} equation.
\newblock {\em arXiv preprint arXiv:2412.02648}, 2024.

\bibitem[BBP13]{berti2013kam}
M.~Berti, L.~Biasco, and M.~Procesi.
\newblock KAM theory for the Hamiltonian derivative wave equation.
\newblock {\em Annales scientifiques de l'{\'E}cole Normale Sup{\'e}rieure},
  46(2):301--373, 2013.

\bibitem[BGR24]{bernier2024infinite}
J.~Bernier, B.~Gr{\'e}bert, and T.~Robert.
\newblock Infinite dimensional invariant tori for nonlinear {S}chr{\"o}dinger
  equations.
\newblock {\em arXiv preprint arXiv:2412.11845}, 2024.

\bibitem[BMP23]{biasco2023small}
L.~Biasco, J.~E. Massetti, and M.~Procesi.
\newblock Small amplitude weak almost periodic solutions for the {1D} {NLS}.
\newblock {\em Duke Mathematical Journal}, 172(14):2643--2714, 2023.

\bibitem[BMS04]{bechouche2004nonrelativistic}
P.~Bechouche, N.~J. Mauser, and S.~Selberg.
\newblock Nonrelativistic limit of {K}lein-{G}ordon-{M}axwell to
  {S}chr{\"o}dinger-{P}oisson.
\newblock {\em American journal of mathematics}, 126(1):31--64, 2004.

\bibitem[FM24]{franzoi2024kam}
L.~Franzoi and R.~Montalto.
\newblock A {KAM} {A}pproach to the {I}nviscid {L}imit for the {2D}
  {N}avier-{S}tokes {E}quations.
\newblock In {\em Annales Henri Poincar{\'e}}, pages 1--45. Springer, 2024.

\bibitem[FS14]{faouSchratz}
Erwan Faou and Katharina Schratz.
\newblock Asymptotic preserving schemes for the {K}lein-{G}ordon equation in
  the non-relativistic limit regime.
\newblock {\em Numer. Math.}, 126(3):441--469, 2014.

\bibitem[KP96]{kuksin1996invariant}
S.~Kuksin and J.~Poschel.
\newblock Invariant {C}antor manifolds of quasi-periodic oscillations for a
  nonlinear {S}chr{\"o}dinger equation.
\newblock {\em Annals of Mathematics}, pages 149--179, 1996.

\bibitem[Kuk87]{Kuk87}
S.~B. Kuksin.
\newblock Hamiltonian perturbations of infinite-dimensional linear systems with
  imaginary spectrum.
\newblock {\em Funktsional. Anal. i Prilozhen.}, 21(3):22--37, 95, 1987.

\bibitem[Mac01]{machihara2001nonrelativistic}
S.~Machihara.
\newblock The nonrelativistic limit of the nonlinear {K}lein-{G}ordon equation.
\newblock {\em Funkcialaj Ekvacioj Serio Internacia}, 44(2):243--252, 2001.

\bibitem[MN02]{masmoudi2002nonlinear}
N.~Masmoudi and K.~Nakanishi.
\newblock From nonlinear {K}lein-{G}ordon equation to a system of coupled
  nonlinear {S}chr{\"o}dinger equations.
\newblock {\em Mathematische Annalen}, 324:359--389, 2002.

\bibitem[MN03]{masmoudi2003nonrelativistic}
N.~Masmoudi and K.~Nakanishi.
\newblock Nonrelativistic limit from {M}axwell-{K}lein-{G}ordon and
  {M}axwell-{D}irac to {P}oisson-{S}chr{\"o}dinger.
\newblock {\em International Mathematics Research Notices}, 2003(13):697--734,
  2003.

\bibitem[MNO02]{machihara2002nonrelativistic}
S.~Machihara, K.~Nakanishi, and T.~Ozawa.
\newblock Nonrelativistic limit in the energy space for nonlinear
  {K}lein-{G}ordon equations.
\newblock {\em Mathematische Annalen}, 322(3):603--621, 2002.

\bibitem[Naj90]{najman1990nonrelativistic}
B.~Najman.
\newblock The nonrelativistic limit of the nonlinear {K}lein-{G}ordon equation.
\newblock {\em Nonlinear Analysis: Theory, Methods \& Applications},
  15(3):217--228, 1990.

\bibitem[Nak02]{nakanishi2008transfer}
K.~Nakanishi.
\newblock Nonrelativistic limit of scattering theory for nonlinear
  {K}lein-{G}ordon equations.
\newblock {\em Journal of Differential Equations}, 180(2):453--470, 2002.

\bibitem[Pas18]{pasquali2018almost}
S.~Pasquali.
\newblock Almost global existence for the nonlinear {K}lein-{G}ordon equation
  in the nonrelativistic limit.
\newblock {\em Journal of Mathematical Physics}, 59(1), 2018.

\bibitem[Pas19]{pasquali2019dynamics}
Stefano Pasquali.
\newblock Dynamics of the nonlinear klein--gordon equation in the
  nonrelativistic limit.
\newblock {\em Annali di Matematica Pura ed Applicata (1923-)}, 198:903--972,
  2019.

\bibitem[P{\"o}s96a]{poschel1996kam}
J.~P{\"o}schel.
\newblock A {KAM}-theorem for some nonlinear partial differential equations.
\newblock {\em Annali della Scuola Normale Superiore di Pisa-Classe di
  Scienze}, 23(1):119--148, 1996.

\bibitem[Pos96b]{poschel1996quasi}
J.~Poschel.
\newblock Quasi-periodic solutions for a nonlinear wave equation.
\newblock {\em Commentarii Mathematici Helvetici}, 71(2):269--296, 1996.

\bibitem[Tsu84]{tsutsumi1984nonrelativistic}
M.~Tsutsumi.
\newblock Nonrelativistic approximation of nonlinear {K}lein-{G}ordon equations
  in two space dimensions.
\newblock {\em Nonlinear Analysis: Theory, Methods \& Applications},
  8(6):637--643, 1984.

\bibitem[Way90]{way90}
C.~Eugene Wayne.
\newblock Periodic and quasi-periodic solutions of nonlinear wave equations via
  {KAM} theory.
\newblock {\em Comm. Math. Phys.}, 127(3):479--528, 1990.

\bibitem[Kuk93]{Kuksin2006}
S.~B. Kuksin,
\emph{Nearly Integrable Infinite-Dimensional Hamiltonian Systems},
Springer, Lecture Notes in Mathematics 1093.

\end{thebibliography}

\end{document}